\documentclass[11pt, a4paper]{article}

\usepackage{microtype}

\usepackage{thmtools, thm-restate}

\usepackage[utf8]{inputenc} 
\usepackage[T1]{fontenc} 
\usepackage{geometry} 
\geometry{a4paper, left=3cm, right=3cm, top=3cm, bottom=3cm} 

\usepackage{tikz}
\usetikzlibrary{arrows.meta,decorations.pathmorphing,fadings, calc}

\usepackage[%
    colorlinks      = true,
  linktocpage     = true,
  linkcolor       = blue,
  citecolor       = blue,
  urlcolor        = blue,
  breaklinks =true,
  hypertexnames=false,
]{hyperref}

\usepackage{setspace}
\setstretch{1.2}

\raggedbottom

\usepackage[affil-it]{authblk}

\usepackage{amsmath, amsthm, amsfonts, amssymb, amsbsy, bigstrut, graphicx, enumerate,  upref, longtable, comment, booktabs, array, caption, subcaption}

\captionsetup[subfloat]{labelfont=normalfont}








\usepackage[numbers, sort&compress]{natbib}










\newcommand{\ep}{\epsilon}

\newcommand{\mc}{\mathcal{C}}

\newtheorem{thm}{Theorem}[section]
\newtheorem{lmm}[thm]{Lemma}
\newtheorem{cor}[thm]{Corollary}

\theoremstyle{definition}

\newcommand{\bx}{\mathbf{x}}

\newcommand{\md}{\mathcal{D}}

\newcommand{\sign}{\operatorname{sign}}

\newcommand{\Res}{\operatorname{Res}}




\numberwithin{equation}{section}

\newcommand{\balpha}{{\boldsymbol \alpha}}
\renewcommand{\bx}{{\boldsymbol x}}

\renewcommand{\Re}{\operatorname{Re}}

\usepackage[utf8]{inputenc}
\usepackage{amsmath}
\usepackage{amsfonts}
\usepackage{bbm}
\usepackage{mathtools}
\usepackage{tikz}
\usetikzlibrary{decorations.pathreplacing}
\usepackage{amsthm}
\usepackage[english]{babel}
\usepackage{fancyhdr}
\usepackage{amssymb}
\usepackage{enumitem}
\usepackage{qtree}
\usepackage{mathrsfs}

\newcommand{\Z}{\mathbb{Z}}

\newcommand{\R}{\mathbb{R}}

\newcommand{\C}{\mathbb{C}}

\newcommand{\E}{\mathbb{E}}

\renewcommand{\S}{\mathbb{S}}

\renewcommand{\tilde}{\widetilde}
\renewcommand{\hat}{\widehat}

\newcommand{\normord}[1]{:\mathrel{#1}:}
\renewcommand{\i}{\mathrm{i}}
\renewcommand{\Im}{\operatorname{Im}}



\setcounter{tocdepth}{2}

\begin{document}
\title{Exact calculations beyond charge neutrality in timelike Liouville field theory}
\author{Sourav Chatterjee\thanks{Department of Statistics, Stanford University, USA. Email: \href{mailto:souravc@stanford.edu}{\tt souravc@stanford.edu}. 
}}
\affil{Stanford University}


\maketitle


\begin{abstract}
Timelike Liouville field theory (also known as imaginary Liouville theory or imaginary
Gaussian multiplicative chaos) is expected to describe two-dimensional quantum gravity
in a positive-curvature regime, but its path integral is not a probability measure and
rigorous exact computations are currently available only in the charge-neutral
(integer screening) case.  In this paper we show that at the special coupling
$b=\frac{1}{\sqrt2}$, the Coulomb-gas expansion of the timelike path integral becomes explicitly
computable beyond charge neutrality.  The reason is that the $n$-fold integrals generated
by the interaction acquire a Vandermonde/determinantal structure at $b=\frac{1}{\sqrt2}$, which
allows exact evaluation in terms of classical special functions.

We derive Mellin--Barnes type representations (involving the Barnes $G$-function and, in
a three-point case, Gauss hypergeometric functions) for the zero- and one-point functions,
for an antipodal two-point function, and for a three-point function with a resonant
insertion $\alpha_2=b$.  We then address the subtle zero-mode integration: after a Gaussian
regularization we obtain an explicit renormalized partition function
$C(\frac{1}{\sqrt2},\mu)=e(4\pi\sqrt2 \mu)^{-1}$, identify distributional limits in the physically
relevant regime $\alpha_j=\frac{1}{2}Q+\mathrm{i} P_j$, and compare with the Hankel-contour prescription
recently proposed in the physics literature.  These results provide the first 
rigorously controlled family of exact calculations in timelike Liouville theory outside
charge neutrality.
\newline
\newline
\noindent {\scriptsize {\it Key words and phrases.} Liouville field theory, quantum gravity, conformal field theory, Gaussian multiplicative chaos.}
\newline
\noindent {\scriptsize {\it 2020 Mathematics Subject Classification.} 81T40, 17B69, 81T20.}
\end{abstract}

\setcounter{tocdepth}{1}
\tableofcontents


\section{Introduction}
\addtocontents{toc}{\protect\setcounter{tocdepth}{2}}
\subsection{Background}

Liouville field theory is a central object in two-dimensional conformal field theory and
in Polyakov's formulation of 2D quantum gravity: the Liouville field plays the role of a
random conformal factor and formally produces a random geometry.
The \emph{spacelike} theory admits a rigorous probabilistic realization via Gaussian
multiplicative chaos and Liouville quantum gravity, and an extensive mathematical
literature has developed around it; see, for instance, \cite{chatterjeewitten25, berestyckipowell24,chatterjee25} for surveys.

The \emph{timelike} (or \emph{imaginary}) Liouville theory is, by contrast, much less
understood mathematically.  In the probability literature it is often referred to as
\emph{imaginary Gaussian multiplicative chaos} \cite{bonnefontetal25, lacoinetal15} or 
as \emph{imaginary Liouville theory} \cite{guillarmouetal23, xiaoxie25}.  It has gained renewed attention
recently in physics because of new proposals using it to build models of 2D gravity
\cite{alexandrovetal24, anninosetal21, anninosetal25, blommaertetal24, collieretal24, collieretal25, roussillontsiares25, muhlmannetal25, rangamanizheng25}.  One motivation is that timelike Liouville theory is expected to
capture a positive-curvature semiclassical regime \cite{chatterjee25, polchinski89} akin to Einsteinian general relativity, in contrast with the
spacelike theory, which leads to negative-curvature geometries \cite{lacoinetal22}.

In timelike Liouville theory on the unit sphere $\S^2$, the (formal) action of a field
$\phi:\S^2\to\mathbb{R}$ is given by
\[
I(\phi) := \frac{1}{4\pi}\int_{\S^2}  (\phi(x)\Delta_{\S^2}\phi(x) + 2Q \phi(x) + 4\pi \mu \normord{e^{2b\phi(x)}})da(x),
\]
where $\Delta_{\S^2}$ is the spherical Laplacian, $a$ is the area measure, $b>0$ is the
Liouville coupling, $\mu>0$ is the cosmological constant, and $Q:=b-\frac1b$.
The normal-ordered exponential $\normord{e^{2b\phi(x)}}$ is a renormalized version of $e^{2b\phi(x)}$;
it can be viewed heuristically as $\exp(2b\phi(x)+2b^2G_{\S^2}(x,x))$, where the Green's function  $G_{\S^2}$ is the
inverse of $-\frac1{2\pi}\Delta_{\S^2}$ on mean-zero functions.  Explicitly,
\begin{align}\label{gform}
G_{\S^2}(x,y) = -\ln\|x-y\|-\frac{1}{2}+\ln 2,
\end{align}
where $\|\cdot\|$ is the Euclidean norm in $\mathbb{R}^3$ \cite[Lemma~3.2.6]{chatterjee25}.

The basic difficulty is that the timelike kinetic term has the ``wrong sign'', so the
path integral weight $e^{-I(\phi)}$ is not associated with a probability measure and the
usual probabilistic tools available for the spacelike theory are not directly applicable.
A major theme in recent work \cite{chatterjee25, guillarmouetal23} is to nonetheless extract rigorous information
from the path integral by interpreting the Gaussian part through an analytic continuation
(or, equivalently, through an \emph{imaginary} Gaussian free field).

\subsection{Correlation functions and the charge-neutral barrier}

For $\alpha_1,\dots,\alpha_k\in\mathbb{C}$ and distinct points $x_1,\dots,x_k\in \S^2$, the
$k$-point correlation function is formally given by the path integral
\begin{align}\label{favg2}
C(\balpha, \bx, b, \mu) :=\int \biggl(\prod_{j=1}^k \normord{e^{2\alpha_j \phi(x_j)}} \biggr)e^{-I(\phi)}\mathcal{D}\phi,
\end{align}
with the analogous definition of the partition function (the ``zero-point function'')
\[
C(b,\mu):=\int e^{-I(\phi)}\md \phi.
\]
On the physics side, a widely accepted formula for the three-point function is the
\emph{timelike DOZZ formula} derived in  \cite{schomerus03, zamolodchikov05, kostovpetkova06, kostovpetkova07, kostovpetkova07a}.  However, these derivations 
do not compute the correlation functions directly from the path integral; instead they rely on
bootstrap/recursion relations known in the spacelike theory (notably Teschner's relations~\cite{teschner95}) and assume an analytic continuation to the timelike regime.  At rational central charge, the bootstrap picture has recently been put on a
more analytic footing: \citet{roussillontsiares25} derive the relevant Virasoro fusion and modular
kernels and, in particular, demonstrate crossing symmetry and modular covariance for timelike
Liouville at rational $b^2$.
 \citet*{harlowetal11} emphasized the conceptual importance of more transparent path integral
computations and provided evidence that such computations should be possible.

The only family of exact calculations that has been rigorously verified to date from the
timelike path integral is essentially restricted to the \emph{charge-neutral} regime.
A key parameter~is
\[
w:=\frac{Q-\sum_{j=1}^k\alpha_j}{b}.
\]
When $w$ is a positive integer (the charge neutrality condition), Giribet \cite{giribet12}
showed how the timelike DOZZ formula can be derived from the path integral, and this was
later made rigorous (with clarifications) in \cite{chatterjee25}, following related work in a
nearby model by \citet*{guillarmouetal23}.

Outside charge neutrality, even defining and evaluating \eqref{favg2} becomes subtle.
This regime is also the one of primary physical interest: conformal field theory
considerations suggest choosing
\begin{align}\label{alphajform}
\alpha_j=\frac Q2+\i P_j,\qquad P_j\in\mathbb{R},
\end{align}
for which $w$ is typically non-integer (and already purely imaginary when $k=2$).
A recent proposal of \citet{usciatietal25} advocates a specific contour prescription
for integrating the zero mode, supported by numerics and exact computations in a
related circle model.  Whether that proposal can be treated fully rigorously remains an
open question.

\medskip
\noindent\textbf{Main message of this paper.}
We show that one can go beyond charge neutrality at the special coupling
\[
b=\frac1{\sqrt2}.
\]
More precisely, starting from the path integral we obtain explicit formulas for the
zero-, one-, and (antipodal) two-point functions, and for a particular three-point
function with a resonant insertion $\alpha_2=b$, in a parameter range that includes the
physically important choice $\alpha_j=\frac{1}{2}Q+\i P_j$.  These formulas are expressed through
Mellin--Barnes type integrals involving the Barnes $G$-function (and in the three-point
case also Gauss hypergeometric functions), and they allow us to analyze the delicate
integration over the zero mode.  In particular:
\begin{itemize}
\item we obtain an explicit fixed-zero-mode formula for the zero-point function and,
after Gaussian regularization of the zero-mode integral, an explicit renormalized
partition function (Theorems~\ref{zerocthm}--\ref{zerothm});
\item in the two-point case with $\alpha_j=\frac{1}{2}Q+\i P_j$, the regularized correlation function
has a distributional limit implementing momentum conservation (Theorem~\ref{distthm});
\item in the three-point setting with $\alpha_2=b$, the regularized correlation exhibits a
genuine pole, matching the expected singularity structure of the timelike DOZZ formula
(Theorem~\ref{threethm});
\item we also compute the effect of integrating the zero mode along the Hankel-type contour
advocated in \cite{usciatietal25} and compare it with the real-line prescription; already for the
partition function and two-point function the outcomes differ (Theorems~\ref{zerohankel} and~\ref{distthm2}).
\end{itemize}
It is perhaps also worth noting that the same coupling $b = \frac{1}{\sqrt{2}}$ (the free-fermion point) has recently been shown by Bauerschmidt and collaborators~\cite{bauerschmidtwebb24, bauerschmidtetal25} to admit strikingly explicit exact calculations in the massless sine–Gordon model, though any direct connection with timelike Liouville theory is at present unclear.

\subsection{Expanding the path integral and why $b=\frac{1}{\sqrt2}$ is special}

Write the field as $\phi=c+X$, where the \emph{zero mode} is
\[
c=c(\phi):=\frac1{4\pi}\int_{\S^2}\phi(x)\,da(x),
\]
and $X$ has mean zero.  Formally, this splits the path integral into an integral over
$c\in\mathbb{R}$ and an integral over mean-zero fields.  For equation~\eqref{favg2} this yields
\begin{equation}\label{cororiginal}
C(\balpha,\bx,b,\mu)=\int_{-\infty}^{\infty}e^{-2wbc}\,C(\balpha,\bx,b,\mu,c)dc,
\end{equation}
where the fixed-zero-mode object is
\begin{align*}
&C(\balpha,\bx,b,\mu,c):=\int\biggl(\prod_{j=1}^k \normord{e^{2\alpha_j X(x_j)}}\biggr)e^{-I_c(X)}\md X,\\
&I_c(X):=\frac{1}{4\pi}\int_{S^2}(X(x)\Delta_{\S^2}X(x)+4\pi\mu e^{2bc}\normord{e^{2bX(x)}})da(x).
\end{align*}
As explained in \cite{chatterjee25}, one can give a rigorous meaning to $C(\balpha,\bx,b,\mu,c)$ by
interpreting the Gaussian part through an imaginary Gaussian free field.  Concretely,
one obtains
\begin{align*}
C(\balpha, \bx, b, \mu,c) &:= \E\biggl[ \biggl(\prod_{j=1}^k \normord{e^{2\i \alpha_j Z(x_j)}} \biggr)\exp\biggl(-\mu e^{2bc} \int_{\S^2} \normord{e^{2\i b Z(x)}}da(x)\biggr)\biggr],
\end{align*}
where $Z$ is the mean-zero Gaussian free field on $\S^2$ with covariance $G_{\S^2}$.
Expanding the exponential produces a Coulomb-gas series
\begin{align}\label{zcor}
C(\balpha, \bx, b, \mu,c) &= \biggl(\prod_{1\le j<j'\le k} e^{-4\alpha_j\alpha_{j'}  G_{\S^2}(x_j,x_{j'})}\biggr)\biggl\{1+\sum_{n=1}^\infty \frac{(-\mu e^{2bc})^n}{n!}a_n\biggr\},
\end{align}
with coefficients 
\begin{align}
a_n&:= \int_{(\S^2)^n}\exp\biggl(-4b\sum_{j=1}^k\sum_{l=1}^n \alpha_j G_{\S^2}(x_j, y_l) \notag\\
&\hskip1in -4b^2 \sum_{1\le l < l'\le n} G_{\S^2}(y_l,y_{l'}) \biggr) da(y_1)\cdots da(y_n).\label{andef}
\end{align}
The following lemma records the basic analytic input needed to justify this expansion: it gives absolute convergence of the integrals defining $a_n$, absolute convergence of the resulting series for $C(\balpha, \bx, b, \mu,c)$, and the regularity in parameters that will be used repeatedly later.  It is proved in \textsection\ref{convergencepf}. 
\begin{lmm}\label{convergence}
The integral in equation \eqref{andef} and the series in equation \eqref{zcor} are absolutely convergent if 
\begin{align}\label{maincond}
b\in (0,1) \text{ and } \Re(\alpha_j) > -\frac{1}{2b} \text{ for each $j$.}
\end{align}
Moreover, if $b\in (0,1)$, then $C(\balpha, \bx, b, \mu,c)$ and $a_n$ are continuous in $(\balpha,\bx,c)$ and analytic in $(\balpha,c)$ in the region where $\Re(\alpha_j)>-\frac{1}{2b}$ for each $j$ and $x_1,\ldots,x_k$ are distinct. Lastly, the coefficient $a_n$ satisfies the bound
\begin{align}\label{wagner2}
a_n \le n^{b^2n} e^{Cn}
\end{align}
where $C$ depends only on $b$ and $\alpha_1,\ldots,\alpha_k$.
\end{lmm}

The special value $b=\frac{1}{\sqrt2}$ is precisely the point at which these $n$-fold integrals
become \emph{determinantal}: after stereographic projection the interaction term yields a
Vandermonde factor $\prod_{l<l'}|z_l-z_{l'}|^2$, allowing exact evaluation by
standard random-matrix/orthogonal-polynomial manipulations.  This makes it possible to
compute $a_n$ explicitly and then resum~\eqref{zcor} by a Mellin--Barnes transform,
leading to the explicit integral representations stated in Section~\ref{resultssec}. 

It is natural to ask whether other rational values of $b^2$ might admit comparably rigid or
``integrable'' features.
While the \emph{determinantal} simplification exploited here is specific to $b^2=\tfrac12$ at the level
of the Coulomb-gas integrals, recent bootstrap results from the physics literature indicate that for \emph{all} rational $b^2$ one can
construct consistent Virasoro modular and fusion kernels in the timelike domain, implying crossing
symmetry and modular covariance in that rational regime \cite{roussillontsiares25}.  Related progress for a  supersymmetric extension, including explicit three-point
data derived from the same general philosophy, appears in \citet{muhlmannetal25}; see also
\citet{rangamanizheng25}. It is, however, not clear how to extend the results and techniques of the present paper to obtain rigorous results for general rational $b^2$.

Finally, equation~\eqref{cororiginal} shows that the remaining analytic problem is the integration
over the zero mode, for which there is currently no universally accepted contour away from
charge neutrality.  In this paper we primarily integrate over the real line (with a
Gaussian regularization when necessary) and we also analyze the Hankel-type contour
prescription of \citet{usciatietal25} in the same solvable setting, providing a direct comparison.

\medskip
\noindent\textbf{Organization.}
\textsection\ref{resultssec} states the main results.  \textsection\ref{sketchsec} sketches the proofs of Theorems~\ref{zerocthm} and~\ref{zerothm} 
to highlight the main ideas, and \textsection\ref{proofsec} contains the complete proofs.  Auxiliary bounds
and analytic lemmas are collected in the Appendix.

\section{Results}\label{resultssec}
This section records the main explicit evaluations obtained in the paper at the “solvable” point
$b=\frac{1}{\sqrt{2}}$. For $k=0,1,2$ (and for a resonant $k=3$ case) we first compute correlation functions with the zero mode fixed at $c$, i.e., $C(\balpha,\bx,b,\mu,c)$. We then address the zero-mode integration in equation~\eqref{cororiginal}, which is typically not absolutely convergent: throughout we interpret it via a Gaussian regularization $e^{-\epsilon^2 c^{2}}$ and take $\epsilon\to 0$ (sometimes after an explicit
renormalization). Finally, in each case we also record what one obtains by replacing the real-line
integral over $c$ by the Hankel-type contour proposed in \cite{usciatietal25}.

\subsection{The zero-point function at $b=\frac{1}{\sqrt{2}}$}\label{zerosec}
We now give an explicit Mellin–Barnes type representation of $C(2^{-1/2},\mu,c)$; that is, the zero-point function with the zero mode fixed at $c$. The formula involves the Barnes $G$-function. Recall that $G$ is the entire function defined by
\begin{align}\label{gformula}
G(z+1) := (2\pi)^{z/2} \exp\biggl(-\frac{z+ z^2(1+\gamma_{\mathrm{E}})}{2}\biggr)\prod_{k=1}^\infty\biggl\{\biggl(1+\frac{z}{k}\biggr)^k\exp\biggl(\frac{z^2}{2k}-z\biggr)\biggr\},
\end{align}
where $\gamma_{\mathrm{E}}$ is Euler's constant. From equation \eqref{gformula} one readily checks that $G$ is entire, and has zeros at the nonpositive integers and nowhere else.

Let $\Gamma$ denote the classical Gamma function. Recall that $\Gamma$ has no zeros, and the only
poles are at the nonpositive integers. Thus, $\Gamma$ has a logarithm on the simply connected
domain $\mathbb{C}\setminus(-\infty,0]$, which we denote by $\Pi$. Specifically, $\Pi$ is an analytic function on this domain such that $e^{\Pi}=\Gamma$, rendered unique by the
condition that $\Pi$ is real-valued on $(0,\infty)$ (so $\Pi=\ln \Gamma$ there). Equivalently, $\Pi$ admits the representation
\begin{align}\label{pirep}
\Pi(z) = \int_C \psi(w) dw,
\end{align}
where $C$ is any contour from $1$ to $z$ that lies entirely in $\mathbb{C}\setminus(-\infty,0]$, and
$\psi=\Gamma'/\Gamma$ is the digamma function. We stress that $\Pi$ is an analytic logarithm of $\Gamma$ on $\mathbb{C}\setminus(-\infty,0]$, which is {\it not} obtained by composing $\Gamma$ with an analytic branch of $\log$. Having defined $\Pi$, we set, for $z\in\mathbb{C}\setminus(-\infty,0]$ and $w\in\mathbb{C}$,
$\Gamma(z)^{\,w}:=\exp(w\Pi(z))$. Separately, for $z\in\mathbb{C}\setminus(-\infty,0]$ we write $z^{w}:=\exp(w\log z)$ for the principal branch of $\log$ on the same domain; the notation $\Gamma(z)^{w}$ always refers to the definition via $\Pi$ above.

The following result gives the zero-point function at $b=\frac{1}{\sqrt{2}}$, when the path integral is
restricted to fields with zero mode equal to $c$.
\begin{thm}\label{zerocthm}
For any $c\in \R$ and $\mu >0$, we have
\begin{align*}
C(2^{-1/2}, \mu, c) = \frac{1}{2\pi}\int_{-\infty}^\infty \Gamma(1-\i y) f(-1+\i y) (\mu e^{\sqrt{2}c})^{-1+\i y} dy,
\end{align*}
where $f:\C \setminus(-\infty,-1]\to\C$ is the analytic function 
\begin{align*}
f(z) := \frac{(4\pi)^z e^{\frac{1}{2}z(z-1)}G(z+1)^2}{\Gamma(z+1)^{z-1}}.
\end{align*}
\end{thm}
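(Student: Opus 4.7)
The plan is to expand $C(2^{-1/2},\mu,c)$ via the Coulomb-gas series \eqref{zcor} with $k=0$, compute the coefficients $a_n$ in closed form using the determinantal structure at $b=\frac{1}{\sqrt{2}}$, and then identify the resulting power series in $t := \mu e^{\sqrt{2}c}$ with the claimed Mellin--Barnes contour integral.

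\textbf{Step 1: computing $a_n$.} Substituting the explicit Green's function \eqref{gform} into \eqref{andef} and using $4b^2=2$ reduces $a_n$ to
\[
a_n = \Big(\tfrac{e}{4}\Big)^{n(n-1)/2}\int_{(\S^2)^n}\prod_{l<l'}\|y_l-y_{l'}\|^2\,da(y_1)\cdots da(y_n).
\]
Stereographic projection $\S^2\to\C$ converts the chordal distance and area measure by $\|y-y'\|=2|z-z'|/\sqrt{(1+|z|^2)(1+|z'|^2)}$ and $da(y)=4(1+|z|^2)^{-2}\,d^2z$. After collecting the powers of $(1+|z_l|^2)$ and the numerical constants, the integral becomes proportional to the spherical-ensemble partition function
\[
Z_n := \int_{\C^n}\prod_{l<l'}|z_l-z_{l'}|^2\prod_l(1+|z_l|^2)^{-(n+1)}\,d^2z_l.
\]
Since the weight is rotationally invariant, the monomials $\{z^k\}$ are orthogonal, and the standard orthogonal-polynomial identity yields $Z_n = n!\prod_{k=0}^{n-1}h_k$ with $h_k = \pi k!(n-k-1)!/n!$ computed by a Beta integral. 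Using $G(n+1)=\prod_{k=1}^{n-1}k!$ and assembling all prefactors, one obtains
\[
a_n = \frac{(4\pi)^n e^{n(n-1)/2}G(n+1)^2}{\Gamma(n+1)^{n-1}} = f(n).
\]

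\textbf{Step 2: Mellin--Barnes representation.} I would next identify the series $\sum_{n\ge 0}\frac{(-t)^n}{n!}a_n$ with the contour integral $\frac{1}{2\pi\i}\int_{\Re s=-1}\Gamma(-s)f(s)t^s\,ds$, which after the substitution $s = -1+\i y$ is precisely the integral in the theorem. This identification is the classical Mellin--Barnes (or Ramanujan master-theorem) maneuver: shift the contour rightwards to $\Re s = N+\tfrac12$. Since $f$ is analytic on $\C\setminus(-\infty,-1]$, the only singularities of the integrand in the strip $-1<\Re s<N+\tfrac12$ are the simple poles of $\Gamma(-s)$ at $s=0,1,\ldots,N$. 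The residue theorem combined with $\Res_{s=n}\Gamma(-s) = \tfrac{(-1)^{n+1}}{n!}$ then produces the partial sum $\sum_{n=0}^N \frac{(-t)^n}{n!}a_n$.

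\textbf{Step 3: tail estimates (the main obstacle).} The delicate point is justifying the contour shift: (i) the integral on $\Re s = N+\tfrac12$ must vanish as $N\to\infty$, and (ii) the horizontal connectors at $\Im s = \pm T$ must vanish as $T\to\infty$ inside each fixed strip. Both reductions rest on Stirling for $\Gamma$ and the classical asymptotic expansion of the Barnes $G$-function on vertical lines: carefully tracking the leading-order cancellations between $G(s+1)^2$ and $\Gamma(s+1)^{s-1}$ shows that $|f(s)|$ grows only like $\exp(\Re(s\ln s)+O(|s|))$, while the Gaussian factor $e^{s(s-1)/2}$ inside $f$ produces decay $e^{-(\Im s)^2/2}$ on every vertical line; this is the mechanism that makes the tail integrals vanish and the deformation legitimate. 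The same analysis shows that $f$ admits a removable singularity at $s=-1$ (as $s\to -1$ with $\Re s>-1$, both $G(s+1)^2$ and $\Gamma(s+1)^{s-1}$ vanish like $(s+1)^2$), so the contour $\Re s = -1$ is admissible. Assembling these asymptotic lemmas (expected to live in the Appendix) with Lemma~\ref{convergence} and the residue computation yields the identity claimed in Theorem~\ref{zerocthm}.
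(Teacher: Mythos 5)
Your Steps 1 and 2 coincide with the paper's strategy: compute $a_n$ by the Vandermonde/orthogonal-polynomial method (Lemma~\ref{zeroform1} in the paper), show $a_n=f(n)$, and then resum via a Mellin--Barnes contour shift that picks up the residues of $\Gamma(-s)$ at $s=0,1,\dots$ (Lemmas~\ref{zerofinal0} and~\ref{zerofinal}). The residue value $\Res_{s=n}\Gamma(-s)=(-1)^{n+1}/n!$ and the conclusion in Step 2 are correct.

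Step 3, however, contains a genuine error in the asymptotics, and this is precisely where the nontrivial work of the argument lives. You assert that the factor $e^{s(s-1)/2}$ ``inside $f$'' produces Gaussian decay $e^{-(\Im s)^2/2}$ on every vertical line and that this is the mechanism by which the tail integrals vanish. This is false: on a vertical line $s=x+\i y$, the square of the Barnes $G$-function has $\Re\Theta(s+1)\sim \tfrac{1}{2}\Re(s^2\log s)-\tfrac34\Re(s^2)$, which contributes a compensating $+y^2/2$ (up to $\log|y|$ corrections) to $\Re\log|f|$. The quadratic terms in $\log|G(s+1)^2|$, $\log|\Gamma(s+1)^{s-1}|$, and $\tfrac12 s(s-1)$ all cancel, and what actually survives this cancellation (as the paper proves in Lemma~\ref{fzfinal} via Corollaries~\ref{picor} and~\ref{gcor}) is
\[
|f(x+\i y)|\;\lesssim\;\exp\Bigl\{-\tfrac12\,y\,\arg(x+1+\i y)\Bigr\}\;\sim\;e^{-\pi|y|/4},
\]
i.e.\ only exponential decay, not Gaussian. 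The argument still closes, because this exponential decay of $f$, combined with the $e^{-\pi|y|/2}$ decay of $\Gamma(-s)$ from Stirling (Lemma~\ref{gammalmm}), is already enough for absolute convergence and for the horizontal connectors to vanish, and the vertical tail at $\Re s = N+x_0$ vanishes because $\Gamma(-s)$ decays like $e^{-N\ln N}$ while $f(s)$ grows only like $e^{\frac12 N\ln N}$ there (Lemma~\ref{zerofinal}). But the proof hinges on carrying out this cancellation and extracting the surviving sub-Gaussian asymptotics; if one simply believed the $e^{-y^2/2}$ decay, one would be double-counting the Gaussian factor and missing the real content of the estimate.

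A secondary, purely technical point: you shift the contour starting from $\Re s=-1$, where $f$ has only a removable singularity and the integrand is at the boundary of $f$'s domain of analyticity. The paper avoids the corner case by running the argument on $\Re s = x_0\in(-1,0)$ (where everything is manifestly analytic) and then taking $x_0\downarrow -1$ at the end via a dominated-convergence bound that is uniform in $x_0$ (using the $x_0$-free bounds \eqref{dom1} and~\eqref{dom2}). Your route can be repaired, but only after one has the correct exponential decay estimates uniformly as $x_0\downarrow -1$, which is exactly the content you cannot get from the claimed Gaussian bound.
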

A sketch of the proof of Theorem \ref{zerocthm} is given in \textsection\ref{sketchsec}, and the complete proof is in \textsection\ref{zeroproofsec}. Next, we compute the (unrestricted) zero-point function by integrating over the zero mode, using the relation \eqref{cororiginal}. For the zero-point function, 
\[
w = \frac{Q}{b} = 1-\frac{1}{b^2}.
\]
With $b=\frac{1}{\sqrt{2}}$, we get $w=-1$. After multiplying $C(2^{-1/2}, \mu, c)$ by $e^{-2wbc}$ and integrating this over $c\in \R$, we get 
\begin{align}\label{chalf0}
C(2^{-1/2}, \mu) = \int_{-\infty}^\infty e^{\sqrt{2}c} C(2^{-1/2}, \mu,c) dc,
\end{align}
where $C(2^{-1/2}, \mu,c)$ is as in Theorem \ref{zerocthm}. The integral in equation \eqref{chalf0} is not absolutely convergent, so we interpret it as an oscillatory integral via a Gaussian cutoff. We regularize it by defining
\[
C_\ep(2^{-1/2}, \mu) := \int_{-\infty}^\infty e^{\sqrt{2}c-\ep^2 c^2} C(2^{-1/2}, \mu,c) dc,
\]
and then define
\[
C(2^{-1/2},\mu) := \lim_{\ep\to 0} C_\ep(2^{-1/2},\mu).
\]
The next theorem shows that this prescription yields a well-defined limit, with an explicit closed form.  The proof is in \textsection\ref{zerothmpf}, and a brief sketch of the proof is in \textsection\ref{sketchsec}. 
\begin{thm}\label{zerothm}
For any $\mu>0$,
\[
C(2^{-1/2}, \mu) = \frac{e}{4\pi\sqrt{2}\mu}.
\]
\end{thm}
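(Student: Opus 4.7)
The plan is to substitute the Mellin--Barnes formula from Theorem~\ref{zerocthm} into $C_\ep(2^{-1/2},\mu)$, swap the order of integration so the inner $c$-integral becomes a standard Gaussian integral, and then pass to the limit $\ep\to 0$ by recognizing the resulting $y$-kernel as an approximate delta at $y=0$.

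After substitution, the explicit factor $e^{\sqrt 2 c}$ in the integrand cancels the ``$-1$'' part of the exponent of $e^{\sqrt 2 c}$ in $(\mu e^{\sqrt 2 c})^{-1+\i y}$, so the $c$-dependent piece becomes $e^{-\ep^2 c^2+\i\sqrt 2 cy}$, whose integral over $\R$ equals $(\sqrt\pi/\ep)\,e^{-y^2/(2\ep^2)}$. The interchange is justified for fixed $\ep>0$ by Fubini, using Gaussian decay in $c$ together with exponential decay in $y$ from Stirling for $\Gamma(1-\i y)$ and from the Gaussian factor $e^{\frac12(-1+\i y)(-2+\i y)}=e^{1-3\i y/2-y^2/2}$, after controlling $G(1+\i y)$ and $\Gamma(\i y)^{\i y}$ along the vertical line with Stirling-type estimates (of the kind presumably collected in the Appendix). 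This yields
\[
C_\ep(2^{-1/2},\mu)=\frac{1}{2\pi}\int_{-\infty}^{\infty}\Gamma(1-\i y)\,f(-1+\i y)\,\mu^{-1+\i y}\,\frac{\sqrt\pi}{\ep}\,e^{-y^2/(2\ep^2)}\,dy,
\]
and since the Gaussian factor above has total mass $\pi\sqrt 2$ and concentrates at $y=0$ as $\ep\to 0$, dominated convergence (using the same exponential decay bounds) gives
\[
\lim_{\ep\to 0}C_\ep(2^{-1/2},\mu)=\frac{\sqrt 2}{2\mu}\,\lim_{y\to 0}f(-1+\i y),
\]
provided the limit on the right exists and agrees from both sides.

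The key and most delicate step is to evaluate this limit. In the defining formula $G(z+1)^2$ has a double zero while $\Gamma(z+1)^{z-1}$ has a ``pole'' of order two at $z=-1$ (in the sense of the paper's branch convention for $\Gamma^w$), and these must cancel exactly. I would use the functional equation $G(z+1)=\Gamma(z)\,G(z)$ to rewrite
\[
\frac{G(\i y)^2}{\Gamma(\i y)^{-2+\i y}}=\frac{G(1+\i y)^2}{\Gamma(\i y)^{\i y}},
\]
concentrating all of the singular behavior in the single factor $\Gamma(\i y)^{\i y}=\exp(\i y\,\Pi(\i y))$. From the Weierstrass product for $\Gamma$ one obtains the local expansion $\Pi(\i y)=-\log(\i y)-\gamma_{\mathrm{E}}\cdot\i y+O(y^2)$ as $y\to 0$ (with $\log$ the principal branch on $\C\setminus(-\infty,0]$; by Schwarz reflection the analogous expansion holds from below), so $\i y\,\Pi(\i y)=-\i y\log(\i y)+O(y^2\log|y|)\to 0$ and hence $\Gamma(\i y)^{\i y}\to 1$ from either side. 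The remaining factors are elementary: $(4\pi)^{-1+\i y}\to(4\pi)^{-1}$, $e^{\frac12(-1+\i y)(-2+\i y)}\to e$, and $G(1+\i y)^2\to G(1)^2=1$. Combining gives $\lim_{y\to 0}f(-1+\i y)=e/(4\pi)$ and therefore
\[
C(2^{-1/2},\mu)=\frac{\sqrt 2}{2\mu}\cdot\frac{e}{4\pi}=\frac{e}{4\pi\sqrt 2\,\mu}.
\]
The main obstacle is precisely this cancellation: verifying that the apparent singularity of $\Gamma(\i y)^{\i y}$ under the paper's nonstandard power convention precisely cancels the vanishing of $G(\i y)^2$, independent of whether $y\to 0^+$ or $y\to 0^-$, and uniformly enough that the dominated convergence argument goes through.
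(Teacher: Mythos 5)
Your proof is correct and follows essentially the same route as the paper: Fubini, the Gaussian $c$-integral producing the kernel $\frac{\sqrt\pi}{\ep}e^{-y^2/(2\ep^2)}$, dominated convergence to reduce to $f(-1):=\lim_{z\to-1}f(z)$, and the functional equation $G(z+1)=\Gamma(z)G(z)$ to cancel the $z=-1$ singularity and obtain $f(-1)=e/(4\pi)$. The only (inessential) difference is that the paper rescales $y=\ep u$ and then applies dominated convergence, while you invoke the approximate-identity interpretation directly; your two-sided check that $\Gamma(\i y)^{\i y}\to 1$ for $y\to 0^{\pm}$ matches the paper's computation of $\Gamma(z+1)^{z+1}\to 1$.
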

The timelike Liouville zero-point function depends on the prescription used to define the
zero-mode integral. Outside the charge-neutral regime there is no universally accepted choice,
and different prescriptions lead to inequivalent answers.

One indication of this is that if one starts from the timelike DOZZ three-point function and
formally extracts a zero-point function from it (e.g., by an identity limit), the result vanishes
\cite[Section~4.1]{anninosetal21}. \citet*{anninosetal21} proposed to define the
zero-point function by analytic continuation of the \emph{spacelike} Liouville three-point
function. Their prescription gives
\[
C_{\mathrm{AC}}(b,\mu)
= \pm \i  (\pi \mu \gamma(-b^2))^{-\frac{1}{b^2} + 1}
\frac{1+b^2}{\pi^3 q\gamma(-b^2)\gamma(-b^{-2})}e^{q^2 - q^2 \ln 4},
\]
where $\gamma(z) := \frac{\Gamma(z)}{\Gamma(1-z)}$ and $q := \frac{1}{b} - b$.
At the value $b=2^{-1/2}$ we have $-b^{-2}=-2$, hence
\[
\gamma(-b^{-2})=\frac{\Gamma(-2)}{\Gamma(3)},
\]
which has a pole. Accordingly, the meromorphic prefactor $1/\gamma(-b^{-2})$ has a zero there,
and the analytic-continuation expression yields
\[
C_{\mathrm{AC}}(2^{-1/2},\mu)=0,
\]
at least at the level of direct substitution, i.e., without introducing an additional limiting
prescription at this pole. It is nevertheless noteworthy that the $\mu$-dependence in
$C_{\mathrm{AC}}(b,\mu)$ matches the scaling appearing in Theorem~\ref{zerothm}.

In contrast, Theorem~\ref{zerothm} uses the real-line prescription \eqref{cororiginal},
interpreted via Gaussian regularization, and produces a \emph{nonzero} value. Since both
prescriptions (and their corresponding behaviors) appear in the literature, we will keep track
of both cases below: real-line versus the Hankel-type contour suggested by
\citet{usciatietal25}. Concretely, their contour $\mathcal{C}$ runs along the horizontal ray
$b^{-1}\pi \i + [0,\infty)$ from $b^{-1}\pi \i+\infty$ to $b^{-1}\pi \i$, then down to $0$,
and finally out along $[0,\infty)$ (see Figure~\ref{usciatifig}). We refer to $\mathcal{C}$ as
a \emph{Hankel contour}, since it is a translate of the standard Hankel contour appearing in
integral representations of the Gamma function. Let $\tilde{C}(b,\mu)$ denote the zero-point
function defined with this contour. The following result, proved in
\textsection\ref{zerohankelpf}, shows that $\tilde{C}(2^{-1/2},\mu)=0$ for all~$\mu$.



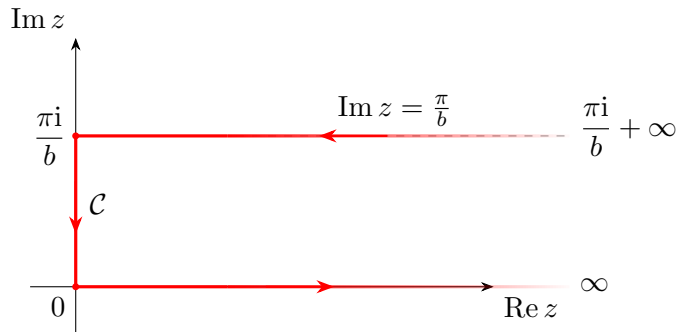
\begin{figure}[t]
\centering
\begin{tikzpicture}[>=Stealth, scale=1.0]

  \def\Y{2.0}      
  \def\Xmax{6.5}   

  \draw[black,->] (-0.6,0) -- (\Xmax-1,0) node[below right] {$\Re z$};
  \draw[black,->] (0,-0.6) -- (0,\Y+1.3) node[above left] {$\Im z$};

  \draw[black!40, dashed] (0,\Y) -- (\Xmax,\Y);
  \node[black, above] at (4.2,\Y) {$\Im z=\frac{\pi}{b}$};

  \draw[very thick, red] (2.0,\Y) -- (0,\Y);
  \draw[very thick, red] (0,\Y) -- (0,0);
  \draw[very thick, red] (0,0) -- (2.0,0);

  \draw[very thick, red, path fading=east]
    (2.0,\Y) -- (\Xmax,\Y);
  \draw[very thick, red, path fading=east]
    (2.0,0) -- (\Xmax,0);

  \draw[red,->, thick] (4.1,\Y) -- (3.2,\Y);
  \draw[red,->, thick] (0,\Y*0.70) -- (0,\Y*0.35);
  \draw[red,->, thick] (2.5,0) -- (3.4,0);

  \fill[red] (0,0) circle (1.3pt);
  \fill[red] (0,\Y) circle (1.3pt);
  \node[black, below left] at (0,0) {$0$};
  \node[black, left] at (0,\Y) {$\dfrac{\pi \i}{b}$};

  \node[black, above right] at (\Xmax,\Y-.4) {$\dfrac{\pi \i}{b}+\infty$};
  \node[black, below right] at (\Xmax,.25) {$\infty$};

  \node[black] at (0.3,0.55*\Y) {$\mathcal{C}$};

\end{tikzpicture}
\caption{Hankel-type contour $\mathcal{C}$ running from $\frac{\pi \i}{b}+\infty$ to $\frac{\pi \i}{b}$, then to $0$, and finally to $\infty$. Faded segments indicate the parts extending to infinity.}
\label{usciatifig}
\end{figure}

\begin{thm}\label{zerohankel}
Let $\tilde{C}(b,\mu)$ denote the zero-point function defined using the Hankel contour. Then for any $\mu>0$, $\tilde{C}(2^{-1/2},\mu)=0$.
\end{thm}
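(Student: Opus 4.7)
The plan is to exploit a $\pi\sqrt{2}\i$-periodicity of the integrand together with the specific Fourier-mode content forced by the background charge $w=-1$. Set
\[
F(c) := e^{\sqrt{2}c}\,C(2^{-1/2},\mu,c),
\]
so that $\tilde{C}(2^{-1/2},\mu)=\int_{\mathcal{C}} F(c)\,dc$. By Lemma~\ref{convergence}, $C(2^{-1/2},\mu,c)$ is entire in $c$ and its Coulomb-gas expansion \eqref{zcor} depends on $c$ only through $e^{\sqrt{2}c}$. Since $e^{\sqrt{2}(c+\pi\sqrt{2}\i)}=e^{\sqrt{2}c}$, absolute convergence of the series transfers to the sum, making $C(2^{-1/2},\mu,\cdot)$, and hence $F$, periodic in $c$ with period $\pi\sqrt{2}\i$.

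Next I would split $\mathcal{C}$ into the upper ray $H_+$ (from $\pi\sqrt{2}\i+\infty$ to $\pi\sqrt{2}\i$), the vertical segment $V$ (from $\pi\sqrt{2}\i$ to $0$), and the lower ray $H_-$ (from $0$ to $\infty$), and handle the horizontal rays with a common cutoff at $\Re c = R$. Parametrizing $H_+$ by $c=t+\pi\sqrt{2}\i$ and using periodicity, the truncated $H_+$-integral equals $\int_{R}^{0} F(t)\,dt=-\int_0^R F(t)\,dt$, which cancels the truncated $H_-$-integral $\int_0^R F(t)\,dt$ exactly for every finite $R$. Consequently the Hankel integral reduces, without any further limiting argument, to the vertical piece
\[
\tilde{C}(2^{-1/2},\mu) \;=\; \int_V F(c)\,dc \;=\; -\pi\sqrt{2}\i\int_0^1 F\bigl(s\pi\sqrt{2}\i\bigr)\,ds.
\]

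To evaluate this remaining one-dimensional integral I would expand in Fourier modes. At $c=s\pi\sqrt{2}\i$ one has $e^{\sqrt{2}c}=e^{2\pi\i s}$, so \eqref{zcor} gives
\[
F\bigl(s\pi\sqrt{2}\i\bigr) \;=\; e^{2\pi\i s} \;+\; \sum_{n=1}^{\infty}\frac{(-\mu)^{n}\,a_n}{n!}\,e^{2\pi\i (n+1)s}.
\]
The key feature is that the prefactor $e^{\sqrt{2}c}$, whose exponent is exactly the background-charge factor $-2wb=\sqrt{2}$ dictated by $w=-1$, shifts every Fourier mode of $C$ to a strictly positive integer frequency $k\ge 1$. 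Absolute convergence of the Coulomb-gas series (Lemma~\ref{convergence}), together with $|e^{2\pi\i(n+1)s}|=1$, justifies termwise integration, and $\int_0^1 e^{2\pi\i k s}\,ds=0$ for every $k\ge 1$; hence the vertical contribution vanishes and $\tilde{C}(2^{-1/2},\mu)=0$.

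The step that requires the most care is the cancellation of the two horizontal pieces: neither $\int_0^\infty F(t)\,dt$ nor $\int_0^\infty F(t+\pi\sqrt{2}\i)\,dt$ is in general absolutely convergent (this is precisely the issue that forced the Gaussian regularization in the real-line prescription \eqref{chalf0}). The Hankel-contour prescription sidesteps this by treating the two non-compact rays together with a symmetric cutoff, after which periodicity makes the cancellation exact at every finite $R$, so no additional regularization is needed beyond the one built into the contour itself. Past this point the argument is essentially the observation that integrating over one imaginary period kills every nonzero mode, while the $e^{\sqrt{2}c}$ factor guarantees that no constant mode is ever present.
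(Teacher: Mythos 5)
Your proposal is correct and follows essentially the same route as the paper: exploit the $\sqrt{2}\pi\i$-periodicity of $F(c)=e^{\sqrt 2 c}C(2^{-1/2},\mu,c)$ to cancel the two horizontal rays, then kill the vertical segment term by term using $\int_0^{\sqrt 2\pi}e^{\sqrt2\i(n+1)t}\,dt=0$ together with the bound \eqref{wagner2} to justify the interchange. Your explicit remark that the ray cancellation should be done under a common cutoff $R$ (since neither ray integral converges absolutely) is a welcome clarification, but it does not change the substance of the argument.
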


Note that this matches the prediction of \citet{anninosetal21} for the zero-point function at $b=\frac{1}{\sqrt{2}}$. However, as discussed in \textsection\ref{twosec} below, the same contour leads to a two-point function that differs from existing proposals in the physics literature; at present this leaves the contour choice unresolved. 

A useful way to interpret the discrepancy between the contours is via steepest-descent.
In the semiclassical analysis of \citet{anninosetal21}, the genus-zero amplitude can be organized as
a sum of contributions from two critical points of the timelike action: the real round-sphere saddle
and an additional (oscillatory) complex saddle. At the special value $b=2^{-1/2}$ these two
contributions can cancel, which is compatible with the vanishing of
$C_{\mathrm{AC}}(2^{-1/2},\mu)$ under analytic continuation, while retaining only the real saddle
yields a nonzero expression~\cite[Equation (5.6)]{anninosetal21}. This underscores that the choice
of integration contour for the timelike mode --- and, relatedly, the overall phase convention --- is an
essential part of the definition of the timelike path integral.

\subsection{The one-point function at $b=\frac{1}{\sqrt{2}}$}\label{onesec}
We now consider the case $k=1$, with a single $\alpha\in\mathbb{C}$ and $x\in \S^{2}$. By rotational symmetry the answer is independent of $x$; we fix $x=e_{3}=(0,0,1)$. 
We have the following result about the one-point function when the path integral is restricted to fields
with zero mode equal to $c$. The proof is in \textsection\ref{onecthmpf}.

\begin{thm}\label{onecthm}
Let $x = e_3$, and let $\alpha$ be any complex number with $\Re(\alpha)\in (-\frac{1}{\sqrt{2}}, 0]$. Let $w := -1-\sqrt{2}\alpha$, so that $\Re(w)\in [-1,0)$.
Then for any $c\in \R$ and $\mu >0$, we have
\begin{align*}
C(\alpha, x, 2^{-1/2}, \mu, c) = \frac{1}{2\pi } \int_{-\infty}^\infty \Gamma(-w-\i y) f(w+\i y)(\mu e^{\sqrt{2}c})^{w+\i y} dy, 
\end{align*}
where 
\begin{align*}
f(z) := \frac{(4\pi)^z e^{\frac{1}{2}z(z-3-2w)}\Gamma(z+1)G(z+1)G(z-w)}{\Gamma(z-w)^{z}G(-w)}.
\end{align*}
\end{thm}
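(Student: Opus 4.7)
The plan is to follow the same strategy sketched in \textsection\ref{sketchsec} for Theorem \ref{zerocthm}, now with an additional insertion. Starting from the Coulomb-gas expansion \eqref{zcor}, in the $k=1$ case the pairwise prefactor is empty, so
\[
C(\alpha, e_3, 2^{-1/2}, \mu, c) = \sum_{n=0}^\infty \frac{(-\mu e^{\sqrt{2}c})^n}{n!} a_n,
\]
with $a_n$ given by \eqref{andef} for a single charge $\alpha$ at $e_3$. Both the series and the defining integrals converge absolutely under the hypothesis $\Re(\alpha)\in(-1/\sqrt{2},0]$ by Lemma \ref{convergence}. Substituting \eqref{gform} and using $b^{2}=1/2$, the constant pieces of $G_{\S^{2}}$ produce explicit exponential prefactors, while the logarithmic pieces yield the kernel $\prod_{l}\|e_{3}-y_{l}\|^{2\sqrt{2}\alpha}\prod_{l<l'}\|y_{l}-y_{l'}\|^{2}$.

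Next, I apply stereographic projection from the south pole $-e_{3}$, so $e_{3}\mapsto 0$, and use the standard identities $\|e_{3}-y\|^{2}=4|z|^{2}/(1+|z|^{2})$, $\|y-y'\|^{2}=4|z-z'|^{2}/((1+|z|^{2})(1+|z'|^{2}))$, and $da(y)=4\,d^{2}z/(1+|z|^{2})^{2}$. Setting $s:=\sqrt{2}\alpha$, the coefficient $a_{n}$ reduces to an explicit constant times
\[
\int_{\C^n}\prod_l |z_l|^{2s}(1+|z_l|^2)^{-s-n-1}\prod_{l<l'}|z_l-z_{l'}|^2 \prod_l d^2z_l.
\]
The Vandermonde-squared factor against a rotationally invariant weight puts the integral into classical orthogonal-polynomial form: the monic orthogonal polynomials are the monomials $z^{m}$, so the integral equals $n!\prod_{m=0}^{n-1}h_{m}$, with $h_{m}=\pi\Gamma(m+s+1)\Gamma(n-m)/\Gamma(n+s+1)$ a Beta integral whose convergence is guaranteed by $\Re(s)>-1$. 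Telescoping via $\prod_{m=0}^{n-1}\Gamma(m+s+1)=G(s+n+1)/G(s+1)$ and $\prod_{m=0}^{n-1}\Gamma(n-m)=G(n+1)$ gives a closed form for $a_{n}$ in terms of $G$, $\Gamma$, and $\Gamma(s+n+1)^{n}$.

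Substituting $w=-1-s$ and using $\tfrac{1}{2}z(z-3-2w)=\tfrac{1}{2}z(z-1)+zs$ shows that the $f(z)$ of the theorem is precisely the meromorphic continuation of $n\mapsto a_{n}$, the factor $\Gamma(z-w)^{z}$ being interpreted via the $\Pi$-logarithm of \eqref{pirep}. The series for $C(\alpha,e_{3},2^{-1/2},\mu,c)$ is then converted into the stated contour integral by a Mellin--Barnes / Ramanujan master theorem argument: on the vertical line $\Re z=w$, the integrand $\Gamma(-z)f(z)(\mu e^{\sqrt{2}c})^{z}/(2\pi\i)$ is meromorphic in the right half-plane with simple poles of $\Gamma(-z)$ at $z=0,1,2,\ldots$; closing the contour rightward picks up residues $(-1)^{n+1}f(n)(\mu e^{\sqrt{2}c})^{n}/n!$, reproducing the series. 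The hypothesis $\Re w\in[-1,0)$ places the contour strictly to the left of $z=0$, and parametrizing $z=w+\i y$ gives the formula in the statement.

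The main obstacle is to rigorously justify the contour deformation: one needs bounds on $|\Gamma(-z)f(z)u^{z}|$ along $\Re z=w$ and along large semicircles in the shifted right half-plane, so that the tails vanish and the residue sum converges. The difficulty is that each of $G(z+1)$, $G(z-w)$ and $\Gamma(z-w)^{-z}$ individually grows like $\exp(O(|z|^{2}\log|z|))$; convergence depends on the cancellation of the leading $\tfrac{1}{2}z^{2}\log z$ terms in the Barnes-$G$ asymptotics against the $z\,\Pi(z-w)$ coming from the principal-branch logarithm inside $\Gamma(z-w)^{-z}$. This is the same mechanism that drives Theorem \ref{zerocthm}, and I would adapt the analytic bounds in the Appendix by carrying through the additional $G/\Gamma$ pair parametrized by $s$; the exponential decay of $\Gamma(-z)$ on vertical lines then closes the argument by dominated convergence.
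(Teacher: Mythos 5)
Your computation of the coefficients $a_n$ takes a genuinely cleaner route than the paper. By stereographically projecting from $-e_3$ instead of from $e_3$, the insertion point $e_3$ maps to $0$ rather than $\infty$, so you can write down the flat-space Selberg-type integral for $a_n$ directly. The paper instead projects from $e_3$ (Lemma \ref{zeroform1one}), which sends the insertion to $\infty$ and therefore forces computing $a_n$ for generic $x \ne e_3$ and then taking $|u| \to \infty$ via a monotone/dominated convergence argument. Your choice of projection eliminates that extra step entirely. I verified that, after expressing the one-charge kernel as $\prod_l |z_l|^{2s}(1+|z_l|^2)^{-(n-w)}\prod_{l<l'}|z_l-z_{l'}|^2$ with $s=\sqrt{2}\alpha$, your norms $h_m = \pi\Gamma(m+s+1)\Gamma(n-m)/\Gamma(n+s+1)$ match the paper's (Lemma \ref{intlmm}) after the reindexing $j\mapsto n-1-j$, and the telescoping via $\prod_m\Gamma(m+s+1)=G(n-w)/G(-w)$, $\prod_m\Gamma(n-m)=G(n+1)$ reproduces the theorem's $f(z)$ exactly, once $\Gamma(n-w)^n$ is continued to $\Gamma(z-w)^z$ via $\Pi$. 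The Mellin--Barnes conversion (placing a vertical contour to the left of $z=0$, closing rightward to pick up the $\Gamma(-z)$ residues) is the same strategy the paper uses.

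Two pieces remain unargued. First, the analytic estimates that license both the residue sum and the convergence of the contour integral are essential, nontrivial, and only asserted. Each of $G(z+1)$, $G(z-w)$, $\Gamma(z-w)^{-z}$ grows like $\exp(O(|z|^2\log|z|))$; the paper's Corollaries \ref{picor}--\ref{gcor} and Lemma \ref{fzfinalone} establish the cancellation that leaves $f(z)\asymp\exp\bigl((z+1)\log(z-w) - \tfrac12 z\log(z-w+1) + Az + O(1)\bigr)$, and only after this does one get exponential decay of $\Gamma(-z)f(z)$ along vertical lines (Lemma \ref{fconvlmmone}) and the vanishing of $F(N+x_0)$ as $N\to\infty$ (Lemma \ref{zerofinalone}). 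You correctly identify this as the main obstacle, but a complete proof requires carrying it through with the extra $G/\Gamma$ pair. Second, you place the contour directly at $\Re z = \Re w$, but the paper first proves the identity for $\Re z = x_0 \in (\Re w, 0)$ and then pushes $x_0\downarrow \Re w$ via dominated convergence (the $x_0$-free bounds \eqref{dom1one} and \eqref{dom2}). This two-step argument is what controls the boundary case $\Re w = -1$ ($\alpha=0$), where $\Gamma(z+1)$ develops a pole at $y=0$ on the limiting contour; placing the contour at $\Re z=\Re w$ directly asks for estimates that are not uniform without that limiting argument.
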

Next, just as for the zero-point function, we define the regularized one-point function
\[
C_\ep(\alpha, x, 2^{-1/2}, \mu) := \int_{-\infty}^\infty e^{-\sqrt{2}wc-\ep^2 c^2} C(\alpha, x, 2^{-1/2}, \mu,c) dc,
\]
where $x=e_{3}$ and $w=-1-\sqrt{2}\alpha$ as above. The next theorem identifies the leading asymptotics of $C_\epsilon$ as $\epsilon\to 0$. In particular:
(i) when $\alpha=0$ one recovers the zero-point function; (ii) when $\Re(\alpha)<0$ the unrenormalized quantity tends to $0$; and (iii) when $\Re(\alpha)=0$ with $\alpha\neq 0$, $C_\epsilon$ exhibits logarithmic oscillations in $\epsilon$, but becomes convergent after the explicit renormalization by $\epsilon^{\sqrt{2}\alpha}$.
\begin{thm}\label{onethm}
Let $x = e_3$ and $\alpha$ be a complex number with $\Re(\alpha)\in (-\frac{1}{\sqrt{2}}, 0]$. Let $w:= -1-\sqrt{2}\alpha$, so that $\Re(w)\in [-1,0)$. Then for any $\mu>0$,
\begin{align*}
&\lim_{\ep\to 0} \ep^{\sqrt{2}\alpha} C_\ep(\alpha, x, 2^{-1/2}, \mu) \\
&= \frac{ (4\sqrt{2}\pi\mu)^we^{-\frac{1}{2}w(w+3)}G(w+2)\cos(\frac{\pi}{2}(w+1)) \Gamma(-w)\Gamma(\frac{1}{2}w+1)}{\sqrt{\pi}G(-w)}.
\end{align*}
\end{thm}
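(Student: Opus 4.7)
The plan is to reduce the computation of $\lim_{\ep\to 0}\ep^{\sqrt{2}\alpha}C_\ep$ to a single Laplace-type asymptotic at $y=0$, starting from the Mellin--Barnes representation in Theorem~\ref{onecthm}. First I would substitute that representation into the definition of $C_\ep$ and swap the order of integration; Fubini is justified because $|e^{-\sqrt{2}wc}(\mu e^{\sqrt{2}c})^{w+\i y}|=\mu^{\Re(w)}$ is $(c,y)$-independent, while $\Gamma(-w-\i y)$ decays exponentially in $|y|$ by Stirling and $|f(w+\i y)|$ admits suitable growth bounds along $\Re(z)=\Re(w)$ (cf.\ the Appendix). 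The same identity shows that the two exponentials combine cleanly as $e^{-\sqrt{2}wc}(\mu e^{\sqrt{2}c})^{w+\i y}=\mu^{w+\i y}e^{\i\sqrt{2}yc}$, so the $c$-integral becomes the elementary Gaussian $\int e^{-\ep^2c^2+\i\sqrt{2}yc}\,dc=(\sqrt{\pi}/\ep)e^{-y^2/(2\ep^2)}$, and one obtains the compact formula
\[
C_\ep=\frac{1}{2\sqrt{\pi}\,\ep}\int_{-\infty}^{\infty}\Gamma(-w-\i y)f(w+\i y)\mu^{w+\i y}e^{-y^2/(2\ep^2)}\,dy.
\]

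Since $e^{-y^2/(2\ep^2)}$ concentrates at $y=0$ as $\ep\to 0$, the limit is determined by the local behavior of $f(w+\i y)$ at $y=0$, whose only potentially singular piece is $G(z-w)/\Gamma(z-w)^{z}$ at $z=w$. I would establish $G(\zeta)\sim\zeta$ as $\zeta\to 0$ from the recursion $G(\zeta+1)=\Gamma(\zeta)G(\zeta)$ combined with $\Gamma(\zeta)\sim 1/\zeta$, and decompose the paper's logarithm as $\Pi(\zeta)=-\log\zeta+\log\Gamma(\zeta+1)$ on a punctured neighborhood of $0$ in $\mathbb{C}\setminus(-\infty,0]$ (the right-hand side uses principal logs and agrees with $\Pi$ at $\zeta=1$). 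Applying this along $\zeta=\i y$ and tracking $\arg(\i y)=\pm\pi/2$ on the two half-lines gives
\[
\frac{G(\i y)}{\Gamma(\i y)^{w+\i y}}=\pm\i\,e^{\pm\i\pi w/2}\,|y|^{w+1}(1+o(1))\qquad(y\to 0^{\pm}),
\]
while the remaining factors of $f(z)$ are analytic at $z=w$ and, via the identity $G(w+2)=\Gamma(w+1)G(w+1)$, collapse to the continuous prefactor $(4\pi)^{w}e^{-w(w+3)/2}G(w+2)/G(-w)$.

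Finally, substituting $y=\ep u$ makes $\ep^{\sqrt{2}\alpha}=\ep^{-w-1}$ precisely the factor compensating the $|y|^{w+1}=\ep^{w+1}|u|^{w+1}$ scaling, and dominated convergence---using $e^{-u^2/2}$ together with the global bounds above---reduces the limit to $\Gamma(-w)\mu^{w}$ times the continuous prefactor times
\[
\int_{-\infty}^{\infty}\bigl[\pm\i\,e^{\pm\i\pi w/2}\bigr]\,|u|^{w+1}\,e^{-u^2/2}\,du=2\cos\!\Bigl(\tfrac{\pi(w+1)}{2}\Bigr)\int_{0}^{\infty}u^{w+1}e^{-u^2/2}\,du,
\]
where the cosine emerges from $\i(e^{\i\pi w/2}-e^{-\i\pi w/2})=-2\sin(\pi w/2)=2\cos(\pi(w+1)/2)$. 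The remaining Gaussian moment evaluates to $2^{w/2}\Gamma(w/2+1)$, and $(4\pi)^{w}2^{w/2}\mu^{w}=(4\sqrt{2}\pi\mu)^{w}$ yields the announced prefactor. The main obstacle is the $y=0$ asymptotic itself: it depends crucially on the specific branch $\Pi$ defined by analytic continuation from $(0,\infty)$ (not on any naive $\log\circ\Gamma$), and on the opposite values of $\arg(\i y)$ on the two half-lines---which is exactly what produces the cosine in the final answer. The borderline case $\Re(\alpha)=0$, $\alpha\neq 0$, where $|y|^{w+1}$ oscillates rather than decays, is subsumed by the same formula, since $\ep^{\sqrt{2}\alpha}$ is the exact factor cancelling the resulting logarithmic oscillations in $\ep$.
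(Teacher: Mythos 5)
Your proposal is correct and follows essentially the same path as the paper's proof in \textsection\ref{onethmpf}: swap the order of integration, integrate out $c$ to a Gaussian, substitute $y=\ep u$, identify the pointwise limit of $\ep^{-1-w}f(w+\i\ep u)$ via the local behaviour of $G(z-w)/\Gamma(z-w)^{z}$ at $z=w$, and finish with the $\cos\frac{\pi(w+1)}{2}$ sign argument and the Gaussian moment $2^{w/2}\Gamma(\frac{w}{2}+1)$. The only (minor) difference is that you derive the $z\to w$ asymptotic directly from $G(\zeta)\sim\zeta$ and the decomposition $\Pi(\zeta)=-\log\zeta+\log\Gamma(\zeta+1)$, whereas the paper carries forward the pre-established alternative formula \eqref{fznewalt}, which packages the same $G$- and $\Gamma$-recursions and lets one read off the factor $e^{(1+w)\log(\i \ep u)}$ immediately; the two routes are equivalent.
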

Theorem \ref{onethm} is proved in \textsection\ref{onethmpf}. We are not aware of a widely accepted prediction for the one-point function outside charge neutrality, so we do not attempt a detailed comparison here. Let us also work out the one-point function defined via the Hankel contour. Proceeding as for the zero-point function, we get the formal expression
\begin{align*}
&\tilde{C}(\alpha,x,2^{-1/2},\mu) \\
&=\int_0^\infty e^{-\sqrt{2}wc} C(\alpha, x, 2^{-1/2}, \mu, c) dc  - \int_0^\infty e^{-\sqrt{2}w(c+\sqrt{2}\pi \i)} C(\alpha, x, 2^{-1/2}, \mu, c+\sqrt{2}\pi \i) dc \\
&\qquad -\i \int_0^{\sqrt{2}\pi} e^{-\sqrt{2}\i w t} C(\alpha,x, 2^{-1/2}, \mu, \i t) dt\\
&= (1-e^{-2\pi \i w})\int_0^\infty e^{-\sqrt{2}wc} C(\alpha,x,2^{-1/2}, \mu, c) dc -\i \int_0^{\sqrt{2}\pi} e^{-\sqrt{2}\i wt} C(\alpha,x,2^{-1/2}, \mu, \i t) dt.
\end{align*}
Here again the real integral is not absolutely convergent, so we introduce the Gaussian-regularized version
\begin{align*}
\tilde{C}_\ep(\alpha,x,2^{-1/2},\mu)&:= (1-e^{-2\pi \i w})\int_0^\infty e^{-\sqrt{2}wc-\ep^2 c^2} C(\alpha,x,2^{-1/2}, \mu, c) dc \\
&\qquad \qquad -\i \int_0^{\sqrt{2}\pi} e^{-\sqrt{2}\i wt} C(\alpha,x,2^{-1/2}, \mu, \i t) dt.
\end{align*}
The following theorem shows that, after renormalization, the Hankel-contour prescription differs from the real-line prescription by the expected multiplicative factor $1-e^{-2\pi \i w}$.
It is proved in  \textsection\ref{onethm2pf}.
\begin{thm}\label{onethm2}
Let $x = e_3$, and let $\alpha$ be any complex number with $\Re(\alpha)\in (-\frac{1}{\sqrt{2}}, 0]$. Let $w:= -1-\sqrt{2}\alpha$. Then 
\begin{align*}
&\lim_{\ep\to 0} \ep^{\sqrt{2}\alpha} \tilde{C}_\ep(\alpha, x, 2^{-1/2}, \mu) = (1-e^{-2\pi \i w})\lim_{\ep\to 0} \ep^{\sqrt{2}\alpha} C_\ep(\alpha, x, 2^{-1/2}, \mu). 
\end{align*}
\end{thm}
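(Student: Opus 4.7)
The plan is to reduce Theorem~\ref{onethm2} to Theorem~\ref{onethm} by analyzing
\[
D_\ep:=\tilde{C}_\ep(\alpha, x, 2^{-1/2}, \mu) - (1-e^{-2\pi\i w}) C_\ep(\alpha, x, 2^{-1/2}, \mu)
\]
and showing $\ep^{\sqrt{2}\alpha} D_\ep\to 0$ as $\ep\to 0$. Splitting the real-line integral in $C_\ep$ as $\int_{-\infty}^0 + \int_0^\infty$ and cancelling the $\int_0^\infty$ pieces against those in $\tilde{C}_\ep$ gives the algebraic identity
\begin{align*}
D_\ep = -(1-e^{-2\pi \i w})\int_{-\infty}^0 e^{-\sqrt{2}wc-\ep^2 c^2} C(\alpha,x,2^{-1/2},\mu,c)\, dc -\i\int_0^{\sqrt{2}\pi} e^{-\sqrt{2}\i wt} C(\alpha,x,2^{-1/2},\mu,\i t)\, dt.
\end{align*}
The second term is independent of $\ep$, so any cancellation in the $\ep\to 0$ limit must come from a hidden identity linking the two pieces.

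That identity is supplied by Cauchy's theorem, for which the essential input is that the Coulomb-gas series $C(\alpha,x,2^{-1/2},\mu,c)=\sum_{n=0}^\infty \frac{(-\mu e^{\sqrt{2}c})^n}{n!}a_n$ depends on $c$ only through $e^{\sqrt{2}c}$. From this together with the coefficient estimate $a_n\le n^{n/2} e^{Cn}$ in Lemma~\ref{convergence}, the series defines an entire function of $c$ satisfying the exact periodicity $C(c+\i\sqrt{2}\pi)=C(c)$ and the uniform strip bound $\sup_{\Re c\le 0}|C(c)|\le \sum_n \mu^n a_n/n!<\infty$. Applying Cauchy's theorem to $e^{-\sqrt{2}wc} C(\alpha,x,2^{-1/2},\mu,c)$ on the rectangle with corners $-R, 0, \i\sqrt{2}\pi, -R+\i\sqrt{2}\pi$, using periodicity on the upper edge, and sending $R\to\infty$---the left vertical side contributes $\i e^{\sqrt{2}wR}\int_0^{\sqrt{2}\pi} e^{-\sqrt{2}\i wt} C(-R+\i t)\, dt$, which decays because $|e^{\sqrt{2}wR}|=e^{\sqrt{2}\Re(w)R}\to 0$ when $\Re(w)\in[-1,0)$ while the $C$-integral stays bounded by the strip estimate---yields
\begin{align*}
(1-e^{-2\pi \i w})\int_{-\infty}^0 e^{-\sqrt{2}wc} C(\alpha,x,2^{-1/2},\mu,c)\, dc = -\i\int_0^{\sqrt{2}\pi} e^{-\sqrt{2}\i wt} C(\alpha,x,2^{-1/2},\mu,\i t)\, dt.
\end{align*}

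Using this identity to eliminate the $\i\int_0^{\sqrt{2}\pi}$ term, $D_\ep$ collapses to
\[
D_\ep = -(1-e^{-2\pi \i w})\int_{-\infty}^0 e^{-\sqrt{2}wc}(e^{-\ep^2 c^2}-1) C(\alpha,x,2^{-1/2},\mu,c)\, dc.
\]
With $|e^{-\ep^2 c^2}-1|\le \ep^2 c^2$, the uniform bound $|C(c)|\le M$ on $(-\infty,0]$, and the decay $|e^{-\sqrt{2}wc}|=e^{(\sqrt{2}+2\Re\alpha)c}$ (with $\sqrt{2}+2\Re\alpha>0$ because $\Re\alpha>-\tfrac{1}{\sqrt{2}}$), the remaining integral is $O(\ep^2)$. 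Since $|\ep^{\sqrt{2}\alpha}|=\ep^{\sqrt{2}\Re\alpha}$ with $\sqrt{2}\Re\alpha>-1$, this gives $\ep^{\sqrt{2}\alpha} D_\ep = O(\ep^{2+\sqrt{2}\Re\alpha})=o(1)$, which together with Theorem~\ref{onethm} gives the claimed equality. The only real obstacle is the Cauchy-theorem step: it hinges on entirety, $\i\sqrt{2}\pi$-periodicity, and uniform strip boundedness of $C$, all of which must be extracted from the Coulomb-gas series rather than the Mellin--Barnes representation of Theorem~\ref{onecthm} (the latter's contour is only a priori convergent for $c$ on or near the real line).
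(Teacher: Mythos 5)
Your proof is correct, and it reaches the same conclusion as the paper by a slightly different route. Both arguments reduce to showing that
\[
D_\ep := \tilde{C}_\ep - (1-e^{-2\pi\i w}) C_\ep
= -(1-e^{-2\pi\i w})\int_{-\infty}^0 e^{-\sqrt{2}wc-\ep^2 c^2} C(c)\,dc \;-\;\i\int_0^{\sqrt{2}\pi} e^{-\sqrt{2}\i wt} C(\i t)\,dt
\]
is $o(\ep^{-\sqrt{2}\Re\alpha})$. The paper handles this by expanding both pieces as Coulomb-gas series, integrating term by term, and matching coefficients --- with the $n=0$ term singled out in the quantity $J_\ep(w)$ because the unregularized $c$-integral only converges there by virtue of $\Re(w)<0$. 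You instead establish the exact un-regularized cancellation identity
\[
(1-e^{-2\pi\i w})\int_{-\infty}^0 e^{-\sqrt{2}wc}C(c)\,dc = -\i\int_0^{\sqrt{2}\pi}e^{-\sqrt{2}\i wt}C(\i t)\,dt
\]
via Cauchy's theorem on the rectangle $[-R,0]\times[0,\sqrt{2}\pi]$, using entirety of $c\mapsto C(c)$, the periodicity $C(c+\i\sqrt{2}\pi)=C(c)$, and the strip bound $\sup_{\Re c\le 0}|C(c)|<\infty$; the left edge contributes $O(e^{\sqrt{2}\Re(w)R})\to 0$ since $\Re(w)<0$. The identity then collapses $D_\ep$ to a single integral against $e^{-\ep^2c^2}-1$, giving the $O(\ep^2)$ bound without any $n$-by-$n$ case analysis. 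The two routes encode the same cancellation --- your contour identity is, upon expanding $C(c)$ in series, precisely what the paper's term-by-term matching verifies --- so they are equivalent in substance; yours is packaged more compactly, at the small cost of spelling out the three analytic hypotheses (entirety, periodicity, strip boundedness), which you correctly extract from \eqref{zcor} and the coefficient bound \eqref{wagner2}. Both proofs then conclude by appealing to Theorem~\ref{onethm}.
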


\subsection{The two-point function at $b=\frac{1}{\sqrt{2}}$}\label{twosec}
Now, we consider the two-point function at $b=\frac{1}{\sqrt{2}}$, with the two points being antipodal to each other. Under some restrictions on $\alpha_{1},\alpha_{2}$, we have the following result for the two-point function when the path integral is taken over all fields with a given zero mode and when $\Re(w)\in(-\tfrac12,0)$. The proof is in \textsection\ref{corthmpf}. The conformal-field-theoretic choice leading to $\Re(w)=0$ is treated separately below. 
\begin{thm}\label{corthm}
Let $b = \frac{1}{\sqrt{2}}$, $k=2$, $x_1 = -e_3$, and $x_2=e_3$. Assume that $\Re(\alpha_1), \Re(\alpha_2) >-\frac{1}{\sqrt{2}}$. Let $w := -1-\sqrt{2}(\alpha_1+\alpha_2)$ and assume that $\Re(w)\in (-\frac{1}{2},0)$. Then for any $\mu>0$ and $c\in \R$, we have
\[
C(\balpha, \bx, 2^{-1/2}, \mu,c) = \frac{e^{2\alpha_1\alpha_2}}{2\pi}\int_{-\infty}^\infty \Gamma(-w-\i y)f(w+\i y) (\mu e^{\sqrt{2}c})^{w+\i y} dy,
\]
where 
\[
f(z) := \frac{(4\pi)^z e^{\frac{1}{2}z^2  -(\frac{3}{2}+w)z}\Gamma(z+1) G(z+\beta_1)G(z + \beta_2)}{(\Gamma(z-w))^zG(\beta_1)G(\beta_2)},
\]
with $\beta_j := 1+\sqrt{2}\alpha_j$ for $j=1,2$. 
\end{thm}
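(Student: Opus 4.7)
\textbf{Proof plan for Theorem \ref{corthm}.} The plan is to combine the Coulomb-gas expansion~\eqref{zcor} with an explicit evaluation of the coefficients $a_n$ at $b=\frac{1}{\sqrt{2}}$, and then to convert the resulting power series in $x=\mu e^{\sqrt{2}c}$ into a Mellin--Barnes integral along the vertical line $\{\Re z=\Re w\}$. For antipodal insertions, $\|x_1-x_2\|=2$, so \eqref{gform} gives $G_{\S^2}(x_1,x_2)=-\frac{1}{2}$, which immediately produces the prefactor $e^{-4\alpha_1\alpha_2 G_{\S^2}(x_1,x_2)}=e^{2\alpha_1\alpha_2}$ appearing in the statement. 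It remains to compute the bracketed series in~\eqref{zcor}.

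The heart of the argument is the explicit evaluation of $a_n$. First I would stereographically project one of the poles to infinity, using the standard identities $\|e_3-y\|^2=4/(1+|z|^2)$, $\|-e_3-y\|^2=4|z|^2/(1+|z|^2)$, $\|y-y'\|^2=4|z-z'|^2/[(1+|z|^2)(1+|z'|^2)]$, and $da(y)=4\,d^2z/(1+|z|^2)^2$. Collecting the constants coming from $G_{\S^2}(x,y)=-\ln\|x-y\|-\tfrac{1}{2}+\ln 2$ and from the projection, and combining the powers of $(1+|z_l|^2)$ via the identity $\beta_1+\beta_2=1-w$, expresses $a_n$ as an explicit numerical factor times
\begin{align*}
I_n:=\int_{\C^n}\prod_{l}|z_l|^{2(\beta_1-1)}(1+|z_l|^2)^{w-n}\prod_{l<l'}|z_l-z_{l'}|^2\,d^2z_l.
\end{align*}
The defining special feature of the coupling $b=\frac{1}{\sqrt{2}}$ is that the two-body interaction in~\eqref{andef} produces exactly a Vandermonde-squared factor. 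To compute $I_n$ I would write $\prod_{l<l'}|z_l-z_{l'}|^2=\det(z_l^{l'-1})\,\overline{\det(z_l^{l'-1})}$, expand as a double sum over $S_n\times S_n$, pass to polar coordinates, and observe that the angular integrals force the two permutations to coincide. What remains is $n!$ identical copies of a product of elementary Beta integrals of the form $\tfrac{1}{2}\Gamma(\beta_1+k-1)\Gamma(n-w-\beta_1-k+1)/\Gamma(n-w)$ for $k=1,\ldots,n$. Applying the Barnes identity $G(z+n)/G(z)=\prod_{k=0}^{n-1}\Gamma(z+k)$ twice, and using $\beta_1+\beta_2=1-w$ to convert the second Gamma product into $G(\beta_2+n)/G(\beta_2)$, yields $a_n=f(n)$ with $f$ the function defined in the theorem; the factor $\Gamma(z+1)$ in $f$ records the $n!$ coming from the permutation sum.

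The final step is the Mellin--Barnes conversion
\begin{align*}
\sum_{n=0}^\infty\frac{(-x)^n}{n!}\,f(n)=\frac{1}{2\pi}\int_{-\infty}^\infty\Gamma(-w-\i y)\,f(w+\i y)\,x^{w+\i y}\,dy,
\end{align*}
obtained by writing the right-hand side as $\frac{1}{2\pi\i}\int_{\{\Re z=\Re w\}}\Gamma(-z)f(z)x^z\,dz$ and closing the contour to the right to collect the residues of $\Gamma(-z)$ at $z=0,1,2,\ldots$. The main technical obstacle is the analytic control of $f$. One must (i) extend $f$ past the integers via the entirety of $G$ and the analytic logarithm $\Pi$ entering $\Gamma(z-w)^z=\exp(z\Pi(z-w))$; (ii) verify local integrability on the contour near $y=0$, where the factor $\Gamma(z-w)^{-z}$ has an apparent singularity of the form $|y|^{\Re(w)}$; and (iii) bound the decay of $f(w+\i y)$ as $|y|\to\infty$ via Stirling-type asymptotics for $\Gamma$ and the Barnes asymptotic expansion of $G$, which are needed to ensure that the Gaussian factor $e^{z^2/2}$ ultimately dominates the competing $e^{O(z^2)}$ coming from $G(z+\beta_1)G(z+\beta_2)/\Gamma(z-w)^z$, both on the vertical contour and on the large right semicircle used to close it. The hypothesis $\Re(w)\in(-\tfrac{1}{2},0)$ is used precisely to keep the contour strictly between the relevant singular layers of the integrand and to make these estimates close; the same style of argument underlies Theorems~\ref{zerocthm} and~\ref{onecthm}.
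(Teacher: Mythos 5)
Your plan — Coulomb-gas expansion, stereographic projection, Vandermonde determinantal evaluation, Barnes $G$-function resummation, and Mellin--Barnes conversion via residues of $\Gamma(-z)$ — is exactly the strategy the paper uses (Lemmas~\ref{zeroform1two}, \ref{twoform1}, \ref{fzfinal2}, \ref{twofinal0}, \ref{zerofinaltwo}). The one place the paper takes slightly more care than your sketch is that it establishes the contour-integral identity first for a vertical line at $x_0\in(\Re w,0)$, where $f$ is analytic along the whole contour, and only then pushes the contour down to $\Re z=\Re w$ by a dominated-convergence argument — rather than applying the residue theorem directly on a contour through the integrable singularity at $z=w$, and the intuition in your item (iii) is a little off: the large-$|y|$ decay comes from an exact cancellation of the $z^2\log z$ terms between the two Barnes $G$'s and $\Gamma(z-w)^z$ (Lemma~\ref{fzfinal2}), not from the explicit $e^{z^2/2}$ factor overwhelming the rest.
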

Next, we integrate over the zero mode following equation \eqref{cororiginal}. As before, we define the regularized two-point function 
\begin{align}\label{regcortwo}
C_\ep(\balpha, \bx, 2^{-1/2}, \mu) := \int_{-\infty}^\infty e^{-\sqrt{2}wc-\ep^2 c^2} C(\balpha, \bx, 2^{-1/2}, \mu,c) dc.
\end{align}
The following theorem shows that when $\Re(w)\in(-\tfrac12,0)$, the regularized two-point function
 diverges as $\epsilon\to 0$. This contrasts with the one-point function, where the unrenormalized quantity collapses to $0$ when $\Re(\alpha)<0$.  The proof is in \textsection\ref{infthmpf}.
\begin{thm}\label{infthm}
Under the assumptions of Theorem \ref{corthm}, we have  
\begin{align*}
&\lim_{\ep \to 0} \ep^{-w} C_\ep(\balpha, \bx, 2^{-1/2}, \mu) \\
&=   \frac{(4\sqrt{2}\pi\mu)^w e^{2\alpha_1\alpha_2-\frac{1}{2}w^2  -\frac{3}{2}w}\Gamma(-w)\Gamma(w+1) G(w+\beta_1)G(w + \beta_2)\cos(\frac{\pi w}{2})\Gamma(\frac{w+1}{2})}{\sqrt{2\pi } G(\beta_1)G(\beta_2)}.  
\end{align*}
and the right side is nonzero if $\alpha_1$ and $\alpha_2$ are both nonzero, and zero otherwise.
\end{thm}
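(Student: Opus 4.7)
The plan is to substitute the formula for $C(\balpha, \bx, 2^{-1/2}, \mu, c)$ from Theorem~\ref{corthm} into the definition of $C_\ep$ in equation~\eqref{regcortwo}, interchange the $c$- and $y$-integrations, and carry out the Gaussian integral in $c$. Fubini is justified because the joint $(c,y)$-integrand's modulus reduces — after the two $\pm\sqrt{2}wc$ contributions from $e^{-\sqrt{2}wc-\ep^2 c^2}$ and $(\mu e^{\sqrt{2}c})^{w+\i y} = \mu^{w+\i y} e^{\sqrt{2}(w+\i y)c}$ cancel — to $\mu^{\Re(w)} e^{-\ep^2 c^2}|\Gamma(-w-\i y) f(w+\i y)|$, and the $y$-marginal is absolutely integrable by the asymptotic bounds of the Appendix. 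The Gaussian $c$-integral evaluates to $\frac{\sqrt{\pi}}{\ep} e^{-y^2/(2\ep^2)}$, and the rescaling $y = \ep u$ cancels the $1/\ep$ and normalizes the Gaussian, producing
\[
C_\ep = \frac{e^{2\alpha_1\alpha_2}}{2\sqrt{\pi}}\int_{-\infty}^\infty \Gamma(-w-\i\ep u)\, f(w+\i\ep u)\,\mu^{w+\i\ep u}\, e^{-u^2/2}\,du.
\]

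Next I would analyze the behavior of $f(w+\i\ep u)$ as $\ep \to 0$. The denominator factor $\Gamma(z-w)^z$ produces a $(z-w)^{-z}$-type singularity at $z=w$: since $\Pi(\zeta) = -\log\zeta + O(\zeta)$ as $\zeta \to 0$ on $\C\setminus(-\infty,0]$, one has $\Gamma(z-w)^z \sim (z-w)^{-z}$, and therefore $f(z) \sim (z-w)^z \tilde f(w)$ where $\tilde f(w) = (4\pi)^w e^{-w^2/2-3w/2}\Gamma(w+1)G(w+\beta_1)G(w+\beta_2)/[G(\beta_1)G(\beta_2)]$ is obtained by direct substitution $z=w$ in the remaining factors. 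Specializing to $z = w+\i\ep u$, using $(\i\ep u)^w = \ep^w(\i u)^w$ with $(\i u)^w = |u|^w e^{\i\pi w\,\sign(u)/2}$ (principal branch) and $(\i\ep u)^{\i\ep u} \to 1$, gives the pointwise limit $\ep^{-w} f(w+\i\ep u) \to (\i u)^w \tilde f(w)$ for each $u \ne 0$.

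Passing to the limit under the integral is the main technical step. I would invoke dominated convergence with a two-piece bound: on $\{|\ep u| \le R\}$ the rescaled integrand is bounded by a constant times $|u|^{\Re(w)}$, integrable against the Gaussian since $\Re(w) > -1$; on the complement, Stirling's asymptotics for $\Gamma(-w-\i y)$ and $\Gamma(w+1+\i y)$ combined with the standard asymptotics of the Barnes $G$-function for $G(z+\beta_j)$ and $\Gamma(z-w)^z$ give a polynomial-times-exponential bound in $u$ dominated by $e^{-u^2/2}$, uniformly in $\ep \in (0,1]$ (using $|\ep^{-w}| = \ep^{-\Re(w)} \le 1$). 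Crucially, the leading $y^2\log|y|$ and $y^2$ contributions from $G(z+\beta_1)G(z+\beta_2)$, $\Gamma(z-w)^{-z}$, and $e^{z^2/2-(3/2+w)z}$ cancel, leaving only the $-\pi|y|/2$ decay from each of the two Gammas. Taking the limit then gives
\[
\lim_{\ep\to 0}\ep^{-w} C_\ep = \frac{e^{2\alpha_1\alpha_2}}{2\sqrt{\pi}}\,\Gamma(-w)\,\tilde f(w)\,\mu^w \int_{-\infty}^\infty (\i u)^w e^{-u^2/2}\,du.
\]
The remaining integral splits into $u>0$ and $u<0$ pieces where $(\i u)^w$ takes conjugate values $u^w e^{\i\pi w/2}$ and $|u|^w e^{-\i\pi w/2}$, and evaluates to $2\cos(\pi w/2)\cdot 2^{(w-1)/2}\Gamma((w+1)/2)$ via $t = u^2/2$. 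Substituting $\tilde f(w)$ and collecting constants through $2^{w/2}(4\pi)^w\mu^w = (4\sqrt{2}\pi\mu)^w$ reproduces the stated formula.

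For the dichotomy, within the range $\Re(w)\in(-\tfrac12,0)$ each of $\cos(\pi w/2)$, $\Gamma((w+1)/2)$, $\Gamma(-w)$, and $\Gamma(w+1)$ is finite and nonzero, so vanishing reduces to $G(w+\beta_1) G(w+\beta_2) = G(-\sqrt{2}\alpha_2) G(-\sqrt{2}\alpha_1) = 0$. Since $G$ vanishes only at nonpositive integers, this requires $\sqrt{2}\alpha_j$ to be a nonnegative integer for some $j$; a short check shows that $\alpha_j = k/\sqrt{2}$ with $k \ge 1$ forces $\Re(w) < -k \le -1$ after using $\Re(\alpha_{j'}) > -1/\sqrt{2}$, contradicting $\Re(w) > -\tfrac12$, so the only possibility is $\alpha_j = 0$. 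The main obstacle in the proof is the uniform dominating bound for dominated convergence in the regime $|\ep u| > R$: the delicate cancellations among the Barnes-$G$ factors, the $\Gamma^z$ factor, and the quadratic exponential have to be made quantitative, and this is where the analytic lemmas collected in the Appendix do the real work.
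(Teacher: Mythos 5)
Your proposal is essentially the paper's own proof. You perform the same sequence — Fubini, Gaussian $c$-integral, change of variable $y = r + \ep u$ (so that $y-r=\ep u$), isolate the $(z-w)^z$ singularity at $z=w$, pass to the limit by dominated convergence, and evaluate $\int(\i u)^w e^{-u^2/2}\,du$ by splitting at the origin — and your dichotomy argument uses the same observation that $w+\beta_1=-\sqrt2\alpha_2$, $w+\beta_2=-\sqrt2\alpha_1$ together with the constraints on $\Re(w)$ and $\Re(\alpha_j)$. The only minor cosmetic difference is that the paper factors $f(z)=(z-w)^z\cdot[\text{regular part involving }\Gamma(1+z-w)^z]$ exactly via Lemma~\ref{gammalmm2} instead of your asymptotic $\Pi(\zeta)=-\log\zeta+O(\zeta)$, which amounts to the same thing.
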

Next, we deal with the case $\Re(w)=0$. This is the physically interesting case, for the following reason. Recall from equation \eqref{alphajform}, physically interesting choices of $\alpha_j$'s correspond to $\alpha_j$'s of the form $\frac{1}{2}Q + \i P_j$, where $P_j\in \R$. When $b= \frac{1}{\sqrt{2}}$, we have $Q = -\frac{1}{\sqrt{2}}$. Thus, for the two-point function, we take
\begin{align}\label{cftalpha}
\alpha_1 = -\frac{1}{2\sqrt{2}} + \i P_1, \ \ \ \alpha_2 = -\frac{1}{2\sqrt{2}} + \i P_2
\end{align}
for some $P_1,P_2\in \R$. Then 
\[
w = -1-\sqrt{2}(\alpha_1+\alpha_2) = -\sqrt{2}\i (P_1+P_2),
\]
which is purely imaginary. The next theorem gives the analogue of Theorem \ref{corthm} when $\Re(w)=0$. The proof is in \textsection\ref{zerocorthmpf}.
\begin{thm}\label{zerocorthm}
Let $b = \frac{1}{\sqrt{2}}$, $k=2$, $x_1 = -e_3$, and $x_2=e_3$. Assume that $\Re(\alpha_1), \Re(\alpha_2) >-\frac{1}{\sqrt{2}}$. Let $w := -1-\sqrt{2}(\alpha_1+\alpha_2)$, and assume that $\Re(w)=0$ and $\Im (w)\ne 0$. Then for any $\mu>0$, $c\in \R$, and $q\in (0,1)$, we have
\begin{align*}
C(\balpha, \bx, 2^{-1/2}, \mu,c) &= \frac{e^{2\alpha_1\alpha_2}}{2\pi }\int_{-\infty}^\infty \Gamma(-q-\i y)f(q+\i y) (\mu e^{\sqrt{2}c})^{q+\i y} dy + e^{2\alpha_{1}\alpha_2},
\end{align*}
where $f$ is as in Theorem \ref{corthm}.
\end{thm}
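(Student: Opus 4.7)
The plan is to deduce Theorem~\ref{zerocorthm} from Theorem~\ref{corthm} by a Mellin--Barnes contour shift followed by analytic continuation in $(\alpha_1,\alpha_2)$. The obstruction is that Theorem~\ref{corthm}'s contour $\{w+\mathrm{i}y:y\in\R\}$ sits strictly to the left of the pole of $\Gamma(-z)$ at $z=0$ when $\Re(w)<0$, but passes through that pole (at $y=-\Im w$) as soon as $\Re(w)=0$ with $\Im w\neq 0$, so a direct analytic continuation of that formula is unavailable. The remedy is to first deform the contour across $z=0$ to the line $\Re(z)=q$ for a chosen $q\in(0,1)$, which lies between the poles of $\Gamma(-z)$ at $0$ and $1$.

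First I would work in the range $\Re(w)\in(-\tfrac12,0)$ covered by Theorem~\ref{corthm} and recast its formula as the Mellin--Barnes contour integral $\tfrac{e^{2\alpha_1\alpha_2}}{2\pi\mathrm{i}}\int_{w-\mathrm{i}\infty}^{w+\mathrm{i}\infty}\Gamma(-z)f(z)(\mu e^{\sqrt{2}c})^z\,dz$. Shifting to $\Re(z)=q$ crosses only the simple pole at $z=0$: the Barnes $G$-functions are entire, $\Gamma(z+1)$ has poles only at negative integers, and $\Gamma(z-w)^z=\exp(z\,\Pi(z-w))$ is single-valued and analytic on the open strip $\Re(z)\in(\Re w,q)$ since $z-w$ stays in the right half-plane there. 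The branch/pole-like singularity of the original integrand at $z=w$ lies on the left edge rather than in the interior; it is integrable because $\Re(w)>-\tfrac12$, and any indentation around it contributes $O(\epsilon^{1-\Re(w)})\to 0$. A direct check gives $f(0)=1$ (every factor evaluates trivially at $z=0$, e.g.\ $\Gamma(z-w)^z|_{z=0}=\exp(0\cdot\Pi(-w))=1$ and $G(z+\beta_j)/G(\beta_j)|_{z=0}=1$), and combined with $\Res_{z=0}\Gamma(-z)=-1$ and the orientation of the closing rectangle, the shift produces exactly the additive term $e^{2\alpha_1\alpha_2}$ appearing in the statement.

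To justify the deformation rigorously I would bound the integrand on the horizontal segments $\Im z=\pm R$ uniformly in $\Re z\in[\Re w,q]$ and show it vanishes as $R\to\infty$. This is the main technical step and uses Stirling's formula for $\Gamma$ together with Barnes' asymptotics for $G(z+1)$. The crucial ingredient is the cancellation, special to $b=1/\sqrt{2}$, between the growth of $G(z+\beta_1)G(z+\beta_2)$ in the numerator and $\Gamma(z-w)^z$ in the denominator, which leaves a net exponential decay of order $e^{-\pi|y|/2}$ inherited from the $\Gamma(-z)$ factor. These are exactly the estimates collected in the Appendix.

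Having established the identity for $\Re(w)\in(-\tfrac12,0)$, I would close by analytic continuation. Both sides are analytic in $(\alpha_1,\alpha_2)$ in a complex neighborhood of the target locus $\{\Re(w)=0,\,\Im w\neq 0\}\cap\{\Re(\alpha_j)>-1/\sqrt{2}\}$: the left side by Lemma~\ref{convergence}, and the right side because on the fixed contour $\Re(z)=q>0$ the quantity $z-w$ stays away from $(-\infty,0]$, so $\Gamma(z-w)^z$ is unambiguous and depends analytically on $w$, while the integral converges uniformly on compacta by the same decay bounds used above. The identity then extends to the locus $\Re(w)=0$, $\Im w\neq 0$ by the identity theorem. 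The main obstacle throughout is the uniform decay control of the integrand, which must hold simultaneously across the strip $\Re(z)\in[\Re w,q]$ and over a neighborhood of parameters so that the contour shift and the analytic continuation can both be carried out.
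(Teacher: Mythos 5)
Your proposal follows the same basic strategy as the paper's proof: first establish the contour-shift formula (picking up the residue at $z=0$, which gives the additive term $e^{2\alpha_1\alpha_2}$) in the region $\Re(w)\in(-\tfrac12,0)$, and then pass to the limit $\Re(w)\to 0$. The paper executes the final step via the explicit one-parameter perturbation $\alpha_{1,\epsilon}:=\alpha_1+\sqrt2\,\epsilon$, so that $\Re(w_\epsilon)=-2\epsilon$, followed by dominated convergence as $\epsilon\to 0$; you instead invoke analytic continuation of both sides in $(\alpha_1,\alpha_2)$, supported by Lemma~\ref{convergence} on the left and uniform decay estimates on the right. These are interchangeable ways to justify the same limiting argument, and both ultimately rest on the same Appendix estimates (Lemma~\ref{fzfinal2} and Lemma~\ref{gammalmm}). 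One minor technical difference in execution: the paper performs the contour shift starting from a line $\Re(z)=x$ with $x\in(\Re(w_\epsilon),0)$ (as in Lemma~\ref{limlmm1}), so the left edge of the closing rectangle avoids the branch point at $z=w$ entirely; you take Theorem~\ref{corthm}'s contour through $w$ as your starting point and must therefore indent around that singularity, which is a harmless extra step but not free. Incidentally, your stated indentation bound $O(\epsilon^{1-\Re(w)})$ should be $O(\epsilon^{1+\Re(w)})$ (near $z=w$ the integrand behaves like $|z-w|^{\Re(w)}$, so a semicircle of radius $\epsilon$ contributes $\epsilon\cdot\epsilon^{\Re(w)}$); since $\Re(w)>-\tfrac12$ this still tends to zero, so your conclusion stands.
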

The next theorem identifies the limiting behavior of the regularized two-point correlation function~\eqref{regcortwo} as $\ep\to 0$, when $\alpha_1,\alpha_2$ are of the form~\eqref{cftalpha}. It turns out that the limit is a distribution, instead of a function. This theorem is proved in \textsection\ref{distthmpf}.
\begin{thm}\label{distthm}
Let $b = \frac{1}{\sqrt{2}}$, $k=2$, $x_1 = -e_3$, and $x_2=e_3$. Suppose that $\alpha_1,\alpha_2$ are of the form displayed in equation \eqref{cftalpha}. Let us write the regularized correlation function $C_\ep(\balpha, \bx, 2^{-1/2},\mu)$ defined in equation \eqref{regcortwo} simply as $C_\ep(P_1,P_2)$. Then 
\[
C(P_1,P_2) := \lim_{\ep\to 0}C_\ep(P_1,P_2)= \pi e^{\frac{1}{4}+2P_1^2}\delta(P_1+P_2),
\]
in the sense that for any smooth function $\varphi:\R^2\to \R$ with compact support,
\[
\lim_{\ep \to 0} \iint C_\ep(P_1,P_2) \varphi(P_1,P_2)dP_1dP_2 =  \pi \int_{-\infty}^\infty e^{\frac{1}{4}+ 2P_1^2}\varphi(P_1,-P_1)dP_1.
\]
\end{thm}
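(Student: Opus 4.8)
The plan is to start from the Mellin–Barnes representation of Theorem~\ref{zerocorthm}, carry out the zero–mode integral explicitly — it is Gaussian after the $\ep$–regularization — and then read off the $\delta$–function from the pole of $\Gamma(-z)$ at $z=0$.

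First I would substitute the formula of Theorem~\ref{zerocorthm} (for an arbitrary fixed $q\in(0,1)$) into~\eqref{regcortwo}, write $(\mu e^{\sqrt 2 c})^{q+\i y}=\mu^{q+\i y}e^{\sqrt 2(q+\i y)c}$, and use that $-\sqrt 2 w=2\i(P_1+P_2)$ is purely imaginary, together with the elementary identity $\int_{\R}e^{ac-\ep^2c^2}\,dc=\tfrac{\sqrt\pi}{\ep}e^{a^2/(4\ep^2)}$ (valid for complex $a$ by completing the square). Interchanging the $c$– and $y$–integrals — legitimate because of the $\ep$–cutoff and the decay of $\Gamma(-z)f(z)$ on vertical lines (Lemma~\ref{convergence} and the Barnes $G$ estimates in the Appendix) — one obtains
\[
C_\ep(P_1,P_2)=\frac{e^{2\alpha_1\alpha_2}\sqrt\pi}{\ep}\Biggl[e^{w^2/(2\ep^2)}+\frac1{2\pi\i}\int_{\Re z=q}\Gamma(-z)f(z)\mu^{z}e^{(z-w)^2/(2\ep^2)}\,dz\Biggr].
\]
The regulator $e^{(z-w)^2/(2\ep^2)}$ has its steepest–descent saddle at $z=w$, which is purely imaginary, and $f(0)=1$ gives $e^{w^2/(2\ep^2)}=-\Res_{z=0}\!\bigl[\Gamma(-z)f(z)\mu^{z}e^{(z-w)^2/(2\ep^2)}\bigr]$. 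So I would deform $\Re z=q$ onto the imaginary axis, indented to the right of $z=0$ (using analyticity of $f$ in the strip $0<\Re z<q$, where $\Re(z-w)>0$). Along $\Re z=0$ one has $z-w=\i(y'-\Im w)$ (with $z=\i y'$ and $\Im w=-\sqrt2(P_1+P_2)$), so the regulator becomes the genuinely decaying Gaussian $e^{-(y'-\Im w)^2/(2\ep^2)}$ of width $\ep$ centred at $y'=\Im w$; the indentation contributes half of $-2\pi\i\,\Res_{z=0}$, i.e.\ $\tfrac12e^{w^2/(2\ep^2)}$, leaving
\[
C_\ep(P_1,P_2)=\frac{e^{2\alpha_1\alpha_2}\sqrt\pi}{2\ep}\,e^{w^2/(2\ep^2)}+\frac{e^{2\alpha_1\alpha_2}\sqrt\pi}{2\pi\ep}\,\mathrm{p.v.}\!\int_{\R}\Gamma(-\i y')f(\i y')\mu^{\i y'}e^{-(y'-\Im w)^2/(2\ep^2)}\,dy'.
\]

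Since $w^2=-2(P_1+P_2)^2$, the first term is $\tfrac{\sqrt\pi}{2\ep}e^{2\alpha_1\alpha_2}e^{-(P_1+P_2)^2/\ep^2}$, a nascent Gaussian converging in the distributional sense to $\tfrac\pi2 e^{2\alpha_1\alpha_2}\delta(P_1+P_2)$; as $2\alpha_1\alpha_2=\tfrac14-\tfrac{\i}{\sqrt2}(P_1+P_2)-2P_1P_2$ equals $\tfrac14+2P_1^2$ on $\{P_1+P_2=0\}$, this is $\tfrac\pi2 e^{1/4+2P_1^2}\delta(P_1+P_2)$ — exactly half of the claimed limit. For the principal–value term I would write $\Gamma(-\i y')=\tfrac{\i}{y'}+\Gamma_{\mathrm{reg}}(y')$ with $\Gamma_{\mathrm{reg}}$ smooth near $0$, and use that near $z=\i\Im w$ the only non–smooth factor of $f$ is $(\Gamma(z-w))^{z}=e^{z\Pi(z-w)}$, whose boundary behaviour on the imaginary axis gives $f(\i y')=e^{\i y'\log|y'-\Im w|}\widetilde f(y')$ with $\widetilde f$ bounded and $\widetilde f\to1$ as $(y',\Im w)\to(0,0)$. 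Inserting this and rescaling $y'=\Im w+\ep u$: the $\Gamma_{\mathrm{reg}}$ piece is $O(\ep)$ times a bounded factor carrying a logarithmic oscillation $e^{\i\,\Im w\,\log\ep}$, so after the prefactor $\tfrac{\sqrt\pi}{\ep}$ it is $O(1)$ and, because $\Im w=-\sqrt2(P_1+P_2)$ and $\log\ep\to-\infty$, it vanishes in the distributional sense by Riemann–Lebesgue. The $\tfrac{\i}{y'}$ piece is where, as $P_1+P_2\to0$, the saddle $z=w$ collides with the pole $z=0$; splitting $(P_1,P_2)$–space into $|P_1+P_2|\ge\eta(\ep)$ and $|P_1+P_2|<\eta(\ep)$ with $\eta(\ep)\to0$ and $\eta(\ep)|\log\ep|\to0$, the small region contributes $o(1)$, while on the complement, pairing with a test function and passing to $\omega=-\sqrt2(P_1+P_2)$, one is led to $\int_{|\omega|\ge\eta}\tfrac{g(\omega)}{\omega}e^{\i\omega\log\ep}\,d\omega\to -\i\pi g(0)$ via $2\int_{\eta|\log\ep|}^{\infty}\tfrac{\sin t}{t}\,dt\to\pi$; tracking the constants $f(0)=1$, $\int_{\R}e^{-u^2/2}\,du=\sqrt{2\pi}$, and the Jacobian of $\omega=-\sqrt2(P_1+P_2)$, this yields precisely the remaining $\tfrac\pi2 e^{1/4+2P_1^2}\delta(P_1+P_2)$. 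Adding the two halves gives $C(P_1,P_2)=\pi e^{1/4+2P_1^2}\delta(P_1+P_2)$.

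The main obstacle is the last step, in the regime $P_1+P_2\approx 0$: there the steepest–descent saddle of the Gaussian cutoff merges with the pole of $\Gamma(-z)$ at $z=0$, and one must control — uniformly in $(P_1,P_2)$ so that the bounds survive pairing with a test function — the interplay of (i) the $\i/y'$ singularity, (ii) the $\ep$–scale Gaussian, and (iii) the bounded but oscillatory boundary singularity $y'\mapsto e^{\i y'\log|y'-\Im w|}$ of $f$ inherited from $(\Gamma(z-w))^{z}$; this is exactly what makes the coefficient of the $\delta$ come out to be $\pi e^{1/4+2P_1^2}$ rather than only the ``obvious'' $\tfrac\pi2 e^{1/4+2P_1^2}$ from the indentation. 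A secondary technical point is obtaining the uniform bounds on $f(\i y')$ needed above (bounded, with a jump/oscillatory singularity at $y'=\Im w$ and exponential–type decay as $y'\to\pm\infty$, after the cancellation of the leading $G$–asymptotics against $(\Gamma(z-w))^{z}$), for which one relies on the Barnes $G$–function estimates collected in the Appendix.
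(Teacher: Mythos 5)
Your route is genuinely different from the paper's, and in spirit goes through to the correct answer, but you have landed yourself in exactly the difficulty the paper was designed to avoid. You keep the Mellin–Barnes contour at a fixed abscissa $q\in(0,1)$, do the Gaussian $c$–integral to get the regulator $e^{(z-w)^2/(2\ep^2)}$, and then deform to $\Re z=0$ with a rightward indentation at $z=0$; this cancels half of the nascent Gaussian coming from the ``$+1$'' of Theorem~\ref{zerocorthm}, and you must then show that the principal--value integral on the imaginary axis restores the missing $\tfrac{\pi}{2}e^{1/4+2P_1^2}\delta(P_1+P_2)$. The accounting through the $\sin u/u$ integral and the $\i/y'$ residue piece is consistent --- your $-\i\pi g(0)$ limit is right --- but the genuinely hard part is the one you flag yourself: when $P_1+P_2=O(\ep)$ the $\Gamma(-z)$--pole at $z=0$, the saddle/Gaussian at $z=w$, and the bounded but $\log$--oscillatory branch singularity of $(\Gamma(z-w))^z$ at $z=w$ all collide on the imaginary axis, and you need uniform control of the resulting oscillatory singular integral after pairing with a two--variable test function.

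The paper sidesteps this collision rather than analyze it. In Lemma~\ref{cnewlmm} the contour abscissa $q$ is a free parameter, and the proof of Theorem~\ref{distthm} sets $q=\ep$. After the $c$--integration and the rescaling $u=(y-a)/\ep$ one lands on equation~\eqref{cesplit}: the $J(a)$ (``$+1$'') term becomes the full nascent Gaussian $E_\ep(a)=\tfrac{\sqrt{\pi}}{\ep}J(a)e^{-a^2/(2\ep^2)}$, which alone already carries the entire $\pi e^{1/4+2P_1^2}\delta(P_1+P_2)$ (Lemma~\ref{elmm}); and the integral term $D_\ep(a)$ has its $\Gamma(-z)$--pole pushed to $u=-a/\ep\to\pm\infty$ in the rescaled variable, so there is no residue and no indentation, and no collision to untangle. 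The price is that $D_\ep$ still carries the factor $1/(\ep+\i a+\i\ep u)$ which is small in the denominator when $a\approx 0$, and the paper handles this in Lemma~\ref{dbase} by writing $\tfrac{1}{\ep+\i x}=\int_0^\infty e^{-t(\ep+\i x)}dt$ and integrating by parts twice against the test function, using the uniform analyticity bounds of Lemma~\ref{annoying} to get absolute integrability. That substitution--and--double--integration--by--parts is the analogue, in the paper's setup, of all the delicate work your p.v.\ piece would require --- but in a form where the pole never meets the Gaussian, so the estimates can be made cleanly uniform in $(P_1,P_2)$. If you want to pursue your route you would need to prove uniform (in $\ep$ and $P_1+P_2$) bounds for both $\tfrac{\i}{y'}$ and $\Gamma_{\mathrm{reg}}$ contributions strong enough to survive the pairing with a test function; the paper's choice $q=\ep$ is precisely the device that makes that uniformity easy.
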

There is a recent proposal for the two-point function of timelike Liouville field theory when $\alpha_1,\alpha_2$ are of the form~\eqref{cftalpha}, by \citet*{collieretal24}. The proposal given in \cite[Equation (3.7)]{collieretal24} reads
\begin{align*}
C(P_1,P_2) = \frac{8\sqrt{2}\sin(2\pi b P_1)\sin(2\pi b^{-1}P_1)}{P_1^2}(\delta(P_1-P_2)+\delta(P_1+P_2)).
\end{align*}
This has some similarity with the result from Theorem \ref{distthm}, but is clearly not the same. In particular, their proposal is
invariant under the symmetries $P_{j}\mapsto\pm P_{j}$, whereas the formula from Theorem \ref{distthm}  is not. Understanding how (or whether) these prescriptions can be reconciled --- e.g., via a different choice of contour/regularization or an additional symmetry requirement --- remains an open question.

Let us now work out the two-point function with the Hankel contour. Proceeding as before, we arrive at the regularized two-point function
\begin{align*}
\tilde{C}_\ep(\balpha, \bx,2^{-1/2},\mu) &:= (1-e^{-2\pi \i w})\int_0^\infty e^{-\sqrt{2}wc-\ep^2c^2} C(\balpha, \bx, 2^{-1/2}, \mu, c) dc \\
&\qquad -\i \int_0^{\sqrt{2}\pi} e^{-\sqrt{2}\i wt} C(\balpha, \bx, 2^{-1/2}, \mu, \i t) dt.
\end{align*}
When $\alpha_1,\alpha_2$ are of the form~\eqref{cftalpha}, let us denote the above by $\tilde{C}_\ep(P_1,P_2)$. The following theorem, proved in \textsection\ref{distthm2pf}, shows that $\tilde{C}_\ep(P_1,P_2)$ converges to zero as $\ep\to 0$.
\begin{thm}\label{distthm2}
As $\ep\to 0$, $\tilde{C}_\ep(P_1,P_2)$ converges to zero in the sense of distributions.
\end{thm}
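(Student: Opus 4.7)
The plan is to reduce $\tilde{C}_\ep$ to two $O(\ep^2)$-type remainders via a contour-deformation argument, and then to show each vanishes against any compactly supported test function. The crucial structural observation is that $C(\balpha,\bx,2^{-1/2},\mu,c)$ depends on $c$ only through $\mu e^{\sqrt{2}c}$ (as is manifest from the Coulomb-gas expansion~\eqref{zcor}); hence it is entire and $\sqrt{2}\pi\i$-periodic in $c$. In particular,
\[
g_\ep(c) := e^{-\sqrt{2}wc-\ep^2 c^2}C(\balpha,\bx,2^{-1/2},\mu,c)
\]
is entire. Applying Cauchy's theorem to $g_\ep$ on the rectangle with corners $0,\ R,\ R+\sqrt{2}\pi\i,\ \sqrt{2}\pi\i$ and letting $R\to\infty$ --- the right edge vanishes thanks to the Gaussian factor $e^{-\ep^2 R^2}$, which dominates the strip bound $|C(\ldots,c)|\le c_1(1+e^{q\sqrt{2}\Re c})$ implied by Theorem~\ref{zerocorthm} --- yields the identity
\[
\int_0^\infty \bigl[g_\ep(t)-g_\ep(t+\sqrt{2}\pi\i)\bigr]\,dt \;=\; \i\int_0^{\sqrt{2}\pi} g_\ep(\i s)\,ds.
\]

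A direct calculation gives $g_\ep(t+\sqrt{2}\pi\i)= e^{-2\pi\i w+2\pi^2\ep^2-2\sqrt{2}\pi\i\ep^2 t}g_\ep(t)$ and $g_\ep(\i s)= e^{\ep^2 s^2}e^{-\sqrt{2}\i ws}C(\ldots,\i s)$. Substituting these into the Cauchy identity and isolating the quantities $(1-e^{-2\pi\i w})\int_0^\infty e^{-\sqrt{2}wc-\ep^2 c^2}C(\ldots,c)\,dc$ and $\i\int_0^{\sqrt{2}\pi}e^{-\sqrt{2}\i ws}C(\ldots,\i s)\,ds$ that make up $\tilde{C}_\ep$ yields the decomposition $\tilde{C}_\ep = R_1(\ep) + R_2(\ep)$, where
\begin{align*}
R_1(\ep) &:= e^{-2\pi\i w}\int_0^\infty e^{-\sqrt{2}wc-\ep^2 c^2}\bigl[e^{2\pi^2\ep^2-2\sqrt{2}\pi\i\ep^2 c}-1\bigr]C(\ldots,c)\,dc,\\
R_2(\ep) &:= \i\int_0^{\sqrt{2}\pi}e^{-\sqrt{2}\i ws}\bigl[e^{\ep^2 s^2}-1\bigr]C(\ldots,\i s)\,ds.
\end{align*}
The term $R_2(\ep)$ is trivially $O(\ep^2)$ uniformly for $(P_1,P_2)$ in any compact set, since $e^{\ep^2 s^2}-1=O(\ep^2)$ on the compact interval $s\in[0,\sqrt{2}\pi]$ and both $e^{-\sqrt{2}\i ws}=e^{-2(P_1+P_2)s}$ and $C(\ldots,\i s)$ are continuous and locally bounded; hence $R_2(\ep)\to 0$ in the sense of distributions, and it remains to show the same for $R_1(\ep)$.

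To treat $R_1(\ep)$, decompose $C(\balpha,\bx,2^{-1/2},\mu,c)=A(c)+e^{2\alpha_1\alpha_2}$ as in Theorem~\ref{zerocorthm}, where $A$ denotes the Mellin--Barnes integral. For the constant contribution, pairing against a smooth compactly supported test function $\varphi(P_1,P_2)$ and interchanging the order of integration converts the inner $c$-integral into a difference of two one-sided Fourier transforms (in the $W:=P_1+P_2$ variable) of the $W$-marginal of $\varphi$, shifted by $\ep^2$; the bound $|e^{2\pi^2\ep^2-2\sqrt{2}\pi\i\ep^2 c}-1|\le C\min(\ep^2(1+c),1)$ then yields an $O(\ep^2)$ decay by dominated convergence against the Schwartz decay of these Fourier transforms. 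For the Mellin--Barnes piece $A(c)$, one interchanges the $y$-integral (along $\Re z=q$) with the $c$- and $(P_1,P_2)$-integrals, justified by the Stirling/Barnes estimates on $|\Gamma(-q-\i y)f(q+\i y)|$ also used in the proof of Theorem~\ref{zerocorthm}, and applies the same Fourier/Schwartz argument uniformly in $y$. The hard part is the tension between the growth $|C(\ldots,c)|\lesssim e^{q\sqrt{2}c}$ with $q\in(0,1)$ and the Gaussian regulator $e^{-\ep^2 c^2}$: pointwise estimates alone leave $R_1(\ep)$ of order $O(1)$, because the crossover scale $c\sim\ep^{-2}$ coincides precisely with the scale on which the prefactor $e^{2\pi^2\ep^2-2\sqrt{2}\pi\i\ep^2 c}-1$ becomes $O(1)$. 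Vanishing is recovered only after exploiting oscillation in $W$ through the test function, and the required uniform estimates in $y$ along the Mellin--Barnes contour are exactly those already developed for Theorem~\ref{distthm} in~\textsection\ref{distthmpf}, so no fundamentally new analytic ingredients are needed.
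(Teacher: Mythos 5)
Your Cauchy--rectangle decomposition $\tilde C_\ep = R_1(\ep)+R_2(\ep)$ is valid (the algebra for $g_\ep(t+\sqrt2\pi\i)$ and $g_\ep(\i s)$ is correct, and the right edge does vanish under the Gaussian regulator, though the strip bound you quote should be the double-exponential one coming directly from the power series~\eqref{zcor} and~\eqref{wagner2}, not $e^{q\sqrt2\Re c}$, since the Mellin--Barnes formula in Theorem~\ref{zerocorthm} is only established for $c\in\R$). The $R_2(\ep)=O(\ep^2)$ bound is fine. The genuine gap is in $R_1(\ep)$, in the Mellin--Barnes piece $A(c)$ of $C(\balpha,\bx,2^{-1/2},\mu,c)$. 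After moving the $y$-integral outside, the $c$-integral at a \emph{fixed} contour position $q\in(0,1)$ picks up the factor $e^{\sqrt2 qc-\ep^2c^2}$, whose maximum on $[0,\infty)$ is $e^{q^2/(2\ep^2)}$, attained at $c\asymp\ep^{-2}$ --- precisely where the prefactor $|e^{2\pi^2\ep^2-2\sqrt2\pi\i\ep^2c}-1|$ is $O(1)$. Polynomial Schwartz decay of the Fourier transform in $W$ cannot beat this exponential, so the per-$y$ bound is of order $\ep^2e^{q^2/(2\ep^2)}$, which blows up; there is no ``uniform in $y$'' $O(\ep^2)$ estimate at fixed $q$. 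To salvage the idea one would need to let the contour slide, $q=q(\ep)\downarrow0$ (e.g.~$q=\ep$), at which point the near-pole $\Gamma(-q-\i y)\sim(q+\i y)^{-1}$ at $y=0$ becomes delicate and must be handled explicitly --- which is exactly the content of Lemmas~\ref{cnewlmm}, \ref{annoying}, and~\ref{dbase} in the paper's proof of Theorem~\ref{distthm}. So the claim that ``no fundamentally new analytic ingredients are needed'' is not warranted: those estimates are tailored to the two-sided weight $e^{-\ep^2c^2}$ on all of $\R$, while your $R_1$ involves a one-sided integral with a different multiplicative factor, and transporting them is essentially a re-proof, not a citation.

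For comparison, the paper avoids your $R_1$ entirely: it splits $\int_0^\infty g_\ep = \int_{\R}g_\ep - \int_{-\infty}^0 g_\ep$, identifies $(1-e^{-2\pi\i w})\int_{\R}g_\ep$ with $(1-e^{-2\pi\i w})C_\ep(P_1,P_2)$ and invokes Theorem~\ref{distthm} together with the observation that $1-e^{-2\pi\i w}=1-e^{-2\sqrt2\pi(P_1+P_2)}$ vanishes exactly where $\delta(P_1+P_2)$ is supported, and treats $\int_{-\infty}^0g_\ep$ by term-by-term power-series estimates (where $e^{\sqrt2 nc}$ is bounded, since $c\le 0$) that cancel with the vertical segment except for the $n=0$ piece, which Lemma~\ref{heaviside} shows is negligible. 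Your route has the conceptual appeal of being independent of Theorem~\ref{distthm}, but it requires rebuilding the sharp oscillatory-integral machinery that that theorem already encapsulates, rather than bypassing it.
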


\subsection{The three-point function at $b=\alpha_2=\frac{1}{\sqrt{2}}$}
Let us now consider the three-point function, with the three points being $x_{1}=-e_{3}$, $x_{2}=e_{1}$,
and $x_{3}=e_{3}$. We treat here the resonant case $b=\alpha_{2}=\frac{1}{\sqrt{2}}$, in which the timelike DOZZ structure constant predicts a pole; our theorems below verify (for the path-integral definition used in this paper) that the Gaussian-regularized zero-mode integral
indeed diverges with an explicit leading coefficient.

For the three-point function, the standard prediction in the physics literature is called the `timelike DOZZ formula'.  The formula involves the special function $\Upsilon_{b}$ introduced by \citet{dornotto94}, defined as 
\begin{align*}
\Upsilon_b(z) := \exp\biggl\{\int_0^\infty \frac{1}{\tau}\biggl(\biggl(\frac{b}{2} + \frac{1}{2b} - z\biggr)^2 e^{-\tau} - \frac{\sinh^2((\frac{b}{2}+\frac{1}{2b} - z)\frac{\tau}{2})}{\sinh(\frac{b\tau}{2}) \sinh(\frac{\tau}{2b})}\biggr)d\tau \biggr\}
\end{align*}
on the strip $\{z\in  \C: 0< \Re(z)<b + \frac{1}{b}\}$ and continued analytically to the whole plane. 
Let $\Upsilon_b$ be as above, and let $\gamma(z) := \Gamma(z)/\Gamma(1-z)$. The timelike DOZZ formula says that the three-point function of timelike Liouville field theory, with $x_1,x_2,x_3$ as above, is given by
\begin{align*}
C(\balpha, \bx, b, \mu) &= e^{-\i\pi w} (-\pi \mu\gamma(-b^2))^w (4/e)^{1-1/b^2}b^{2b^2w + 2w} \frac{\Upsilon_b(bw+b)}{\Upsilon_b(b)}\\
&\qquad \cdot \frac{\Upsilon_b(\alpha_1-\alpha_2-\alpha_3+b)\Upsilon_b(\alpha_2-\alpha_1-\alpha_3+b)\Upsilon_b(\alpha_3-\alpha_1-\alpha_2+b)}{\Upsilon_b(b - 2\alpha_1)\Upsilon_b(b-2\alpha_2) \Upsilon_b(b-2\alpha_3)}.
\end{align*}
This claim was made originally by \citet{schomerus03}, \citet{zamolodchikov05}, and \citet{kostovpetkova06, kostovpetkova07, kostovpetkova07a}, using heuristic arguments based on the assumption that certain recursion relations of \citet{teschner95} for spacelike Liouville field theory continue to hold (after suitable modification) in the timelike theory. \citet*{harlowetal11} explored how the same expression can arise from the path integral and provided further evidence. A rigorous proof, when $w$ is a positive integer (charge neutrality) and $\Re(\alpha_j)>-\frac{1}{2b}$, was recently given in~\cite{chatterjee25}, building on a calculation by \citet{giribet12}. 

Of particular interest are the poles of the timelike DOZZ formula, which play a role in the notion of operator resonances introduced by  \citet{zamolodchikov91} in conformal perturbation theory. In that framework, resonances are reflected in poles of correlation functions as parameters cross special ``resonant'' values. 
It is known that the zeros of $\Upsilon_{b}$ occur precisely at $mb+nb^{-1}$ when either both $m,n$ are positive integers or both are nonpositive integers~\cite{harlowetal11}. Thus, the denominator in the timelike DOZZ formula vanishes precisely at those $\balpha$ which satisfy, for some $1\le j\le 3$,
\[
\alpha_j = \frac{(1-m)b}{2} - \frac{n}{2b},
\]
where either both $m,n$ are positive integers, or both $m,n$ are nonpositive integers.  The rigorous derivation in \cite{chatterjee25} applies under the constraints $b\in(0,1)$, $w\in\mathbb{N}$, and $\Re(\alpha_{j})>-\frac{1}{2b}$. In that regime one is forced into $-\frac{1}{2b}<\Re(\alpha_{j})<0$, and consequently the only DOZZ poles compatible with these bounds are those with $\alpha_{j}=-\frac{1}{2b}$  (corresponding to $m=n=1$).

We will now show how a different kind of pole can be accessed by the path integral when $b=\frac{1}{\sqrt{2}}$. This is the case $m=-1$, $n=0$, which produces poles at $\alpha_{j}=b=\frac{1}{\sqrt{2}}$.
Specifically, we will exhibit the existence of a pole when $b=\alpha_{2}=\frac{1}{\sqrt{2}}$.

Let $G$ be the Barnes $G$-function, as before. We will also use the Gauss hypergeometric function ${}_{2}F_{1}$ in the expression for the fixed-zero-mode correlation. The following theorem, proved in \textsection\ref{threecthmpf}, gives the value of the three-point correlation when $b=\alpha_{2}=\frac{1}{\sqrt{2}}$ and the path integral is taken over fields with a given zero mode.

\begin{thm}\label{threecthm}
Let  $k=3$, $x_1=-e_3$, $x_2 = e_1$, and $x_3=e_3$. Suppose that $\Re(\alpha_1),\Re(\alpha_3)>-\frac{1}{\sqrt{2}}$, and $b = \alpha_2 =\frac{1}{\sqrt{2}}$. Let $w := -1-\sqrt{2}(\alpha_1+\alpha_2+\alpha_3)$, and assume that $\Re(w)\in (-\frac{1}{2},0)$. Then for any $\mu>0$ and $c\in \R$,
\begin{align*}
&C(\balpha, \bx, 2^{-1/2},\mu,c) \\
&= \frac{e^{\sqrt{2}(\alpha_1+\alpha_3) + 2\alpha_1\alpha_3}2^{-\sqrt{2}(\alpha_1+\alpha_3) }}{2\pi}\int_{-\infty}^\infty \Gamma(-w-\i y)f(w+\i y) (\mu e^{\sqrt{2}c})^{w+\i y} dy,
\end{align*}
where $f$ is the analytic function
\begin{align*}
f(z) &:=\frac{(2\pi)^z e^{\frac{1}{2}z(z-3-2w)} \Gamma(z+1)G(z+2+\sqrt{2}\alpha_1) G(z+2+\sqrt{2}\alpha_3)}{\Gamma(z-w)^zG(1+\sqrt{2}\alpha_1)G(1+\sqrt{2}\alpha_3)}\\
&\qquad \cdot \biggl\{\frac{{_2F_1}(1, z-w-1; 1+\sqrt{2}\alpha_1; \frac{1}{2})}{2\Gamma(1+\sqrt{2}\alpha_1) \Gamma(z+1+\sqrt{2}\alpha_3)} - \frac{\sqrt{2}\alpha_3\, {_2F_1}(1,z-w-1;z+2+\sqrt{2}\alpha_1;\frac{1}{2})}{2\Gamma(z+2+\sqrt{2}\alpha_1)\Gamma(1+\sqrt{2}\alpha_3)}\biggr\}
\end{align*}
defined on the domain $\Omega$ where $z-w\notin (-\infty,0]$, $z+1$ is not a nonpositive integer, and $z+2+\sqrt{2}\alpha_1$ is not a nonpositive integer. 
\end{thm}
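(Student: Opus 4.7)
The plan is to follow the architecture already used for Theorems \ref{zerocthm}, \ref{onecthm}, and \ref{corthm}: starting from the Coulomb-gas expansion \eqref{zcor}, I would evaluate the $n$-fold integrals $a_n$ in closed form by exploiting the Vandermonde/determinantal structure available at $b=\frac{1}{\sqrt{2}}$, and then recast the resulting series as a Mellin--Barnes contour integral whose kernel is the stated $f$. I would begin by simplifying the Gaussian prefactor $\prod_{j<j'}e^{-4\alpha_j\alpha_{j'}G_{\S^2}(x_j,x_{j'})}$: substituting the chordal distances $\|x_1-x_2\|=\|x_2-x_3\|=\sqrt{2}$ and $\|x_1-x_3\|=2$ into \eqref{gform}, together with $\alpha_2=\frac{1}{\sqrt{2}}$, produces exactly the factor $e^{\sqrt{2}(\alpha_1+\alpha_3)+2\alpha_1\alpha_3}\cdot 2^{-\sqrt{2}(\alpha_1+\alpha_3)}$ that sits outside the integral in the statement.

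For the coefficients $a_n$, I would pass to stereographic coordinates sending $x_1=-e_3\mapsto 0$, $x_3=e_3\mapsto\infty$, and $x_2=e_1\mapsto 1$. At $b=\frac{1}{\sqrt{2}}$ the pairwise factor in \eqref{andef} collapses, up to deterministic powers of $(1+|z_l|^2)$ and an overall numerical constant, to $\prod_{l<l'}|z_l-z_{l'}|^2$; simultaneously, the resonant choice $\alpha_2=b$ turns the $x_2$-insertion factor into $\prod_l |1-z_l|^2$ (again modulo Jacobian factors). Combining these, the integrand acquires the squared Vandermonde $|V(1,z_1,\dots,z_n)|^2$ of $n+1$ points, one of them frozen at $z=1$, integrated against the rotationally invariant weight $W(z)=|z|^{2\sqrt{2}\alpha_3}(1+|z|^2)^{-\sqrt{2}(\alpha_1+\alpha_3)-n-2}$. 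Expanding $|V|^2=|\det(z_l^k)_{0\le k,l\le n}|^2$ with $z_0=1$ and invoking Andr\'eief-type orthogonality (rotational invariance of $W$ kills every off-diagonal angular moment), the integration over $z_1,\dots,z_n$ reduces to a finite sum indexed by $m\in\{0,\dots,n\}$, the column position of the frozen point, where each summand is a product over $k\ne m$ of Euler Beta moments of the form $\pi B(k+\sqrt{2}\alpha_3+1,\,\sqrt{2}\alpha_1+n-k)$. The products of Gamma ratios telescope via the identity $\prod_{k=0}^n\Gamma(c+k)=G(c+n+1)/G(c)$ into ratios of Barnes $G$-functions --- this is the origin of the $G(z+2+\sqrt{2}\alpha_j)/G(1+\sqrt{2}\alpha_j)$ factors at integer $z=n$ --- and the remaining sum over $m$, split according to which of the four cross-terms in $|1-z|^2=(1-z)(1-\bar z)$ is active, rearranges into the combination of two ${}_2F_1(1,\,\cdot\,;\,\cdot\,;\frac{1}{2})$ values displayed in the braces of the statement.

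Having identified $f(n)=a_n$ at every nonnegative integer, I would then define $f$ on the domain $\Omega$ by the formula in the theorem --- with $G$ entire, $\Gamma(z-w)^z$ interpreted via the $\Pi$-logarithm of \eqref{pirep}, and the two ${}_2F_1$'s analytic in their parameters --- and apply the Mellin--Barnes machinery already developed for the earlier theorems: since $-\frac{1}{2}<\Re(w)<0$, the vertical contour $\Re(z)=w$ sits strictly to the left of all the poles of $\Gamma(-z)$, and shifting it through the simple poles at $z=n\in\mathbb{Z}_{\ge 0}$ recovers the Coulomb-gas series by residue calculus, the contribution from $+\infty$ vanishing by Stirling asymptotics. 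The main obstacle I anticipate is the algebraic step that collapses the finite sum of $n+1$ Gamma-ratio products (one for each choice of the frozen column $m$) into precisely the stated combination of two ${}_2F_1(1,z-w-1;\,\cdot\,;\frac{1}{2})$ values: the position $m$ does not factor cleanly out of the moment products, and matching the Gauss series at the specific argument $\frac{1}{2}$ requires careful Pochhammer bookkeeping. A secondary technical step, present already in Theorem \ref{corthm} but now slightly complicated by the extra ${}_2F_1$ factors, is the uniform bound on $f(w+\i y)\Gamma(-w-\i y)$ along vertical lines needed to justify the contour shift and the absolute convergence of the Mellin--Barnes integral; this combines classical Stirling expansions for $\Gamma$ and $G$ with a standard estimate on ${}_2F_1(1,a;c;\frac{1}{2})$ as $|a|\to\infty$ in a bounded vertical strip.
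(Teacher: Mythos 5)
Your proposal matches the paper's proof architecture essentially step for step: Coulomb-gas expansion, the $(n+1)\times(n+1)$ Vandermonde with the extra point frozen at $z=1$ after stereographic projection (the paper's Lemmas \ref{zeroform1three}--\ref{threeform1}), orthogonality reducing the double permutation sum to a single sum over the ``frozen column'' with radial Beta-integral moments, Barnes $G$ telescoping of the Gamma products, conversion of the finite sum over column positions into the two ${}_2F_1(1,\cdot;\cdot;\frac12)$ values (the paper's Lemma \ref{hyperlmm}), and then the Mellin--Barnes contour shift with Stirling/${}_2F_1$ decay estimates (Lemmas \ref{fzfinal3}, \ref{zerofinal0three}, \ref{zerofinalthree}) followed by a dominated-convergence limit $x_0\downarrow\Re(w)$. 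Two small slips worth fixing but not affecting the plan: the rotationally invariant weight carries $|z|^{2\sqrt{2}\alpha_1}$ (not $\alpha_3$), since $x_1=-e_3$ is the point sent to $0$, and the sum over the frozen column index arises from the extra row of the $(n+1)\times(n+1)$ Vandermonde under orthogonality, not from any splitting of $|1-z|^2$ into four cross-terms.
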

 Next, we define the regularized three-point function 
\begin{align}\label{regcorthree}
C_\ep(\balpha, \bx, 2^{-1/2}, \mu) := \int_{-\infty}^\infty e^{-\sqrt{2}wc-\ep^2 c^2} C(\balpha, \bx, 2^{-1/2}, \mu,c) dc.
\end{align}
The next result, proved in \textsection\ref{threethmpf}, identifies the behavior of this function as~$\ep\to 0$. 

\begin{thm}\label{threethm}
Under the assumptions of Theorem \ref{threecthm}, we have
\begin{align*}
&\lim_{\ep \to 0} \ep^{-w} C_\ep(\balpha, \bx, 2^{-1/2}, \mu) \\
&=  e^{\sqrt{2}(\alpha_1+\alpha_3) + 2\alpha_1\alpha_3- \frac{1}{2}w(w+3)}2^{-\sqrt{2}(\alpha_1+\alpha_3)+\frac{1}{2}(w+1) }\\
&\qquad \cdot \frac{(2\pi \mu)^w\Gamma(-w)\Gamma(w+1)G(-\sqrt{2}\alpha_1) G(-\sqrt{2}\alpha_3)\cos(\frac{\pi w}{2})\Gamma(\frac{w+1}{2})}{2\sqrt{\pi}G(1+\sqrt{2}\alpha_1)G(1+\sqrt{2}\alpha_3)}\\
&\qquad \cdot \frac{1}{4\pi}\{(1+2\sqrt{2}\alpha_1)\sin(\sqrt{2}\pi \alpha_1) + (1+2\sqrt{2}\alpha_3)\sin(\sqrt{2}\pi\alpha_3)\}.
\end{align*}
Moreover, the above limit is nonzero. 
\end{thm}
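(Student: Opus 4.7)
The plan is to evaluate $C_\ep$ by doing the Gaussian $c$-integral inside the Mellin--Barnes representation from Theorem~\ref{threecthm}, and then extract the $\ep\to 0$ asymptotics by Gaussian concentration on the Mellin line. Inserting the representation into the definition of $C_\ep$ and swapping the $c$ and $y$ integrations (justified by Stirling-type decay of $|\Gamma(-w-\i y)f(w+\i y)|$ at large $|y|$, using the Barnes--$G$ and ${}_{2}F_{1}$ bounds from the Appendix) yields
\begin{align*}
C_\ep=\frac{A\sqrt{\pi}}{2\pi\ep}\int_{-\infty}^{\infty}\Gamma(-w-\i y)\,f(w+\i y)\,\mu^{w+\i y}\,e^{-y^{2}/(2\ep^{2})}\,dy,
\end{align*}
where $A:=e^{\sqrt{2}(\alpha_{1}+\alpha_{3})+2\alpha_{1}\alpha_{3}}\,2^{-\sqrt{2}(\alpha_{1}+\alpha_{3})}$. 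The anticipated $\ep^{w}$ scaling comes from a branch singularity of $f$ at $y=0$, which I now analyze.

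\textbf{Local behavior of $f$ near $z=w$.} The denominator factor $\Gamma(z-w)^{-z}=\exp(-z\,\Pi(z-w))$ is the only one in $f$ without analytic behavior at $z=w$: since $\Gamma(\zeta)\sim\zeta^{-1}$ as $\zeta\to 0$, one has $\Pi(z-w)\sim-\log(z-w)$ and hence $\Gamma(z-w)^{-z}\sim(z-w)^{z}\sim(z-w)^{w}$. Every remaining factor is analytic and nonzero at $z=w$, so
\[
f(w+\i y)=(\i y)^{w}f_{\mathrm{reg}}(w)+o(|y|^{\Re(w)})\qquad\text{as }y\to 0,
\]
where $f_{\mathrm{reg}}(w)$ is the product of the remaining factors of $f$ specialized at $z=w$. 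The resonance $\alpha_{2}=b=\tfrac{1}{\sqrt{2}}$ forces $w=-2-\sqrt{2}(\alpha_{1}+\alpha_{3})$, hence the algebraic identities $w+2+\sqrt{2}\alpha_{1}=-\sqrt{2}\alpha_{3}$ and $w+2+\sqrt{2}\alpha_{3}=-\sqrt{2}\alpha_{1}$, collapsing the Barnes--$G$ numerator of $f$ to $G(-\sqrt{2}\alpha_{1})G(-\sqrt{2}\alpha_{3})$. The hypergeometric bracket is evaluated at $z=w$ (hence second parameter $-1$) via the truncation ${}_{2}F_{1}(1,-1;c;\tfrac{1}{2})=1-\tfrac{1}{2c}$, and the two rational combinations are converted via the Gamma reflection formula $\Gamma(\zeta)\Gamma(1-\zeta)=\pi/\sin(\pi\zeta)$ applied at $\zeta=-\sqrt{2}\alpha_{j}$ (together with $\sin(\pi(1+x))=-\sin(\pi x)$) into
\[
H(w)=\frac{1}{4\pi}\bigl\{(1+2\sqrt{2}\alpha_{1})\sin(\sqrt{2}\pi\alpha_{1})+(1+2\sqrt{2}\alpha_{3})\sin(\sqrt{2}\pi\alpha_{3})\bigr\},
\]
which is precisely the sine combination appearing in the statement.

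\textbf{Gaussian moment.} Split the $y$-integral at $|y|=\delta$. The tail $|y|>\delta$ is bounded by $\ep^{-1}e^{-\delta^{2}/(2\ep^{2})}\|\Gamma(-w-\i\,\cdot)\,f(w+\i\,\cdot)\|_{L^{1}}$, which is superpolynomially smaller than $\ep^{w}$ and hence contributes $0$ to $\lim\ep^{-w}C_\ep$. On the central piece, rescaling $y=\ep u$ and using the leading asymptotic of $f$ reduces the question to the elementary Gaussian moment
\[
\int_{\R}(\i u)^{w}e^{-u^{2}/2}\,du=2\cos(\pi w/2)\int_{0}^{\infty}u^{w}e^{-u^{2}/2}\,du=2^{(w+1)/2}\cos(\pi w/2)\,\Gamma\!\bigl(\tfrac{w+1}{2}\bigr),
\]
which delivers both the overall $\ep^{w}$ scaling (via $(\i\ep u)^{w}=\ep^{w}(\i u)^{w}$) and the $\cos(\pi w/2)\,\Gamma(\tfrac{w+1}{2})$ factor. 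Multiplying by $\Gamma(-w)\mu^{w}f_{\mathrm{reg}}(w)$ and by $A\sqrt{\pi}/(2\pi)$, and collecting the factors $(2\pi\mu)^{w}$, $\Gamma(w+1)$, $e^{-w(w+3)/2}$, $G(-\sqrt{2}\alpha_{1})G(-\sqrt{2}\alpha_{3})/[G(1+\sqrt{2}\alpha_{1})G(1+\sqrt{2}\alpha_{3})]$, and the sine bracket, reproduces the closed-form expression of the theorem.

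\textbf{Main obstacle.} The chief technical step is to upgrade the pointwise asymptotic for $f$ to a dominated-convergence statement in $u$ after the rescaling $y=\ep u$. Concretely one needs a quantitative local expansion
\[
f(z)=(z-w)^{w}f_{\mathrm{reg}}(w)\bigl\{1+O(|z-w|^{1-\eta})\bigr\}
\]
valid uniformly on a small disc around $z=w$, combined with a global Stirling-type bound on $|\Gamma(-w-\i y)f(w+\i y)|$ ensuring integrability against $e^{-y^{2}/(2\ep^{2})}$ uniformly in $\ep$. The global bound is standard (Stirling for $\Gamma$, the $\log G(z+1)\sim\tfrac{z^{2}}{2}\log z$ asymptotic for Barnes--$G$, and the Appendix bounds on ${}_{2}F_{1}$); the local expansion reduces to Taylor expanding each of the analytic factors of $f$ around $z=w$ after subtracting the explicit $\Gamma(z-w)^{-z}\sim(z-w)^{w}$ singular piece, and controlling the error term $\exp((z-w)\log(z-w))-1=O(|z-w|\log|z-w|)$ in $(z-w)^{z}=(z-w)^{w}\exp((z-w)\log(z-w))$. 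The final nonvanishing claim then comes from the nonvanishing of $\Gamma(-w)$, $\Gamma(w+1)$, $\cos(\pi w/2)$, $\Gamma(\tfrac{w+1}{2})$, $G(-\sqrt{2}\alpha_{j})$, and the sine bracket in the admissible parameter range: the constraint $\Re(w)\in(-\tfrac{1}{2},0)$ combined with $\Re(\alpha_{j})>-\tfrac{1}{\sqrt{2}}$ forces $\Re(\alpha_{1}+\alpha_{3})<0$, which in particular prevents both $\sqrt{2}\alpha_{1}$ and $\sqrt{2}\alpha_{3}$ from simultaneously being nonnegative integers, so at least one sine term is nonzero, and a direct comparison rules out the accidental cancellation of two real negative sines of equal magnitude under the sign pattern forced by $\Re(\alpha_{j})\in(-\tfrac{1}{\sqrt{2}},0)$.
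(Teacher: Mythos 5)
Your strategy matches the paper's almost exactly: push the Gaussian $c$-integral through the Mellin--Barnes representation, rescale $y=\ep u$, isolate the branch singularity of $\Gamma(z-w)^{-z}\sim(z-w)^{z}$ at $z=w$ to pull out $\ep^{w}$, apply dominated convergence, and finish with ${}_{2}F_{1}(1,-1;c;z)=1-z/c$, the reflection formula, and the Gaussian moment $\int(\i u)^w e^{-u^2/2}\,du=2^{(w+1)/2}\cos(\tfrac{\pi w}{2})\Gamma(\tfrac{w+1}{2})$. That is precisely the path the paper takes, so the main computation is on firm ground.

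However, your nonvanishing argument has a genuine gap, and it stems from working in the wrong parameter strip. You quote the range $\Re(\alpha_j)\in(-\tfrac{1}{\sqrt{2}},0)$, equivalently $\Re(\sqrt{2}\alpha_j)\in(-1,0)$, and then say "a direct comparison rules out the accidental cancellation." But on that larger strip the sine bracket can vanish: the two expressions $(1+2\sqrt{2}\alpha_j)\sin(\sqrt{2}\pi\alpha_j)$ change sign as $\Re(\sqrt{2}\alpha_j)$ crosses $-\tfrac12$ (the factor $1+2\sqrt{2}\alpha_j$ changes sign there for real arguments), so with one $\alpha_j$ on each side of $-\tfrac{1}{2\sqrt2}$ the two terms can have opposite signs and there is no obvious obstruction to cancellation — and of course for complex $\alpha_j$, "equal magnitude real negative sines" is not even the right framework. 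What actually saves the day is the much sharper constraint, which your hypotheses do imply but which you never derive: from $\Re(w)>-\tfrac12$ and $w=-2-\sqrt{2}(\alpha_1+\alpha_3)$ one gets $\Re(\sqrt{2}\alpha_1)+\Re(\sqrt{2}\alpha_3)<-\tfrac32$, and combined with $\Re(\sqrt{2}\alpha_j)>-1$ this forces $\Re(\sqrt{2}\alpha_j)\in(-1,-\tfrac12)$ for \emph{both} $j=1,3$. On that smaller strip one can prove (by splitting into real and imaginary parts) that $\Re\bigl((1+2z)\sin(\pi z)\bigr)>0$ whenever $-1<\Re(z)<-\tfrac12$: writing $z=x+\i y$, the expression equals $(1+2x)\sin(\pi x)\cosh(\pi y)-2y\cos(\pi x)\sinh(\pi y)$, and in the strip $(1+2x)<0$, $\sin(\pi x)<0$, $\cos(\pi x)<0$, and $y\sinh(\pi y)\ge 0$, so both summands are nonnegative and the first is strictly positive. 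Applying this to $z=\sqrt{2}\alpha_1$ and $z=\sqrt{2}\alpha_3$ makes both terms of the bracket have strictly positive real part, hence the sum is nonzero. Without the sharper strip derivation and this positivity lemma, the nonvanishing claim does not follow from what you have written.
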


The above result shows that $C_\epsilon(\balpha,\bx,2^{-1/2},\mu)$ blows up as $\epsilon\to 0$, since $\Re(w)<0$ and the limit is nonzero. This matches the timelike DOZZ prediction of a pole at the resonant value $\alpha_{2}=b=\frac{1}{\sqrt{2}}$, provided the numerator in the DOZZ expression does not vanish at the same point. Under the assumptions of Theorem \ref{threecthm} one checks that the numerator terms stay away from the zero set of $\Upsilon_b$, hence the DOZZ structure constant indeed has a pole in this regime.

To wrap up the discussion, we verify that the pole is detected even if we integrate the zero mode along the Hankel contour proposed by \citet{usciatietal25}. Proceeding as before, we arrive at the regularized three-point function
\begin{align*}
\tilde{C}_\ep(\balpha, \bx,2^{-1/2},\mu) &:= (1-e^{-2\pi \i w})\int_0^\infty e^{-\sqrt{2}wc-\ep^2c^2} C(\balpha, \bx, 2^{-1/2}, \mu, c) dc \\
&\qquad -\i \int_0^{\sqrt{2}\pi} e^{-\sqrt{2}\i wt} C(\balpha, \bx, 2^{-1/2}, \mu, \i t) dt.
\end{align*}
The following theorem shows that the limiting behavior of $\tilde{C}_\ep$ as $\ep\to 0$ is the same as that of $C_\ep$, up to a multiplicative factor of $1-e^{-2\pi \i w}$. The proof is in \textsection\ref{distthm3pf}. Since $w$ is not an integer, this proves that $\tilde{C}_\ep$ also blows up as $\ep \to 0$. 
\begin{thm}\label{distthm3}
In the setting of Theorem \ref{threecthm}, we have
\[
\lim_{\ep\to 0} \ep^{-w} \tilde{C}_\ep(\balpha, \bx, 2^{-1/2}, \mu) = (1-e^{-2\pi \i w}) \lim_{\ep\to 0} \ep^{-w} C_\ep(\balpha, \bx, 2^{-1/2}, \mu).
\]
\end{thm}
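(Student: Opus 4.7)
\textbf{Proof plan for Theorem \ref{distthm3}.}
The plan is to decompose $\tilde{C}_\ep$ into three pieces and to show that the only piece that survives after multiplying by $\ep^{-w}$ and sending $\ep \to 0$ is the one coming from the real ray, which then rebuilds the full-line integral $C_\ep$ up to a negligible error. Concretely, set
\begin{align*}
I_1(\ep) &:= \int_0^\infty e^{-\sqrt{2}wc-\ep^2c^2}\,C(\balpha,\bx,2^{-1/2},\mu,c)\,dc,\\
I_2 &:= \int_0^{\sqrt{2}\pi} e^{-\sqrt{2}\i wt}\,C(\balpha,\bx,2^{-1/2},\mu,\i t)\,dt,\\
I_3(\ep) &:= \int_{-\infty}^0 e^{-\sqrt{2}wc-\ep^2c^2}\,C(\balpha,\bx,2^{-1/2},\mu,c)\,dc,
\end{align*}
so that $\tilde{C}_\ep = (1-e^{-2\pi\i w})I_1(\ep) - \i I_2$ and $C_\ep = I_1(\ep)+I_3(\ep)$. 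The identity $\ep^{-w}\tilde{C}_\ep = (1-e^{-2\pi\i w})\ep^{-w}C_\ep -(1-e^{-2\pi\i w})\ep^{-w}I_3(\ep) -\i\ep^{-w}I_2$ then reduces the theorem to showing that $\ep^{-w}I_3(\ep)$ and $\ep^{-w}I_2$ vanish as $\ep\to 0$, since the first term already has a nonzero limit by Theorem \ref{threethm}.

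The key analytic input is the decay of $\ep^{-w}$. Since $\Re(w)<0$ in the setting of Theorem \ref{threecthm}, we have $|\ep^{-w}|=\ep^{-\Re(w)}\to 0$ as $\ep\to 0^+$. Hence it suffices to bound $I_2$ and $I_3(\ep)$ uniformly in $\ep\in(0,1]$. For $I_2$, the integrand is continuous in $t$ on the compact interval $[0,\sqrt{2}\pi]$, since $C(\balpha,\bx,2^{-1/2},\mu,\i t)$ is continuous in the zero-mode variable by Lemma \ref{convergence} (the Coulomb-gas series~\eqref{zcor} converges absolutely and locally uniformly by the bound~\eqref{wagner2}). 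Therefore $|I_2|<\infty$. For $I_3(\ep)$, we use that for $c\le 0$ the modulus of each term in the series~\eqref{zcor} is dominated by its value at $c=0$ (because $|\mu e^{2bc}|\le\mu$), so $\sup_{c\le 0}|C(\balpha,\bx,2^{-1/2},\mu,c)|<\infty$. At the same time, for $c\le 0$,
\[
\bigl|e^{-\sqrt{2}wc-\ep^2c^2}\bigr| \le e^{-\sqrt{2}\Re(w)c} = e^{|\sqrt{2}\Re(w)|\,c},
\]
which is integrable on $(-\infty,0]$ because $-\sqrt{2}\Re(w)>0$. This yields a bound on $|I_3(\ep)|$ that is independent of $\ep$.

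Combining these estimates, $\ep^{-w}I_2 \to 0$ and $\ep^{-w}I_3(\ep)\to 0$, so
\[
\lim_{\ep\to 0}\ep^{-w}\tilde{C}_\ep(\balpha,\bx,2^{-1/2},\mu) = (1-e^{-2\pi\i w})\lim_{\ep\to 0}\ep^{-w}C_\ep(\balpha,\bx,2^{-1/2},\mu),
\]
as claimed. There is no real obstacle in the argument; the only subtlety worth flagging is that the integrand of $I_3$ is not multiplied by any blowing-up factor, so the decay of $|\ep^{-w}|=\ep^{|\Re(w)|}$ is exactly what makes the contribution from the left real ray and from the vertical segment of the Hankel contour subleading relative to the full-line contribution captured by $C_\ep$, which itself blows up at rate $|\ep^w|=\ep^{-|\Re(w)|}$ by Theorem \ref{threethm}.
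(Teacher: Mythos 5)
Your proof is correct and follows essentially the same route as the paper: the same three-piece decomposition of the Hankel contour, the same identification that $C_\ep$ captures the full-line contribution, and the same observation that $\Re(w)<0$ makes $\ep^{-w}\to 0$, which kills the (bounded) contributions from the vertical segment and the left real ray. The only cosmetic difference is that you bound $I_3(\ep)$ with a crude $\sup$-bound on $|C(\balpha,\bx,2^{-1/2},\mu,c)|$ over $c\le 0$ times an integrable exponential, whereas the paper integrates term by term in the Coulomb-gas series and invokes \eqref{wagner2}; both yield the same uniform bound.
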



\section{Proof sketch and main ideas}\label{sketchsec}
The following is a sketch of the proof of Theorem \ref{zerocthm}. This is the simplest result of this paper, but its proof contains many of the key ideas. Recall from equation \eqref{zcor} that 
\begin{align*}
C(2^{-1/2}, \mu,c) &= 1-4\pi\mu e^{\sqrt{2}c} + \sum_{n=2}^\infty \frac{(-\mu e^{\sqrt{2}c})^n}{n!}a_n,
\end{align*}
where 
\begin{align*}
a_n = \int_{(\S^2)^n}\exp\biggl( -2 \sum_{1\le i < j\le n} G_{\S^2}(y_i,y_j) \biggr) da(y_1)\cdots da(y_n).
\end{align*}
Shifting the integral to the complex plane by stereographic projection, we show that 
\begin{align*}
a_n &= 4^n e^{\frac{1}{2} n(n-1)} \int_{\C^n}\prod_{i=1}^n (1+|z_i|^2)^{-n-1}\prod_{1\le i<j\le n}|z_i-z_j|^{2} d^2z_1\cdots d^2z_n.
\end{align*}
The choice $b = \frac{1}{\sqrt{2}}$ results in the Vandermonde term in the above integral. We evaluate the above integral using a familiar trick from random matrix theory. First, let us rewrite 
\begin{align}\label{anspecial0}
a_n &= 4^n e^{\frac{1}{2}n(n-1)} \int_{\C^n} \prod_{1\le i<j\le n}|z_i-z_j|^{2} d\nu(z_1)\cdots d\nu(z_n),
\end{align}
where $\nu$ denotes the measure on $\C$ that has density $(1+|z|^2)^{-n-1}$ with respect to Lebesgue measure. Now recall the Vandermonde determinant formula
\begin{align*}
\prod_{1\le i<j\le n} (z_j-z_i) &= \det 
\begin{pmatrix}
1 & z_1 & z_1^2 & \cdots & z_1^{n-1}\\
1 & z_2 & z_2^2 & \cdots & z_2^{n-1}\\
\vdots & \vdots & \vdots &\ddots & \vdots\\
1 & z_n & z_n^2 & \cdots & z_n^{n-1}
\end{pmatrix}\\
&= \sum_{\sigma \in S_n} \sign(\sigma)\prod_{i=1}^n z_i^{\sigma(i)-1}.
\end{align*}
This gives 
\begin{align*}
\prod_{1\le i<j\le n}|z_i-z_j|^{2} &= \prod_{1\le i<j\le n}(z_j-z_i) \prod_{1\le i<j\le n}(\overline{z}_j-\overline{z}_i)\\
&= \biggl(\sum_{\sigma \in S_n} \sign(\sigma)\prod_{i=1}^n z_i^{\sigma(i)-1}\biggr) \biggl(\sum_{\sigma \in S_n} \sign(\sigma)\prod_{i=1}^n \overline{z}_i^{\sigma(i)-1}\biggr)\\
&= \sum_{\sigma,\tau \in S_n} \sign(\sigma)\sign(\tau) \prod_{i=1}^n (z_i^{\sigma(i)-1}\overline{z}_i^{\tau(i)-1}).
\end{align*}
Plugging this into equation \eqref{anspecial0}, we get 
\begin{align*}
a_n &= 4^n e^{\frac{1}{2}n(n-1)}\sum_{\sigma,\tau \in S_n} \sign(\sigma)\sign(\tau) \int_{\C^n} \prod_{i=1}^n (z_i^{\sigma(i)-1}\overline{z}_i^{\tau(i)-1}) d\nu(z_1)\cdots d\nu(z_n)\\
&= 4^n e^{\frac{1}{2}n(n-1)}\sum_{\sigma,\tau \in S_n} \sign(\sigma)\sign(\tau) \prod_{i=1}^n \int_{\C} z^{\sigma(i)-1}\overline{z}^{\tau(i)-1} d\nu(z).
\end{align*}
Since $\nu$ is a radially symmetric measure, we have that for any $k,l\in \Z$,
\[
\int_{\C} z^k \overline{z}^l d\nu(z) =
\begin{cases}
\int_{\C} |z|^{2k} d\nu(z) &\text{ if } k=l,\\
0 &\text{ otherwise.}
\end{cases}
\]
Thus, only terms with $\sigma=\tau$ survive in the sum displayed above. We conclude that 
\begin{align*}
a_n = 4^n e^{\frac{1}{2}n(n-1)} n! \prod_{j=0}^{n-1} \int_{\C} |z|^{2j} d\nu(z) = 4^n e^{\frac{1}{2}n(n-1)} n! \prod_{j=0}^{n-1} \int_{\C} \frac{|z|^{2j}}{(1+|z|^2)^{n+1}} d^2z. 
\end{align*}
Applying a simple change of variable and the formula for the Beta integral, we get the integral on the right equals
\begin{align*}
\frac{\pi \Gamma(n-j)\Gamma(j+1)}{\Gamma(n+1)}.
\end{align*}
Thus, we get
\begin{align*}
a_n&=  \frac{(4\pi)^n e^{\frac{1}{2}n(n-1)}}{\Gamma(n+1)^{n-1}} \biggl(\prod_{j=0}^{n-1}\Gamma(j+1)\biggr)^2.
\end{align*}
Then, using the identity 
\begin{align*}
G(n+1) = \prod_{k=1}^{n-1} k!
\end{align*}
for the Barnes $G$-function, we get that for each $n\ge 2$, 
\begin{align*}
a_n&=  \frac{(4\pi)^n e^{\frac{1}{2}n(n-1)}G(n+1)^2}{\Gamma(n+1)^{n-1}}.
\end{align*}
For $n=0,1$, this formula gives $a_0=1$ and $a_1 = 4\pi$. Thus, we arrive at the identity
\begin{align}\label{chalf}
C(2^{-1/2}, \mu, c) =\sum_{n=0}^\infty \frac{(-\mu e^{\sqrt{2}c})^n}{n!}f(n),
\end{align}
where $f:\C \setminus(-\infty,-1]\to \C$ is the analytic function 
\begin{align}\label{fzdefin}
f(z) = \frac{(4\pi)^z e^{\frac{1}{2}z(z-1)}G(z+1)^2}{\Gamma(z+1)^{z-1}}.
\end{align}
To express this as the integral displayed in Theorem \ref{zerocthm}, we proceed as follows, using a method inspired by the Barnes integral formula for the hypergeometric function. Consider the function 
\[
g(z) = \Gamma(-z) f(z)(\mu e^{\sqrt{2}c})^z. 
\]
On the domain $\Re(z)>-1$, the only poles of this function are at the nonnegative integers, arising due to the poles of the Gamma function. From this observation and some decay estimates (specifically, that $\Gamma(-z)f(z)$ decays rapidly to zero as $\Re(z)\to \infty$), we use Cauchy's theorem to get that for any $x\in (-1,0)$,
\begin{align*}
\frac{1}{2\pi \i}\int_{\Re(z)=x} g(z) dz &= -\sum_{n=0}^\infty \Res(g,n)\\
&= \sum_{n=0}^\infty \Res(\Gamma,-n)f(n)(\mu e^{\sqrt{2}c})^n\\
&= \sum_{n=0}^\infty \frac{(-1)^n}{n!}f(n)(\mu e^{\sqrt{2}c})^n.
\end{align*}
But by equation \eqref{chalf}, this is equal to $C (2^{-1/2},\mu,c)$. The final step involves taking $x\to -1$. This completes the sketch of the proof of Theorem \ref{zerocthm}.

To get Theorem \ref{zerothm} from Theorem \ref{zerocthm}, the heuristic argument goes as follows. By equation \eqref{chalf0} and Theorem \ref{zerocthm},
\begin{align*}
C (2^{-1/2},\mu) &= \int_{-\infty}^\infty e^{\sqrt{2}c} C (2^{-1/2},\mu,c) dc\\
&= \frac{1}{2\pi}\int_{-\infty}^\infty e^{\sqrt{2}c}\biggl(\int_{-\infty}^\infty \Gamma(1-\i y) f(-1+\i y) (\mu e^{\sqrt{2}c})^{-1+\i y} dy\biggr) dc\\
&= \frac{1}{2\pi}\int_{-\infty}^\infty \int_{-\infty}^\infty \Gamma(1-\i y) f(-1+\i y) \mu^{-1+\i y}e^{\sqrt{2}\i cy} dy dc\\
&= \frac{1}{2\sqrt{2}\pi}\int_{-\infty}^\infty \int_{-\infty}^\infty \Gamma(1-\i y) f(-1+\i y) \mu^{-1+\i y}e^{\i ty} dy dt.
\end{align*}
Suppose we are allowed to exchange the order of integration. Using the heuristic relation
\[
\frac{1}{2\pi}\int_{-\infty}^\infty e^{\i ty} dt = \delta(y),
\]
we get
\begin{align*}
C (2^{-1/2},\mu) &= \frac{1}{\sqrt{2}}\int_{-\infty}^\infty \Gamma(1-\i y) f(-1+\i y) \mu^{-1+\i y}\delta(y) dy\\
&= \frac{f(-1)}{\sqrt{2} \mu}.
\end{align*}
Note that $f(-1)$ is not well-defined via equation \eqref{fzdefin}. But an easy calculation shows that $\lim_{z\to -1}f(z)=\frac{1}{4\pi}$. This gives us Theorem \ref{zerothm}.

\section{Proofs}\label{proofsec}
\subsection{Zero-point function}
\subsubsection{Proof of Theorem \ref{zerocthm}}\label{zeroproofsec}
The first step is to prove the following lemma.
\begin{lmm}\label{zeroform1}
We have 
\[
C(2^{-1/2}, \mu, c) =\sum_{n=0}^\infty \frac{(-\mu e^{\sqrt{2}c})^n}{n!}f(n),
\]
where $f$ is the analytic function
\begin{align}\label{fdef}
f(z) := \frac{(4\pi)^z e^{\frac{1}{2}z(z-1)}G(z+1)^2}{\Gamma(z+1)^{z-1}},
\end{align}
defined on the domain $\C \setminus(-\infty,-1]$.
\end{lmm}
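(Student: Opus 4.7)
\textbf{Proof plan for Lemma \ref{zeroform1}.} The statement reduces to evaluating the Coulomb-gas coefficients $a_n$ from \eqref{andef} at $b = \tfrac{1}{\sqrt{2}}$ with no insertions ($k=0$). Specializing \eqref{zcor} gives
\begin{align*}
C(2^{-1/2},\mu,c) = 1 + \sum_{n=1}^\infty \frac{(-\mu e^{\sqrt 2 c})^n}{n!}\, a_n, \qquad a_n = \int_{(\S^2)^n} \exp\Bigl(-2\sum_{i<j} G_{\S^2}(y_i,y_j)\Bigr)\,da^{\otimes n},
\end{align*}
where the exponent equals $-4b^2 = -2$. Absolute convergence of the series is already supplied by Lemma \ref{convergence}, so the entire task is to identify $a_n$ with $f(n)$ for each integer $n \ge 0$.

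The first step is to push the $\S^2$-integral to $\C^n$ by stereographic projection $\phi:\C\to\S^2\setminus\{N\}$. The chordal-distance identity $\|\phi(z)-\phi(w)\| = 2|z-w|/\sqrt{(1+|z|^2)(1+|w|^2)}$ together with the Green's-function formula \eqref{gform} gives $\exp(-2 G_{\S^2}(\phi(z),\phi(w))) = e\,|z-w|^2/[(1+|z|^2)(1+|w|^2)]$, and the pulled-back area form is $\phi^\ast da = 4(1+|z|^2)^{-2}\,d^2z$. Collecting factors (this is the central bookkeeping step: $\binom{n}{2}$ constant contributions combine with $n$ area-form Jacobians) yields
\begin{align*}
a_n = 4^n e^{n(n-1)/2}\int_{\C^n}\prod_{i<j}|z_i-z_j|^2\,\prod_{i=1}^n(1+|z_i|^2)^{-(n+1)}\,d^2 z_1\cdots d^2 z_n.
\end{align*}
The reason $b = \tfrac{1}{\sqrt{2}}$ is special is now visible: the interaction power is exactly $|z_i-z_j|^2$, i.e.\ a true Vandermonde, so the integrand is that of a determinantal point process on $\C$.

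The second step is the standard random-matrix trick. Expand $\prod_{i<j}|z_i-z_j|^2 = \sum_{\sigma,\tau\in S_n}\sgn(\sigma)\sgn(\tau)\prod_i z_i^{\sigma(i)-1}\overline{z_i}^{\tau(i)-1}$ via the Vandermonde-determinant identity. Since $(1+|z|^2)^{-(n+1)}d^2z$ is radially symmetric, $\int_\C z^k\overline{z}^\ell(1+|z|^2)^{-(n+1)}\,d^2z$ vanishes unless $k=\ell$, so only the diagonal pairs $\sigma=\tau$ survive. This collapses the double sum to $n!\prod_{j=0}^{n-1}\int_\C |z|^{2j}(1+|z|^2)^{-(n+1)}\,d^2z$. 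A change to polar coordinates followed by the Beta integral evaluates each factor as $\pi\Gamma(j+1)\Gamma(n-j)/\Gamma(n+1)$, and telescoping the product via the identity $\prod_{j=0}^{n-1}\Gamma(j+1) = G(n+1)$ (an immediate consequence of the recursion $G(z+1) = \Gamma(z)G(z)$) yields $a_n = f(n)$ for all $n \ge 2$. Finally I would check the degenerate values directly: $a_0 = 1 = f(0)$ and $a_1 = \int_{\S^2}da = 4\pi = f(1)$; at integer $z=n$ the complex power $\Gamma(z+1)^{z-1}$ in \eqref{fdef} unambiguously equals $(n!)^{n-1}$, so there is no branch issue at these points.

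I do not anticipate a conceptual obstacle: each move (stereographic projection to flatten the spherical Green's function, the RMT diagonal collapse, the Beta integral, and the $G$-function identity) is textbook. The only place where real care is needed is the prefactor $4^n e^{n(n-1)/2}$ produced in the first step, which combines the $-\tfrac{1}{2}+\ln 2$ additive constant in $G_{\S^2}$ with the stereographic Jacobians; a single miscount of $\binom{n}{2}$ here would propagate into an incorrect $f$, so matching against the degenerate values $n=0,1$ provides a convenient built-in sanity check.
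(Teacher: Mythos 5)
Your proposal is correct and follows essentially the same route as the paper's proof in \textsection\ref{zeroproofsec}: stereographic projection to flatten $G_{\S^2}$, the Vandermonde/diagonal-collapse argument, the Beta-integral evaluation (the paper isolates this as Lemma~\ref{intlmm}), and the identity $\prod_{j=0}^{n-1}\Gamma(j+1)=G(n+1)$, with a final check at $n=0,1$. No meaningful divergence.
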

To prove Lemma \ref{zeroform1}, we need the following calculation. The proof is in \textsection\ref{intlmmpf}. 
\begin{lmm}\label{intlmm}
Let $\alpha$ and $\beta$ be two complex numbers such that $\Re(\alpha) > -2$ and $\Re(2\beta - \alpha)> 2$. Then 
\[
\int_{\C} \frac{|z|^\alpha}{(1+|z|^2)^\beta} d^2z = \frac{\pi \Gamma(\beta - \frac{1}{2}\alpha-1) \Gamma(\frac{1}{2}\alpha+1)}{\Gamma(\beta)},
\]
and the integral on the left is convergent.
\end{lmm}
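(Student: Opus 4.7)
The integrand is radially symmetric, so the natural plan is to reduce the two-dimensional integral to a one-dimensional Beta integral. First, I would write $z = re^{\i\theta}$ so that $d^2 z = r\, dr\, d\theta$, and integrate out the angular coordinate to get
\[
\int_{\C} \frac{|z|^\alpha}{(1+|z|^2)^\beta}\, d^2z = 2\pi \int_0^\infty \frac{r^{\alpha+1}}{(1+r^2)^\beta}\, dr.
\]
Then the substitution $u = r^2$ (so that $du = 2r\, dr$ and $r^{\alpha+1}\, dr = \tfrac{1}{2} u^{\alpha/2}\, du$) collapses this to
\[
\pi \int_0^\infty \frac{u^{\alpha/2}}{(1+u)^\beta}\, du.
\]

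The remaining one-dimensional integral is the classical Eulerian form of the Beta function,
\[
\int_0^\infty \frac{u^{s-1}}{(1+u)^{s+t}}\, du = \frac{\Gamma(s)\Gamma(t)}{\Gamma(s+t)},
\]
valid (and absolutely convergent) whenever $\Re(s), \Re(t) > 0$. Setting $s = \tfrac{1}{2}\alpha + 1$ and $t = \beta - \tfrac{1}{2}\alpha - 1$ gives the claimed evaluation, and the two positivity constraints $\Re(s) > 0$, $\Re(t) > 0$ translate exactly to the hypotheses $\Re(\alpha) > -2$ and $\Re(2\beta - \alpha) > 2$.

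There is no genuine obstacle here; the lemma is essentially a routine radial reduction to a Beta integral. The only minor points to address are: (i) absolute convergence of the original two-dimensional integral, which follows from Tonelli applied to the nonnegative integrand $|z|^{\Re(\alpha)}/(1+|z|^2)^{\Re(\beta)}$ under the very same conditions, so the manipulations above are rigorously justified rather than merely formal; and (ii) the interpretation of $(1+u)^\beta$ for complex $\beta$, which is unambiguous via the principal branch because $1+u>0$ on the domain of integration. Once these are noted, the calculation above delivers the result.
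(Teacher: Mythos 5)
Your proof is correct and takes essentially the same route as the paper's: reduce to a radial integral via polar coordinates and then recognize a Beta integral. The only cosmetic difference is that you substitute $u=r^2$ and invoke the Eulerian form $\int_0^\infty u^{s-1}(1+u)^{-s-t}\,du = B(s,t)$, whereas the paper substitutes $s=1/(1+r^2)$ to land on the Beta integral over $(0,1)$; these are the same computation and yield identical convergence conditions.
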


We are now ready to prove Lemma \ref{zeroform1}.
\begin{proof}[Proof of Lemma \ref{zeroform1}]
Let $e_3 := (0,0,1)$. Let $\sigma:\S^2\setminus\{e_3\} \to\C$ be the stereographic projection
\begin{align}\label{stereodef}
\sigma(x,y,z) := \frac{x+\i y}{1-z}.
\end{align}
A simple calculation~(see \cite[Lemma 3.3.1]{chatterjee25})  shows that if $x,y\in \S^2\setminus\{e_3\}$ and $u,v$ are the stereographic projections of $x,y$ on $\C$, then 
\[
\|x-y\|^2 = \frac{4|u-v|^2}{(1+|u|^2)(1+|v|^2)}.
\]
By equation \eqref{gform}, this shows that for any $x,y\in \S^2\setminus\{e_3\}$, 
\begin{align}
G_{\S^2}(x,y) &= -\ln |\sigma(x)-\sigma(y)| + \frac{1}{2}\ln (1+|\sigma(x)|^2) \notag \\
&\qquad \qquad +\frac{1}{2}\ln (1+|\sigma(y)|^2) -\frac{1}{2}. \label{gpform}
\end{align}
Take any $y_1,\ldots, y_n\in \S^2\setminus\{e_3\}$. Then the above identity shows that
\begin{align}\label{gsum}
\sum_{1\le i<j\le n} G_{\S^2}(y_i, y_j) &= - \sum_{1\le i<j\le n} \ln |\sigma(y_i)-\sigma(y_j)|\notag \\
&\qquad  +\frac{1}{2}(n-1)\sum_{i=1}^n \ln (1+|\sigma(y_i)|^2) -\frac{1}{4}n(n-1). 
\end{align}
Now recall from equation \eqref{zcor} that 
\begin{align*}
C(2^{-1/2}, \mu,c) &= 1-4\pi\mu e^{\sqrt{2}c} +\sum_{n=2}^\infty \frac{(-\mu e^{\sqrt{2}c})^n}{n!}a_n,
\end{align*}
where 
\begin{align*}
a_n = \int_{(\S^2)^n}\exp\biggl( -2 \sum_{1\le i < j\le n} G_{\S^2}(y_i,y_j) \biggr) da(y_1)\cdots da(y_n).
\end{align*}
It is well known that for any $f:\S^2\to\C$ that is integrable with respect to the area measure on $\S^2$, we have
\begin{align}\label{change}
\int_{\S^2} f(y) da(y) = \int_{\C} f(\sigma^{-1}(z)) \frac{4}{(1+|z|^2)^2} d^2z,
\end{align}
where $d^2 z$ denotes Lebesgue measure on $\C$. 
Combining this with equation \eqref{gsum}, we get
\begin{align}\label{anspecial}
a_n &= 4^n e^{\frac{1}{2} n(n-1)} \int_{\C^n}\prod_{i=1}^n (1+|z_i|^2)^{-n-1}\prod_{1\le i<j\le n}|z_i-z_j|^{2} d^2z_1\cdots d^2z_n\notag \\
&= 4^n e^{\frac{1}{2}n(n-1)} \int_{\C^n} \prod_{1\le i<j\le n}|z_i-z_j|^{2} d\nu(z_1)\cdots d\nu(z_n),
\end{align}
where $\nu$ denotes the measure on $\C$ that has density $(1+|z|^2)^{-n-1}$ with respect to Lebesgue measure. 
Now take any $n\ge 2$, and recall the Vandermonde determinant formula
\begin{align*}
\prod_{1\le i<j\le n} (z_j-z_i) &= \det 
\begin{pmatrix}
1 & z_1 & z_1^2 & \cdots & z_1^{n-1}\\
1 & z_2 & z_2^2 & \cdots & z_2^{n-1}\\
\vdots & \vdots & \vdots &\ddots & \vdots\\
1 & z_n & z_n^2 & \cdots & z_n^{n-1}
\end{pmatrix}\\
&= \sum_{\sigma \in S_n} \sign(\sigma)\prod_{i=1}^n z_i^{\sigma(i)-1}.
\end{align*}
This gives 
\begin{align*}
\prod_{1\le i<j\le n}|z_i-z_j|^{2} &= \prod_{1\le i<j\le n}(z_j-z_i) \prod_{1\le i<j\le n}(\overline{z}_j-\overline{z}_i)\\
&= \biggl(\sum_{\sigma \in S_n} \sign(\sigma)\prod_{i=1}^n z_i^{\sigma(i)-1}\biggr) \biggl(\sum_{\sigma \in S_n} \sign(\sigma)\prod_{i=1}^n \overline{z}_i^{\sigma(i)-1}\biggr)\\
&= \sum_{\sigma,\tau \in S_n} \sign(\sigma)\sign(\tau) \prod_{i=1}^n (z_i^{\sigma(i)-1}\overline{z}_i^{\tau(i)-1}).
\end{align*}
Plugging this into equation \eqref{anspecial}, we get that for $n\ge 2$, 
\begin{align*}
a_n &= 4^n e^{\frac{1}{2}n(n-1)}\sum_{\sigma,\tau \in S_n} \sign(\sigma)\sign(\tau) \int_{\C^n} \prod_{i=1}^n (z_i^{\sigma(i)-1}\overline{z}_i^{\tau(i)-1}) d\nu(z_1)\cdots d\nu(z_n)\\
&= 4^n e^{\frac{1}{2}n(n-1)}\sum_{\sigma,\tau \in S_n} \sign(\sigma)\sign(\tau) \prod_{i=1}^n \int_{\C} z^{\sigma(i)-1}\overline{z}^{\tau(i)-1} d\nu(z).
\end{align*}
It is easy to see from the definition of $\nu$ that the integral displayed on the right is absolutely convergent. Now note that $\nu$ is a radially symmetric measure. This implies that for any $k,l\in \Z$,
\[
\int_{\C} z^k \overline{z}^l d\nu(z) =
\begin{cases}
\int_{\C} |z|^{2k} d\nu(z) &\text{ if } k=l,\\
0 &\text{ otherwise.}
\end{cases}
\]
Thus, only terms with $\sigma=\tau$ survive in the sum displayed above. We conclude that 
\begin{align}\label{an1}
a_n = 4^n e^{\frac{1}{2}n(n-1)} n! \prod_{j=0}^{n-1} \int_{\C} |z|^{2j} d\nu(z). 
\end{align}
Take any $n\ge 2$ and  $0\le j\le n-1$. By Lemma \ref{intlmm} (noting that $2j> -2$ and $2(n+1)-2j > 2$), we have
\begin{align*}
\int_{\C} |z|^{2j} d\nu(z) &= \int_{\C} \frac{|z|^{2j}}{(1+|z|^2)^{n+1}} d^2z = \frac{\pi \Gamma(n-j)\Gamma(j+1)}{\Gamma(n+1)}.
\end{align*}
Plugging this into equation \eqref{an1}, we get that for each $n\ge 2$,
\begin{align}\label{an3}
a_n&=  \frac{(4\pi)^n e^{\frac{1}{2}n(n-1)}}{\Gamma(n+1)^{n-1}} \biggl(\prod_{j=0}^{n-1}\Gamma(j+1)\biggr)^2.
\end{align}
It is known~\cite[Example 12.48]{whittakerwatson21} is that the Barnes $G$-function satisfies,  for any positive integer $n$,
\begin{align}\label{gn1id}
G(n+1) = \prod_{k=1}^{n-1} k!.
\end{align}
Substituting this in equation \eqref{an3}, we get that for each $n\ge 2$, 
\begin{align*}
a_n&=  \frac{(4\pi)^n e^{\frac{1}{2}n(n-1)}G(n+1)^2}{\Gamma(n+1)^{n-1}}.
\end{align*}
If we take $a_0=1$ and $a_1 = 4\pi$, then the above equation holds also for $n=0,1$. This completes the proof.
\end{proof}

We will now use Lemma \ref{zeroform1} to prove the formula stated in Theorem \ref{zerocthm}. For that, we first need to establish certain estimates for $f$. These are provided by the following lemma, which is proved in \textsection\ref{fzfinalpf}. Below and henceforth, $\log$ denotes the analytic branch of logarithm on $\C\setminus(-\infty,0]$ which is real-valued on $(0,\infty)$. Also, $\arg z$ denotes the argument of $z$ in $[-\pi,\pi)$.
\begin{lmm}\label{fzfinal}
For $z \in \C$ with $\Re(z)\ge -1$ and $z\ne -1$, we have
\begin{align*}
f(z) &= \exp\biggl\{\frac{1}{2}(z+1)\log (z+1)+ Az + R(z)\biggr\},
\end{align*}
where $A$ is a real universal constant, and $|R(z)|\le C\ln(2+|z|)$ for some positive universal constant $C$. Consequently, 
\begin{align*}
|f(z)| &\le \exp\biggl\{\frac{1}{2}(\Re(z)+1)\ln |z+1| - \frac{1}{2} \Im(z) \arg(z+1) + A\Re(z) + C\ln(2+|z|)\biggr\}.
\end{align*}
\end{lmm}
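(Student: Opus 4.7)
The strategy is to take logarithms and reduce everything to the classical Stirling and Barnes--$G$ asymptotics. Setting $w := z+1$ (so $\Re w \ge 0$, $w\ne 0$), the definition of $f$ gives
\[
\log f(z) = z\log(4\pi) + \tfrac12 z(z-1) + 2\log G(w) - (z-1)\Pi(w),
\]
where $\Pi$ and $\log G$ denote the analytic logarithms of $\Gamma$ and $G$ on $\mathbb{C}\setminus(-\infty,0]$ (for $G$ the Weierstrass product \eqref{gformula} makes this unambiguous). The plan is to substitute the expansions
\[
\Pi(w) = (w-\tfrac12)\log w - w + \tfrac12\log(2\pi) + O(|w|^{-1}),
\]
\[
\log G(w) = \bigl(\tfrac{w^{2}}{2} - w + \tfrac{5}{12}\bigr)\log w - \tfrac{3w^{2}}{4} + \tfrac{w-1}{2}\log(2\pi) + w + \zeta'(-1) + O(|w|^{-1}),
\]
both valid uniformly in any sector $|\arg w|\le\pi-\delta$ (in particular on $\Re w \ge 0$), and to collect terms.

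The key step is a bookkeeping exercise. After substitution, the $w^{2}\log w$ contributions from $2\log G(w)$ and from $-(z-1)\Pi(w)$ cancel exactly (this is the algebraic reason for the exponent $z-1$ in $\Gamma(z+1)^{z-1}$), and so do the pure $w^{2}$ contributions. What survives is
\[
\log f(z) = (\tfrac{w}{2}-\tfrac16)\log w + A'\,w + B' + O(|w|^{-1}),
\]
with explicit real constants $A' = \log(4\pi) + \tfrac12\log(2\pi) - \tfrac32$ and $B' = -\log(4\pi) + 2\zeta'(-1) + \tfrac{11}{12}$. Reverting to $z = w-1$ yields
\[
\log f(z) = \tfrac12(z+1)\log(z+1) + A z + R(z),\qquad R(z) = -\tfrac16\log(z+1) + (A'+B') + O(|z|^{-1}),
\]
with $A := A'$ real. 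Since $|\log(z+1)|\le \ln|z+1|+\pi/2\le \ln(2+|z|)+\pi/2$ throughout the closed right $w$-half-plane, this gives $|R(z)|\le C\ln(2+|z|)$ in the asymptotic regime.

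For bounded $|z|$ I would argue by continuity: the computation at the end of \textsection\ref{sketchsec} gives $\lim_{z\to -1} f(z) = \tfrac{1}{4\pi}$, so $f$ extends analytically across $z=-1$ to a nowhere-vanishing function on $\{\Re z > -1\}$, and therefore $\log f(z) - \tfrac12(z+1)\log(z+1) - Az$ is continuous and hence bounded on any compact subset of $\{\Re z \ge -1\}$. Combining with the large-$|z|$ estimate yields the uniform bound $|R(z)|\le C\ln(2+|z|)$. The inequality for $|f(z)|$ then follows by taking real parts: $\Re\bigl[\tfrac12(z+1)\log(z+1)\bigr] = \tfrac12(\Re(z)+1)\ln|z+1| - \tfrac12\Im(z)\arg(z+1)$, $\Re(Az)=A\Re(z)$, and $\Re R(z) \le |R(z)|$.

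The main obstacle is the glue between the two regimes: the large-$|z|$ asymptotic itself produces a term $-\tfrac16\log(z+1)$ that would blow up as $z\to -1$, yet $f$ is regular and nonvanishing there. Reconciling this requires the explicit value $f(-1) = \tfrac1{4\pi}$ from \textsection\ref{sketchsec} together with a compactness argument, rather than a uniform asymptotic expansion. A secondary technical point is keeping careful track of the algebra so that the $w^{2}\log w$, $w^{2}$, and other leading contributions cancel with the correct coefficients; this cancellation is exactly what forces the leading exponent of $f$ to be $\tfrac12(z+1)\log(z+1)$ rather than something with unbounded polynomial growth.
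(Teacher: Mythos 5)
Your proposal is correct in substance but takes a genuinely different route from the paper's proof.  The paper first applies the functional equation of $G$ twice (and Lemma~\ref{gammalmm2} for $\Pi$) to rewrite $f(z)$ in terms of $\Theta(z+3)$, $\Pi(z+2)$, and $\log(z+1)$, so that every argument of $\Theta$ and $\Pi$ has real part $\ge 1$ when $\Re(z)\ge-1$.  The expansions of Corollaries~\ref{picor} and~\ref{gcor}, which are uniform on $\{\Re\ge 0\}$ with $O(|\cdot|^{-1})$ error, then apply everywhere on the claimed domain with no exceptional bounded region; a single pass of algebraic cancellation (which produces $\frac12(z+1)\log(z+1)-\frac12(z+1)\log(z+2)$ and then replaces $\log(z+2)$ by $\log(z+1)$ at $O(|z+1|^{-1})$ cost) finishes the argument.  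You instead apply the large-argument asymptotics of $\Pi$ and $\log G$ directly at $w=z+1$, which approaches the branch points at $0$; this forces you to patch a large-$|z|$ asymptotic with a compactness bound on a fixed disk, using that $f$ extends analytically across $z=-1$ to a nonvanishing function on the simply connected half-plane so that a single-valued $\log f$ exists.  Both strategies work, and your cancellation bookkeeping (the $w^2\log w$ and $w^2$ terms cancel, leaving leading term $\frac{w}{2}\log w$) is correct, but the paper's argument-shifting step is what makes the estimate uniform without appealing to compactness, which is cleaner and more self-contained.  Two minor slips: after multiplying Stirling's $O(|w|^{-1})$ remainder by the factor $(w-2)$, the error in $\log f$ is $O(1)$, not $O(|w|^{-1})$ as you wrote (still well within $C\ln(2+|z|)$); and the value $\lim_{z\to -1}f(z)$ is $e/(4\pi)$, not $1/(4\pi)$ — the sketch in \textsection\ref{sketchsec} contains a typo, and the correct value appears in the proof of Theorem~\ref{zerothm}.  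Neither issue affects the validity of your argument, since all that is needed is a bounded nonzero limit.
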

Another estimate that we need is the following upper bound on the modulus of the Gamma function. The proof is in \textsection\ref{gammalmmpf}.
\begin{lmm}\label{gammalmm}
Take any $z = x+\i y$ such that $x>-1$ and $z$ is not an integer. Let $n:= \lceil x+1\rceil$ and $a := n-x$, so that $1\le a<2$. Then
\[
|\Gamma(-z)|\le \frac{C_1\exp(C_2\ln(2+|y|)+ y \arg(a-\i y))}{\prod_{j=0}^{n-1}|j-z|},
\]
where $C_1, C_2$ are positive universal constants.
\end{lmm}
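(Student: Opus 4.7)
The plan is to reduce the problem to a Stirling-type bound at a shifted argument whose real part lies in the fixed compact interval $[1,2)$. Specifically, I would first iterate the functional equation $\Gamma(w+1) = w\Gamma(w)$ $n$ times starting from $w = -z$, which gives the identity
\[
\Gamma(n - z) = \Gamma(-z) \prod_{j=0}^{n-1}(j - z).
\]
Since $n - z = a - \i y$ with $a = n - x \in [1, 2)$ by construction, this reduces the problem to establishing the estimate
\[
|\Gamma(a - \i y)| \le C_1 \exp\bigl(C_2 \ln(2 + |y|) + y \arg(a - \i y)\bigr)
\]
uniformly in $a \in [1, 2)$ and $y \in \R$.

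Next, I would invoke Stirling's formula in the form
\[
\log \Gamma(s) = \bigl(s - \tfrac{1}{2}\bigr) \log s - s + \tfrac{1}{2}\log(2\pi) + O(1/|s|),
\]
which holds uniformly on any sector bounded away from the negative real axis and from the origin. This in particular applies uniformly for $s = a - \i y$ with $a \in [1, 2)$, since $\Re(s) = a \ge 1$ keeps $s$ in the right half-plane at distance at least $1$ from $0$. Writing $\log s = \ln |s| + \i \arg s$ and taking real parts gives
\[
\ln |\Gamma(a - \i y)| = \bigl(a - \tfrac{1}{2}\bigr) \ln |a - \i y| + y \arg(a - \i y) - a + \tfrac{1}{2} \log(2\pi) + O(1).
\]
Using $|a - \i y| \le a + |y| \le 2 + |y|$, together with the fact that $a - \tfrac{1}{2} \in [\tfrac{1}{2}, \tfrac{3}{2})$ and that $a$ and the $O(1)$ error are uniformly bounded, exponentiating yields the desired bound on $|\Gamma(a - \i y)|$ with, for instance, $C_2 = \tfrac{3}{2}$.

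Finally, substituting this back into the functional-equation identity produces
\[
|\Gamma(-z)| = \frac{|\Gamma(a - \i y)|}{\prod_{j=0}^{n-1} |j - z|} \le \frac{C_1 \exp\bigl(C_2 \ln(2 + |y|) + y \arg(a - \i y)\bigr)}{\prod_{j=0}^{n-1} |j - z|},
\]
which is the claimed bound. The only real subtlety is uniformity: one must check that the Stirling estimate holds with a single error constant across all $a \in [1, 2)$ and all $y \in \R$. This is not a genuine obstacle because $\{a - \i y : a \in [1, 2),\ y \in \R\}$ lies in a closed vertical strip of the right half-plane with $\Re(s) \ge 1$, which is a standard region of uniform validity for Stirling; the case of small $|y|$ is handled by observing that $|a - \i y|$ is then bounded, so the replacement of $\ln|a - \i y|$ by $\ln(2 + |y|)$ contributes only a factor absorbable into $C_1$.
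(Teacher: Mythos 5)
Your proof is correct and follows essentially the same route as the paper: reduce to $|\Gamma(n-z)| = |\Gamma(a-\i y)|$ via repeated use of $\Gamma(w+1)=w\Gamma(w)$, then apply a uniform Stirling estimate in the vertical strip $\Re(s)\in[1,2)$. The paper invokes its own Corollary~\ref{picor} (a Stirling-type expansion for $\Pi = \log\Gamma$ with explicit $O(|s|^{-1})$ remainder) where you cite Stirling directly, but the computations are identical.
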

Now, for nonzero $x>-1$, consider the contour integral
\begin{align}\label{fcdef}
F(x) := \frac{1}{2\pi \i} \int_{\Re(z) = x} \Gamma(-z) f(z)(\mu e^{\sqrt{2}c})^z dz,
\end{align}
where the contour goes from $x-\i \infty$ to $x+\i \infty$.
\begin{lmm}\label{fconvlmm}
For any non-integer $x> -1$, the integral defining $F(x)$ in equation \eqref{fcdef} is absolutely convergent.
\end{lmm}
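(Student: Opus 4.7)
The plan is to parameterize the contour as $z=x+\i y$, $dz=\i\,dy$, and estimate the modulus of the integrand using the bounds already established in Lemmas \ref{fzfinal} and \ref{gammalmm}. Since $\mu>0$, one has $|(\mu e^{\sqrt{2}c})^{x+\i y}|=(\mu e^{\sqrt{2}c})^{x}$, so that factor contributes only a $y$-independent constant. Because $x>-1$ is non-integer, $z=x+\i y$ is never a nonpositive integer and never equals $-1$, so both $\Gamma(-z)$ and $f(z)$ are continuous on the entire line $\Re(z)=x$; the only issue is the behavior as $|y|\to\infty$.

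For large $|y|$, the sign of $\arg(x+1+\i y)$ agrees with that of $y$ and $|\arg(x+1+\i y)|\to \pi/2$, so Lemma \ref{fzfinal} gives
\[
|f(x+\i y)|\le C_1(1+|y|)^{C_2}\exp\bigl(-\tfrac{\pi}{4}|y|\bigr).
\]
Setting $n:=\lceil x+1\rceil$ and $a:=n-x\in[1,2)$, the sign of $\arg(a-\i y)$ is opposite to that of $y$ with $|\arg(a-\i y)|\to\pi/2$, so Lemma \ref{gammalmm} yields
\[
|\Gamma(-x-\i y)|\le \frac{C_3\exp\bigl(-\tfrac{\pi}{2}|y|\bigr)(1+|y|)^{C_4}}{\prod_{j=0}^{n-1}|j-x-\i y|}.
\]
Since $x$ is non-integer, $\min_{0\le j\le n-1}|j-x|>0$, so the denominator is bounded below by a positive constant (and in fact grows like $|y|^{n}$ as $|y|\to\infty$). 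Multiplying the two bounds gives
\[
\bigl|\Gamma(-z)f(z)(\mu e^{\sqrt{2}c})^{z}\bigr|\le C(1+|y|)^{M}\exp\bigl(-\tfrac{3\pi}{4}|y|\bigr)
\]
for some exponent $M$ and a constant $C$ depending on $x,\mu,c$, and this is manifestly integrable on $\R$, proving absolute convergence.

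The only real content is the bookkeeping of signs in the two $\arg$ terms: the imaginary axis contributions from $\Gamma(-z)$ and from $f(z)$ must both be negative for large $|y|$ in order to add rather than cancel. This is automatic here because $a-\i y$ sits in the opposite half-plane from $x+1+\i y$, while $f$ carries the factor $-\tfrac{1}{2}y\arg(z+1)$ and $\Gamma$ carries the factor $+y\arg(a-\i y)$ in the exponent — so the two contributions have the same sign, namely $-\tfrac{\pi}{4}|y|$ from $f$ and $-\tfrac{\pi}{2}|y|$ from $\Gamma$. The expected main nuisance is therefore not convergence itself but making sure these sign conventions are applied consistently with the definition of $\arg$ in $[-\pi,\pi)$ used throughout; once that is done, the polynomial factors are harmless.
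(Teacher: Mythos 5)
Your argument is correct and follows essentially the same route as the paper's proof: both invoke Lemma~\ref{fzfinal} and Lemma~\ref{gammalmm}, observe that the $\arg$ terms force exponential decay as $|y|\to\infty$, use non-integrality of $x$ to handle boundedness near $y=0$, and note that the $(\mu e^{\sqrt{2}c})^z$ factor has constant modulus along the contour. Your version is slightly more explicit about the decay rates ($e^{-\pi|y|/4}$ from $f$, $e^{-\pi|y|/2}$ from $\Gamma$) and about the sign bookkeeping, but the substance is identical.
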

\begin{proof}
Since $\arg(x+\i y) \to \pm \frac{\pi}{2}$ as $y\to \pm \infty$, Lemma \ref{gammalmm} shows that $\Gamma(-x-\i y)$ decays exponentially in $|y|$ as $|y|\to \infty$. Similarly by Lemma \ref{fzfinal}, $|f(x+\i y)|$ also decays exponentially in $|y|$. Since $x$ is not an integer, these functions remain bounded near $y=0$. Lastly, $|(\mu e^{\sqrt{2}c})^z| = (\mu e^{\sqrt{2}c})^x$ remains bounded. This shows that the integral in equation \eqref{fcdef} is absolutely convergent.
\end{proof}


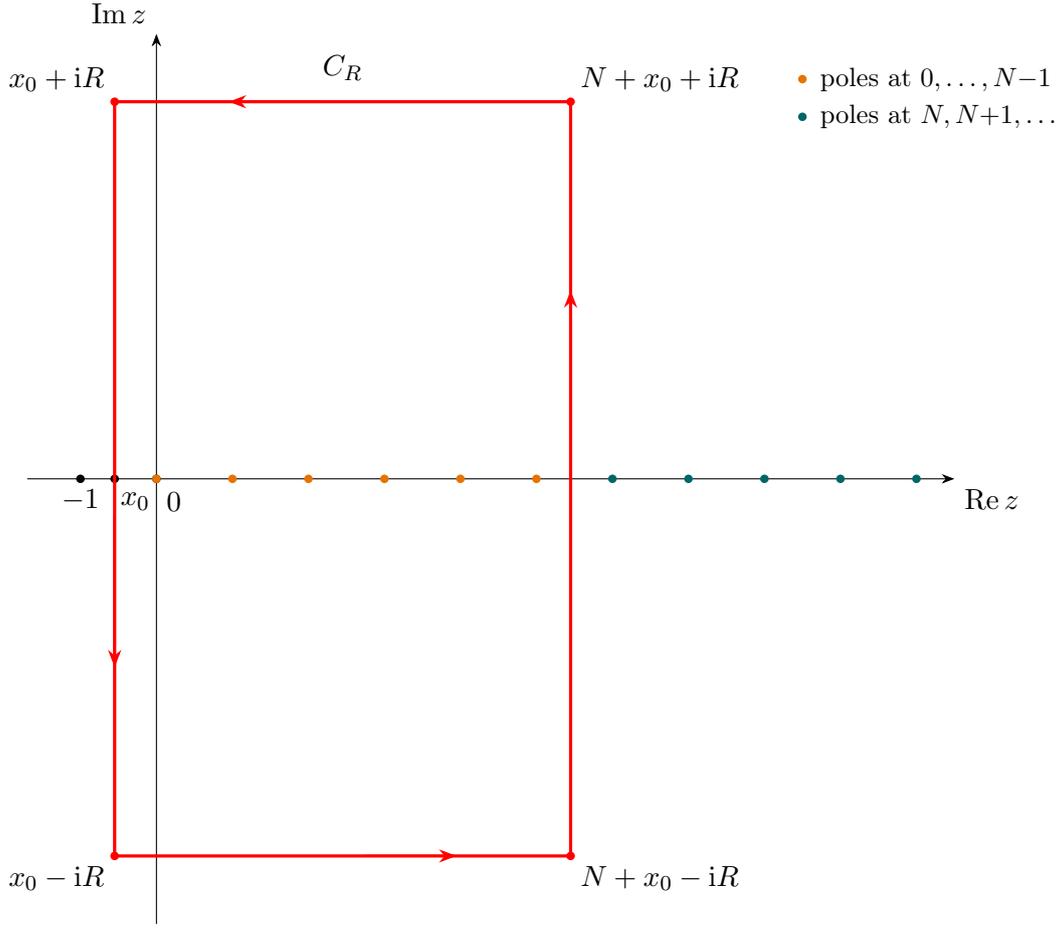
\begin{figure}[t]
\centering
\begin{tikzpicture}[>=Stealth, scale=1.0]

\def\Nint{6}      
\def\xo{-0.55}    
\def\R{5.0}       
\def\extra{5}     

\pgfmathsetmacro{\Xmin}{-1.7}
\pgfmathsetmacro{\Xmax}{\Nint+\extra+1.5}
\pgfmathsetmacro{\Ymin}{-\R-0.9}
\pgfmathsetmacro{\Ymax}{ \R+0.9}

\draw[black,->] (\Xmin,0) -- (\Xmax-2,0) node[below right] {$\Re z$};
\draw[black,->] (0,\Ymin) -- (0,\Ymax) node[above left] {$\Im z$};

\fill[black] (-1,0) circle (1.6pt) node[below] {$-1$};
\fill[black] (0,0) circle (1.6pt);
\node[black, below right] at (0,-0.05) {$0$};

\fill[black] (\xo,0) circle (1.6pt);
\node[black, below left] at (\xo+.6,-0.05) {$x_0$};

\foreach \m in {0,...,\numexpr\Nint-1\relax} {
  \fill[orange!90!black] (\m,0) circle (1.6pt);
}
\foreach \j in {0,...,\numexpr\extra-1\relax} {
  \pgfmathsetmacro{\p}{\Nint+\j}
  \fill[teal!80!black] (\p,0) circle (1.6pt);
}

\fill[orange!90!black] (\Xmax-4,\Ymax-0.6) circle (1.6pt);
\node[black, anchor=west] at (\Xmax-3.9,\Ymax-0.6) {\small poles at $0,\dots,N{-}1$};
\fill[teal!80!black] (\Xmax-4,\Ymax-1.1) circle (1.6pt);
\node[black, anchor=west] at (\Xmax-3.9,\Ymax-1.1) {\small poles at $N,N{+}1,\dots$};

\coordinate (A) at (\xo, \R);           
\coordinate (B) at (\xo+\Nint, \R);     
\coordinate (C) at (\xo+\Nint,-\R);     
\coordinate (D) at (\xo,-\R);           

\draw[very thick, red] (A) -- (D) -- (C) -- (B) -- cycle;

\fill[red] (A) circle (1.6pt);
\fill[red] (B) circle (1.6pt);
\fill[red] (C) circle (1.6pt);
\fill[red] (D) circle (1.6pt);

\draw[red,->, thick] ($(A)!0.55!(D)$) -- ($(A)!0.75!(D)$); 
\draw[red,->, thick] ($(D)!0.55!(C)$) -- ($(D)!0.75!(C)$); 
\draw[red,->, thick] ($(C)!0.55!(B)$) -- ($(C)!0.75!(B)$); 
\draw[red,->, thick] ($(B)!0.55!(A)$) -- ($(B)!0.75!(A)$); 

\node[black, above left]  at (A) {$x_0+\i R$};
\node[black, above right] at (B) {$N+x_0+\i R$};
\node[black, below right] at (C) {$N+x_0-\i R$};
\node[black, below left]  at (D) {$x_0-\i R$};

\node[black] at (\xo+\Nint/2,\R+0.45) {$C_R$};

\end{tikzpicture}
\caption{Rectangular contour $C_R$ with vertices $x_0\pm \i R$ and $N+x_0\pm \i R$, traversed counter-clockwise. Poles of $g$ at $0,\dots,N{-}1$ lie inside $C_R$.}
\label{fig:CR-with-poles}
\end{figure}

The next lemma is the key step of the proof.

\begin{lmm}\label{zerofinal0}
For any $x_0\in (-1,0)$ and $N\ge 1$, 
\begin{align*}
F(x_0) = F(N+x_0) + \sum_{n=0}^{N-1}\frac{(-1)^n}{n!}f(n) (\mu e^{\sqrt{2}c})^n.
\end{align*}
\end{lmm}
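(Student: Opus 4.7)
The plan is to apply the residue theorem to
$g(z):=\Gamma(-z)f(z)(\mu e^{\sqrt{2}c})^z$
on the rectangular contour $C_R$ depicted in Figure \ref{fig:CR-with-poles}, with vertices $x_0\pm\i R$ and $N+x_0\pm\i R$, traversed counter-clockwise (down the left edge, right along the bottom, up the right edge, and left along the top), and then to send $R\to\infty$. Since $f$ is analytic on $\C\setminus(-\infty,-1]$ and $x_0>-1$, the only singularities of $g$ inside $C_R$ come from the simple poles of $\Gamma(-z)$ at the nonnegative integers; because $x_0\in(-1,0)$ and $x_0+N\in(N-1,N)$, the enclosed poles are exactly $n=0,1,\dots,N-1$. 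A short direct calculation (changing variables in $\Res_{w=-n}\Gamma(w)=(-1)^n/n!$) gives
\[
\Res(g,n) = -\frac{(-1)^n}{n!}\,f(n)\,(\mu e^{\sqrt{2}c})^n.
\]

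Next, I would split $\oint_{C_R}g(z)\,dz$ into its four sides. With the orientation of Figure \ref{fig:CR-with-poles}, the left side is traversed downward and contributes $-2\pi\i$ times the truncated vertical integral defining $F(x_0)$, while the right side is traversed upward and contributes $+2\pi\i$ times the truncated version of $F(x_0+N)$. Writing $F(\cdot)_R$ for the vertical integral truncated to imaginary part in $[-R,R]$, the residue theorem gives
\[
-F(x_0)_R + F(x_0+N)_R + (\text{top and bottom pieces}) = -\sum_{n=0}^{N-1}\frac{(-1)^n}{n!}f(n)(\mu e^{\sqrt{2}c})^n.
\]
Since $x_0$ and $x_0+N$ are both non-integer and lie in $(-1,\infty)$, Lemma \ref{fconvlmm} yields $F(x_0)_R\to F(x_0)$ and $F(x_0+N)_R\to F(x_0+N)$ as $R\to\infty$, so everything reduces to controlling the horizontal pieces.

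The only real work, and the point I expect to be the main obstacle, is showing that $\frac{1}{2\pi\i}\int_{x_0}^{x_0+N}g(x\pm\i R)\,dx\to 0$ as $R\to\infty$. For $z=x+\i y$ with $x$ confined to the bounded interval $[x_0,x_0+N]$ and $|y|$ large, $\arg(z+1)\to\pm\pi/2$ and $\arg(a-\i y)\to\mp\pi/2$; accordingly, Lemma \ref{fzfinal} gives $|f(z)|\lesssim e^{-\pi|y|/4}$ and Lemma \ref{gammalmm} gives $|\Gamma(-z)|\lesssim e^{-\pi|y|/2}$, each up to polynomial and logarithmic factors in $|y|$. Since $|(\mu e^{\sqrt{2}c})^z|=(\mu e^{\sqrt{2}c})^x$ is uniformly bounded on the strip, $|g(x\pm\i R)|$ decays like $e^{-3\pi R/4}$ up to polynomial prefactors, uniformly in $x\in[x_0,x_0+N]$; integrating over this bounded interval yields a bound that vanishes as $R\to\infty$. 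Passing to the limit in the residue identity and rearranging produces exactly the claimed formula. Beyond this decay estimate, the remaining work is bookkeeping with signs (the orientation of the sides and the sign of the Gamma residue).
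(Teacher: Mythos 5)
Your argument mirrors the paper's proof exactly: same auxiliary function $g$, same rectangular contour $C_R$, same residue computation at the nonnegative integers, and the same use of Lemmas~\ref{fzfinal} and~\ref{gammalmm} to kill the horizontal edges as $R\to\infty$. The sign and orientation bookkeeping you flag is correct as you have it, so the proposal is a faithful reproduction of the paper's argument.
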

\begin{proof}
Define the function 
\[
g(z) := \Gamma(-z) f(z) (\mu e^{\sqrt{2}c})^{z}
\]
on the domain $(\C \setminus(-\infty,-1])\setminus\{0,1,2,\ldots\}$. Fix some $x_0\in (-1,0)$. 
Take any $N\ge 1$ and $R\ge 1$. Let $C_R$ be the rectangular contour with vertices $x_0\pm \i R$ and $N + x_0\pm \i R$, traversed counter-clockwise (see Figure \ref{fig:CR-with-poles}). Since $f$ is analytic in $\C\setminus(-\infty,-1]$, we deduce that the only poles of $g$ are at the nonnegative integers, arising due to the poles of the Gamma function. Since
\[
\Res(\Gamma, -n) = \frac{(-1)^n}{n!}
\]
for each nonnegative integer $n$, we get that 
\[
\Res(g, n) = -\frac{(-1)^n}{n!}f(n) (\mu e^{\sqrt{2}c})^n. 
\]
Since $C_R$ encloses the poles at $0,1,\ldots,N-1$, Cauchy's theorem gives
\begin{align}\label{cr1}
\frac{1}{2\pi \i} \oint_{C_R} g(z) dz = -\sum_{n=0}^{N-1}\frac{(-1)^n}{n!}f(n) (\mu e^{\sqrt{2}c})^n,
\end{align}
Now, by Lemma \ref{fzfinal} and Lemma \ref{gammalmm}, for any $x\ge x_0$, 
\begin{align}\label{gxir}
|g(x + \i R)| &\le \frac{C_1\exp(C_2\log(2+R)+ R \arg(\lceil x+1 \rceil -x- \i R))}{\prod_{j=0}^{\lceil x+1\rceil-1 }|j-x+ \i R|}\notag \\
&\qquad \cdot \exp\biggl(\frac{1}{2}(x+1)\log |x+ \i R+1| - \frac{1}{2} R \arg(x+ \i R+1) \notag \\
&\qquad \qquad \qquad + Ax + C\log(2+|x+\i R|)\biggr) e^{\sqrt{2}cx},
\end{align}
where  $A, C, C_1, C_2$ are universal constants. Clearly, the bound is decreasing exponentially in $R$ as $R \to \infty$, uniformly over $x$ in any given bounded range. Thus,
\begin{align}\label{cr2}
\lim_{R\to \infty} \max_{x_0\le x\le N+x_0} |g(x+ \i R)| = 0.
\end{align}
Similarly, 
\begin{align}\label{cr3}
\lim_{R\to \infty} \max_{x_0\le x\le N+x_0} |g(x-\i R)| = 0.
\end{align}
By equations \eqref{cr2} and \eqref{cr3}, we can take $R\to \infty$ in equation \eqref{cr1}. This completes the proof.
\end{proof}

The next lemma shows that $N$ can be taken to infinity in Lemma \ref{zerofinal0}, resulting in the removal of the $F(N+x_0)$ terms.
\begin{lmm}\label{zerofinal}
For any $x_0\in (-1,0)$, $\lim_{N\to \infty} F(N+x_0) = 0$. Thus, by Lemma \ref{zerofinal0}, 
\[
F(x_0) = \sum_{n=0}^\infty \frac{(-\mu e^{\sqrt{2}c})^n}{n!}f(n) = C(2^{-1/2}, \mu,c).
\]
\end{lmm}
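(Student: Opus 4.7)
The plan is to prove $F(N+x_0) \to 0$ directly, using the integrand estimates of Lemmas \ref{fzfinal} and \ref{gammalmm}; the identity $F(x_0) = \sum_{n=0}^\infty \frac{(-\mu e^{\sqrt{2}c})^n}{n!} f(n) = C(2^{-1/2},\mu,c)$ then drops out of Lemma \ref{zerofinal0} by letting $N \to \infty$, since absolute convergence of the series is already known (from the identification $f(n)=a_n$ in Lemma \ref{zeroform1} together with \eqref{wagner2}, which give $|f(n)|/n! \le n^{-n/2} e^{O(n)}$). Write $x := N + x_0$, so that $x \in (N-1, N)$, $\lceil x + 1 \rceil = N+1$, and the parameter $a$ in Lemma \ref{gammalmm} equals $1 - x_0 \in (1,2)$. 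Combining the two lemmas with $|(\mu e^{\sqrt{2}c})^z| = (\mu e^{\sqrt{2}c})^x$ yields a uniform bound
\[
|g(x + \i y)| \le \frac{C_3 (2+|y|)^{K} |x + 1 + \i y|^{(x+1)/2} e^{\Theta(x,y)} e^{O(N)}}{\prod_{j=0}^{N} |j - x - \i y|},
\]
where $\Theta(x,y) := -\tfrac{y}{2}\arg(x+1+\i y) + y \arg(a - \i y)$ and the constants depend only on $\mu$, $c$, $x_0$.

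A sign check using $\arg(x+1+\i y) = \arctan(y/(x+1))$ and $\arg(a - \i y) = -\arctan(y/a)$ shows $\Theta(x,y) \le 0$ for every $y \in \R$. Moreover, for $|y| \ge 1$ the estimate $\arctan(|y|/a) \ge \arctan(1/2)$ yields $\Theta(x,y) \le -c_0 |y|$ with $c_0 := \arctan(1/2) > 0$, a rate independent of $N$. For the denominator, $|j - x - \i y| \ge |j - x|$ reduces matters to the real case; since $\{|j - x| : 0 \le j \le N\} = \{|x_0|, 1 + x_0, \dots, N + x_0\}$, the product evaluates to $(-x_0)\,\Gamma(N+1+x_0)/\Gamma(1+x_0)$, which by Stirling is bounded below by $c_1 N^{N + x_0 + 1/2} e^{-N}$.

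Next I would split the $y$-integral into $|y| \le 1$ and $|y| > 1$. On $|y| \le 1$ the numerator is at most $(N+2)^{(N+1)/2} e^{O(N)}$, and dividing by the denominator gives $O(N^{-N/2} e^{O(N)})$. On $|y| > 1$, I bound $|x+1+\i y|^{(x+1)/2} \le (x+1+|y|)^{(x+1)/2}$ and couple this with $e^{-c_0 |y|}$; splitting further as $|y| \le x+1$ and $|y| > x+1$ and applying the standard Gamma estimate $\int_0^\infty y^{(N+1)/2} e^{-c_0 y}\,dy = \Gamma((N+1)/2 + 1)\, c_0^{-(N+1)/2-1}$, Stirling reduces the $y$-integral to $N^{(N+1)/2} e^{O(N)}$, which after dividing by the same denominator again gives $O(N^{-N/2} e^{O(N)})$.

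The main technical obstacle is the tail $|y| \gg N$, where the numerator contains the large power $|x+1+\i y|^{(x+1)/2} \sim |y|^{(N+1)/2}$, a polynomial in $|y|$ whose degree grows with $N$. What saves the argument is that the decay rate $c_0$ in $e^{-c_0 |y|}$ is independent of $N$, so the Laplace-type integral is only of order $\Gamma((N+1)/2+1)$, which by Stirling is dwarfed by the $\sim N!$ denominator. Putting everything together, $|F(N + x_0)| \le C'' N^{-N/2} e^{O(N)} \to 0$ as $N \to \infty$, proving the first statement; the second follows immediately from Lemma \ref{zerofinal0}.
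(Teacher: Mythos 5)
Your proof is correct, and the underlying mechanism — superexponential decay of the $\Gamma$-denominator against polynomial-in-$N$ growth of the $f$-numerator, with the $y$-dependence controlled by an $N$-independent exponential decay rate — is the same as the paper's. The organization is a bit different, though. The paper keeps the $y$-dependence in the denominator product, establishing the lower bound
\[
\prod_{j=0}^{\lceil x+1\rceil-1}\lvert j-x-\i y\rvert \;\ge\; C_1\exp\bigl(x\ln\lvert x+\i y\rvert - C_2 x - \ln(1+y^2)\bigr),
\]
which nearly cancels the numerator's $\lvert x+1+\i y\rvert^{(x+1)/2}$ before any integration in $y$; after the change of variable $u=y/(x+1)$ the remaining $y$-integral becomes a universal constant and the full bound $\lvert F(x)\rvert\le C_1(x+1)^{C_2}\mu^{x}\exp((\sqrt2 c + C_3)x - \tfrac12 x\ln(x+1))$ falls out in one shot. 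You instead throw away the $y$-dependence of the denominator entirely, reducing it to $(-x_0)\,\Gamma(N+1+x_0)/\Gamma(1+x_0)$ by Stirling, and then you have to beat the full growth $\lvert x+1+\i y\rvert^{(x+1)/2}$ in the numerator by integrating in $y$. This is where your identification of the $N$-independent decay rate $c_0=\arctan(1/2)$ (coming from $\arg(a-\i y)$ with $a=1-x_0<2$) is the key observation: the Laplace-type bound $\int_0^\infty y^{(N+1)/2}e^{-c_0 y}\,dy=\Gamma((N+3)/2)\,c_0^{-(N+3)/2}\sim N^{N/2}e^{O(N)}$, with $c_0$ not shrinking with $N$, is then dominated by the $\sim N^{N}e^{-N}$ denominator. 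Both routes give $\lvert F(N+x_0)\rvert = O(N^{-N/2}e^{O(N)})\to 0$. The paper's change-of-variable argument is a little more economical (no Stirling, no case split in $y$), while yours is more self-contained and makes the Laplace-vs.-factorial competition visible. Both are valid.

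One small expository point: the convergence of the series $\sum_n f(n)(-\mu e^{\sqrt2 c})^n/n!$ to $C(2^{-1/2},\mu,c)$ need not be invoked separately before taking $N\to\infty$ in Lemma~\ref{zerofinal0} — once you know $F(N+x_0)\to 0$, the identity of that lemma forces the partial sums to converge to $F(x_0)$, and the identification with $C(2^{-1/2},\mu,c)$ is then Lemma~\ref{zeroform1}. Your parenthetical justification of absolute convergence via $f(n)=a_n$ and \eqref{wagner2} is of course also fine and gives a direct check.
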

The proof of this lemma is in \textsection\ref{zerofinalpf}. We are now finally ready to complete the proof of Theorem \ref{zerocthm}. We will use $C, C_1, C_2,\ldots$ to denote arbitrary positive universal constants, whose values may change from line to line. 

Explicitly, Lemma \ref{zerofinal} says that for any $x_0\in (-1,0)$, we have
\begin{align*}
C(2^{-1/2}, \mu,c) &= \frac{1}{2\pi}\int_{-\infty}^\infty \Gamma(-x_0-\i y)f(x_0+\i y)(\mu e^{\sqrt{2}c})^{x_0+\i y}dy.
\end{align*}
Consequently,
\begin{align}\label{cmuc}
C(2^{-1/2}, \mu,c) = \frac{1}{2\pi}\lim_{x_0\downarrow -1} \int_{-\infty}^\infty \Gamma(-x_0-\i y)f(x_0+\i y)(\mu e^{\sqrt{2}c})^{x_0+\i y}dy.
\end{align}
Our goal is to show that
\begin{align*}
C(2^{-1/2}, \mu,c) = \frac{1}{2\pi} \int_{-\infty}^\infty \Gamma(1-\i y)f(-1+\i y)(\mu e^{\sqrt{2}c})^{-1+\i y}dy.
\end{align*}
Thus, all we have to show is that the limit in equation \eqref{cmuc} can be moved inside the integral. To show this, take any $x_0\in (-1,-\frac{1}{2})$ and any $y\in \R$. Let $z := x_0+\i y$. Recall that by Lemma \ref{fzfinal},
\begin{align}\label{fx0iy}
|f(x_0+\i y)| &\le \exp\biggl(\frac{1}{2}(x_0+1)\ln |z+1| - \frac{1}{2} y \arg(z+1) + Ax_0 + C\ln(2+|z|)\biggr),
\end{align}
where $C$ and $A$ are universal constants. Since $-1<x_0<-\frac{1}{2} $, we have
\[
y \arg(z+1) = |y|\arg(x_0+\i |y|+1) \ge |y|\arg(1+\i |y|),
\]
and 
\[
(x_0+1)\ln|z+1| = \frac{1}{2}(x_0+1)\ln((x_0+1)^2+ y^2) \le C\ln(2+|y|).
\]
Using these inequalities in equation \eqref{fx0iy}, we get the $x_0$-free bound
\begin{align}\label{dom1}
|f(x_0+\i y)| \le C_1 (2+|y|)^{C_2}e^{-\frac{1}{2} |y|\arg(1+\i |y|)}.
\end{align}
Next, by Lemma \ref{gammalmm}, 
\begin{align*}
|\Gamma(-x_0-\i y)| &\le  \frac{C_1\exp(C_2\ln(2+|y|)+ y \arg(1-x_0-\i y))}{|x_0+\i y|}.
\end{align*}
Since $x_0\in (-1,-\frac{1}{2})$, we have $|x_0+\i y |\ge \frac{1}{2}$, and 
\begin{align*}
y \arg(1-x_0-\i y) &= -|y|\arg(1-x_0 + \i |y|)\le -|y|\arg(1+\i |y|).
\end{align*}
Thus, again we have an $x_0$-free bound
\begin{align}\label{dom2}
|\Gamma(-x_0-\i y)| \le C_1 (2+|y|)^{C_2} e^{-|y|\arg(1+\i |y|)},
\end{align}
where $C_1,C_2$ are universal constants. It is now easy to see that the bounds \eqref{dom1} and~\eqref{dom2} allow us to apply the dominated convergence theorem to move the limit inside the integral in equation \eqref{cmuc}. This completes the proof of Theorem \ref{zerocthm}.

\subsubsection{Proof of Theorem \ref{zerothm}}\label{zerothmpf}
By Theorem \ref{zerocthm}, we have
\begin{align*}
C_\ep(2^{-1/2}, \mu) &= \frac{1}{2\pi}\int_{-\infty}^\infty \int_{-\infty}^\infty \Gamma(1-\i y)f(-1+\i y)\mu^{-1+\i y}e^{\sqrt{2}c \i y- \ep^2 c^2}dy dc.
\end{align*}
Since the bounds \eqref{dom1} and \eqref{dom2} have no dependence on $x_0$, they hold even if we take $x_0=-1$. Using these bounds, we can  conclude that the above double integral is absolutely convergent. Consequently, by Fubini's theorem, we can interchange the order of integration. Since
\[
 \int_{-\infty}^\infty e^{\sqrt{2}c \i y- \ep^2 c^2}dc = \frac{\sqrt{\pi}}{\ep}\exp\biggl(-\frac{y^2}{2\ep^2}\biggr),
\]
this gives
\begin{align*}
C_\ep(2^{-1/2}, \mu) &= \frac{1}{2\sqrt{\pi}\ep}\int_{-\infty}^\infty \Gamma(1-\i y)f(-1+\i y)\mu^{-1+\i y}\exp\biggl(-\frac{y^2}{2\ep^2}\biggr)dy\\
&= \frac{1}{2\sqrt{\pi}}\int_{-\infty}^\infty \Gamma(1-\i \ep u)f(-1+\i \ep u)\mu^{-1+\i \ep u}e^{-\frac{1}{2}u^2}du.
\end{align*}
The bounds \eqref{dom1} and \eqref{dom2} show that $|\Gamma(1-\i \ep u)f(-1+\i \ep u)\mu^{-1+\i \ep u}|$ is uniformly bounded over $u\in \R$. This allows us to apply the dominated convergence theorem and conclude that 
\begin{align*}
\lim_{\ep\to 0} C_\ep(2^{-1/2}, \mu) &= \frac{1}{2\sqrt{\pi}}\Gamma(1)f(-1)\mu^{-1}\int_{-\infty}^\infty e^{-\frac{1}{2}u^2}du\\
&=  \frac{1}{\sqrt{2}\mu}f(-1),
\end{align*}
where, with the aid of equation \eqref{gzid}, 
\begin{align*}
f(-1) &:= \lim_{\substack{z\to -1, \\ z\in \C\setminus(-\infty,-1]}}f(z) \\
&= \lim_{\substack{z\to -1, \\ z\in \C\setminus(-\infty,-1]}}\frac{(4\pi)^z e^{\frac{1}{2}z(z-1)}G(z+2)^2}{\Gamma(z+1)^{z+1}} = \frac{e}{4\pi}.
\end{align*}
This completes the proof.

\subsubsection{Proof of Theorem \ref{zerohankel}}\label{zerohankelpf}
For $b = \frac{1}{\sqrt{2}}$, we formally have 
\begin{align}\label{ctildenew}
\tilde{C}(2^{-1/2},\mu) &= \int_{\mc} e^{\sqrt{2}c} C(2^{-1/2}, \mu, c) dc \notag \\
&= \int_0^\infty e^{\sqrt{2}c} C(2^{-1/2}, \mu, c) dc  - \int_0^\infty e^{\sqrt{2}(c+\sqrt{2}\pi \i)} C(2^{-1/2}, \mu, c+\sqrt{2}\pi \i) dc \notag \\
&\qquad -\i \int_0^{\sqrt{2}\pi} e^{\sqrt{2}\i t} C(2^{-1/2}, \mu, \i t) dt.
\end{align}
Note that by the formula \eqref{zcor}, we have that for any $c\in \R$,
\begin{align*}
C(2^{-1/2}, \mu, c+\sqrt{2}\pi\i) &= \sum_{n=0}^\infty \frac{(-\mu e^{\sqrt{2}(c+\sqrt{2}\pi \i)})^n}{n!} a_n\\
&= \sum_{n=0}^\infty \frac{(-\mu e^{\sqrt{2}c})^n}{n!} a_n = C(2^{-1/2}, \mu, c),
\end{align*}
Also, $e^{\sqrt{2}(c+\sqrt{2}\pi \i)} = e^{\sqrt{2}c}$. Thus, the first two integrals on the right in equation \eqref{ctildenew} cancel out each other, leaving us with the simplified formula 
\begin{align*}
\tilde{C}(2^{-1/2},\mu) &= -\i \int_0^{\sqrt{2}\pi} e^{\sqrt{2}\i t} C(2^{-1/2}, \mu, \i t) dt.
\end{align*}
But again, recalling the formula \eqref{zcor} and using the bound on $a_n$ provided by the inequality~\eqref{wagner2} to justify exchanging the order of integration and summation, we get
\begin{align*}
\int_0^{\sqrt{2}\pi} e^{\sqrt{2}\i t} C(2^{-1/2}, \mu, \i t) dt &= \sum_{n=0}^\infty \frac{(-\mu)^n}{n!} a_n\int_0^{\sqrt{2}\pi} e^{\sqrt{2}\i t(n+1)} dt=0.
\end{align*}
This completes the proof.

\subsection{One-point function}
\subsubsection{Proof of Theorem \ref{onecthm}}\label{onecthmpf}
Let us first express $C(\balpha,\bx,b,\mu,c)$ for general $k$, via integrals over the complex plane instead of the sphere, as we did for $C(b,\mu,c)$ in \textsection\ref{zeroproofsec}. Let $u_j := \sigma(x_j)$ for $j=1,\ldots,k$, where $\sigma$ is the stereographic projection defined in equation~\eqref{stereodef}. Let us assume, for now, that none of the $x_j$'s are the north pole of $\S^2$, so that the $u_j$'s are all finite. By the formula~\eqref{gpform}, we have
\begin{align*}
\sum_{j=1}^k \sum_{l=1}^n \alpha_jG_{\S^2}(x_j,y_l) &=  -\sum_{j=1}^k \sum_{l=1}^n \alpha_j \ln |u_j-\sigma(y_l)| + \frac{n}{2}\sum_{j=1}^k\alpha_j\ln(1+|u_j|^2)\\
&\qquad  + \frac{1}{2}\biggl(\sum_{j=1}^k\alpha_j\biggr)\sum_{l=1}^n \ln (1+|\sigma(y_l)|^2) -\frac{n}{2}\sum_{j=1}^k \alpha_j.
\end{align*}
Let $a_n$ be as in equation \eqref{andef}. Then by the above equation and equations \eqref{change} and  \eqref{gsum}, we get
\begin{align*}
a_n &= \int_{\C^n}\exp\biggl(4b\sum_{j=1}^k \sum_{l=1}^n \alpha_j \ln |u_j-z_l| -2bn\sum_{j=1}^k\alpha_j\ln(1+|u_j|^2)\\
&\qquad  - 2b\biggl(\sum_{j=1}^k\alpha_j\biggr)\sum_{l=1}^n \ln (1+|z_l|^2) +2bn\sum_{j=1}^k \alpha_j + 4b^2\sum_{1\le i<j\le n} \ln |z_i-z_j|\notag \\
&\qquad  -2b^2(n-1)\sum_{l=1}^n \ln (1+|z_l|^2) +b^2n(n-1)\biggr)\prod_{l=1}^n \frac{4}{(1+|z_l|^2)^2}d^2z_1\cdots d^2z_n.
\end{align*}
Letting $w := (Q- \sum_{j=1}^k \alpha_j)/b$, note that the cumulative power of $1+|z_l|^2$ in the above integrand is
\begin{align*}
- 2b\biggl(\sum_{j=1}^k\alpha_j\biggr) - 2b^2(n-1)-2 &= - 2b(Q-bw) - 2b^2(n-1)-2\\
&= -2b^2(n-w). 
\end{align*}
Similarly, the constant coefficient is 
\begin{align*}
 4^n\exp\biggl(2bn\sum_{j=1}^k \alpha_j + b^2n(n-1)\biggr) &= 4^n\exp\biggl(2bn(Q-bw)+ b^2n(n-1)\biggr)\\
 &= 4^ne^{b^2 n(n+1-2w)-2n}. 
\end{align*}
Combining, we get 
\begin{align}\label{ancor}
a_n &= 4^n e^{b^2 n(n+1-2w)-2n} \biggl(\prod_{j=1}^k (1+|u_j|^2)^{-2bn\alpha_j}\biggr)\int_{\C^n}\prod_{i=1}^n (1+|z_i|^2)^{-2b^2(n-w)} \notag\\
&\hskip 1in \cdot \prod_{i=1}^n \prod_{j=1}^k |z_i -u_j|^{4b\alpha_j} \prod_{1\le i<j\le n}|z_i-z_j|^{4b^2} d^2z_1\cdots d^2z_n.
\end{align}
Again by equation \eqref{gpform},
\begin{align*}
\prod_{1\le j<j'\le k} e^{-4\alpha_j\alpha_{j'}  G_{\S^2}(x_j,x_{j'})} &= e^{2\sum_{1\le j<j'\le k} \alpha_j\alpha_{j'}}\prod_{j=1}^k (1+|u_j|^2)^{-2\alpha_j \sum_{j'\ne j}\alpha_{j'} }\\
&\qquad \cdot\prod_{1\le j<j'\le k} |u_j -u_{j'}|^{4\alpha_j\alpha_{j'}}. 
\end{align*}
Using these computations in equation \eqref{zcor}, we get 
\begin{align}\label{calphabasic}
C(\balpha, \bx, b,\mu,c) &= e^{2\sum_{1\le j<j'\le k} \alpha_j\alpha_{j'}}\prod_{j=1}^k (1+|u_j|^2)^{-2\alpha_j \sum_{j'\ne j}\alpha_{j'} }\notag \\
&\qquad \cdot \prod_{1\le j<j'\le k} |u_j -u_{j'}|^{4\alpha_j\alpha_{j'}}  \biggl\{1+ \sum_{n=1}^\infty \frac{(-\mu e^{2bc})^n}{n!} a_n \biggr\},
\end{align}
where $a_n$ is as in equation \eqref{ancor}.

We now specialize to the case $k=1$. Then $\balpha$ is just a single coefficient $\alpha$, and $\bx$ is a single point $x$, $u=\sigma(x)$, and $w=(Q-\alpha)/b$. To have $u$ finite, we take $x\ne e_3=(0,0,1)$. We get
\begin{align*}
C(\alpha, x, b,\mu,c) &= 1+ \sum_{n=1}^\infty \frac{(-\mu e^{2bc})^n}{n!} a_n,
\end{align*}
where
\begin{align}\label{ank1}
a_n &= 4^n e^{b^2 n(n+1-2w)-2n} (1+|u|^2)^{-2bn\alpha}\int_{\C^n}\prod_{i=1}^n (1+|z_i|^2)^{-2b^2(n-w)} \notag\\
&\hskip 1in \cdot \prod_{i=1}^n |z_i -u|^{4b\alpha} \prod_{1\le i<j\le n}|z_i-z_j|^{4b^2} d^2z_1\cdots d^2z_n.
\end{align}
We will now take $x$ approaching $e_3$, which is equivalent to taking $|u|\to\infty$ in the above formula. 
\begin{lmm}\label{zeroform1one}
Let $x=e_3$. Suppose that $b\in (0,1)$ and $\Re(\alpha) >-\frac{1}{2b}$. Then
\begin{align*}
C(\alpha, x, b, \mu,c) &= 1+ \sum_{n=1}^\infty \frac{(-\mu e^{2bc})^n}{n!} a_n,
\end{align*}
where
\begin{align}\label{ank10}
a_n &= 4^n e^{b^2 n(n+1-2w)-2n} \int_{\C^n}\prod_{i=1}^n (1+|z_i|^2)^{-2b^2(n-w)} \prod_{1\le i<j\le n}|z_i-z_j|^{4b^2} d^2z_1\cdots d^2z_n.
\end{align}
\end{lmm}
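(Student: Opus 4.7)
The plan is to bypass the limit $|u|\to\infty$ in equation \eqref{ank1} and instead compute $a_n$ directly from its definition \eqref{andef} at $k=1$, $\alpha_1=\alpha$, $x_1=e_3$, again by stereographic projection. This is cleaner because it avoids proving a dominated-convergence statement for the $n$-dimensional integral, which is delicate when $\Re(\alpha)<0$ because the integrable singularity at $z_i=u$ in \eqref{ank1} moves off to infinity with $u$.

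First I would apply equation \eqref{zcor} with $k=1$: since the product $\prod_{1\le j<j'\le k}$ is empty and equals $1$, it gives immediately the asserted expansion
\[
C(\alpha,e_3,b,\mu,c)=1+\sum_{n=1}^\infty\frac{(-\mu e^{2bc})^n}{n!}a_n,
\]
with $a_n$ given by \eqref{andef}. Absolute convergence of both the integral defining $a_n$ and of the series is guaranteed by Lemma \ref{convergence} under the standing hypotheses $b\in(0,1)$ and $\Re(\alpha)>-\frac{1}{2b}$.

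Next I would rewrite $a_n$ as an integral over $\C^n$ via stereographic projection. Since $\{(y_1,\dots,y_n):y_l=e_3\text{ for some }l\}$ has measure zero in $(\S^2)^n$, one can work throughout on $\S^2\setminus\{e_3\}$. Computing $G_{\S^2}(e_3,y)$ in terms of $z=\sigma(y)$ is a short calculation: $\|e_3-y\|^2=2(1-y_3)=4/(1+|z|^2)$ together with \eqref{gform} yields
\[
G_{\S^2}(e_3,y)=\tfrac{1}{2}\ln(1+|z|^2)-\tfrac{1}{2}.
\]
Combining this with \eqref{gsum} for $\sum_{l<l'}G_{\S^2}(y_l,y_{l'})$ and \eqref{change} for the pushforward of the area measure, the integrand on $\C^n$ becomes a product of $(1+|z_l|^2)^{-2b\alpha-2b^2(n-1)-2}$, the Vandermonde factor $\prod_{l<l'}|z_l-z_{l'}|^{4b^2}$, and the overall constant $4^ne^{2b\alpha n+b^2n(n-1)}$.

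Finally I would substitute $\alpha=-bw+b-1/b$ (from $w=(Q-\alpha)/b$) to collapse the exponent of $1+|z_l|^2$ to $-2b^2(n-w)$ and the constant prefactor to $4^ne^{b^2n(n+1-2w)-2n}$, matching \eqref{ank10}. The only thing requiring care is this final algebraic bookkeeping; no genuine analytic obstacle arises, since convergence has already been dealt with by Lemma \ref{convergence}.
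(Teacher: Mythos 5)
Your proof is correct, and it takes a genuinely different route from the paper's. The paper first records the general formula \eqref{ancor} for $a_n$ under the assumption that all insertion points project to finite $u_j\in\C$, specializes to $k=1$, and then sends $|u|\to\infty$; to justify interchanging this limit with the $\C^n$-integral it runs a two-stage truncation (cutoff to $|z_i|\le M$, dominated convergence for each fixed $M$, then monotone convergence in $M$), supported by the Wagner-type bound \eqref{wagner} to get a tail estimate uniform in $u$. You bypass this entirely by evaluating $G_{\S^2}(e_3,y)=\frac12\ln(1+|\sigma(y)|^2)-\frac12$ directly and inserting it into \eqref{andef}; this is the same as observing that the pair of terms $-\ln|\sigma(x)-\sigma(y)|+\frac12\ln(1+|\sigma(x)|^2)$ appearing in \eqref{gpform} cancels in the limit $\sigma(x)\to\infty$. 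Your route is shorter, and the only analysis it needs --- absolute convergence of the integral defining $a_n$ and of the series --- is already supplied by Lemma~\ref{convergence}. The paper's route has the organizational advantage of treating \eqref{ancor} as a reusable black box, since an identical limiting argument recurs for the two- and three-point cases (Lemmas \ref{zeroform1two} and \ref{zeroform1three}); the same direct shortcut you describe would also work there, but would involve more bookkeeping since some insertion points remain at finite $u_j$. The algebraic step you defer does close: with $\alpha=Q-bw=b-\tfrac1b-bw$, the exponent $-2b\alpha-2b^2(n-1)-2$ collapses to $-2b^2(n-w)$ and the prefactor $4^ne^{2b\alpha n+b^2n(n-1)}$ to $4^ne^{b^2n(n+1-2w)-2n}$, exactly as in \eqref{ank10}.
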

\begin{proof}
First, take $x\ne e_3$. Let $a_n(u)$ denote the formula displayed in equation \eqref{ank1}, where $u = \sigma(x)$. By Lemma \ref{convergence}, $a_n$ is a continuous function of $x$. Thus, we have to show that as $|u|\to \infty$, $a_n(u)$ approaches the formula displayed in equation~\eqref{ank10}. To do that, take any $M>0$, and define $a_n(u,M)$ using the same expression as in equation~\eqref{ank1}, but restricting the integration to the region $(\Omega_M)^n$ where $\Omega_M:=\{z\in \C: |z|\le M\}$. Then it follows easily by the dominated convergence theorem that for any $M$, $\lim_{|u|\to \infty} a_n(u, M)$ is given by the formula in equation \eqref{ank10}, but with the domain of integration restricted to $(\Omega_M)^n$. Then, by the monotone convergence theorem, we conclude that the double limit
\[
\lim_{M\to \infty} \lim_{|u|\to \infty} a_n(u,M)
\]
equals the right side of equation \eqref{ank10}. Thus, to complete the proof, we need to show that
\begin{align}\label{an2show}
\lim_{M\to \infty} \lim_{|u|\to \infty} a_n(u,M) =\lim_{|u|\to \infty}    a_n(u).
\end{align}
Now, converting the integration back to the sphere, it is easy to see that 
\begin{align*}
a_n(u,M) &= \int_{(\S^2_M)^n}\exp\biggl(-4b\sum_{l=1}^n \alpha G(x, y_l) \\
&\qquad \qquad -4b^2 \sum_{1\le l < l'\le n} G(y_l,y_{l'}) \biggr) da(y_1)\cdots da(y_n),
\end{align*}
where $\S^2_M$ is the set of all $y\in \S^2$ such that $|\sigma(y)|\le M$. Thus, by equation \eqref{wagner}, 
\begin{align*}
|a_n(u)-a_n(u,M)| &= \int_{(\S^2)^n \setminus (\S^2_M)^n}\exp\biggl(-4b\sum_{l=1}^n \alpha G(x, y_l) \\
&\qquad \qquad -4b^2 \sum_{1\le l < l'\le n} G(y_l,y_{l'}) \biggr) da(y_1)\cdots da(y_n)\\
&\le C_n\sum_{i=1}^n\int_{y_i\notin \S^2_M}\prod_{l=1}^n\|y_l - x\|^{4b\alpha} da(y_1)\cdots da(y_n)\\
&= C_n n \biggl(\int_{\S^2 \setminus \S^2_M}\|y - x\|^{4b\alpha} da(y)\biggr)\biggl(\int_{\S^2}\|y - x\|^{4b\alpha} da(y)\biggr)^{n-1},
\end{align*}
where $C_n$ does not depend on $M$ or $x$. Since $\Re(\alpha) > -\frac{1}{2b}$, it follows that the supremum of the right side over all $x$ in the upper hemisphere is bounded by a number $\epsilon(M)$ that tends to zero as $M\to\infty$. Consequently,
\[
\limsup_{|u|\to \infty}|a_n(u,M)-a_n(u)|\le \epsilon(M).
\]
From this, it is easy to deduce the claim \eqref{an2show}.
\end{proof}

Specializing further to $b = \frac{1}{\sqrt{2}}$, we get $w = -1-\sqrt{2}\alpha$, and for $x=e_3$, 
\begin{align*}
C(\alpha, x, 2^{-1/2},\mu,c) &= 1+ \sum_{n=1}^\infty \frac{(-\mu e^{\sqrt{2}c})^n}{n!} a_n,
\end{align*}
where
\begin{align}\label{an1form}
a_n &= 4^n e^{\frac{1}{2} n(n-3-2w)} \int_{\C^n}\prod_{i=1}^n (1+|z_i|^2)^{-(n-w)} \prod_{1\le i<j\le n}|z_i-z_j|^{2} d^2z_1\cdots d^2z_n.
\end{align}
We will now prove the following analogue of Lemma \ref{zeroform1}. Since the proof is similar to that of Lemma \ref{zeroform1}, it is relegated to \textsection\ref{oneform1pf}.
\begin{lmm}\label{oneform1}
Let $x=e_3$. Then for $\Re(\alpha)> -\frac{1}{\sqrt{2}}$, We have 
\[
C(\alpha, x, 2^{-1/2}, \mu, c) =\sum_{n=0}^\infty \frac{(-\mu e^{\sqrt{2}c})^n}{n!}f(n),
\]
where $f$ is the analytic function
\begin{align*}
f(z) := \frac{(4\pi)^z e^{\frac{1}{2}z(z-3-2w)}\Gamma(z+1)G(z+1)G(z-w)}{\Gamma(z-w)^{z}G(-w)}
\end{align*}
defined on the domain $(w + (\C \setminus(-\infty,0]))\setminus\{-1,-2,\ldots\}$.
\end{lmm}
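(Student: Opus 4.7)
The plan is to follow the template of the proof of Lemma \ref{zeroform1} essentially verbatim, modifying it only where the $w$-dependence of the reference measure enters. Concretely, Lemma \ref{zeroform1one} already supplies the explicit integral formula \eqref{an1form} for $a_n$ at $b=\frac{1}{\sqrt{2}}$, $x=e_3$, so one rewrites it as
\[
a_n = 4^n e^{\frac{1}{2}n(n-3-2w)} \int_{\C^n} \prod_{1\le i<j\le n}|z_i - z_j|^2 \, d\nu(z_1)\cdots d\nu(z_n),
\]
where $\nu$ is the (generally complex) measure on $\C$ with density $(1+|z|^2)^{-(n-w)}$. The hypothesis $\Re(\alpha)>-\frac{1}{\sqrt{2}}$ forces $\Re(w)<0$, hence $\Re(n-w)>n$, guaranteeing absolute integrability of $|\nu|$ against all polynomials in $z,\bar z$ that will appear below.

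Next I would expand the Vandermonde factor exactly as in the proof of Lemma \ref{zeroform1}: write $\prod_{i<j}|z_i-z_j|^2$ as a double sum over pairs $(\sigma,\tau)\in S_n\times S_n$, and observe that the radial symmetry of $\nu$ kills all off-diagonal terms, yielding
\[
a_n = 4^n e^{\frac{1}{2}n(n-3-2w)}\, n!\prod_{j=0}^{n-1}\int_{\C} |z|^{2j}(1+|z|^2)^{-(n-w)}\, d^2 z.
\]
Each one-dimensional integral is then evaluated by Lemma \ref{intlmm} with parameters $2j$ and $n-w$; the convergence hypotheses $\Re(2j)>-2$ and $\Re(2(n-w)-2j)>2$ hold for $0\le j\le n-1$ precisely because $\Re(w)<0$, and the evaluation produces $\pi\,\Gamma(n-w-j-1)\,\Gamma(j+1)/\Gamma(n-w)$.

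The last step is to repackage the two resulting $n$-fold products in terms of Barnes $G$-values. The identity $G(n+1)=\prod_{k=1}^{n-1}k!$ from \eqref{gn1id} gives $\prod_{j=0}^{n-1}\Gamma(j+1)=G(n+1)$ directly. For the other product, iterating the Barnes functional equation $G(z+1)=\Gamma(z)G(z)$ yields $\prod_{k=0}^{n-1}\Gamma(k-w)=G(n-w)/G(-w)$, and the reindexing $k=n-1-j$ identifies this with $\prod_{j=0}^{n-1}\Gamma(n-w-j-1)$. Combining these with the prefactors $4^n e^{\frac{1}{2}n(n-3-2w)}n!$ and the denominator $\Gamma(n-w)^n$ produces exactly $a_n=f(n)$ for $n\ge 1$; the case $a_0=f(0)=1$ is a direct check.

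I do not expect a deep analytic obstacle, since the argument reduces to algebraic bookkeeping once the framework of Lemma \ref{zeroform1} is in place. The mildly delicate pieces are the iterated use of the Barnes functional equation to rewrite $\prod_{j=0}^{n-1}\Gamma(n-w-j-1)$ as $G(n-w)/G(-w)$, and a sanity check that the stated formula for $f$ is analytic on the advertised domain: $\Gamma(z-w)^z=\exp(z\Pi(z-w))$ is analytic precisely where $z-w\notin(-\infty,0]$ (by the definition of $\Pi$ via \eqref{pirep}), and excluding the negative integers for $z$ avoids the poles of $\Gamma(z+1)$, while the entire functions $G(z+1)$ and $G(z-w)$ impose no further restrictions.
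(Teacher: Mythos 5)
Your proposal is correct and follows essentially the same route as the paper's proof of Lemma \ref{oneform1}: start from the integral formula \eqref{an1form}, expand the Vandermonde factor and use radial symmetry of the reference density to diagonalize as in Lemma \ref{zeroform1}, evaluate each radial integral via Lemma \ref{intlmm} (with the convergence condition verified from $\Re(w)<0$), and then repackage $\prod_{j=0}^{n-1}\Gamma(n-w-j-1)$ as $G(n-w)/G(-w)$ via the Barnes functional equation \eqref{gzid} together with the factorial identity \eqref{gn1id} for the other product. The only cosmetic difference is that you make explicit two points the paper leaves implicit (the reindexing $k=n-1-j$ before applying \eqref{gzid}, and the observation that radial symmetry of a complex density suffices for the orthogonality step).
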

Henceforth we will work under the assumption that $\Re(\alpha)\in (-\frac{1}{\sqrt{2}},0]$, which is equivalent to saying that $\Re(w)\in [-1,0)$. First, we have the following analogue of Lemma \ref{fzfinal}. The proof is in \textsection\ref{fzfinalonepf}.
\begin{lmm}\label{fzfinalone}
Let $f$ be defined as in Lemma \ref{oneform1} and assume that $\Re(\alpha)\in (-\frac{1}{\sqrt{2}}, 0]$. Then for $z \in \C$ with $\Re(z)\ge \Re(w)$, we have
\begin{align*}
f(z) &= \exp\biggl\{(z+1)\log (z-w)-\frac{1}{2}z\log(z-w+1)+ Az + R(z)\biggr\},
\end{align*}
where $A$ is a real universal constant, and $|R(z)|\le C$ for some constant $C$ that may depend only on $w$. Consequently, 
\begin{align*}
|f(z)| &\le \exp\biggl\{\Re(z+1)\ln |z-w| -\Im(z+1)\arg(z-w) \\
&\qquad \qquad -\frac{1}{2}\Re(z)\ln|z-w+1|+ \frac{1}{2} \Im(z) \arg(z-w+1) + A\Re(z) + C\biggr\}.
\end{align*}
\end{lmm}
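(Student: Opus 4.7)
The strategy is direct asymptotic expansion. Taking the logarithm of $f$ gives
\[
\log f(z) = z\log(4\pi)+\tfrac{1}{2}z(z-3-2w)+\Pi(z+1)+\log G(z+1)+\log G(z-w)-z\,\Pi(z-w)-\log G(-w).
\]
I would apply Stirling's formula $\Pi(\zeta)=(\zeta-\tfrac{1}{2})\log\zeta-\zeta+\tfrac{1}{2}\log(2\pi)+O(1/\zeta)$ to $\Pi(z+1)$ and $\Pi(z-w)$, together with the standard Barnes expansion
\[
\log G(\zeta+1)=\tfrac{\zeta^{2}}{2}\log\zeta-\tfrac{3\zeta^{2}}{4}+\tfrac{\zeta}{2}\log(2\pi)-\tfrac{1}{12}\log\zeta+\zeta'(-1)+O(1/\zeta),
\]
applied to $\log G(z+1)$ directly and to $\log G(z-w)$ after substituting $\zeta=z-w-1$. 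The hypothesis $\Re(z)\ge\Re(w)\in[-1,0)$ places $z+1$ and $z-w$ in the closed right half-plane, so for $|z|$ large the argument of each expansion lies inside a sector $|\arg\zeta|\le\pi-\delta$ on which these asymptotic formulas hold uniformly.

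The core calculation is the cancellation of the $z^{2}\log z$ and $z^{2}$ contributions. The $z^{2}\log z$ terms come with coefficients $\tfrac{1}{2},\tfrac{1}{2},-1$ from $\log G(z+1)$, $\log G(z-w)$ and $-z\,\Pi(z-w)$ respectively; the $z^{2}$ terms cancel between $-\tfrac{3}{4}z^{2}$ in each Barnes expansion, $+z(z-w)$ from $-z\,\Pi(z-w)$, and $\tfrac{1}{2}z^{2}$ from the polynomial prefactor. The surviving contributions are of order $z\log(z-w)$, $\log(z-w)$, $z$ and $O(1)$, and I would reorganize them using $\log(z-w-1)=\log(z-w)-\tfrac{1}{z-w}+O(|z-w|^{-2})$ inside the Barnes expansion of $\log G(z-w)$, and $\log(z-w+1)=\log(z-w)+\tfrac{1}{z-w}+O(|z-w|^{-2})$ when recognizing the claimed main term $(z+1)\log(z-w)-\tfrac{1}{2}z\log(z-w+1)$. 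The universal $z$-linear coefficient $A$ is collected from $\log(4\pi)$, the $-\tfrac{3+2w}{2}z$ in the polynomial prefactor, and the $\tfrac{1}{2}z\log(2\pi)$ pieces in Stirling and Barnes; all $w$-dependent $z$-linear contributions must combine with the $z\log(z-w)$ pieces so as to leave a net $w$-free coefficient. The remaining $w$-dependent constants ($-\log G(-w)$, $\zeta'(-1)$, and the low-order Stirling constants) are absorbed into $R(z)$.

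The main obstacle is this bookkeeping: four separate asymptotic expansions contribute pieces at orders $z^{2}\log z$, $z^{2}$, $z\log z$, $z$ and $\log z$, and after the change of variable $z\mapsto z-w$ inside each one must verify that the $\log(z-w)$ coefficient inherited from $\log G(z+1)$, $\log G(z-w)$, $-z\,\Pi(z-w)$ and $\Pi(z+1)$ matches exactly the $\log(z-w)$ piece generated by the proposed main term, so that $R(z)$ is genuinely bounded rather than merely logarithmic. For bounded $|z|$ the claim is elementary: on the closed region $\{\Re(z)\ge\Re(w),\ |z|\le M\}$ minus the finite set of excluded points, $\log f(z)$ and the main piece are both continuous and finite, so their difference is bounded by a constant depending on $w$ and $M$; combined with the large-$|z|$ estimate this yields $|R(z)|\le C$ uniformly with $C$ depending only on $w$. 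Finally, the stated bound on $|f(z)|$ follows by taking real parts of $\log f(z)$, using $\Re[(z+1)\log(z-w)]=\Re(z+1)\ln|z-w|-\Im(z+1)\arg(z-w)$ and the analogous identity for $-\tfrac{1}{2}z\log(z-w+1)$, together with $|R(z)|\le C$.
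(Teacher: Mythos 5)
Your overall strategy — Stirling/Barnes asymptotics for large $|z|$, finiteness by continuity for bounded $|z|$ — is a legitimate route and close in spirit to the paper's, and your bookkeeping of the $z^{2}\log z$ and $z^{2}$ cancellations is correct. But there are two points worth noting.

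First, a genuine difference of method. The paper does not apply its Stirling and Barnes expansions directly to $\Pi(z+1)$, $\Pi(z-w)$, $\Theta(z+1)$, $\Theta(z-w)$. It first uses the functional identities $G(\zeta+1)=\Gamma(\zeta)G(\zeta)$ and $\Pi(\zeta+1)=\Pi(\zeta)+\log\zeta$ (Lemma \ref{gammalmm2}) to rewrite $f(z)$ in terms of $\Theta(z+3)$, $\Theta(z-w+2)$, $\Pi(z-w+1)$, $\Pi(z+2)$. On the region $\Re(z)\ge\Re(w)$, all four of these arguments have real part $\ge 1$, so Corollaries \ref{picor} and \ref{gcor} (which are established for $\Re(\cdot)\ge 0$, with remainders of size $O(1/|\cdot|)$) apply uniformly, including for $z$ close to $w$ or close to $-1$. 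No separate bounded-$|z|$ regime is needed: the shift itself keeps every special-function argument away from poles and zeros. This is what makes the constant in $|R(z)|\le C$ uniform up to the boundary without a second case.

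Second, a real gap in your bounded-$|z|$ argument. You assert that on $\{\Re(z)\ge\Re(w),\ |z|\le M\}$ minus a finite set of excluded points, $\log f(z)$ and the main term are both continuous and finite, ``so their difference is bounded.'' This does not follow: deleting points from a compact set yields a non-compact set, on which continuity gives no boundedness. And the point $z=w$ is precisely where the issue bites — $z-w$ meets the cut of the domain of $f$, $G(z-w)$ approaches its zero, $\Gamma(z-w)$ its pole, and both $\log f(z)$ and your proposed main term $(z+1)\log(z-w)$ diverge. Boundedness of $R(z)$ near $z=w$ is not an automatic compactness statement; it is the nontrivial cancellation $\Theta(z-w)-z\,\Pi(z-w)\sim(z+1)\log(z-w)$ coming from $G(\zeta)\sim\zeta$ and $\Gamma(\zeta)\sim\zeta^{-1}$ as $\zeta\to 0$, and it needs to be verified. (The same check must be made at $z=-1$ when $\Re(w)=-1$, though there $\Gamma(z+1)G(z+1)=G(z+2)$ removes the singularity at once.) Your proof should carry out this local expansion explicitly rather than appeal to continuity; the paper's shift-first approach sidesteps the problem entirely by never evaluating $G$ or $\Gamma$ near $0$.
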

Next, for nonzero $x>\Re(w)$, define the contour integral
\begin{align}\label{fcdefone}
F(x) := \frac{1}{2\pi \i} \int_{\Re(z) = x} \Gamma(-z) f(z)(\mu e^{\sqrt{2}c})^z dz,
\end{align}
where the contour goes from $x-\i \infty$ to $x+\i \infty$.
\begin{lmm}\label{fconvlmmone}
Assume that $\Re(\alpha)\in (-\frac{1}{\sqrt{2}}, 0]$. Then for any non-integer $x> \Re(w)$, the integral defining $F(x)$ in equation~\eqref{fcdefone} is absolutely convergent.
\end{lmm}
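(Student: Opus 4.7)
The plan is to follow the same strategy as in the proof of Lemma \ref{fconvlmm}, combining the bound on $|f(x+\i y)|$ from Lemma \ref{fzfinalone} with the bound on $|\Gamma(-x-\i y)|$ from Lemma \ref{gammalmm}, and checking that both factors decay exponentially in $|y|$ as $|y|\to\infty$ while remaining bounded near $y=0$.

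First I would use Lemma \ref{fzfinalone} to estimate $|f(x+\i y)|$. Writing $z=x+\i y$, the hypothesis $x>\Re(w)$ guarantees $\Re(z-w)>0$ and $\Re(z-w+1)>0$, so both $\arg(z-w)$ and $\arg(z-w+1)$ tend to $+\pi/2$ as $y\to+\infty$ and to $-\pi/2$ as $y\to-\infty$. The dominant linear-in-$y$ contribution to the exponent in Lemma \ref{fzfinalone} therefore becomes
\[
-\Im(z+1)\arg(z-w)+\tfrac{1}{2}\Im(z)\arg(z-w+1)\;\sim\;-\tfrac{\pi}{4}|y|,
\]
while the logarithmic terms $\Re(z+1)\ln|z-w|$ and $-\tfrac{1}{2}\Re(z)\ln|z-w+1|$ grow at most like $C\ln(2+|y|)$ for fixed $x$. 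Hence there exist constants $C_1,C_2$ (depending on $x$ and $w$) such that $|f(x+\i y)|\le C_1(2+|y|)^{C_2}\exp(-\pi|y|/4)$ for all $y\in\R$.

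Second, I would apply Lemma \ref{gammalmm} with $n=\lceil x+1\rceil$ and $a=n-x\in[1,2)$ to bound $|\Gamma(-x-\i y)|$. As $|y|\to\infty$, $\arg(a-\i y)\to -\operatorname{sign}(y)\,\pi/2$, so the key term $y\arg(a-\i y)$ behaves like $-\pi|y|/2$, producing another exponential decay factor. Near $y=0$ the denominator $\prod_{j=0}^{n-1}|j-(x+\i y)|$ is bounded away from zero because $x$ is not an integer; likewise $f(x+\i y)$ stays bounded near $y=0$ because $x>\Re(w)$ with $x$ non-integer keeps $z$ inside the domain where $f$ is analytic.

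Combining these estimates with the trivial observation $|(\mu e^{\sqrt{2}c})^z|=(\mu e^{\sqrt{2}c})^x$, the integrand in equation \eqref{fcdefone} is dominated in modulus by an expression of the form $C_3(2+|y|)^{C_4}\exp(-3\pi|y|/4)$, which is integrable over $\R$. The only mildly delicate step is confirming that the competing contributions of $\arg(z-w)$ and $\arg(z-w+1)$ in Lemma \ref{fzfinalone} combine to a strictly negative slope in $|y|$; once the cancellation is tracked carefully and the residual slope $-\pi/4$ is verified on both sides of the real axis, absolute convergence of $F(x)$ follows at once.
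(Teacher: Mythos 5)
Your proposal is correct and follows the same strategy as the paper's own proof: combine the bound on $|f(x+\i y)|$ from Lemma~\ref{fzfinalone} with the bound on $|\Gamma(-x-\i y)|$ from Lemma~\ref{gammalmm}, verify that each decays exponentially in $|y|$ and is bounded near $y=0$, and conclude. The paper leaves the cancellation $-y\arg(z-w)+\tfrac12 y\arg(z-w+1)\sim-\tfrac{\pi}{4}|y|$ implicit, whereas you spell it out, but the argument is otherwise identical.
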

\begin{proof}
Since $\arg(x+\i y) \to \pm \frac{\pi}{2}$ as $y\to \pm \infty$, Lemma \ref{gammalmm} shows that $\Gamma(-x-\i y)$ decays exponentially in $|y|$ as $|y|\to \infty$. Similarly by Lemma \ref{fzfinalone}, $|f(x+\i y)|$ also decays exponentially in $|y|$ as $|y|\to \infty$. Since $x$ is not an integer, these functions remain bounded near $y=0$. Lastly, $|(\mu e^{\sqrt{2}c})^z| = (\mu e^{\sqrt{2}c})^x$ remains bounded. This shows that the integral in equation \eqref{fcdefone} is absolutely convergent.
\end{proof}
The next lemma is the analogue of Lemma \ref{zerofinal0}.
\begin{lmm}\label{zerofinal0one}
Assume that $\Re(\alpha)\in (-\frac{1}{\sqrt{2}}, 0]$. Then for any $x_0\in (\Re(w),0)$ and $N\ge 1$, 
\begin{align*}
F(x_0) = F(N+x_0) + \sum_{n=0}^{N-1}\frac{(-1)^n}{n!}f(n) (\mu e^{\sqrt{2}c})^n.
\end{align*}
\end{lmm}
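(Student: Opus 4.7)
The proof proceeds in close parallel to that of Lemma \ref{zerofinal0}, replacing the estimates on $f$ from Lemma \ref{fzfinal} by those of Lemma \ref{fzfinalone} and adjusting the analyticity domain accordingly. The plan is to integrate $g(z) := \Gamma(-z)f(z)(\mu e^{\sqrt{2}c})^{z}$ around the rectangular contour $C_R$ with vertices $x_0 \pm \i R$ and $N+x_0 \pm \i R$, traversed counter-clockwise, apply the residue theorem, and send $R \to \infty$.

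The first step is to check that $C_R$ and its interior lie in the domain of analyticity of $f$, namely $(w + (\C \setminus (-\infty,0])) \setminus \{-1,-2,\ldots\}$. Since $x_0 > \Re(w) \geq -1$, every $z$ on or inside $C_R$ satisfies $\Re(z-w) > 0$, so $z - w \in \C \setminus (-\infty, 0]$, and $\Re(z) > -1$ rules out the exceptional points $\{-1, -2, \ldots\}$. Hence $f$ is analytic inside and on $C_R$, and the only poles of $g$ inside $C_R$ are those of $\Gamma(-z)$ at the nonnegative integers in $(x_0, N+x_0)$, i.e.\ exactly $\{0,1,\ldots,N-1\}$. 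Using $\Res(\Gamma,-n) = (-1)^n/n!$ gives $\Res(g,n) = -\frac{(-1)^n}{n!} f(n)(\mu e^{\sqrt{2}c})^n$, so the residue theorem produces
\[
\frac{1}{2\pi \i}\oint_{C_R} g(z)\,dz = -\sum_{n=0}^{N-1}\frac{(-1)^n}{n!}f(n)(\mu e^{\sqrt{2}c})^n.
\]
Since the right and left vertical segments converge as $R\to\infty$ to $F(N+x_0)$ and $-F(x_0)$ respectively, the desired identity will follow once I show that the two horizontal segments contribute vanishingly.

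For the decay on $z = x \pm \i R$ with $x \in [x_0, N+x_0]$, Lemma \ref{gammalmm} bounds $|\Gamma(-z)|$ by a polynomial in $R$ times $\exp(\pm R\, \arg(\lceil x+1\rceil - x \mp \i R))$, whose argument tends to $-\pi/2$ as $R\to\infty$, giving $\Gamma$-decay of order $\exp(-\pi R/2)$. Lemma \ref{fzfinalone} bounds $\ln|f(z)|$ by $-\Im(z+1)\arg(z-w) + \tfrac{1}{2}\Im(z)\arg(z-w+1)$ plus terms of order $\ln(2+|z|)$; since $\Re(z-w) > 0$ holds uniformly on the contour, both arguments tend to $\pm\pi/2$ as $R\to\infty$, so the exponent contributes $\mp\pi R/2 \pm \pi R/4 = \mp\pi R/4$. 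Together the integrand on the horizontal segments decays like $\exp(-3\pi R/4)$ up to polynomial factors, uniformly in $x\in[x_0, N+x_0]$, which justifies dropping these contributions in the limit.

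The main (and only) substantive obstacle is tracking the two arguments $\arg(z-w)$ and $\arg(z-w+1)$ on the horizontal segments precisely enough to obtain uniform exponential decay. The crucial geometric input is $\Re(z-w) > 0$ throughout $C_R$, which is guaranteed by the hypothesis $\Re(\alpha) \in (-\frac{1}{\sqrt{2}}, 0]$ (forcing $\Re(w) \in [-1,0)$ strictly to the left of $x_0$); once this is in hand, the argument is a direct transcription of the proof of Lemma \ref{zerofinal0}.
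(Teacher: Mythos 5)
Your proof is correct and follows essentially the same route as the paper's: defining $g(z) = \Gamma(-z)f(z)(\mu e^{\sqrt{2}c})^z$, applying the residue theorem over the rectangular contour $C_R$ with vertices $x_0 \pm \i R$ and $N+x_0 \pm \i R$, and letting $R\to\infty$ with the decay on horizontal segments coming from Lemmas \ref{gammalmm} and \ref{fzfinalone}. Your remark that $\Re(z) > -1$ inside $C_R$ automatically avoids the exceptional points $\{-1,-2,\ldots\}$ is a helpful explicit check that the paper leaves implicit.
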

Since the proof is very similar to the proof of Lemma \ref{zerofinal0}, it is relegated to \textsection\ref{zerofinal0onepf}. Next, we have the analogue of Lemma~\ref{zerofinal}. The proof is in \textsection\ref{zerofinalonepf}.
\begin{lmm}\label{zerofinalone}
Assume that $\Re(\alpha)\in (-\frac{1}{\sqrt{2}}, 0]$. Then for any $x_0\in (\Re(w),0)$, 
\[
\lim_{N\to \infty} F(N+x_0) = 0.
\]
Consequently, by Lemma \ref{zerofinal0one},
\[
F(x_0) =  \sum_{n=0}^\infty \frac{(-\mu e^{\sqrt{2}c})^n}{n!}f(n) = C(\alpha, x, 2^{-1/2}, \mu,c).
\]
\end{lmm}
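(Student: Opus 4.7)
The goal is to show that $F(N+x_0)\to 0$ as $N\to\infty$; the series identity for $C(\alpha,x,2^{-1/2},\mu,c)$ then follows from Lemma \ref{zerofinal0one} combined with Lemma \ref{oneform1}. The strategy mirrors the zero-point case: along the contour $\Re(z)=N+x_0$ the factor $\Gamma(-z)$ contributes factorial decay in $N$ which dominates the growth of $f(z)$ and of $(\mu e^{\sqrt 2 c})^z$, while both $\Gamma(-z)$ and $f(z)$ also decay exponentially in $|\Im(z)|$, supplying an integrable tail in $y$.

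Concretely, I would parametrize $z=N+x_0+\i y$ and combine three ingredients: (i) Lemma \ref{fzfinalone}, which bounds $|f(z)|$ by the exponential of
\[
(N+x_0+1)\ln|z-w|-\tfrac12(N+x_0)\ln|z-w+1|-y\arg(z-w)+\tfrac12 y\arg(z-w+1)+A(N+x_0)+C;
\]
(ii) Lemma \ref{gammalmm} with $n=\lceil N+x_0+1\rceil=N+1$ and $a=1-x_0\in(1,2)$, which gives
\[
|\Gamma(-z)|\le\frac{C_1(2+|y|)^{C_2}\exp\bigl(y\arg((1-x_0)-\i y)\bigr)}{\prod_{j=0}^N|j-N-x_0-\i y|},
\]
whose denominator is at least $|x_0|\,\Gamma(N+1+x_0)/\Gamma(1+x_0)$, and hence by Stirling has magnitude $e^{N\ln N-N+O(\ln N)}$; and (iii) the trivial $|(\mu e^{\sqrt 2 c})^z|=(\mu e^{\sqrt 2 c})^{N+x_0}$. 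The $N$-dependent real contributions in (i) sum to roughly $\tfrac{N}{2}\ln N+O(N)$, so the net $N$-dependence of the integrand is $e^{-(N/2)\ln N+O(N)}$, which tends to zero super-exponentially uniformly in~$y$.

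For the $y$-tails, the relevant imaginary contributions are $-y\arg(z-w)+\tfrac12 y\arg(z-w+1)+y\arg((1-x_0)-\i y)$; a direct asymptotic calculation as $|y|\to\infty$ (using $\arg(z-w),\arg(z-w+1)\to\tfrac\pi2\sgn(y)$ and $\arg((1-x_0)-\i y)\to-\tfrac\pi2\sgn(y)$) shows this sum is asymptotic to $-\tfrac{3\pi}{4}|y|$. I would split the $y$-integration at $|y|=N$: on $|y|\le N$ the crude bounds $\ln|z-w|\le\ln(3N)$ and $|\arg(\cdot)|\le\pi/2$ produce a pointwise $N$-uniform integrand bound of the form $e^{-(N/2)\ln N+O(N)}(2+|y|)^{C_2}$; on $|y|>N$ the above $\arg$-asymptotics yield a clean $e^{-3\pi|y|/4}$ factor that dominates all polynomial-in-$|y|$ factors. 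Integrating in $y$ then gives $|F(N+x_0)|\to 0$, and Lemma \ref{zerofinal0one} delivers the desired identity.

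The main technical obstacle is the careful book-keeping of the three competing logarithmic real parts arising from (i)--(ii), and of the interaction between the $N$-asymptotics and the $|y|$-asymptotics in the intermediate regime $|y|\asymp N$; the split at $|y|=N$ is designed precisely to manage this crossover, reducing the proof to two estimates that are each uniform and elementary.
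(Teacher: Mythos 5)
Your proposal is correct in substance and invokes the same key ingredients the paper uses — Lemmas \ref{zerofinal0one}, \ref{fzfinalone}, \ref{gammalmm}, and Lemma \ref{oneform1} to identify the resulting series — but the technical route differs at one point in a way worth flagging.

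Where you diverge is in how you lower-bound $\prod_{j=0}^{N}|j-z|$. You drop the $y$-dependence, bounding it from below by $|x_0|\,\Gamma(N+1+x_0)/\Gamma(1+x_0)\sim e^{N\ln N-N}$. The paper instead uses the refined estimate \eqref{prodjineq}, which retains the imaginary part: $\prod_j|j-z|\gtrsim\exp\bigl(x\ln|x+\i y|-C x-\ln(1+y^2)\bigr)$. Because the $\Gamma$-denominator then contributes $-x\ln|x+\i y|$ rather than $-x\ln x$, it cancels most of the $(x+1)\ln|z-w|$ growth from $f$ pointwise, and a single change of variable $u=y/(x-a+1)$ finishes the estimate with no case split. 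Your cruder bound leaves the $f$-growth $(N+x_0+1)\ln|z-w|-\tfrac12(N+x_0)\ln|z-w+1|\approx(\tfrac N2+1)\ln|y|$ uncancelled for $|y|\gg N$, i.e.\ a factor of order $|y|^{N/2+1}$ — not a fixed-degree polynomial, so the phrase ``dominates all polynomial-in-$|y|$ factors'' is slightly misleading; what you actually need (and what your argument does deliver once spelled out) is that for the exponential decay constant in play, $\int_N^\infty |y|^{N/2+1}e^{-c|y|}\,dy\le N^{N/2+1}e^{-cN}\cdot O(N)$, and combined with the $e^{-N\ln N+O(N)}$ factor this still tends to $0$. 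The $|y|=N$ split is the organizational price you pay for the simpler product bound: the crude $\ln|z-w|\le\ln(3N)$ bound only applies for $|y|\lesssim N$, and the $\arg$-decay constant is only the full $3\pi/4$ asymptotically (near $|y|\sim N$ it is closer to $5\pi/8$ since the arctangents are still away from $\pm\pi/2$). Both versions are correct; the paper's is a little cleaner precisely because \eqref{prodjineq} lets the two logarithmic terms fight it out before one takes absolute values.
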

We are now ready to complete the proof of Theorem \ref{onecthm}. We will use $C, C_1, C_2,\ldots$ to denote arbitrary positive constants that may depend only on $w$, whose values may change from line to line. Let $a$ and $r$ denote the real and imaginary parts of $w$. Explicitly, Lemma~\ref{zerofinalone} says that for any $x_0\in (a,0)$, we have
\begin{align*}
C(\alpha, x, 2^{-1/2}, \mu,c) &= \frac{1}{2\pi}\int_{-\infty}^\infty \Gamma(-x_0-\i y)f(x_0+\i y)(\mu e^{\sqrt{2}c})^{x_0+\i y}dy.
\end{align*}
Consequently,
\begin{align}\label{cmucone}
C(\alpha, x, 2^{-1/2}, \mu,c) = \frac{1}{2\pi}\lim_{x_0\downarrow a} \int_{-\infty}^\infty \Gamma(-x_0-\i y)f(x_0+\i y)(\mu e^{\sqrt{2}c})^{x_0+\i y}dy.
\end{align}
On the other, Theorem \ref{onecthm} claims that 
\begin{align}\label{zeroaltone}
C(\alpha, x, 2^{-1/2}, \mu,c) &= \frac{1}{2\pi} \int_{-\infty}^\infty \Gamma(-w-\i y)f(w+\i y)(\mu e^{\sqrt{2}c})^{w+\i y}dy\notag \\
&= \frac{1}{2\pi} \int_{-\infty}^\infty \Gamma(-a-\i y)f(a+\i y)(\mu e^{\sqrt{2}c})^{a+\i y}dy.
\end{align}
Thus, all we have to show is that the limit in equation \eqref{cmucone} can be moved inside the integral. To show this, take any $x_0\in (a,\frac{1}{2}a)$ and any $y\in \R$. Let $z := x_0+\i y$. Recall that by Lemma \ref{fzfinalone},
\begin{align}\label{fx0iyone}
|f(x_0+\i y)| &\le \exp\biggl((x_0+1)\ln |z-w| -y\arg(z-w) \notag \\
&\qquad \qquad -\frac{1}{2}x_0\ln|z-w+1|+ \frac{1}{2} y \arg(z-w+1) + Ax_0 + C\biggr), 
\end{align}
where $A$ is a universal constant and $C$ may depend only on $w$. Since $x_0> a$, we have
\begin{align*}
y\arg(z-w) &\ge (y-r)\arg(z-w) - C\\
&= |y-r|\arg(x_0-a+\i |y-r|)-C,
\end{align*}
and similarly,
\begin{align*}
\frac{1}{2}y\arg(z-w+1) &\le \frac{1}{2}(y-r)\arg(z-w+1) + C\\
&= \frac{1}{2}|y-r|\arg(x_0 - a+ \i |y-r|+1)+C\\
&\le \frac{1}{2}|y-r|\arg(x_0 - a+ \i |y-r|)+C. 
\end{align*}
Since $x_0-a\le 1$, we can combine the above inequalities to get
\begin{align*}
-y\arg(z-w) + \frac{1}{2}y\arg(z-w+1) &\le -\frac{1}{2}|y-r|\arg(x_0-a+\i |y-r|) + C\\
&\le  -\frac{1}{2}|y-r|\arg(1+\i |y-r|) + C.
\end{align*}
Next, note that since $x_0+1$ and $-\frac{1}{2}x_0$ are nonnegative and $|x_0-a+1|>|x_0-a|$, 
\begin{align*}
&(x_0+1)\ln |z-w|  -\frac{1}{2}x_0\ln|z-w+1| \\
&\le (x_0+1)\ln|x_0 -a +1 + \i (y-r)|- \frac{1}{2}x_0 \ln|x_0-a +1 + \i (y-r)|\\
& = \frac{1}{2}(x_0+2)\ln|x_0-a+1+\i(y-r)|\le \ln(2+|y-r|).
\end{align*}
Using these inequalities in equation \eqref{fx0iyone}, we get
\begin{align}\label{dom1one}
|f(x_0+\i y)| \le C_1 (2+|y-r|)e^{-\frac{1}{2} |y-r|\arg(1+\i |y-r|)}.
\end{align}
It is now easy to see that the $x_0$-free bounds \eqref{dom1one} and~\eqref{dom2} allow us to apply the dominated convergence theorem to move the limit inside the integral in equation \eqref{cmucone}. This completes the proof of Theorem \ref{onecthm}.

\subsubsection{Proof of Theorem \ref{onethm}}\label{onethmpf}
Let $a$ and $r$ denote the real and imaginary parts of $w$. By Theorem \ref{onecthm}, we have
\begin{align*}
&C_\ep(\alpha, x, 2^{-1/2}, \mu) \\
&= \frac{1}{2\pi}\int_{-\infty}^\infty \int_{-\infty}^\infty \Gamma(-a-\i (y+r))f(a+\i (y+r))\mu^{a+\i (y+r)}e^{\sqrt{2}c \i y- \ep^2 c^2}dy dc\\
&=  \frac{1}{2\pi}\int_{-\infty}^\infty \int_{-\infty}^\infty \Gamma(-a-\i y)f(a+\i y)\mu^{a+\i y}e^{\sqrt{2}c \i (y-r)- \ep^2 c^2}dy dc.
\end{align*}
Since the bounds \eqref{dom1one} and \eqref{dom2} have no dependence on $x_0$, we can use them to conclude that they hold also for $x_0=a$, which implies that the above double integral is absolutely convergent. Consequently, by Fubini's theorem, we can interchange the order of integration. Since
\[
 \int_{-\infty}^\infty e^{\sqrt{2}c \i (y-r)- \ep^2 c^2}dc = \frac{\sqrt{\pi}}{\ep}\exp\biggl(-\frac{(y-r)^2}{2\ep^2}\biggr),
\]
this gives
\begin{align*}
&C_\ep(\alpha, x, 2^{-1/2}, \mu) = \frac{1}{2\sqrt{\pi}\ep}\int_{-\infty}^\infty \Gamma(-a-\i y)f(a+\i y)\mu^{a+\i y}\exp\biggl(-\frac{(y-r)^2}{2\ep^2}\biggr)dy\\
&= \frac{1}{2\sqrt{\pi}}\int_{-\infty}^\infty \Gamma(-a-\i (r+\ep u))f(a+\i(r+ \ep u))\mu^{a+\i (r+\ep u)}e^{-\frac{1}{2}u^2}du\\
&= \frac{1}{2\sqrt{\pi}}\int_{-\infty}^\infty \Gamma(-w-\i\ep u)f(w+\i\ep u)\mu^{w+\i \ep u}e^{-\frac{1}{2}u^2}du.
\end{align*}
By the alternative expression for $f$ given in equation \eqref{fznewalt}, we get
\begin{align*}
f(w+\i\ep u) &= \frac{(4\pi)^{w+\i \ep u} e^{\frac{1}{2}(w+\i \ep u) (-3-w+\i \ep u)+(1+w+\i \ep u)\log(\i \ep u)}G(w+\i \ep u+3)G(\i \ep u +2)}{\Gamma(\i \ep u+1)^{w+\i \ep u+2}\Gamma(w+\i\ep u+2)G(-w)}.
\end{align*}
Now note that 
\[
\log(\i \ep u) = \ln \ep + \ln |u| + \frac{\i \pi}{2}\sign(u) = \ln \ep+\log(\i u). 
\]
This shows that for any $u\in \R$,
\begin{align*}
&\lim_{\ep\to 0} \ep^{-1-w} f(w+\i \ep u) \\
&= \lim_{\ep \to 0} \frac{(4\pi)^{w+\i \ep u} e^{\frac{1}{2}(w+\i \ep u) (-3-w+\i \ep u)+(1+w+\i \ep u)\log(\i u)-\i \ep u\ln \ep}G(w+\i \ep u+3)G(\i \ep u +2)}{\Gamma(\i \ep u+1)^{w+\i \ep u+2}\Gamma(w+\i\ep u+2)G(-w)}\\
&= \frac{(4\pi)^{w} e^{\frac{1}{2}w (-3-w)+(1+w)\log(\i u)}G(w+3)}{\Gamma(w+2)G(-w)}.
\end{align*}
Moreover, by equations  \eqref{dom1one} and \eqref{dom2}, $|\Gamma(-w-\i \ep u) f(w+\i \ep u)\mu^{w+\i \ep u}|$ is bounded by a constant that does not depend on $u$. Thus, by the dominated convergence theorem, we have
\begin{align*}
&\lim_{\ep\to 0} \ep^{-1-w} C_\ep(\alpha, x, 2^{-1/2}, \mu) \\
&= \frac{ (4\pi\mu )^w\Gamma(-w)e^{-\frac{1}{2}w(w+3)}G(w+3)}{2\sqrt{\pi} \Gamma(w+2)G(-w)} \int_{-\infty}^\infty|u|^{1+w}e^{\frac{\i \pi}{2}(1+w)\sign(u)}e^{-\frac{1}{2}u^2}du.
\end{align*}
Finally, note that 
\begin{align*}
&\frac{1}{2}\int_{-\infty}^\infty|u|^{1+w}e^{\frac{\i \pi}{2}(1+w)\sign(u)}e^{-\frac{1}{2}u^2}du\\
&= \cos \biggl(\frac{\pi}{2}(1+w)\biggr) \int_0^\infty u^{1+w}e^{-\frac{1}{2}u^2}du\\
&= \cos \biggl(\frac{\pi}{2}(1+w)\biggr) \int_0^\infty (\sqrt{2t})^w e^{-t} dt = \cos \biggl(\frac{\pi}{2}(1+w)\biggr)2^{\frac{1}{2}w} \Gamma\biggl(\frac{1}{2}w+1\biggr).
\end{align*}
This completes the proof.

\subsubsection{Proof of Theorem \ref{onethm2}}\label{onethm2pf}
From Lemma \ref{zeroform1one}, we have
\begin{align}\label{repeatone}
C(\alpha, x, 2^{-1/2}, \mu, c) =\sum_{n=0}^\infty \frac{(-\mu e^{\sqrt{2}c})^n}{n!}a_n,
\end{align}
where $a_n$ are the coefficients defined in equation \eqref{ank10}. Although Lemma \ref{zeroform1one} is stated for $c\in \R$, it is easy to see from the proof that the formula continues to be valid for any complex $c$. By the inequality \eqref{wagner2}, we can integrate $c$ from $0$ to $\sqrt{2}\pi \i$ by moving the integral inside the infinite sum, to get
\begin{align}\label{horizontalone}
 -\i \int_0^{\sqrt{2}\pi} e^{-\sqrt{2}\i wt} C(\alpha, x, 2^{-1/2}, \mu, \i t) dt &= -\i \sum_{n=0}^\infty \frac{(-\mu )^n}{n!}a_n\int_0^{\sqrt{2}\pi} e^{\sqrt{2}\i (n-w)t} dt\notag\\
&= \sum_{n=0}^\infty \frac{(-\mu )^n(1-e^{2\pi \i (n-w)})}{\sqrt{2} n! (n-w)}a_n\notag \\
&= (1-e^{-2\pi \i w}) \sum_{n=0}^\infty \frac{(-\mu )^n}{\sqrt{2} n! (n-w)}a_n.
\end{align}
Next, note that
\begin{align*}
&(1-e^{-2\pi \i w})\int_0^\infty e^{-\sqrt{2}wc - \ep^2 c^2} C(\alpha, x, 2^{-1/2},\mu ,c) dc\\
&= (1-e^{-2\pi\i w}) \int_{-\infty}^\infty e^{-\sqrt{2}w c - \ep^2 c^2} C(\alpha, x, 2^{-1/2},\mu ,c) dc \\
&\qquad \qquad - (1-e^{-2\pi \i w}) \int_{-\infty}^0 e^{-\sqrt{2}w c - \ep^2 c^2} C(\alpha, x, 2^{-1/2},\mu ,c) dc.
\end{align*}
Theorem \ref{onethm} tells us the behavior of the first term as $\ep \to 0$. We will now show that the difference between the second term and the quantity displayed in equation \eqref{horizontalone} is $o(\ep^{-\sqrt{2}\Re(\alpha)})$ as $\ep\to 0$. Clearly, this will complete the proof.  Since $\Re(\alpha)>-\frac{1}{\sqrt{2}}$, it suffices to show that the difference on $O(\ep)$. We will actually show that the difference is $O(\ep^2)$. 

To this end, first note that by equation \eqref{repeatone} and an application of the bound \eqref{wagner2} (to move the integration inside the infinite sum), we get
\begin{align}\label{horizontalone2}
&(1-e^{-2\pi \i w}) \int_{-\infty}^0 e^{-\sqrt{2}w c - \ep^2 c^2} C(\alpha, x, 2^{-1/2},\mu ,c) dc\notag\\
&= (1-e^{-2\pi \i w}) \sum_{n=0}^\infty \frac{(-\mu )^n}{n!}a_n\int_{-\infty}^0 e^{\sqrt{2}c(n-w) - \ep^2 c^2 } dc.
\end{align}
Now, for $n\ge 1$, 
\begin{align*}
&\biggl|\int_{-\infty}^0 e^{\sqrt{2}c(n-w) - \ep^2 c^2 }  dc - \int_{-\infty}^0 e^{\sqrt{2}c(n-w)} dc\biggr| \\
&\le \int_{-\infty}^0 |e^{\sqrt{2}c(n-w)} (1- e^{-\ep^2 c^2}) |  dc\\
&= \int_{-\infty}^0 e^{\sqrt{2}c(n-\Re(w))} |1- e^{-\ep^2 c^2} |  dc\le \ep^2  \int_{-\infty}^0 c^2e^{\sqrt{2}cn} dc= \frac{2\ep^2}{(\sqrt{2} n)^3}. 
\end{align*}
This shows that the difference between the quantities displayed in equations \eqref{horizontalone} and~\eqref{horizontalone2} is equal to $J_\ep(w)+O(\ep^2)$, where 
\begin{align*}
J_\ep(w) := (1-e^{-2\pi \i w}) \biggl\{ -\frac{1}{\sqrt{2}w} - \int_{-\infty}^0 e^{-\sqrt{2}w c - \ep^2 c^2 } dc\biggr\}.
\end{align*}
Since $\Re(w) < 0$, we get 
\begin{align*}
J_\ep(w) &= (1-e^{-2\pi \i w}) \biggl\{\int_{-\infty}^0e^{-\sqrt{2}wc} dc - \int_{-\infty}^0 e^{-\sqrt{2}w c - \ep^2 c^2 } dc\biggr\}\\
&= (1-e^{-2\pi \i w}) \int_{-\infty}^0e^{-\sqrt{2}wc} (1-e^{-\ep^2 c^2}) dc.
\end{align*}
By the inequality $1-e^{-x}\le x$, this shows that 
\begin{align*}
|J_\ep(w)|&\le 2\ep^2 \int_{-\infty}^0 c^2 e^{-\sqrt{2}\Re(w)c} dc = O(\ep^2).
\end{align*}
This completes the proof.

\subsection{Two-point function}
\subsubsection{Proof of Theorem \ref{corthm}}\label{corthmpf}
First, let us prove the analogue of Lemma \ref{zeroform1one} for the two-point function.
\begin{lmm}\label{zeroform1two}
Let  $k=2$, $x_1=-e_3$, and $x_2=e_3$. Suppose that $b\in (0,1)$ and $\Re(\alpha_j) >-\frac{1}{2b}$ for $j=1,2$. Then for any $\mu>0$ and $c\in \R$,
\begin{align*}
C(\balpha, \bx, b, \mu, c)  &= e^{2\alpha_1\alpha_2}\biggl\{1+ \sum_{n=1}^\infty \frac{(-\mu e^{2bc})^n }{n!} a_n \biggr\},
\end{align*}
where
\begin{align}\label{cn2}
a_n &= 4^n e^{b^2 n(n+1-2w)-2n} \int_{\C^n}\prod_{i=1}^n \frac{|z_i|^{4b\alpha_1}}{(1+|z_i|^2)^{2b^2(n-w)}}  \prod_{1\le i<j\le n}|z_i-z_j|^{4b^2} d^2z_1\cdots d^2z_n.
\end{align}
\end{lmm}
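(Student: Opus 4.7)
The plan is to derive Lemma \ref{zeroform1two} from the general formula \eqref{calphabasic} by a limiting argument that exactly parallels the proof of Lemma \ref{zeroform1one}, specialized now to $k=2$ with one insertion sent to $e_3$ and the other held fixed at its antipode $-e_3$. Concretely, fix $x_1=-e_3$ so that $u_1:=\sigma(x_1)=0$, and let $x_2\in \S^2\setminus\{-e_3,e_3\}$ approach $e_3$, which corresponds to $u_2:=\sigma(x_2)$ going to infinity in $\C$. By Lemma \ref{convergence}, $C(\balpha,\bx,b,\mu,c)$ and each coefficient $a_n$ are continuous in $x_2$ throughout this path, so the formula of the lemma follows once the $|u_2|\to\infty$ limits of the prefactor and of each $a_n$ are computed.

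Substituting $u_1=0$ into \eqref{calphabasic} with $k=2$, the prefactor becomes $e^{2\alpha_1\alpha_2}(1+|u_2|^2)^{-2\alpha_1\alpha_2}|u_2|^{4\alpha_1\alpha_2}$; the two $|u_2|$-dependent factors cancel exactly in the limit, leaving $e^{2\alpha_1\alpha_2}$. For the coefficient,
\begin{align*}
a_n(u_2) &= 4^n e^{b^2 n(n+1-2w)-2n}(1+|u_2|^2)^{-2bn\alpha_2}\int_{\C^n}\prod_{i=1}^n \frac{|z_i|^{4b\alpha_1}|z_i-u_2|^{4b\alpha_2}}{(1+|z_i|^2)^{2b^2(n-w)}}\\
&\hskip 1.5in \cdot \prod_{1\le i<j\le n}|z_i-z_j|^{4b^2} d^2z_1\cdots d^2z_n,
\end{align*}
and the $u_2$-dependent piece of the integrand can be rewritten as $\prod_{i=1}^n(|z_i-u_2|^2/(1+|u_2|^2))^{2b\alpha_2}$, which tends to $1$ uniformly on every bounded subset of $\C^n$. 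Hence on any truncation $(\Omega_M)^n:=\{|z_i|\le M\}^n$ the dominated convergence theorem gives the pointwise limit of the truncated integral to the expression appearing in the statement of the lemma.

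The nontrivial step, and the only real technical obstacle, is the uniform tail estimate
\[
\lim_{M\to\infty}\sup_{|u_2|\ge 2M}\bigl|a_n(u_2)-a_n(u_2,M)\bigr|=0,
\]
where $a_n(u_2,M)$ denotes the integral restricted to $(\Omega_M)^n$. The difficulty is that the integrand of $a_n(u_2)$ carries a singularity at $z_i=u_2$ that escapes to infinity with $u_2$, so a naive domination in the $z$-variables fails. The plan is to transfer the tail integral back to the sphere using \eqref{change}: after inverting the stereographic projection, $\{|z_i|>M\}$ corresponds to a shrinking spherical cap around $e_3$, the moving singularity becomes a fixed singularity at $y_i=x_2$, and the integrand matches the one in \eqref{andef}. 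The bound \eqref{wagner} then furnishes an upper bound whose integrability near $y=x_2$ is ensured by the hypothesis $\Re(\alpha_j)>-\frac{1}{2b}$, uniformly in $x_2$ lying outside a fixed neighborhood of $x_1$. Combining this tail estimate with the pointwise limit and the prefactor cancellation yields the formula stated in the lemma.
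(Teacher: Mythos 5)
Your proposal is correct and takes essentially the same route as the paper: fix $x_1=-e_3$, send $x_2\to e_3$, establish the pointwise limit of the planar integrals on a truncation $(\Omega_M)^n$ by dominated convergence, and control the tail $a_n(u_2)-a_n(u_2,M)$ uniformly in $u_2$ by transferring back to the sphere and invoking the Wagner bound \eqref{wagner} together with $\Re(\alpha_j)>-\frac{1}{2b}$. Your rewriting of the $u_2$-dependent part of the integrand as $\prod_i\bigl(|z_i-u_2|^2/(1+|u_2|^2)\bigr)^{2b\alpha_2}$, and the explicit treatment of the prefactor cancellation, make the convergence a bit more transparent than the paper's presentation, but the argument is the same.
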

Since the proof is similar to the proof of Lemma \ref{zeroform1one}, it is shifted to \textsection\ref{zeroform1twopf}. Specializing the above lemma to the case $b=\frac{1}{\sqrt{2}}$, and proceeding along the same steps as before, we get  the following analogue of Lemma \ref{zeroform1} and Lemma \ref{oneform1}. 

\begin{lmm}\label{twoform1}
Let  $k=2$, $x_1=-e_3$, and $x_2=e_3$. Suppose that $\Re(\alpha_1),\Re(\alpha_2)>-\frac{1}{\sqrt{2}}$. Let $w := -1-\sqrt{2}(\alpha_1+\alpha_2)$ and suppose that $w\ne 0$. Let $\beta_j := 1+\sqrt{2}\alpha_j$ for $j=1,2$. Then for any $\mu>0$ and $c\in \R$,  
\[
C(\balpha, \bx, 2^{-1/2}, \mu, c) =e^{2\alpha_1\alpha_2}\sum_{n=0}^\infty \frac{(-\mu e^{\sqrt{2}c})^n}{n!}f(n),
\]
where $f$ is the analytic function
\begin{align*}
f(z) := \frac{ (4\pi)^z e^{\frac{1}{2} z(z-3-2w)}\Gamma(z+1) G(z+\beta_1)G(z + \beta_2)}{(\Gamma(z-w))^zG(\beta_1)G(\beta_2)},
\end{align*}
defined on the domain $\Omega:= (w+(\C\setminus(-\infty, 0]))\setminus\{-1,-2,\ldots\}$.
\end{lmm}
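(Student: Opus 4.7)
The plan is to specialize Lemma \ref{zeroform1two} to $b=\frac{1}{\sqrt{2}}$ and then evaluate the resulting $n$-fold integral by the same Vandermonde/determinantal mechanism that underlies the proof of Theorem \ref{zerocthm}, carrying along the extra factor $|z_i|^{2(\beta_1-1)}$ coming from the insertion at the south pole. At $b=\frac{1}{\sqrt{2}}$ one has $4b^2=2$, $4b\alpha_1=2(\beta_1-1)$, and the prefactor $e^{b^2n(n+1-2w)-2n}$ simplifies to $e^{\frac{1}{2}n(n-3-2w)}$, so
\begin{align*}
a_n = 4^n e^{\frac{1}{2}n(n-3-2w)} \int_{\C^n}\prod_{i=1}^n \frac{|z_i|^{2(\beta_1-1)}}{(1+|z_i|^2)^{n-w}}\prod_{1\le i<j\le n}|z_i-z_j|^2\, d^2z_1\cdots d^2z_n.
\end{align*}

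Next, I would expand the squared Vandermonde as $\sum_{\sigma,\tau\in S_n}\sign(\sigma)\sign(\tau)\prod_i z_i^{\sigma(i)-1}\bar z_i^{\tau(i)-1}$ and integrate against the rotationally symmetric measure $d\nu(z):=|z|^{2(\beta_1-1)}(1+|z|^2)^{-(n-w)}\,d^2z$. By the same radial-symmetry cancellation used in \textsection\ref{sketchsec}, only the diagonal $\sigma=\tau$ terms survive, reducing the integral to
\begin{align*}
n!\prod_{j=0}^{n-1}\int_\C \frac{|z|^{2(j+\beta_1-1)}}{(1+|z|^2)^{n-w}}\,d^2z.
\end{align*}
Each one-dimensional integral is then evaluated with Lemma \ref{intlmm}. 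The two hypotheses of that lemma become $\Re(j+\beta_1)>0$ and, using the identity $-w=\beta_1+\beta_2-1$, $\Re(n-1-j+\beta_2)>0$; both are guaranteed by $\Re(\alpha_1),\Re(\alpha_2)>-\frac{1}{\sqrt{2}}$ uniformly for $j\in\{0,\dots,n-1\}$. The value produced is
\begin{align*}
\frac{\pi\,\Gamma(n-1-j+\beta_2)\,\Gamma(j+\beta_1)}{\Gamma(n-w)}.
\end{align*}

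Finally, I would collapse the two products $\prod_{j=0}^{n-1}\Gamma(j+\beta_1)$ and $\prod_{j=0}^{n-1}\Gamma(n-1-j+\beta_2)$ (after the reindexing $k=n-1-j$ in the second one) to $G(n+\beta_1)/G(\beta_1)$ and $G(n+\beta_2)/G(\beta_2)$ respectively, via the functional equation $G(z+1)=\Gamma(z)G(z)$. Assembling the prefactors and recognizing $n!=\Gamma(n+1)$, I would obtain
\begin{align*}
a_n=\frac{(4\pi)^n e^{\frac{1}{2}n(n-3-2w)}\Gamma(n+1)G(n+\beta_1)G(n+\beta_2)}{\Gamma(n-w)^n G(\beta_1)G(\beta_2)}=f(n),
\end{align*}
and substituting back into Lemma \ref{zeroform1two} gives the claimed series. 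The case $n=0$ is handled by direct inspection ($a_0=f(0)=1$), and the cases $n=0,1$ together determine the initial terms of the series consistently with $a_1=4\pi f(1)/f(0)$ after simplification.

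The main point that requires care, rather than genuine difficulty, is bookkeeping of the prefactors and the verification that $f$ as displayed is analytic on the stated domain $\Omega$. Analyticity follows from the entireness and nonvanishing of $G(\cdot+\beta_j)$ and $\Gamma(\cdot+1)$ on $\Omega$, together with the fact that $\Gamma(z-w)^z:=\exp(z\Pi(z-w))$ is well defined and analytic once $z-w\notin(-\infty,0]$, using the analytic logarithm $\Pi$ of $\Gamma$ introduced in \textsection\ref{zerosec}. The condition $w\neq 0$ ensures $f(n)$ is finite for $n\geq 1$ (so that the series makes sense coefficient-by-coefficient), while convergence of the series itself on $\Omega$ follows from Lemma \ref{convergence}.
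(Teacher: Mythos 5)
Your argument is correct and follows the same route as the paper's proof of Lemma~\ref{twoform1}: specialize Lemma~\ref{zeroform1two} to $b=\tfrac{1}{\sqrt 2}$, expand the squared Vandermonde and use radial symmetry of $d\nu$ to kill off-diagonal terms, evaluate the one-dimensional moments via Lemma~\ref{intlmm}, and telescope the two Gamma products to Barnes $G$-ratios via $G(z+1)=\Gamma(z)G(z)$. The only blemish is the final sentence about ``$a_1=4\pi f(1)/f(0)$'', which is garbled (one wants $a_1=f(1)$, and this already follows from the general $n\ge 1$ computation), but it does not affect the argument.
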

Since the proof is similar to the proofs of Lemma \ref{zeroform1} and Lemma \ref{oneform1}, it is relegated to \textsection\ref{twoform1pf}. Next, we have the following analogue of Lemma \ref{fzfinal} and Lemma \ref{fzfinalone}. The proof is in \textsection\ref{fzfinal2pf}.
\begin{lmm}\label{fzfinal2}
Assume the setting of Lemma \ref{twoform1}. 
Then for any $z\in \Omega$ with $\Re(z) > -\frac{1}{2}$, we have
\begin{align*}
f(z) &= (z+\beta_1)(z+\beta_2)\exp\biggl(z\log(z-w)+z\log(z-w+1) \\
&\qquad \qquad -\frac{3}{2}z\log (z-w+2)+ Bz + Q(z)\biggr),
\end{align*}
where $B$ is a real universal constant, and $|Q(z)|\le C\log(2+|z|)$ for some constant $C$ that depends only on $\alpha_1,\alpha_2$. Moreover, if $\Re(z)>-\frac{1}{2}$ and $\Re(z) \ge  \Re(w)+\delta$ for some $\delta >0$, then 
\begin{align*}
|f(z)| &\le |(z+\beta_1)(z+\beta_2)|\exp\biggl(\frac{1}{2}\Re(z)\ln |z-w| - \frac{1}{2} \Im(z) \arg(z-w) \\
&\qquad \qquad + B\Re(z)  + C'\log(2+|z|)\biggr)
\end{align*}
for some constant $C'$ that depends only on $\alpha_1$, $\alpha_2$, and $\delta$.
\end{lmm}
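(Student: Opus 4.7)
The plan is to reduce Lemma \ref{fzfinal2} to the same Stirling/Barnes-$G$ asymptotic analysis used in Lemmas \ref{fzfinal} and \ref{fzfinalone}, but applied only after the polynomial prefactor $(z+\beta_1)(z+\beta_2)$ has been separated out explicitly. The separation is achieved by the Barnes $G$ functional equation $G(s+1)=\Gamma(s)G(s)$ together with $\Gamma(s+1)=s\Gamma(s)$, which give
\[
G(z+\beta_j) \;=\; \frac{(z+\beta_j)\,G(z+\beta_j+1)}{\Gamma(z+\beta_j+1)}.
\]
This yields a factorization $f(z)=(z+\beta_1)(z+\beta_2)\,h(z)$ in which every Barnes $G$ of $h$ has shifted argument $z+\beta_j+1$ and two new zero-free Gamma factors $\Gamma(z+\beta_j+1)$ appear in the denominator. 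Because $\Re(\beta_j)>0$ and $\Re(-w)>0$, under the hypothesis $\Re(z)>-1/2$ all the arguments $z$, $z-w$, $z+\beta_j$, $z+\beta_j+1$ lie in the right half-plane, so no branch-cut crossings will occur.

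Next I would take $\log h(z)$, using the analytic branch $\Pi$ of $\log\Gamma$ fixed in the paper and the analogous analytic logarithm of $G$ on $\mathbb{C}\setminus(-\infty,0]$ (well-defined since $G$ is zero-free there). To each of the four Gamma factors I would apply Stirling in the form
\[
\log\Gamma(s+1) \;=\; (s+\tfrac12)\log s - s + \tfrac12\log(2\pi) + O(|s|^{-1}),
\]
and to each of the two Barnes $G$ factors the classical asymptotic
\[
\log G(s+1) \;=\; \tfrac{s^2}{2}\log s - \tfrac{3s^2}{4} + \tfrac{s}{2}\log(2\pi) - \tfrac{1}{12}\log s + \zeta'(-1) + O(|s|^{-1}),
\]
valid uniformly in any sector about the positive real axis; the principal branch of $\log$ used by these expansions agrees with $\Pi$ on the arguments in play.

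The core algebraic step is to verify, using the identity $\beta_1+\beta_2=1-w$, that the $z^{2}\log(z-w)$ contributions cancel. Expanding $\log(z+\beta_j)=\log(z-w)+O(|z-w|^{-1})$, the two $\log G(z+\beta_j+1)$ terms produce $(z^{2}+z(1-w)+\mathrm{const})\log(z-w)$; the factor $-z\,\Pi(z-w)$ produces $(-z^{2}+zw+\tfrac{z}{2})\log(z-w)$; and $\log\Gamma(z+1)-\log\Gamma(z+\beta_{1}+1)-\log\Gamma(z+\beta_{2}+1)$ produces $\bigl((z+\tfrac12)-(2z+2-w)\bigr)\log(z-w)$. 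These sum to $\tfrac{z}{2}\log(z-w)$ modulo $O(\log(2+|z|))$. I would then rewrite $\tfrac{z}{2}\log(z-w)$ as $z\log(z-w)+z\log(z-w+1)-\tfrac{3}{2}z\log(z-w+2)$, which is legitimate because for $k\in\{1,2\}$ one has $z\log(1+k/(z-w))=O(1)$ when $\Re(z)\ge\Re(w)+\delta$, and more generally $O(\log|z|)$. All remaining terms (the Barnes $-\tfrac{1}{12}\log s$ tails, the linear-in-$z$ Stirling/Barnes pieces, the zero-order constants $\zeta'(-1)$ and $\tfrac12\log(2\pi)$, and the $O(|s|^{-1})$ remainders) combine into $Bz+Q(z)$ with real universal $B$ and $|Q(z)|\le C\log(2+|z|)$.

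For the bound on $|f(z)|$ under $\Re(z)\ge \Re(w)+\delta$, I would take real parts in the exponential representation. The identity $\Re(z\log(z-w))=\Re(z)\ln|z-w|-\Im(z)\arg(z-w)$ gives the main contribution, and since $|z-w|\ge\delta$ the shifted terms $z\log(z-w+k)$ for $k=1,2$ each equal $z\log(z-w)$ up to a bounded error absorbed into $C'\log(2+|z|)$; combining coefficients $1+1-\tfrac32=\tfrac12$ then produces $\tfrac12\Re(z)\ln|z-w|-\tfrac12\Im(z)\arg(z-w)+B\Re(z)+C'\log(2+|z|)$, which is the claimed bound. The main obstacle throughout is the branch-cut bookkeeping — confirming that the principal $\log$ used in Stirling and the Barnes asymptotic matches the $\Pi$ branch (and the corresponding analytic logarithm of $G$) uniformly over the operative region — but once $\Re(z)>-1/2$ is imposed, all relevant arguments lie in the right half-plane and these branches coincide, so the remaining work is a careful but routine coefficient match.
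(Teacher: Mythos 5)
Your factorization $f(z)=(z+\beta_1)(z+\beta_2)\,h(z)$ via a single application of $G(s+1)=\Gamma(s)G(s)$ is a natural way to expose the polynomial prefactor, and your coefficient count for the $\log(z-w)$ terms (quadratic coefficient zero, linear coefficient $\tfrac12$) is correct. The paper does the same thing but with a double iteration: $G(z+\beta_j)=(z+\beta_j)G(z+\beta_j+2)/\Gamma(z+\beta_j+1)^2$, together with a double shift of $(\Gamma(z-w))^{-z}$ to $\exp(-z\Pi(z-w+2)+z\log(z-w+1)+z\log(z-w))$ via Lemma~\ref{gammalmm2}. This difference is not cosmetic, and it is where your argument develops a genuine gap.

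The issue is the claim that ``once $\Re(z)>-1/2$ is imposed, all relevant arguments lie in the right half-plane.'' This is false. The hypotheses give $\Re(\beta_j)>0$ (not bounded away from $0$) and $\Re(z)>-\tfrac12$, so $\Re(z+\beta_j)$ can be negative: for instance $\Re(\alpha_1)=-\tfrac{1}{\sqrt2}+\epsilon$ with small $\epsilon$ gives $\Re(\beta_1)=\sqrt2\,\epsilon$, and with $\Re(z)=-\tfrac14$ one has $\Re(z+\beta_1)<0$. Likewise $\Re(z-w)$ can be negative in the generality of Lemma~\ref{twoform1} (which only requires $w\neq 0$, not $\Re(w)\in(-\tfrac12,0)$). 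Consequently the arguments of your Barnes $G$ expansion $(z+\beta_j)$ and of your Stirling expansion for $\Pi(z-w)$ may lie in the left half-plane, where Corollary~\ref{picor} and Corollary~\ref{gcor} (which the paper uses, and which are proved only for $\Re\ge 0$) do not apply; and the classical sector-uniform versions you invoke degenerate as $\arg\to\pm\pi$, which is exactly the regime $\Re(z+\beta_j)<0$ with $\Im(z+\beta_j)$ small. The paper sidesteps this by iterating once more so that the Barnes argument is $z+\beta_j+2$ with $\Re(z+\beta_j+1)>\tfrac12$, and the Gamma-power argument is $z-w+2$ with $\Re(z-w+2)>\tfrac12$, and these are verified explicitly in inequalities~\eqref{rbj} and~\eqref{rwj}. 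If you stick with a single iteration you must either handle the bounded-$|z|$ region by a separate compactness argument (and verify the constants remain of the claimed form) or, more simply, adopt the extra shift.

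A secondary point: you assert that the leftovers ``combine into $Bz+Q(z)$ with real universal $B$.'' That this is true requires a cancellation you have not exhibited. Your explicit quadratic $\tfrac12 z(z-3-2w)$ contributes a $w$-dependent $-wz$ term, and so does the $z(z-w)\log(z-w)$ piece of $-z\Pi(z-w)$; the universality of $B$ only emerges after these are offset by the $w$-dependent linear pieces generated by replacing $\log(z+\beta_j+1)$ and $\log(z+1)$ with $\log(z-w)$. The paper's proof carries out exactly this bookkeeping (its $\tfrac12 z^2\log\bigl[(z+\beta_1+1)(z+\beta_2+1)/(z-w+2)^2\bigr]=\tfrac12(w-1)z+O(1)$ step, whose $\tfrac12 wz$ cancels the $-\tfrac12 wz$ from the quadratic), and your sketch needs the analogous verification rather than an appeal to ``all remaining terms combine.''
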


For non-integer $x>\Re(w)$, define the contour integral
\begin{align}\label{fcdeftwo}
F(x) := \frac{1}{2\pi \i} \int_{\Re(z) = x} \Gamma(-z) f(z)(\mu e^{\sqrt{2}c})^z dz,
\end{align}
where the contour goes from $x-\i \infty$ to $x+\i \infty$. (Note that if $\Re(z)>\Re(w)$ and $z$ is not an integer, then $z$ is automatically in $\Omega$.) 
\begin{lmm}\label{fconvlmm2}
For any non-integer $x> \Re(w)$, the integral defining $F(x)$ in equation~\eqref{fcdeftwo} is absolutely convergent.
\end{lmm}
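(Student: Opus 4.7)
The proof follows the same template as the proofs of Lemma \ref{fconvlmm} and Lemma \ref{fconvlmmone}, with the only new ingredient being the substitution of Lemma \ref{fzfinal2} for Lemma \ref{fzfinal}. Fix a non-integer $x > \Re(w)$; by the running hypothesis of Theorem \ref{corthm}, $\Re(w) \in (-\tfrac{1}{2},0)$, so automatically $x > -\tfrac{1}{2}$, and we may apply the quantitative half of Lemma \ref{fzfinal2} with $\delta := x - \Re(w) > 0$ on the vertical line $\Re(z)=x$. The target is to show that the integrand $|\Gamma(-z) f(z)(\mu e^{\sqrt{2}c})^z|$, viewed as a function of $y$ with $z = x + \i y$, is integrable over $\R$.

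Three ingredients combine. First, $|(\mu e^{\sqrt{2}c})^z| = (\mu e^{\sqrt{2}c})^x$ is a constant in $y$. Second, Lemma \ref{gammalmm} gives exponential decay of $|\Gamma(-z)|$ in $|y|$: the factor $\exp(y\arg(\lceil x+1\rceil - x - \i y))$ contributes $\exp(-\tfrac{\pi}{2}|y| + O(1))$ as $|y|\to\infty$, while the denominator $\prod_{j=0}^{\lceil x+1\rceil-1}|j - x - \i y|$ is bounded away from zero because $x$ is not a nonnegative integer. Third, Lemma \ref{fzfinal2} yields
\[
|f(z)| \le |(z+\beta_1)(z+\beta_2)|\exp\bigl(\tfrac{1}{2}x\ln|z-w| - \tfrac{1}{2}y\arg(z-w) + Bx + C'\log(2+|z|)\bigr),
\]
where the polynomial prefactor $|(z+\beta_1)(z+\beta_2)|$, together with the $\ln|z-w|$ and $\log(2+|z|)$ contributions, grow only polynomially in $|y|$, whereas $-\tfrac{1}{2}y\arg(z-w) \to -\tfrac{\pi}{4}|y|$ as $|y|\to\infty$, producing a further exponential decay factor.

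Multiplying the three bounds, the integrand is dominated by $C(1+|y|)^K e^{-c|y|}$ for some constants $C, K > 0$ and $c > 0$ (in fact, $c$ close to $\tfrac{3\pi}{4}$), which is integrable on $\R$. For bounded $y$, the integrand is continuous on $\{\Re(z)=x\}$ since $x$ is a non-integer (no pole of $\Gamma(-z)$ lies on this line), hence bounded. This gives absolute convergence of \eqref{fcdeftwo}. I do not anticipate a genuine obstacle: the only bookkeeping difference from the zero- and one-point analogues is the extra polynomial prefactor $|(z+\beta_1)(z+\beta_2)|$ coming from the two-point Coulomb-gas structure captured in Lemma \ref{fzfinal2}, and this is readily absorbed into the polynomial factor $(1+|y|)^K$.
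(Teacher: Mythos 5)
Your proof is correct and follows the same route as the paper's: combine the $\exp(-\tfrac{\pi}{2}|y|)$ decay from Lemma \ref{gammalmm}, the $\exp(-\tfrac{\pi}{4}|y|)$ decay from the $-\tfrac12 y\arg(z-w)$ term in Lemma \ref{fzfinal2}, and boundedness near $y=0$ from $x\notin\Z$, absorbing the polynomial prefactors. The paper phrases it more tersely but uses exactly this argument.
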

\begin{proof}
Since $\arg(x+\i y) \to \pm \frac{\pi}{2}$ as $y\to \pm \infty$, Lemma \ref{gammalmm} shows that $\Gamma(-x-\i y)$ decays like $\exp(-\frac{1}{2}\pi|y|)$ as $|y|\to \infty$. By Lemma \ref{fzfinal2}, $|f(x+\i y)|$ behaves like $\exp(-\frac{1}{4}\pi|y|)$ as $|y|\to\infty$. Thus, $\Gamma(-x-\i y)f(x+\i y)$ decays exponentially in $y$ as $|y|\to \infty$. Since $x$ is not an integer, these functions remain bounded near $y=0$. Lastly, $|(\mu e^{\sqrt{2}c})^{x+\i y}| = (\mu e^{\sqrt{2}c})^x$ has no dependence on $y$. This shows that the integral in equation \eqref{fcdeftwo} is absolutely convergent.
\end{proof}

We now proceed to complete the proof of Theorem \ref{corthm} following the same path as we did for the proofs of Theorem \ref{zerocthm} and Theorem \ref{onecthm}.  The following lemma is the analogue of Lemma \ref{zerofinal0} and Lemma \ref{zerofinal0one}. The proof is in \textsection\ref{twofinal0pf}.
\begin{lmm}\label{twofinal0}
Assume the setting of Lemma \ref{twoform1} and that $\Re(w)\in (-\frac{1}{2},0)$. Then for any $x_0\in (\Re(w),0)$ and $N\ge 1$, 
\begin{align*}
F(x_0) = F(N+x_0) + \sum_{n=0}^{N-1}\frac{(-1)^n}{n!}f(n) (\mu e^{\sqrt{2}c})^n.
\end{align*}
\end{lmm}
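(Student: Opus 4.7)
The plan is to run the contour-integral argument used for Lemma~\ref{zerofinal0} (zero-point case) and Lemma~\ref{zerofinal0one} (one-point case), this time with the two-point $f$ from Lemma~\ref{twoform1}. Concretely, I set $g(z) := \Gamma(-z)\,f(z)\,(\mu e^{\sqrt{2}c})^{z}$ and apply Cauchy's theorem on the counter-clockwise rectangular contour $C_R$ with vertices $x_0 \pm \i R$ and $N+x_0 \pm \i R$, then send $R \to \infty$.

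First I would verify analyticity and identify the enclosed poles. Since $\Re(w) \in (-\tfrac{1}{2},0)$ and $x_0 \in (\Re(w),0)$, every $z$ with $\Re(z) \in [x_0, N+x_0]$ satisfies $z - w \notin (-\infty,0]$ and is not a negative integer, so $z \in \Omega$ and $f$ is analytic on the strip. The only poles of $g$ inside $C_R$ are therefore the simple poles of $\Gamma(-z)$ at the nonnegative integers, and since $-1 < x_0 < 0$ and $N+x_0 < N$, these are exactly $n = 0, 1, \ldots, N-1$, with $\Res(g, n) = -\tfrac{(-1)^n}{n!} f(n) (\mu e^{\sqrt{2}c})^n$. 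Cauchy's theorem gives
\[\frac{1}{2\pi \i}\oint_{C_R} g(z)\,dz = -\sum_{n=0}^{N-1} \frac{(-1)^n}{n!} f(n) (\mu e^{\sqrt{2}c})^n.\]

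The two vertical sides, in the limit $R \to \infty$, reconstruct $F(x_0) - F(N+x_0)$, and each vertical integral converges absolutely by Lemma~\ref{fconvlmm2}. The crux is to show that the two horizontal sides contribute $o(1)$ as $R \to \pm \infty$. Here I would invoke Lemma~\ref{fzfinal2} with $\delta := x_0 - \Re(w) > 0$, so that the hypothesis $\Re(z) \ge \Re(w)+\delta$ holds on the whole closed strip, and $\Re(z) > -\tfrac{1}{2}$ is automatic since $x_0 > \Re(w) > -\tfrac{1}{2}$. For $z = x + \i R$ with $x \in [x_0, N+x_0]$, the lemma bounds $|f(z)|$ by a polynomial in $|z|$ times $\exp(-\tfrac{1}{2}\Im(z)\arg(z-w) + O(\log|z|))$. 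Since $\arg(z-w) \to \tfrac{\pi}{2}$ uniformly in $x$ as $R \to +\infty$ (symmetrically $\to -\tfrac{\pi}{2}$ as $R \to -\infty$), this gives $|f(x \pm \i R)| \le C(1+|R|)^K e^{-\pi |R|/4}$. Combined with the $e^{-\pi |R|/2}$-type decay of $|\Gamma(-x - \i R)|$ furnished by Lemma~\ref{gammalmm}, one gets $\max_{x_0 \le x \le N+x_0} |g(x \pm \i R)| \to 0$ exponentially fast, so the horizontal sides vanish in the limit and the stated identity follows.

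The only point I expect to need extra bookkeeping compared to Lemmas~\ref{zerofinal0} and~\ref{zerofinal0one} is the prefactor $(z+\beta_1)(z+\beta_2)$ that appears in the Lemma~\ref{fzfinal2} bound but is absent in the zero- and one-point analogues; it contributes an $O(|R|^2)$ polynomial growth on the horizontal sides. This factor is harmless, since the combined exponential decay $e^{-3\pi|R|/4}$ coming from $\Gamma(-z)f(z)$ absorbs it comfortably. Apart from this minor adjustment, the argument is a direct transcription of the proofs already given for the zero- and one-point cases.
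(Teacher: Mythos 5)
Your proposal is correct and is essentially the same argument as the paper's: the paper also defines $g(z) = \Gamma(-z)f(z)(\mu e^{\sqrt{2}c})^z$, integrates over the counter-clockwise rectangle with vertices $x_0 \pm \i R$ and $N+x_0 \pm \i R$, identifies the enclosed simple poles of $\Gamma(-z)$ at $0,\ldots,N-1$, and uses Lemmas~\ref{fzfinal2} and~\ref{gammalmm} to show the horizontal sides decay exponentially in $R$ uniformly over bounded $x$. Your bookkeeping of the constraint windows (in particular that $\Re(z) > -\tfrac12$ is automatic from $x_0 > \Re(w) > -\tfrac12$, and that the $(z+\beta_1)(z+\beta_2)$ prefactor contributes only polynomial growth) matches what the paper records in the bound following its display for $|g(x+\i R)|$.
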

The next lemma is the analogue of Lemma \ref{zerofinal} and Lemma \ref{zerofinalone}, showing that $F(N+x_0)$ tends to zero as $N\to \infty$ in the above lemma. The proof is in \textsection\ref{zerofinaltwopf}.
\begin{lmm}\label{zerofinaltwo}
Assume that $\Re(w)\in (-\frac{1}{2},0)$. Then for any $x_0\in (\Re(w),0)$, 
\[
\lim_{N\to \infty} F(N+x_0) = 0.
\]
Consequently,  by Lemma \ref{twofinal0}, 
\[
F(x_0) = \sum_{n=0}^\infty \frac{(-\mu e^{\sqrt{2}c})^n}{n!}f(n) = e^{-2\alpha_1\alpha_2}C(\balpha, \bx, 2^{-1/2}, \mu,c).
\]
\end{lmm}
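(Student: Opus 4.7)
The approach mirrors the proofs of Lemma \ref{zerofinal} and Lemma \ref{zerofinalone}: I would parametrize the contour as $z = N + x_0 + \i y$, combine the sharp bound on $|f(z)|$ from Lemma \ref{fzfinal2} with the bound on $|\Gamma(-z)|$ from Lemma \ref{gammalmm}, and show that the product decays fast enough in $N$ after integration in $y$. Since $x_0 \in (\Re(w), 0)$ and $\Re(w) \in (-\tfrac12, 0)$, one can fix $\delta \in (0, x_0 - \Re(w))$ so that for all sufficiently large $N$ the inequalities $\Re(z) = N+x_0 > -\tfrac12$ and $\Re(z) \geq \Re(w) + \delta$ both hold, placing $z$ in the regime where Lemma \ref{fzfinal2} delivers the effective bound.

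Concretely, for $z = N+x_0 + \i y$, combining the two estimates gives
\[
|\Gamma(-z) f(z)| \leq \frac{C_1 (N^2 + y^2) \exp\bigl(\tfrac12(N+x_0)\ln|z-w| - \tfrac12 y\arg(z-w) + y\arg(a-\i y) + C_2\ln(2+|z|)\bigr)}{\prod_{j=0}^{N}|j - N - x_0 - \i y|},
\]
where $a = 1-x_0 \in (1,2)$ and the constants depend only on $\alpha_1, \alpha_2, c, \mu, \delta$. Two observations control this expression. First, as $|y| \to \infty$ with $N$ fixed, the combination $-\tfrac12 y \arg(z-w) + y\arg(a-\i y)$ tends to $-\tfrac{3\pi}{4}|y|$ at leading order, so the integrand has exponential tail decay in $y$ uniform in $N$. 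Second, for bounded $|y|$ one has $|z - w| \asymp N$, so the numerator exponential is at most $\exp(\tfrac12 N \ln N + O(N))$, while the denominator $\prod_{j=0}^{N}|j - N - x_0 - \i y|$ is bounded below by $\prod_{j=0}^{N}(N+x_0-j) = \Gamma(N+x_0+1)/\Gamma(x_0+1)$, of order $N! \asymp \exp(N \ln N - N)$ by Stirling. Together these give a pointwise bound of order $\exp(-\tfrac12 N \ln N + O(N))$ on the bounded-$y$ region, which after multiplication by $|(\mu e^{\sqrt{2}c})^z| = \exp((N+x_0)(\ln \mu + \sqrt{2}c))$ still vanishes super-exponentially in $N$.

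The main obstacle will be handling the intermediate regime where $|y|$ is comparable to $N$: there $\ln|z-w| \approx \ln|y|$, so $\tfrac12(N+x_0)\ln|z-w|$ contributes a factor of order $|y|^{N/2}$ which must be absorbed by the Gamma denominator. Since each factor $|j - N - x_0 - \i y|\ge |y|$, one has $\prod_{j=0}^{N}|j - N - x_0 - \i y| \geq |y|^{N+1}$, producing a net $|y|^{-N/2-1}$-type decay that, combined with the argument-term exponential decay, gives an $N$-uniform dominator integrable in $y$. Dominated convergence (applied to the resulting pointwise vanishing bound) then delivers $F(N+x_0) \to 0$. Once this is proved, the second assertion is immediate: Lemma \ref{twofinal0} combined with $F(N+x_0) \to 0$ yields $F(x_0) = \sum_{n=0}^\infty \tfrac{(-\mu e^{\sqrt{2}c})^n}{n!}f(n)$, and Lemma \ref{twoform1} identifies this series with $e^{-2\alpha_1\alpha_2} C(\balpha, \bx, 2^{-1/2}, \mu, c)$.
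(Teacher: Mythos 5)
Your proposal is correct and follows essentially the same route as the paper's proof: combine the bound from Lemma~\ref{fzfinal2} with the bound from Lemma~\ref{gammalmm}, control the product in the denominator coming from the Gamma function, and show that after the $y$-integration one is left with a quantity of size $\exp(-cN\ln N + O(N))$. The paper (in \S\ref{zerofinaltwopf}) states the combined bound and then simply refers back to the change-of-variable $u = y/(x - a + 1)$ that was used in the proof of Lemma~\ref{zerofinalone}, which gives a single unified estimate; you instead argue by decomposing the $y$-integral into regimes $|y|$ bounded, $|y|\sim N$, and $|y|\gg N$. Both work. Two minor technical points: in your lower bound for the denominator you should keep absolute values, since $N+x_0-j$ is negative at $j=N$ (so the product is $|x_0|\,\Gamma(N+x_0+1)/\Gamma(x_0+1)$, not $\Gamma(N+x_0+1)/\Gamma(x_0+1)$), and the crude bound $\prod_{j=0}^{N}|j-z|\ge|y|^{N+1}$ is wasteful for moderate $|y|$ — the paper's inequality~\eqref{prodjineq}, based on comparing the log-sum to an integral, is what makes a single change of variable viable without case-splitting. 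Neither issue affects the conclusion.
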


We are now ready to complete the proof of Theorem \ref{corthm}. We only have to show that Lemma \ref{zerofinaltwo} remains valid even if we take $x_0 = \Re(w)$. Let $a$ and $r$ denote the real and imaginary parts of $w$. Lemma \ref{zerofinaltwo}  shows  that for any $x_0\in (a,0)$, 
\begin{align}\label{zalpha}
C(\balpha, \bx, 2^{-1/2}, \mu,c) &= \frac{e^{2\alpha_1\alpha_2}}{2\pi}\int_{-\infty}^\infty\Gamma(-x_0 - \i y)f(x_0+\i y) (\mu e^{\sqrt{2}c})^{x_0+\i y} dy.
\end{align}
Now, for almost all $y\in \R$,
\[
\lim_{x_0\downarrow a} \Gamma(-x_0 - \i y)f(x_0+\i y)  (\mu e^{\sqrt{2}c})^{x_0+\i y}  = \Gamma(-a - \i y)f(a+\i y)  (\mu e^{\sqrt{2}c})^{a+\i y}. 
\]
Thus, we can take $x_0=a$ in equation \eqref{zalpha} if we can show that the condition for applying the dominated convergence theorem holds. For this, take any $x_0\in (a,0)$ and recall that by Lemma \ref{fzfinal2}, we have that for any $z$ with $\Re(z)=x_0$,
\begin{align*}
f(z) &= (z+\beta_1)(z+\beta_2)\exp\biggl(z\log(z-w) +z\log(z-w+1)\\
&\qquad \qquad -\frac{3}{2}z\log (z-w+2)+ Bz + Q(z)\biggr),
\end{align*}
where $B$ is a real universal constant that, and $|Q(z)|\le C\log(2+|z|)$ for some positive constant $C$ that depends only on $\alpha_1,\alpha_2$. In the following, $C,C_0,C_1,\ldots$ will denote arbitrary positive constants that depend only on $\alpha_1,\alpha_2$, whose values may change from line to line. The above expression shows that for any $y\in \R$ and $z:=x_0+\i y$,
\begin{align*}
|f(z)| &\le C_1 (1+|y|)^{C_2}\biggl|\exp\biggl(z\log(z-w) + z\log(z-w+1) -\frac{3}{2}z\log (z-w+2)\biggr)\biggr|\\
&= C_1 (1+|y|)^{C_2}\exp\biggl(\Re\biggl\{z\log(z-w) + z\log(z-w+1) -\frac{3}{2}z\log (z-w+2)\biggr\}\biggr).
\end{align*}
Next, note that 
\begin{align*}
\Re\{z\log(z-w)\} &= x_0\ln|z-w| - y\arg(z-w),
\end{align*}
and similar expressions hold for the other two terms. Now, 
\begin{align*}
x_0\ln|z-w| + x_0\ln|z-w+1|-\frac{3}{2}x_0\ln|z-w+2| &\le x_0\ln|z-w|+ C\log(1+|y|),
\end{align*}
and 
\[
y \biggl(\frac{3}{2}\arg(z-w+2) - \arg(z-w+1)-\arg(z-w)\biggr) \le C_1 -C_2|y|. 
\]
Combining the above observations, we get
\begin{align}
|f(x_0+\i y)| &\le C_1 (1+|y|)^{C_2} e^{-C_3|y|} |z-w|^{x_0}\notag \\
&\le C_1 (1+|y|)^{C_2} e^{-C_3|y|}|y-r|^{x_0}\notag \\
&\le C_1 (1+|y|)^{C_2} e^{-C_3|y|}(1+|y-r|^{a}).\label{fcibd}
\end{align}
Also, by Lemma \ref{gammalmm}, 
\begin{align}\label{gammacibd}
|\Gamma(-x_0-\i y)|\le  \frac{C_1 (1+|y|)^{C_2}e^{-C_3|y|}}{|a|}.
\end{align}
Since $a\in (-\frac{1}{2},0)$  and the constants do not depend on $x_0$, the above bounds suffice to apply the dominated convergence theorem and complete the proof.

\subsubsection{Proof of Theorem \ref{infthm}}\label{infthmpf}
Note that $a$ and $r$ denote the real and imaginary parts of $w$. The $x_0$-free  bounds \eqref{fcibd} and \eqref{gammacibd} allow us take $x_0\to a$, and apply Fubini's theorem and conclude that 
\begin{align*}
C_\ep(\balpha,\bx, 2^{-1/2}, \mu) &= \frac{e^{2\alpha_1\alpha_2}}{2\pi}\int_{-\infty}^\infty\int_{-\infty}^\infty \Gamma(-a - \i y)f(a+\i y) (\mu e^{\sqrt{2}c})^{a+\i y} e^{-\sqrt{2}wc-\ep^2 c^2}dc dy\\
&= \frac{e^{2\alpha_1\alpha_2}}{2\pi}\int_{-\infty}^\infty\int_{-\infty}^\infty \Gamma(-a - \i y)f(a+\i y) \mu^{a+\i y} e^{\sqrt{2}\i (y-r)c-\ep^2 c^2}dc dy\\
&= \frac{e^{2\alpha_1\alpha_2}}{2\sqrt{\pi}\ep}\int_{-\infty}^\infty \Gamma(-a - \i y)f(a+\i y) \mu^{a+\i y} \exp\biggl(-\frac{(y-r)^2}{2\ep^2}\biggr)dy.
\end{align*}
Making the change of variable $u = (y-r)/\ep$, we get
\begin{align*}
C_\ep(\balpha,\bx, 2^{-1/2}, \mu) 
&= \frac{ e^{2\alpha_1\alpha_2}}{2\sqrt{\pi}}\int_{-\infty}^\infty \Gamma(-w- \i\ep u)f(w+\i \ep u) \mu^{w+\i \ep u} e^{-\frac{1}{2}u^2}du.
\end{align*}
Now, note that for any $u\in \R$,
\[
\lim_{\ep\to 0} \Gamma(-w- \i\ep u)\mu^{w+\i \ep u}= \Gamma(-w) \mu^{w}.
\]
Next, note that by Lemma \ref{gammalmm2}, 
\begin{align*}
f(z) &=  (z-w)^z\frac{(4\pi)^z e^{\frac{1}{2}z^2  -(\frac{3}{2}+w)z}\Gamma(z+1) G(z+\beta_1)G(z + \beta_2)}{(\Gamma(1+z-w))^{z}G(\beta_1)G(\beta_2)}
\end{align*}
for any $z$ such that $z-w\notin (-\infty,0]$. Now, 
\begin{align*}
&\lim_{z\to w} \frac{(4\pi)^z e^{\frac{1}{2}z^2  -(\frac{3}{2}+w)z}\Gamma(z+1) G(z+\beta_1)G(z + \beta_2)}{(\Gamma(1+z-w))^{z}G(\beta_1)G(\beta_2)} \\
&= \frac{(4\pi)^w e^{-\frac{1}{2}w^2  -\frac{3}{2}w}\Gamma(w+1) G(w+\beta_1)G(w + \beta_2)}{G(\beta_1)G(\beta_2)} =: S.
\end{align*}
On the other hand, if $z = w + \i \ep u$, then as $\ep \to 0$,
\begin{align*}
(z-w)^z &= (\i \ep u)^{w+\i \ep u} \\
&= \exp((w+\i \ep u)\log(\i \ep u))\\
&= \exp\biggl\{(w+\i \ep u)\biggl(\ln|\ep u| + \frac{\i \pi}{2}\sign(u) \biggr)\biggr\}\\
&\sim \ep^w (\i u)^w \ \ \text{ as $\ep \to 0$.}
\end{align*}
Next, note that by the bounds \eqref{fcibd} and \eqref{gammacibd}, 
\begin{align*}
|\ep^{-w} \Gamma(-w- \i\ep u)f(w+\i \ep u) \mu^{w+\i \ep u} | &\le C_1|\ep^{-w}|(1+ |\ep u|^{a})\le C_2,
\end{align*}
where $C_1,C_2$ do not depend on $u$. This allows us to apply the dominated convergence theorem and conclude that 
\begin{align*}
\lim_{\ep \to 0} \ep^{-w} C_\ep(\balpha,\bx,2^{-1/2},\mu) &= \frac{ e^{2\alpha_1\alpha_2}\Gamma(-w)\mu^wS}{2\sqrt{\pi }}\int_{-\infty}^\infty(\i u)^w e^{-\frac{1}{2}u^2} du.
\end{align*}
To evaluate the integral, note that 
\begin{align*}
&\int_0^\infty(\i u)^w e^{-\frac{1}{2}u^2} du = \int_0^\infty e^{w\ln u + \frac{1}{2}\i \pi w}e^{-\frac{1}{2}u^2} du,\\
&\int_{-\infty}^0(\i u)^w e^{-\frac{1}{2}u^2} du = \int_0^\infty e^{w\ln u - \frac{1}{2}\i \pi w}e^{-\frac{1}{2}u^2} du.
\end{align*}
Thus, 
\begin{align}\label{cosid}
\frac{1}{2}\int_{-\infty}^\infty(\i u)^w e^{-\frac{1}{2}u^2} du &= \cos\biggl(\frac{\pi w}{2}\biggr)\int_0^\infty u^we^{-\frac{1}{2}u^2} du \notag \\
&=2^{\frac{1}{2}(w-1)}\cos\biggl(\frac{\pi w}{2}\biggr) \int_0^\infty s^{\frac{1}{2}(w-1)} e^{-s} ds\notag \\
&=2^{\frac{1}{2}(w-1)} \cos\biggl(\frac{\pi w}{2}\biggr)\Gamma \biggl(\frac{w+1}{2}\biggr),
\end{align}
where the last identity holds because $\Re(\frac{1}{2}(w+1)) > 0$. This completes the proof of the main assertion of Theorem \ref{infthm}. To show that the limit is nonzero, we only need to show that $G(w+\beta_1)$ and $G(w+\beta_2)$ are nonzero. Since the only zeros of $G$ are at the nonpositive integers, we have to show that $w+\beta_1$ and $w+\beta_2$ are not nonpositive integers. But $w+\beta_1= -\sqrt{2}\alpha_2$ and $w+\beta_2 = -\sqrt{2}\alpha_1$. Thus, we have to show that 
$\sqrt{2}\alpha_j$ is not a nonnegative integer for $j=1,2$. By assumption, we have $\Re(\sqrt{2}\alpha_j) > -1$. On the other hand, since $\Re(w)\in (-1,0)$, we have 
\[
\Re(\sqrt{2}\alpha_1) + \Re(\sqrt{2}\alpha_2) \in (-1,0).
\]
So, if $\sqrt{2}\alpha_1 = n$ for a nonnegative integer $n$, then 
\[
\Re(\sqrt{2}\alpha_2) \in (-n-1,-n),
\]
But this is impossible unless $n=0$, since $\Re(\sqrt{2}\alpha_2)>-1$.

\subsubsection{Proof of Theorem \ref{zerocorthm}}\label{zerocorthmpf}
For each $\epsilon\in (0,\frac{1}{4})$, let $\alpha_{1,\epsilon} := \alpha_1 + \sqrt{2}\epsilon$. Let $\balpha_\epsilon := (\alpha_{1,\epsilon}, \alpha_2)$, $w_\epsilon := -1-\sqrt{2}(\alpha_{1,\epsilon}+\alpha_2)= w-2\epsilon$, and $\beta_{1,\epsilon} := 1+\sqrt{2}\alpha_{1,\epsilon} = \beta_1 + 2\epsilon$. Then note that $\Re(\alpha_{1,\ep})=\Re(\alpha_1)+\sqrt{2}\epsilon > -\frac{1}{\sqrt{2}}$, $\Re(\alpha_2)=0>-\frac{1}{\sqrt{2}}$, and  $\Re(w_\epsilon) = -2\epsilon \in (-\frac{1}{2},0)$. Thus, we can apply Lemma \ref{zerofinaltwo} to conclude that for any $x\in (-2\epsilon, 0)$, 
\begin{align*}
C(\balpha_\epsilon, \bx, 2^{-1/2},\mu,c) = \frac{e^{2\alpha_{1,\epsilon}\alpha_2}}{2\pi \i}\int_{\Re(z)=x}\Gamma(-z)f_\epsilon(z) (\mu e^{\sqrt{2}c})^z dz,
\end{align*}
where 
\[
f_\epsilon(z) := \frac{(4\pi)^z e^{\frac{1}{2}z^2  -(\frac{3}{2}+w_\epsilon)z}\Gamma(z+1) G(z+\beta_{1,\epsilon})G(z + \beta_2)}{(\Gamma(z-w_\epsilon))^zG(\beta_{1,\epsilon})G(\beta_2)}.
\]
The following lemma allows us to replace $x$ by $q\in (0,1)$, fixing $\ep$, after incurring an extra term due to crossing a pole at zero.
\begin{lmm}\label{limlmm1}
For any $q\in (0,1)$, we have
\begin{align*}
C(\balpha_\epsilon, \bx, 2^{-1/2},\mu,c)  &= \frac{e^{2\alpha_{1,\epsilon}\alpha_2}}{2\pi}\int_{-\infty}^\infty\Gamma(-q-\i y) f_\epsilon(q+\i y) (\mu e^{\sqrt{2}c})^{q+\i y}  dy + e^{2\alpha_{1,\epsilon}\alpha_2}f_\ep(0),
\end{align*}
and the integral on the right is absolutely convergent.
\end{lmm}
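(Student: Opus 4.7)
The starting point is the contour integral formula
\[
C(\balpha_\epsilon, \bx, 2^{-1/2},\mu,c) = \frac{e^{2\alpha_{1,\epsilon}\alpha_2}}{2\pi \i}\int_{\Re(z)=x}\Gamma(-z)f_\epsilon(z) (\mu e^{\sqrt{2}c})^z dz
\]
valid for any $x\in(-2\ep,0)$, established in the paragraph just before the lemma (via Lemma \ref{zerofinaltwo} applied with the perturbed parameters $\balpha_\ep, w_\ep, \beta_{1,\ep}$). My plan is to deform the contour rightward across the point $z=0$ to the new vertical line $\Re(z)=q$ with $q\in(0,1)$, and to account for the pole of $\Gamma(-z)$ encountered during the shift by a residue calculation.

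First I would identify the singularities of the integrand $g(z):=\Gamma(-z)f_\ep(z)(\mu e^{\sqrt{2}c})^z$ inside the strip $x<\Re(z)<q$. The factor $\Gamma(-z)$ has simple poles only at nonnegative integers, and since $q<1$ only $z=0$ lies in the strip. The function $f_\ep$ is analytic on $(w_\ep+(\C\setminus(-\infty,0]))\setminus\{-1,-2,\ldots\}$, and in our strip we have $\Re(z-w_\ep)\ge x+2\ep>0$ (so $z-w_\ep\notin(-\infty,0]$) and $\Re(z)>-1$, hence $f_\ep$ is analytic throughout. Near $w=0$, $\Gamma(w)=1/w+O(1)$, so $\Gamma(-z)=-1/z+O(1)$ and $\Res_{z=0}g=-f_\ep(0)$.

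Next I would apply Cauchy's theorem to the counterclockwise rectangular contour with vertices $x\pm\i M$ and $q\pm\i M$. Since only the pole at $z=0$ is enclosed, this yields
\[
\frac{1}{2\pi\i}\int_{\Re(z)=q,\,|y|<M}g(z)dz - \frac{1}{2\pi\i}\int_{\Re(z)=x,\,|y|<M}g(z)dz = -f_\ep(0) - E_M,
\]
where $E_M$ collects the two horizontal segments. Letting $M\to\infty$, multiplying by $e^{2\alpha_{1,\ep}\alpha_2}$, and writing $z=q+\i y$ yields the stated formula, provided $E_M\to 0$.

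The main technical step is showing that the horizontal segments vanish as $M\to\infty$. For $z=t\pm\i M$ with $x\le t\le q$, the relevant argument in Lemma \ref{gammalmm} tends to $\mp\pi/2$, giving $|\Gamma(-z)|\lesssim M^{C}e^{-\pi M/2}$ uniformly in $t$. Lemma \ref{fzfinal2} applies uniformly on this strip: the condition $\ep<\tfrac14$ ensures $\Re(z)>-\tfrac12$, and the strict inequality $x>\Re(w_\ep)=-2\ep$ provides the required separation $\delta=x+2\ep>0$; it yields $|f_\ep(z)|\lesssim M^{C'}e^{-\pi M/4}$ (the polynomial prefactor $(z+\beta_{1,\ep})(z+\beta_2)$ grows only polynomially in $M$). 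Since $|(\mu e^{\sqrt{2}c})^z|=(\mu e^{\sqrt{2}c})^t$ is bounded uniformly over $t\in[x,q]$, the product $|g(z)|$ decays exponentially in $M$ and integration over $t\in[x,q]$ preserves this decay, so $E_M\to 0$. The same pair of bounds, specialized to fixed $\Re(z)=q$, shows that $\int_{-\infty}^\infty|\Gamma(-q-\i y)f_\ep(q+\i y)|dy<\infty$, giving absolute convergence and completing the proof.
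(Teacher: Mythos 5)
Your proof is correct and takes essentially the same approach as the paper's: shift the vertical contour from $\Re(z)=x\in(-2\ep,0)$ to $\Re(z)=q\in(0,1)$, picking up the residue $-f_\ep(0)$ at the simple pole of $\Gamma(-z)$ at $z=0$, using Lemmas~\ref{gammalmm} and~\ref{fzfinal2} to control the horizontal edges. The only difference is cosmetic — you spell out the rectangular-contour argument and the uniform exponential decay of the horizontal segments in more detail than the paper, which simply asserts that the bounds ``allow us to shift the contour.''
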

\begin{proof}
Let $g(z) := \Gamma(-z)f_\ep(z)(\mu e^{2\sqrt{c}})^z$. Take any $x\in (-2\ep,0)$. The only pole of $g$ in the strip $-2\ep < \Re(z)< 1$ is at $0$, arising due the pole of the Gamma function at $0$, and 
\[
\Res(g,0) = -f_\ep(0)\Res(\Gamma,0) = -f_\ep(0).
\]
By the bounds on $|f_\ep(z)|$ and $|\Gamma(-z)|$ from Lemma \ref{fzfinal2} and Lemma \ref{gammalmm}, this allows us to shift the contour of integration below, to get
\begin{align*}
C(\balpha_\epsilon, \bx, 2^{-1/2},\mu,c) &= \frac{e^{2\alpha_{1,\epsilon}\alpha_2}}{2\pi \i}\int_{\Re(z)=x}g(z) dz\\
&= \frac{e^{2\alpha_{1,\epsilon}\alpha_2}}{2\pi \i}\int_{\Re(z)=q}g(z) dz -e^{2\alpha_{1,\epsilon}\alpha_2}\Res(g,0)\\
&= \frac{e^{2\alpha_{1,\epsilon}\alpha_2}}{2\pi}\int_{-\infty}^\infty g(q+\i y) dy + e^{2\alpha_{1,\epsilon}\alpha_2}f_\ep(0).
\end{align*}
This completes the proof.
\end{proof}
We are now ready to prove Theorem \ref{zerocorthm}, by taking $\ep\to 0$. Since $w$ is purely imaginary, let us write $w = \i a$ for some $a\in \R$, so that $w_\epsilon = -2\epsilon + \i a$. 
By Lemma \ref{convergence}, we know that 
\[
C(\balpha, \bx, 2^{-1/2},\mu,c) = \lim_{\ep\to 0} C(\balpha_\epsilon, \bx, 2^{-1/2},\mu,c).
\]
Thus, we need to compute the limit of the right side in Lemma \ref{limlmm1}. An easy verification shows that for any $y\ne a$ and $x\in [0,1)$, 
\begin{align*}
\lim_{\epsilon \to 0} f_\epsilon(x+ \i y) &=   f(x+\i y).
\end{align*}
In particular, since $a\ne 0$, we have 
\begin{align*}
\lim_{\epsilon \to 0} f_\epsilon(0) &=   f(0)=1.
\end{align*}
So, it only remains to show that we can move $\lim_{\ep\to 0}$ inside the integral displayed in Lemma~\ref{limlmm1}. But again, this is easy from the bounds given by Lemma~\ref{fzfinal2} and Lemma~\ref{gammalmm}, which give $\ep$-independent control on the integrand at large $y$, and the fact that $q>0$, which gives $\ep$-independent control on the integrand at small $y$.

\subsubsection{Proof of Theorem \ref{distthm}}\label{distthmpf}
Let $a := -\sqrt{2}(P_1+P_2)$, so that $w = \i a$. Recall that 
\[
\beta_1 = 1+\sqrt{2}\alpha_1 = \frac{1}{2} + \i\sqrt{2} P_1,\ \ \ \beta_2 = \frac{1}{2} + \i\sqrt{2} P_2 = \frac{1}{2} + \i (-a-\sqrt{2}P_1).
\]
Now, we have 
\begin{align*}
2\alpha_1\alpha_2 &= 2\biggl(-\frac{1}{2\sqrt{2}} + \i P_1\biggr) \biggl(-\frac{1}{2\sqrt{2}} +\i P_2\biggr)\\
&= \frac{1}{4} -\frac{\i}{\sqrt{2}}(P_1+P_2) -2P_1P_2\\
&= \frac{1}{4} + \frac{\i a}{2} - 2P_1\biggl(-\frac{a}{\sqrt{2}} - P_1 \biggr)= \frac{1}{4} + \frac{\i a}{2} + \sqrt{2}P_1 a + 2P_1^2.
\end{align*}
Thus, fixing $P_1$, we have $e^{2\alpha_1\alpha_2} = J(a)$, where 
\[
J(a) := \exp\biggl(\frac{1}{4} + \frac{\i a}{2} + \sqrt{2}P_1 a + 2P_1^2\biggr).
\]
We will continue with $P_1$ fixed. Define a function $F$ of two complex variables as 
\begin{align*}
F(u,v) &:= (4\pi\mu)^{\i u}\exp\biggl(\frac{1}{4} + \frac{\i v}{2} + \sqrt{2}P_1 v + 2P_1^2+\frac{1}{2}u^2  -\biggl(\frac{3}{2}+\i v\biggr)\i u\biggr)  \\
&\qquad \cdot (\Gamma(1+\i (u-v)))^{-\i u}\Gamma(1-\i u) \Gamma(1+\i u)\\
&\qquad \cdot\frac{G(\frac{1}{2} + \i(u+\sqrt{2} P_1))G(\frac{1}{2} + \i (u-v-\sqrt{2}P_1))}{G(\frac{1}{2} + \i\sqrt{2} P_1)G(\frac{1}{2} + \i (-v-\sqrt{2}P_1))},
\end{align*}
wherever the right side makes sense. This function arises due to the following lemma.
\begin{lmm}\label{cnewlmm}
We have, for any $q\in (0,1)$, 
\begin{align*}
C(\balpha, \bx, 2^{-1/2}, \mu,c) &=  -\frac{1}{2\pi}\int_{-\infty}^\infty  \frac{(q + \i (y-a))^{q+\i y}}{q+\i y}F(y-\i q,a) e^{\sqrt{2}c(q+\i y)}dy + J(a).
\end{align*}
\end{lmm}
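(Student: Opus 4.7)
The plan is to derive Lemma~\ref{cnewlmm} directly from the Mellin--Barnes representation in Theorem~\ref{zerocorthm} by recognising that $F(y-\i q,a)$ is precisely the integrand of that representation, repackaged under a change of variables. Writing $z := q+\i y$ and using $w=\i a$, we have $z-w = q+\i(y-a)$, so the prefactor $(q+\i(y-a))^{q+\i y}$ in the target formula is $(z-w)^z$. The substitution $u=y-\i q$ yields $\i u = z$, which is the bridge between the arguments of $F$ and the Mellin variable of Theorem~\ref{zerocorthm}. Since the additive constants in both expressions match via the identity $J(a)=e^{2\alpha_1\alpha_2}$ — a direct computation using $\alpha_j = -\tfrac{1}{2\sqrt{2}}+\i P_j$ and $P_1+P_2=-a/\sqrt{2}$ — everything reduces to a pointwise identity between the two integrands.

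To verify this pointwise identity, I would check each factor in turn. Under $\i u = z$, the two Barnes $G$-functions $G(\tfrac12+\i(u+\sqrt{2}P_1))$ and $G(\tfrac12+\i(u-a-\sqrt{2}P_1))$ in $F$ become $G(z+\beta_1)$ and $G(z+\beta_2)$, matching the $G$-factors of $f(z)$, while the denominator $G(\tfrac12+\i\sqrt{2}P_1)G(\tfrac12+\i(-a-\sqrt{2}P_1))$ is exactly $G(\beta_1)G(\beta_2)$. For the Gamma content, two classical identities do the work: first, $\Gamma(1+\zeta)=\zeta\Gamma(\zeta)$ applied to $\Gamma(1+\i(u-a))^{-\i u} = \Gamma(1+z-w)^{-z}$ produces $(z-w)^{-z}\Gamma(z-w)^{-z}$, and the $(z-w)^{-z}$ cancels exactly against the $(z-w)^z$ prefactor in the claim, leaving the $\Gamma(z-w)^{-z}$ that appears in the denominator of $f(z)$; second, the reflection $\Gamma(1-z)=-z\,\Gamma(-z)$ combined with the explicit $-1/z$ in the claim converts $-\Gamma(1-\i u)\Gamma(1+\i u)/z$ into $\Gamma(-z)\Gamma(z+1)$, matching the Gamma combination in $\Gamma(-z)f(z)$. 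The remaining exponential and power-of-$\mu$ factors — namely $(4\pi\mu)^{\i u}$ and $\exp(\tfrac12 u^2 - (\tfrac32+\i a)\i u)$ from $F$, together with the external $e^{\sqrt{2}cz}$ — are matched to $(4\pi)^z\mu^z e^{\sqrt{2}cz}e^{\frac{1}{2}z(z-3-2w)}$ in $\mu^z f(z)e^{\sqrt{2}cz}$ by writing $u=-\i z$, expanding $u^2$ in terms of $z^2$, and cancelling.

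Once this pointwise identity is in hand, one integrates both sides in $y$ over $\R$ and adds the constant $J(a)=e^{2\alpha_1\alpha_2}$ coming from Theorem~\ref{zerocorthm}, and the lemma follows. The integrability needed to justify the resulting integral formulation is inherited from the absolute convergence established in Theorem~\ref{zerocorthm} together with $q\in(0,1)$. The main obstacle is therefore not conceptual but purely bookkeeping: keeping careful track of the principal branches implicit in $(z-w)^z$, $\Gamma(z-w)^{-z}$, and $\Gamma(1+z-w)^{-z}$, and ensuring every exponential factor precisely cancels. The hypotheses $q\in(0,1)$ and $a\in\R$ guarantee that $z-w$ stays in $\C\setminus(-\infty,0]$ uniformly in $y\in\R$, so all the implicit logarithms in these powers are unambiguous and the manipulations above are valid without additional analytic continuation.
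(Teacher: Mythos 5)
Your proposal takes essentially the same route as the paper's proof: both recognize $F(y-\i q,a)$ as the Mellin--Barnes integrand of Theorem~\ref{zerocorthm} under the substitution $z=\i u$, and verify the match factor by factor using the Gamma recurrence $\Gamma(\zeta+1)=\zeta\Gamma(\zeta)$ in its ordinary form (to absorb the explicit $-1/z$ into $\Gamma(-z)$) and in its branch-respecting power form via Lemma~\ref{gammalmm2} (to convert $\Gamma(1+z-w)^{-z}$ into $(z-w)^{-z}\Gamma(z-w)^{-z}$ and cancel the $(z-w)^{z}$ prefactor). One wrinkle to note when you actually do the bookkeeping: since $u=-\i z$ gives $u^{2}=-z^{2}$, the quadratic term in the displayed definition of $F$ needs to read $-\tfrac12 u^{2}$ (not $+\tfrac12 u^{2}$) for the cancellation against $e^{\frac12 z^{2}}$ in $f(z)$ to go through; that this is a misprint rather than an obstruction is corroborated by Lemma~\ref{annoying}, which requires $F(y-\i x, v)$ to decay as $|y|\to\infty$, and both your argument and the paper's implicitly use the corrected sign. (Minor terminological slip: $\Gamma(1-z)=-z\Gamma(-z)$ is the recurrence, not the reflection formula.)
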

\begin{proof}
By Theorem \ref{zerocorthm}, we have 
\begin{align*}
C(\balpha, \bx, 2^{-1/2}, \mu,c) &= \frac{J(a)}{2\pi }\int_{-\infty}^\infty \Gamma(-q-\i y)f(q+\i y) (\mu e^{\sqrt{2}c})^{q+\i y} dy + J(a),
\end{align*}
where
\begin{align*}
f(z) = \frac{(4\pi)^z e^{\frac{1}{2}z^2  -(\frac{3}{2}+w)z}\Gamma(z+1) G(z+\beta_1)G(z + \beta_2)}{(\Gamma(z-w))^zG(\beta_1)G(\beta_2)}.
\end{align*}
Now, from the formula for $F$, we get
\begin{align*}
F(u, a) &= (4\pi \mu)^{\i u} J(a) e^{-\frac{1}{2}(\i u)^2  -(\frac{3}{2}+w)\i u}  \\
&\qquad \cdot (\Gamma(1+\i u-w))^{-\i u}\Gamma(1-\i u) \Gamma(1+\i u)\frac{G(\i u+\beta_1)G(\i u+\beta_2)}{G(\beta_1)G(\beta_2)}.
\end{align*}
By  Lemma \ref{gammalmm2}, this shows that 
\begin{align}\label{fuvid}
F(u, a) &= (4\pi \mu)^{\i u} J(a) e^{-\frac{1}{2}(\i u)^2  -(\frac{3}{2}+w)\i u}  \notag\\
&\qquad \cdot (\i u - w)^{-\i u}(\Gamma(\i u-w))^{-\i u}(-\i u)\Gamma(-\i u) \Gamma(1+\i u)\frac{G(\i u+\beta_1)G(\i u+\beta_2)}{G(\beta_1)G(\beta_2)}\notag \\
&= -\frac{\i u}{(\i u-w)^{\i u}} J(a) \Gamma(-\i u) f(\i u)\mu^{\i u}.
\end{align}
Taking $u = y-\i q$, we get 
\[
J(a)\Gamma(-q-\i y)f(q+\i y) \mu^{q+\i y} = - \frac{(q+\i(y-a))^{q+\i y}}{q+\i y}F(q+\i y, a).
\]
This completes the proof.
\end{proof}
We need certain bounds for $F$ and its derivatives before proceeding further.
\begin{lmm}\label{annoying}
Let $D$ be any mixed partial derivative operator of any order (including the identity operator). Let $K$ be a positive real number and take any $v\in (-K,K)$. Then there are positive constants $C_1,C_2$ depending only on $\mu$, $P_1$, $K$, and $D$ such that for any $u\in \C$ with $|\Im(u)|< \frac{1}{10}$, we have $|DF(u,v)|\le C_1 e^{-C_2|u|}$. 
\end{lmm}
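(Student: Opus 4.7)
The plan is a two-step argument. First, I would establish the pointwise bound $|F(u, v)| \le C_1 e^{-C_2|u|}$ on the prescribed region. Second, I would upgrade this to a bound on any mixed partial derivative $DF(u, v)$ via Cauchy's integral formula on a small polydisk, using that $F$ is jointly holomorphic on a slightly larger open neighborhood of the given region.

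For the pointwise bound, I would apply Stirling's asymptotic $\log\Gamma(z) = (z-\tfrac12)\log z - z + \tfrac12\log(2\pi) + O(1/z)$ to the factors $\Gamma(1\pm\i u)$ and to the analytic logarithm $\Pi(1+\i(u-v))$ that defines $(\Gamma(1+\i(u-v)))^{-\i u}$, and the Barnes expansion $\log G(z+1) = \tfrac{z^2}{2}\log z - \tfrac{3z^2}{4} + \tfrac{z}{2}\log(2\pi) - \tfrac{1}{12}\log z + O(1)$ (valid for $|z|\to\infty$ in $|\arg z|<\pi-\delta$) to each of the four Barnes $G$-factors. I would track the real parts for $u = x+\i y$ with $|y|<1/10$, $v\in(-K,K)$, and $|u|$ large. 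The key structural fact is that the a priori super-exponential contributions cancel: the $u^2\log u$ coefficient arising from $(\Gamma(1+\i(u-v)))^{-\i u}$ cancels the combined $-u^2\log u$ coming from the two $G$'s in the numerator, and the $u^2$ coefficients (from the explicit Gaussian-in-$u$ factor, the Stirling tails, and the two $G$'s) likewise sum to zero. This cancellation is essentially the same one that makes the Mellin--Barnes integrand of Theorem \ref{corthm} integrable along vertical lines. Using the reflection-formula identity $\Gamma(1-\i u)\Gamma(1+\i u) = \pi u/\sinh(\pi u)$, whose modulus is $\sim 2\pi|u|\,e^{-\pi|\Re u|}$, together with the surviving linear-in-$|u|$ remainders (with coefficients bounded uniformly in $\mu, P_1, K$), yields the claimed exponential bound on $F$.

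For the derivatives, observe that every argument of a $\Gamma$ or $G$ appearing in the denominators of $F$ has real part bounded away from $(-\infty,0]$ throughout a small complex neighborhood of $\{|\Im u|<1/10\}\times(-K,K)$, so $F$ is jointly holomorphic on the enlarged region $\Omega := \{|\Im u|<1/5\}\times\{v\in\C : |\Re v|<K+1,\ |\Im v|<1/5\}$. The first-step argument applies verbatim to $\Omega$, with possibly larger constants. For any mixed derivative $D=\partial_u^k\partial_v^l$, Cauchy's integral formula on a polydisk of fixed radius $r$ (depending only on $k,l,K$ and small enough to stay inside $\Omega$) gives
\[
|DF(u, v)| \le \frac{k!\,l!}{r^{k+l}} \sup_{|u'-u|=r,\ |v'-v|=r} |F(u', v')| \le C'_1 e^{-C'_2(|u|-r)} \le C_1 e^{-C_2 |u|}.
\]
The main obstacle is the bookkeeping in Step 1: one must carefully track the leading $u^2\log u$, $u^2$, and linear-in-$|u|$ pieces coming from four different factors and verify that they telescope to leave only the exponential decay from the reflection-formula factor. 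The cancellation is forced by consistency with the integrability already established in the proof of Theorem \ref{corthm}, but organizing it cleanly (e.g.\ by separating the exponent into quadratic-logarithm, quadratic, linear, and bounded parts, each tracked to leading order) is the bulk of the work; the derivative upgrade in Step 2 is then routine.
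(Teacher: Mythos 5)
Your two-step plan is essentially correct, and your Step 2 (pass from a pointwise bound on a slightly larger open neighborhood to all mixed partials via Cauchy's estimates) matches the paper exactly. The difference is in Step 1. You propose to redo the Stirling/Barnes asymptotics of the four $G$-factors, the $\Gamma$-factors, and the $(\Gamma(1+\i(u-v)))^{-\i u}$ factor from scratch, and track the cancellation of the $u^2\log u$ and $u^2$ pieces by hand. The paper avoids this entirely: it observes (via equation \eqref{fuvid}) that
\[
F(y-\i x,v) \;=\; -e^{2\tilde{\alpha}_1\tilde{\alpha}_2}\,(x+\i y)\,(x + \i(y-v))^{-x-\i y}\,\Gamma(-x-\i y)\,\tilde f(x+\i y)\,\mu^{x+\i y},
\]
where $\tilde f$ is the function $f$ from Theorem \ref{corthm} with $\alpha_1$ left alone and $\alpha_2$ replaced by $\tilde\alpha_2 := -\tfrac{1}{2\sqrt2}+\i(-v/\sqrt2 - P_1)$. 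This lets the paper simply cite Lemma \ref{fzfinal2} for $\tilde f$ and Lemma \ref{gammalmm} for $\Gamma$, where the whole cancellation has already been verified once, multiply the three estimates
\[
|\tilde f(x+\i y)| = e^{-\frac{\pi}{4}|y|+O(\ln|y|)}, \quad |\Gamma(-x-\i y)| = e^{-\frac{\pi}{2}|y|+O(\ln|y|)}, \quad |(x+\i(y-v))^{-x-\i y}| = e^{+\frac{\pi}{2}|y|+O(\ln|y|)},
\]
and conclude. As a bonus, the uniformity in $v$ over a complex neighborhood comes for free, because the hypotheses of Lemma \ref{fzfinal2} ($\Re(\tilde\alpha_2) > -\tfrac{1}{\sqrt2}$, etc.) are open conditions that remain satisfied as long as $|\Im v|$ is small. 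Your direct route would also work --- the reflection formula for $\Gamma(1\pm\i u)$ and the leading Barnes asymptotics do cancel as you say --- but you would have to carry the $v$-dependence through each term of the expansion explicitly, and the bookkeeping would duplicate work already done in proving Lemma \ref{fzfinal2}. The paper's factorization is the cleaner path: it trades a fresh asymptotic computation for a single algebraic identity plus citations.
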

\begin{proof}
Define the region 
\[
\Omega_0 := \biggl\{(u,v)\in \C^2: |\Re(v)|<K, \, |\Im(v)|< \frac{1}{8}, \, |\Im(u)|< \frac{1}{8}\biggr\}.
\]
Let $\Omega_1$ be the domain obtained by replacing the $\frac{1}{8}$'s with $\frac{1}{9}$'s above, and let $\Omega_2$ be obtained by replacing the $\frac{1}{8}$'s with $\frac{1}{10}$'s. 

Take any $(u,v)\in \Omega_0$. Then  $|\Re(\i u)|< \frac{1}{8}$ and $|\Re(\i v)|< \frac{1}{8}$; thus, $|\Re(\i(u-v))|< \frac{1}{4}$. From this and the formula for $F$, we see that $F$ is analytic on the domain $\Omega_0$. Thus, if we can show that the required bound holds for $|F(u,v)|$ on $\Omega_1$, then Cauchy's estimates for derivatives of analytic functions will complete the proof of the bound for $|DF(u,v)|$ on~$\Omega_2$.

The analyticity of $F$ on $\Omega_0$ also implies that it $F$ is uniformly bounded in any bounded subset of $\Omega_1$. Thus, suffices to prove that there exist $C_1,C_2,C_3$ such that $|F(u,v)|\le C_1 e^{-C_2|u|}$ for any $(u,v)\in \Omega_1$ with $|u|>C_3$. In other words, we have to show that  $|F(y-\i x,v)|$ decays exponentially in $|y|$ as $|y|\to \infty$, with a decay rate that is uniform over all $v\in \C$ with $|\Re(v)|< K$ and $|\Im(v)|<\frac{1}{9}$, and over all $x\in \R$ with $|x|<\frac{1}{9}$.

To understand the asymptotics of $F(y-\i x,v)$ as $|y|\to \infty$, we will appeal to Lemma~\ref{fzfinal2} and Lemma \ref{gammalmm}. Throughout the following, implicit constants in $O$'s are uniform over all $v,x$ in the above regions. First note that by the identity~\eqref{fuvid}, 
\begin{align*}
F(y-\i x,v) &= -e^{2\tilde{\alpha}_1\tilde{\alpha}_2}(x+\i y) (x + \i(y-a))^{-x-\i y} \Gamma(-x-\i y) \tilde{f}(x+\i y)\mu^{x+\i y},
\end{align*}
where 
\[
\tilde{\alpha}_1 := \alpha_1 = -\frac{1}{2\sqrt{2}}+\i P_1,\ \ \ \tilde{\alpha}_2 := -\frac{1}{2\sqrt{2}} + \i \biggl(-\frac{v}{\sqrt{2}}-P_1\biggr),
\]
and $\tilde{f}$ is the version of $f$ obtained with $\tilde{\alpha}_1,\tilde{\alpha}_2$ in place of $\alpha_1,\alpha_2$. Note that then $w$ is replaced by 
\[
\tilde{w} = -1-\sqrt{2}(\tilde{\alpha}_1+\tilde{\alpha_2}) = \i v.
\]
Note also that $\Re(\tilde{\alpha}_1)=\Re(\alpha_1)= -\frac{1}{2\sqrt{2}} >-\frac{1}{\sqrt{2}}$, and 
\[
\Re(\tilde{\alpha}_2) = -\frac{1}{2\sqrt{2}}+ \frac{1}{\sqrt{2}} \Im(v)>-\frac{1}{\sqrt{2}}.
\]
Also, $\Re(x+\i y) = x > -\frac{1}{2}$. Thus, by Lemma \ref{fzfinal2}, 
\[
|\tilde{f}(x+\i y)|=\exp\biggl(-\frac{\pi}{4}|y|+O(\ln|y|)\biggr) \ \ \text{ as $|y|\to\infty$.}
\]
By Lemma \ref{gammalmm}, 
\[
\Gamma(-x-\i y) = \exp\biggl(-\frac{\pi}{2}|y|+O(\ln|y|)\biggr) \ \ \text{ as $|y|\to\infty$.}
\]
Lastly, 
\[
|(x + \i(y-v))^{-x-\i y}| = \exp\biggl(\frac{\pi}{2}|y|+O(\ln|y|)\biggr) \ \ \text{ as $|y|\to \infty$.}
\]
Combining, we get that 
\[
|F(y+\i x,v)| = \exp\biggl(-\frac{\pi}{4}|y|+O(\ln|y|)\biggr) \ \ \text{ as $|y|\to\infty$.}
\]
This completes the proof.
\end{proof}
We will now apply Lemma \ref{cnewlmm} with $q=\ep$. Lemma \ref{annoying} allows us to exchange the order of integration below, to get
\begin{align*}
C_\ep(\balpha, \bx, 2^{-1/2},\mu) &= \int_{-\infty}^\infty e^{-\sqrt{2}\i a c - \ep^2 c^2} C(\balpha, \bx,2^{-1/2},\mu, c) dc\\
&=- \frac{1}{2\pi}\int_{-\infty}^\infty  \int_{-\infty}^\infty \frac{(\ep + \i (y-a))^{\ep+\i y}}{\ep+\i y}F(y-\i \ep,a) e^{\sqrt{2}c(\ep+\i (y-a)) -\ep^2 c^2}dc dy \\
&\qquad + \int_{-\infty}^\infty J(a)e^{-\sqrt{2}\i a c - \ep^2 c^2}dc.
\end{align*}
Evaluating the integrals over $c$, we get 
\begin{align*}
C_\ep(\balpha, \bx, 2^{-1/2},\mu) &=- \frac{1}{2\sqrt{\pi}\ep}\int_{-\infty}^\infty  \frac{(\ep + \i (y-a))^{\ep+\i y}}{\ep+\i y}F(y-\i \ep,a) e^{(\ep+\i (y-a))^2/(2\ep^2)}dy \\
&\qquad +\frac{\sqrt{\pi}}{\ep} J(a)e^{-a^2/(2\ep^2)}. 
\end{align*}
Making the change of variable $u=(y-a)/\ep$, we arrive at 
\begin{align}\label{cesplit}
C_\ep(\balpha, \bx, 2^{-1/2},\mu) &= -\frac{1}{2\sqrt{\pi}}\int_{-\infty}^\infty  \frac{(\ep + \i \ep u)^{\ep+\i a+\i \ep u}}{\ep+\i a +\i \ep u}F(a + \ep u-\i \ep,a) e^{\frac{1}{2}(1+\i u)^2}du\notag \\
&\qquad +\frac{\sqrt{\pi}}{\ep} J(a)e^{-a^2/(2\ep^2)}. 
\end{align}
Let $C_\ep(a)$ denote the above quantity. Let the two terms on the right be denoted by $D_\ep(a)$ and $E_\ep(a)$. 
\begin{lmm}\label{dbase}
For any smooth function $\varphi:\R\to\R$ with compact support, we have 
\[
\lim_{\ep\to 0} \int_{-\infty}^\infty D_\ep(a) \varphi(a) da  = 0.
\]
Moreover, 
\[
\sup_{0< \ep< \frac{1}{10}}\biggl| \int_{-\infty}^\infty D_\ep(a) \varphi(a) da \biggr|\le C,
\]
where $C$ is a finite constant that depends on $\varphi$ only through the size of the support of $\varphi$ and  upper bounds on $\varphi$ and its first two derivatives.
\end{lmm}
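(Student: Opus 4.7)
\textbf{Proof plan for Lemma \ref{dbase}.}
The plan is to exploit the rapid oscillation in $a$ hidden inside $D_\ep(a)$. Writing
\begin{equation*}
(\ep+\i\ep u)^{\ep+\i(a+\ep u)} = B_\ep(u)\,e^{\i a\,\omega_\ep(u)},
\end{equation*}
with $B_\ep(u):=(\ep+\i\ep u)^{\ep+\i\ep u}$ bounded and $\omega_\ep(u):=\ln\ep+\log(1+\i u)$ satisfying $|\Re\omega_\ep(u)|\ge|\ln\ep|-\ln(2+|u|)$, one sees that $D_\ep(a)$ carries an implicit oscillation $e^{\i a\omega_\ep(u)}$ whose effective frequency is of order $|\ln\ep|\to\infty$. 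Against a smooth compactly supported $\varphi$, this should yield a Riemann--Lebesgue-type vanishing.

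I would first apply Fubini to swap the order of integration in $\int D_\ep(a)\varphi(a)\,\d a$, using the Gaussian $|e^{(1+\i u)^2/2}|=e^{(1-u^2)/2}$, the crude bound $|\ep+\i(a+\ep u)|\ge\ep$, and the exponential control on $F$ from Lemma \ref{annoying}. Changing variables $b=a+\ep u$ in the inner $a$-integral to put the singular denominator in canonical form gives
\begin{equation*}
\int D_\ep(a)\varphi(a)\,\d a = -\tfrac{1}{2\sqrt\pi}\int B_\ep(u)\,e^{(1+\i u)^2/2}\,e^{-\i\ep u\,\omega_\ep(u)}\,I_\ep(u)\,\d u,
\end{equation*}
where $I_\ep(u):=\int e^{\i b\,\omega_\ep(u)}H_\ep(b,u)/(\ep+\i b)\,\d b$ and $H_\ep(b,u):=F(b-\i\ep,b-\ep u)\,\varphi(b-\ep u)$ is smooth with compact $b$-support $\ep u+\supp\varphi$.

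The heart of the argument is to show $I_\ep(u)\to 0$ with uniformity in $u$ strong enough to integrate against the outer Gaussian. I would decompose $H_\ep(b,u)=H_\ep(0,u)+[H_\ep(b,u)-H_\ep(0,u)]$. For the remainder, $[H_\ep(b,u)-H_\ep(0,u)]/(\ep+\i b)$ vanishes at $b=0$ and has $C^{2}$-norm in $b$ controlled by $\|\varphi\|_{C^{2}}$ and $\diam(\supp\varphi)$; two integrations by parts against $e^{\i b\omega_\ep(u)}$ then produce the factor $|\omega_\ep(u)|^{-2}\lesssim|\ln\ep|^{-2}$. For the singular piece $H_\ep(0,u)\int_{\ep u+\supp\varphi}e^{\i b\,\omega_\ep(u)}/(\ep+\i b)\,\d b$ I would use an explicit oscillatory-integral estimate based on a Plemelj-type decomposition: the approximate-delta part $\ep/(\ep^{2}+b^{2})$ of $\Re[1/(\ep+\i b)]$ localizes $e^{\i b\omega_\ep(u)}$ at $b=0$ up to an $O(\ep|\omega_\ep(u)|)$ correction, while the principal-value part $b/(\ep^{2}+b^{2})$ pairs with the oscillation $e^{\i b\omega_\ep(u)}$ to give decay as $|\omega_\ep(u)|\to\infty$. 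Integrating over $u$ against $e^{-u^{2}/2}$, with Lemma \ref{annoying} controlling the remaining factors, would then yield both $\int D_\ep\varphi\,\d a\to 0$ and the uniform bound, with constants depending only on $\mu$, $P_{1}$, $\|\varphi\|_{C^{2}}$, and $\diam(\supp\varphi)$, as required.

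The main obstacle is the tension between the $1/(\ep+\i b)$ singularity at $b=0$ and the $|\ln\ep|^{-k}$ decay one hopes to extract by integration by parts: a naive double IBP produces $(\ep+\i b)^{-3}$-type terms whose $L^{1}$-norms near $b=0$ are of order $\ep^{-2}$, overwhelming the $|\ln\ep|^{-2}$ gain. The subtraction at $b=0$ is therefore essential — it splits the integrand into a smooth remainder (treated by ordinary IBP) and a purely singular piece (treated by direct oscillatory-integral analysis without differentiating through the singularity) — and executing this decomposition with the necessary uniformity in $u$, since $\supp H_\ep(\cdot,u)$ itself depends on $u$, is the delicate step.
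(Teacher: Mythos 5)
Your plan correctly identifies the main mechanism---a rapid oscillation in the zero-mode variable, of effective frequency of order $|\ln\ep|$---and your Fubini + change of variables $x=a+\ep u$ matches the paper's first two moves. But the subsequent treatment of the singular integral
\[
I_\ep(u) \;=\; \int \frac{e^{\i b\,\omega_\ep(u)}\,H_\ep(b,u)}{\ep+\i b}\,\d b
\]
has a genuine gap, and both halves of your decomposition run into it.

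\textbf{The Plemelj split does not give decay piece by piece.} Writing $1/(\ep+\i b) = \ep/(\ep^2+b^2) - \i b/(\ep^2+b^2)$ and integrating each piece against $e^{\i b\omega}$ over $\R$ (with $\omega$ real for illustration) gives $\pi e^{-\ep|\omega|}$ and $\pi\,\sgn(\omega)e^{-\ep|\omega|}$ respectively, so the sum is $\pi(1+\sgn(\omega))e^{-\ep|\omega|}$. For $\omega\approx\Re\omega_\ep(u)\approx\ln\ep<0$, the answer is \emph{zero} by exact cancellation; neither term decays individually. Your claim that the approximate-delta piece localizes to $H_\ep(0,u)$ up to an $O(\ep|\omega|)$ error is correct as far as it goes, but $H_\ep(0,u)\to F(0,0)\varphi(0)$, which is generically nonzero---so this piece contributes $\pi F(0,0)\varphi(0)\ne 0$. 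And your claim that the odd piece ``pairs with the oscillation to give decay'' is simply false: $\int_K e^{\i b\omega}\,b/(\ep^2+b^2)\,\d b \to \i\pi\,\sgn(\omega)$ as $\ep\to0$ and $|\omega|\to\infty$, which is $O(1)$, not $o(1)$. Bounding the two pieces independently, as your plan proposes, loses the cancellation and at best recovers a non-vanishing $O(1)$ bound, not the required limit $=0$. Residue calculus handles the full singular integral cleanly in one stroke (close below for $\Re\omega<0$), but that is a different argument than the one you describe.

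\textbf{The remainder piece does not become $\ep$-uniformly smooth.} Setting $G(b):=[H_\ep(b,u)-H_\ep(0,u)]/(\ep+\i b)$, you do have $G(0)=0$, but $G'(0)=H_\ep'(0,u)/\ep$, so $G'$ is \emph{not} bounded uniformly in $\ep$; two integrations by parts produce exactly the $(\ep+\i b)^{-3}$-type blow-up you flag in your own last paragraph. Subtracting $H_\ep(0,u)$ is not enough to make the quotient $C^2$ uniformly in $\ep$; you would need to know something like $H_\ep(b,u)-H_\ep(0,u)=O(b(\ep+|b|))$, which is not available without further structure.

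The paper sidesteps both problems at once with the representation
\[
\frac{1}{\ep+\i x} \;=\; \int_0^\infty e^{-t(\ep+\i x)}\,\d t,
\]
which trades the singularity for a one-parameter family of smooth exponentials. The inner $x$-integral then becomes a genuine Fourier transform $\psi_{\ep,u}(\theta)$ of the \emph{smooth} function $x\mapsto F(x-\i\ep,\,x-\ep u)\varphi(x-\ep u)$, evaluated at the \emph{shifted} parameter $\theta=t-\log(\ep+\i\ep u)$. Two integrations by parts on $\psi_{\ep,u}$ are now harmless (there is no singular denominator to differentiate through) and give a factor $|\theta|^{-2}$; the shift ensures $\Re\theta = t - \ln\ep - \ln|1+\i u|\to+\infty$ pointwise as $\ep\to0$, so dominated convergence in $(t,u)$ delivers both the uniform bound and the vanishing limit. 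I'd recommend adopting this Laplace-transform device rather than trying to repair the Plemelj split.
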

\begin{proof}
Throughout this proof, $C,C_1,C_2,\ldots$ will denote arbitrary constants that may depend only on $\alpha_1,\alpha_2,\varphi$, whose values may change from line to line. 
Take any $\ep\in (0,\frac{1}{10})$. 
Note that
\begin{align*}
&\int_{-\infty}^\infty D_\ep(a) \varphi(a) da \\
&= -\frac{1}{2\sqrt{\pi}}\int_{-\infty}^\infty  \int_{-\infty}^\infty  \frac{(\ep + \i \ep u)^{\ep+\i a+\i \ep u}}{\ep+\i a +\i \ep u}F(a + \ep u-\i \ep,a) e^{\frac{1}{2}(1+\i u)^2}\varphi(a) du da\\
&= -\frac{1}{2\sqrt{\pi}}\int_{-\infty}^\infty  \int_{-\infty}^\infty  \frac{(\ep + \i \ep u)^{\ep+\i x}}{\ep+\i x}F(x-\i \ep,x-\ep u) e^{\frac{1}{2}(1+\i u)^2}\varphi(x-\ep u) dx du,
\end{align*}
where the second identity is obtained by substituting $x = a+\ep u$ and changing order of integration using Fubini's theorem (which applies because $F$ is a bounded function, by Lemma \ref{annoying}, and $\varphi$ is compactly supported). Substituting
\begin{align*}
\frac{1}{\ep+\i x} = \int_0^\infty e^{-t(\ep+\i x)} dt,
\end{align*}
and again using Fubini's theorem, we get
\begin{align*}
&\int_{-\infty}^\infty D_\ep(a) \varphi(a) da \\
&= \frac{-1}{2\sqrt{\pi}} \int_{-\infty}^\infty \int_0^\infty \int_{-\infty}^\infty  (\ep + \i \ep u)^{\ep+\i x} e^{-t(\ep+\i x)} F(x-\i \ep,x-\ep u) e^{\frac{1}{2}(1+\i u)^2}\varphi(x-\ep u) dx dt du\\
&= \frac{-1}{2\sqrt{\pi}} \int_{-\infty}^\infty \int_0^\infty e^{-\ep t} e^{\frac{1}{2}(1+\i u)^2}(\ep + \i \ep u)^{\ep} \psi_{\ep,u}(t-\log(\ep+\i \ep u) )dt du,
\end{align*}
where 
\[
\psi_{\ep,u}(\theta) := \int_{-\infty}^\infty e^{-\i \theta x}F(x-\i\ep, x-\ep u) \varphi(x-\ep u) dx.
\]
Integration by parts gives 
\begin{align*}
\psi_{\ep,u}(\theta) = -\frac{1}{\theta^2}\int_{-\infty}^\infty e^{-\i \theta x}\frac{d^2}{d x^2}(F(x-\i\ep, x-\ep u) \varphi(x-\ep u)) dx.
\end{align*}
Let $[-K,K]$ be an interval containing the support of $\varphi$. Then the integrand above is zero if $|x-\ep u|>K$; otherwise, by Lemma \ref{annoying}, we have
\begin{align*}
\biggl|e^{-\i \theta x}\frac{d^2}{d x^2}(F(x-\i\ep, x-\ep u) \varphi(x-\ep u))\biggr|&\le C_1e^{x \Im(\theta)-C_2|x|}\\
&\le C_1 e^{(x-\ep u) \Im(\theta)   + \ep u\Im(\theta)  + C_2 |x-\ep u|- C_2\ep|u|}\\
&\le C_1 e^{C_3 +(C_4+\ep u) \Im(\theta)}.
\end{align*}
This shows that 
\[
|\psi_{\ep,u}(\theta)|\le \frac{C_1}{|\theta|^2}e^{C_2 +(C_3+\ep u) \Im(\theta)}.
\]
Note that 
\begin{align*}
\Im(t- \log(\ep+\i \ep u)) &= \arg(1+\i u)\in (-\pi,\pi).
\end{align*}
Thus, we arrive at the inequality
\begin{align*}
\biggl|\int_{-\infty}^\infty D_\ep(a) \varphi(a) da\biggr| &\le C_1\int_{-\infty}^\infty\int_0^\infty \frac{|1+\i u|^\ep e^{\frac{1}{2}(1-u^2) + C_2\ep |u|} }{|t - \log(\ep + \i \ep u)|^2}dt du.
\end{align*}
Now, we have
\begin{align*}
|t - \log(\ep + \i \ep u)|^2 &= |t - \ln|\ep + \i \ep u| - \i \arg(\ep+\i \ep u)|^2\\
&= |t - \ln \ep - \ln |1+\i u| - \i \arg(1+\i u)|^2\\
&= (t-\ln \ep-\ln (1+\i u))^2 + (\arg(1+\i u))^2\\
&\ge (t+\ln 10-\ln (1+\i u))^2 + (\arg(1+\i u))^2.
\end{align*}
This shows that
\begin{align*}
\biggl|\int_{-\infty}^\infty D_\ep(a) \varphi(a) da\biggr| &\le C_1\int_{-\infty}^\infty\int_0^\infty \frac{|1+\i u|^{\frac{1}{10}} e^{\frac{1}{2}(1-u^2)+C_2|u|} }{(t+\ln 10-\ln (1+\i u))^2 + (\arg(1+\i u))^2}dt du,
\end{align*}
and it is easy to see that the right side is finite. Moreover, it also allows to apply the dominated convergence theorem and see that the limit of the left side, as $\ep \to 0$, is zero.
\end{proof}

Let us now start writing $D_\ep(P_1,P_2)$ instead of $D_\ep(a)$ to emphasize that $D_\ep$ is a function of $(P_1,P_2)$. Lemma \ref{dbase} implies the following result. 
\begin{lmm}\label{dlmm} 
For any smooth function $\varphi:\R^2\to [-1,1]$ with compact support,
\[
\lim_{\ep \to 0} \iint D_\ep(P_1,P_2) \varphi(P_1,P_2)dP_1dP_2 = 0.
\]
\end{lmm}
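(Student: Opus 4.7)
The plan is to reduce the two-dimensional assertion to the one-dimensional statement of Lemma~\ref{dbase} by a linear change of variables in $P_2$, and then to exchange the limit with integration in the remaining variable $P_1$ using the uniform bound supplied by Lemma~\ref{dbase}.

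First I would change variables from $(P_1,P_2)$ to $(P_1,a)$, where $a=-\sqrt{2}(P_1+P_2)$; this is a linear bijection of $\R^2$ with constant Jacobian, so that $dP_1\,dP_2=\tfrac{1}{\sqrt{2}}\,dP_1\,da$. For each fixed $P_1$, define
\[
\psi_{P_1}(a) := \varphi\Bigl(P_1,\,-P_1-\tfrac{a}{\sqrt{2}}\Bigr).
\]
Since $\varphi$ is smooth and compactly supported on $\R^2$, each $\psi_{P_1}$ is smooth and compactly supported in $a$, and the size of $\supp\psi_{P_1}$ together with the $L^\infty$ bounds on $\psi_{P_1}$ and its first two $a$-derivatives are dominated by quantities depending only on the support and $C^2$-norm of $\varphi$; in particular, they are uniform in $P_1$. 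Fix $M>0$ so that $\supp\varphi\subseteq[-M,M]^2$. Using Fubini (applicable because $D_\ep$ is bounded for each $\ep$ by Lemma~\ref{annoying} and $\varphi$ has compact support), one obtains
\[
\iint D_\ep(P_1,P_2)\,\varphi(P_1,P_2)\,dP_1\,dP_2
=\frac{1}{\sqrt{2}}\int_{-M}^{M}\biggl(\int_{-\infty}^{\infty} D_\ep(a)\,\psi_{P_1}(a)\,da\biggr)dP_1.
\]

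Next, for each fixed $P_1\in[-M,M]$, I would apply Lemma~\ref{dbase} with $\psi_{P_1}$ in place of $\varphi$: this immediately gives vanishing of the inner integral as $\ep\to 0$, together with a uniform-in-$\ep$ bound depending only on the support and derivative bounds of $\psi_{P_1}$ and on the constants of Lemma~\ref{annoying} attached to that value of $P_1$. The main (and essentially only) subtlety to address is that these constants must be controlled uniformly as $P_1$ varies over the compact interval $[-M,M]$. This I would handle by revisiting Lemma~\ref{annoying} and observing that all ingredients of $F$ are analytic in $P_1$ on a complex neighborhood of $[-M,M]$ (no poles of Gamma or Barnes $G$ are encountered, since $\alpha_1=-\tfrac{1}{2\sqrt{2}}+\i P_1$ and the relevant shifted arguments remain bounded away from the nonpositive integers on compact $P_1$-sets), so the constants $C_1,C_2$ there depend continuously, hence boundedly, on $P_1\in[-M,M]$.

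Granting this uniform control, the inner integral in the displayed identity is uniformly bounded on $[-M,M]\times(0,\tfrac{1}{10})$ and converges pointwise to $0$ as $\ep\to 0$; the dominated convergence theorem applied to the $P_1$-integral over the compact set $[-M,M]$ then concludes the proof.
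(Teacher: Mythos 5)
Your argument matches the paper's proof: both change variables $(P_1,P_2)\mapsto(P_1,a)$ with $a=-\sqrt{2}(P_1+P_2)$, apply Lemma~\ref{dbase} to the inner $a$-integral for each fixed $P_1$, and close with dominated convergence in $P_1$ over the compact support. The only difference is that you spell out why the Lemma~\ref{dbase} bound is uniform in $P_1$ on compacts (via analyticity of $F$ in $P_1$), a point the paper treats as implicit.
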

\begin{proof}
Let $L$ be a number so large that $\varphi(P_1,P_2)=0$ if $|P_1|> L$ or $|P_2|>L$. Take any $P_1\in [-L,L]$. Then 
\begin{align*}
&\int_{-\infty}^\infty D_\ep(P_1,P_2)\varphi(P_1,P_2) dP_2 = \frac{1}{\sqrt{2}}\int_{-\infty}^\infty D_\ep(P_1, -2^{-1/2}a -P_1)\varphi(P_1, -2^{-1/2}a -P_1) da.
\end{align*}
According to our previous notation, the right side is
\[
\frac{1}{\sqrt{2}}\int_{-\infty}^\infty D_\ep(a)\psi(a) da,
\]
where $\psi(a) := \varphi(P_1, -2^{-1/2}a -P_1)$. By Lemma \ref{dbase}, this quantity, which is a function of $P_1$, tends to zero as $\ep\to0$, and its absolute value is bounded above by a number that does not depend on $P_1$. By the dominated convergence theorem, this completes the proof.
\end{proof}

We need one final lemma to complete the proof of Theorem \ref{distthm}. Recall the function $E_\ep$ defined below equation \eqref{cesplit}. Let us write it as $E_\ep(P_1,P_2)$ to emphasize the dependence on~$(P_1,P_2)$. 
\begin{lmm}\label{elmm}
For any smooth compactly supported function $\varphi:\R^2 \to \R$,
\[
\lim_{\ep\to 0} \int_{-\infty}^\infty \int_{-\infty}^\infty E_\ep(P_1,P_2)\varphi(P_1,P_2) dP_1 dP_2 = \pi \int_{-\infty}^\infty e^{\frac{1}{4}+ 2P_1^2}\varphi(P_1,-P_1)dP_1. 
\]
\end{lmm}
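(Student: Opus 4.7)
The plan is to interpret $E_\ep(P_1,P_2)$ as a rescaled Gaussian in the single variable $a=-\sqrt{2}(P_1+P_2)$ multiplied by the smooth factor $J(a)$; as $\ep\to 0$, this Gaussian collapses to a constant multiple of $\delta(a)$, which lives on the antidiagonal $P_2=-P_1$, and the task reduces to identifying the correct normalizing constants. Conceptually this lemma is bookkeeping around a standard Gaussian-to-delta approximate identity, and no step presents a serious obstacle; the work is to make the domination estimates explicit enough to justify two successive applications of dominated convergence.

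First I would fix $P_1$ and change variables from $P_2$ to $a=-\sqrt{2}(P_1+P_2)$, with Jacobian $1/\sqrt{2}$, so that the inner integral becomes
\begin{equation*}
\int_{-\infty}^\infty E_\ep(P_1,P_2)\varphi(P_1,P_2)\,dP_2
=\frac{1}{\sqrt{2}}\int_{-\infty}^\infty \frac{\sqrt{\pi}}{\ep}\,J(a)\,e^{-a^2/(2\ep^2)}\,\varphi\!\left(P_1,\,-P_1-\tfrac{a}{\sqrt{2}}\right)da.
\end{equation*}
Rescaling $u=a/\ep$ eliminates the $1/\ep$ and leaves the fixed Gaussian weight $e^{-u^2/2}$, turning the right-hand side into $\frac{1}{\sqrt{2}}\int \sqrt{\pi}\,J(\ep u)\,e^{-u^2/2}\varphi(P_1,-P_1-\ep u/\sqrt{2})\,du$. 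As $\ep\to 0$, the integrand converges pointwise to $\sqrt{\pi}\,J(0)\,\varphi(P_1,-P_1)\,e^{-u^2/2}$, and direct substitution gives $J(0)=e^{1/4+2P_1^2}$.

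Next I would justify dominated convergence in $u$. Since $J$ is the exponential of an affine function of $a$ with imaginary linear term $\i a/2$ (of unit modulus), one has $|J(\ep u)|=e^{1/4+2P_1^2}e^{\sqrt{2}P_1\ep u}$, which on the support $|P_1|\le L$ of $\varphi$ (any $L$ with $\mathrm{supp}(\varphi)\subset[-L,L]^2$) and for $\ep\le 1$ is bounded by $e^{1/4+2L^2+\sqrt{2}L|u|}$. Combined with $e^{-u^2/2}\|\varphi\|_\infty$, this furnishes an integrable dominating function independent of $\ep$ and $P_1$. Hence the $u$-integral converges to $\sqrt{\pi}\cdot\sqrt{2\pi}\cdot e^{1/4+2P_1^2}\varphi(P_1,-P_1)$, and multiplying by the $1/\sqrt{2}$ from the change of variables gives the inner limit $\pi\,e^{1/4+2P_1^2}\varphi(P_1,-P_1)$.

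Finally I would apply dominated convergence once more in $P_1$. Completing the square in the Gaussian integral shows
\begin{equation*}
\int_{-\infty}^\infty \frac{\sqrt{\pi}}{\ep}|J(a)|\,e^{-a^2/(2\ep^2)}\,da=\pi\sqrt{2}\,e^{1/4+2P_1^2+P_1^2\ep^2},
\end{equation*}
so the inner integral is uniformly bounded in $\ep\in(0,1)$ by $\pi e^{1/4+3L^2}\|\varphi\|_\infty$ on $|P_1|\le L$ and vanishes outside. A single DCT on the compact $P_1$-support then yields $\pi\int_{-\infty}^\infty e^{1/4+2P_1^2}\varphi(P_1,-P_1)\,dP_1$, which is the claimed limit.
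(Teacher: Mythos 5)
Your proof is correct and follows the same route as the paper: change variables from $P_2$ to $a=-\sqrt{2}(P_1+P_2)$, rescale $u=a/\ep$, note $J(\ep u)\to J(0)=e^{1/4+2P_1^2}$ pointwise, and pass to the limit by dominated convergence. The only cosmetic difference is that you run dominated convergence in two stages (first in $u$ at fixed $P_1$, then in $P_1$ with the explicit completed-square bound) whereas the paper applies it once to the double integral; your explicit $|J(\ep u)|$ and inner-integral bounds are a mild sharpening of the paper's remark that the integrand decays sufficiently fast.
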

\begin{proof}
Take any $P_1\in \R$. Then  
\begin{align*}
&\int_{-\infty}^\infty E_\ep(P_1,P_2)\varphi(P_1,P_2) dP_2 = \frac{1}{\sqrt{2}}\int_{-\infty}^\infty E_\ep(P_1, -2^{-1/2}a -P_1)\varphi(P_1, -2^{-1/2}a -P_1) da\\
&= \frac{\sqrt{\pi}}{\sqrt{2}\ep}\int_{-\infty}^\infty  J(a) e^{-a^2/(2\ep^2)}\varphi(P_1, 2^{-1/2}a -P_1) da
\end{align*}
Making the change of variable $u = a/\ep$, this becomes
\begin{align*}
\frac{\sqrt{\pi}}{\sqrt{2}}\int_{-\infty}^\infty J(\ep u) e^{-\frac{1}{2}u^2}\varphi(P_1, 2^{-1/2}\ep u -P_1)du.
\end{align*}
Since $\varphi$ is compactly supported, we can now integrate the above over $P_1$ and get 
\begin{align*}
&\iint E_\ep(P_1,P_2)\varphi(P_1,P_2) dP_2 dP_1\\
&=\frac{\sqrt{\pi}}{\sqrt{2}}\int_{-L}^L\int_{-\infty}^\infty J(\ep u) e^{-\frac{1}{2}u^2} \varphi(P_1, 2^{-1/2}\ep u -P_1)du dP_1,
\end{align*}
where $L$ is a number so large that $\varphi(P_1,P_2)=0$ when $|P_1|> L$. The integrand decays sufficiently fast as $|u|\to \infty$ to ensure that we can now apply the dominated convergence theorem to evaluate the limit of the above as $\ep\to 0$ by moving the limit inside the double integral. After taking the limit, the integrals over $u$ and $P_1$ factorize, to give
\begin{align*}
&\lim_{\ep\to 0} \iint E_\ep(P_1,P_2)\varphi(P_1,P_2) dP_2 dP_1\\
&=\frac{\sqrt{\pi}}{\sqrt{2}}J(0) \biggl(\int_{-L}^L \varphi(P_1,  -P_1)dP_1\biggr)\biggl(\int_{-\infty}^\infty  e^{-\frac{1}{2}u^2}du\biggr)\\
&= \pi J(0)\int_{-L}^L \varphi(P_1,  -P_1)dP_1.
\end{align*}
This completes the proof.
\end{proof}

Clearly, Lemma \ref{dlmm} and Lemma \ref{elmm} together complete the proof of Theorem \ref{distthm}.

\subsubsection{Proof of Theorem \ref{distthm2}}\label{distthm2pf}
From Lemma \ref{zeroform1two}, we have
\begin{align}\label{repeat}
C(\balpha, \bx, 2^{-1/2}, \mu, c) =e^{2\alpha_1\alpha_2}\sum_{n=0}^\infty \frac{(-\mu e^{\sqrt{2}c})^n}{n!}a_n,
\end{align}
where $a_n$ are the coefficients defined in equation \eqref{cn2}. Although Lemma \ref{zeroform1two} is stated for $c\in \R$, it is easy to see from the proof that the formula continues to be valid for any complex $c$. By the estimates observed in the proof of Lemma \ref{convergence}, we can integrate $c$ from $0$ to $\sqrt{2}\pi \i$ by moving the integral inside the infinite sum, to get
\begin{align}\label{horizontal}
 -\i \int_0^{\sqrt{2}\pi} e^{-\sqrt{2}\i wt} C(\balpha, \bx, 2^{-1/2}, \mu, \i t) dt &= -\i e^{2\alpha_1\alpha_2}\sum_{n=0}^\infty \frac{(-\mu )^n}{n!}a_n\int_0^{\sqrt{2}\pi} e^{\sqrt{2}\i (n-w)t} dt\notag\\
&= e^{2\alpha_1\alpha_2}\sum_{n=0}^\infty \frac{(-\mu )^n(1-e^{2\pi \i (n-w)})}{\sqrt{2} n! (n-w)}a_n\notag \\
&= (1-e^{-2\pi \i w}) e^{2\alpha_1\alpha_2}\sum_{n=0}^\infty \frac{(-\mu )^n}{\sqrt{2} n! (n-w)}a_n.
\end{align}
Let us now specialize to the case $\alpha_j = \frac{1}{2}Q+\i P_j$, $j=1,2$. Then $w =\i a$, where $a:= -\sqrt{2}(P_1+P_2)$.  Note that
\begin{align*}
&(1-e^{-2\pi \i w})\int_0^\infty e^{-\sqrt{2}wc - \ep^2 c^2} C(\balpha, \bx, 2^{-1/2},\mu ,c) dc\\
&= (1-e^{2\pi a}) \int_{-\infty}^\infty e^{-\sqrt{2}\i a c - \ep^2 c^2} C(\balpha, \bx, 2^{-1/2},\mu ,c) dc \\
&\qquad \qquad - (1-e^{2\pi a}) \int_{-\infty}^0 e^{-\sqrt{2}\i a c - \ep^2 c^2} C(\balpha, \bx, 2^{-1/2},\mu ,c) dc.
\end{align*}
Let us call the first term on the right $H_\ep(P_1,P_2)$. By Theorem \ref{distthm}, we know that as $\ep \to 0$, $H_\ep(P_1,P_2)$ converges to $(1-e^{-2\sqrt{2}\pi(P_1+P_2)}) \pi e^{\frac{1}{4}+2P_1^2} \delta(P_1+P_2)$. But this is equal to zero. So we only have to worry about the second term on the right in the above display. By equation \eqref{repeat} and an application of the estimates from the proof of Lemma \ref{convergence}, we get
\begin{align*}
&(1-e^{2\pi a}) \int_{-\infty}^0 e^{-\sqrt{2}\i a c - \ep^2 c^2} C(\balpha, \bx, 2^{-1/2},\mu ,c) dc\\
&= (1-e^{2\pi a}) e^{2\alpha_1\alpha_2} \sum_{n=0}^\infty \frac{(-\mu )^n}{n!}a_n\int_{-\infty}^0 e^{\sqrt{2}c(n-\i a) - \ep^2 c^2 } dc.
\end{align*}
Now, for $n\ge 1$, 
\[
\int_{-\infty}^0 |e^{\sqrt{2}c(n-\i a) - \ep^2 c^2 }|  dc\le \int_{-\infty}^0 e^{\sqrt{2}cn } dc = \frac{1}{\sqrt{2} n}. 
\]
This bound allows us to apply the dominated convergence theorem and conclude that 
\begin{align*}
&\lim_{\ep\to 0} (1-e^{2\pi a}) e^{2\alpha_1\alpha_2} \sum_{n=1}^\infty \frac{(-\mu )^n}{n!}a_n\int_{-\infty}^0 e^{\sqrt{2}c(n-\i a) - \ep^2 c^2 } dc \\
&= (1-e^{2\pi a}) e^{2\alpha_1\alpha_2} \sum_{n=1}^\infty \frac{(-\mu )^n}{n!}a_n\int_{-\infty}^0 e^{\sqrt{2}c(n-\i a)  } dc\\
&= (1-e^{2\pi a}) e^{2\alpha_1\alpha_2} \sum_{n=1}^\infty \frac{(-\mu )^n}{\sqrt{2}n!(n-\i a)}a_n.
\end{align*}
This cancels with the corresponding sum from equation \eqref{horizontal}. Thus, we arrive at the conclusion that $\tilde{C}(\balpha,\bx, 2^{-1/2},\mu)$ is the limit (in the sense of distributions), as $\ep\to 0$, of 
\begin{align*}
J_\ep(P_1,P_2) := (1-e^{2\pi a}) e^{2\alpha_1\alpha_2}\biggl\{ -\frac{1}{\sqrt{2}\i a} - \int_{-\infty}^0 e^{-\sqrt{2}\i a c - \ep^2 c^2 } dc\biggr\}.
\end{align*}
We need the following lemma, which is proved in \textsection\ref{heavisidepf}.
\begin{lmm}\label{heaviside}
Let $\varphi:\R \to \R$ be a smooth function with compact support. Then 
\begin{align*}
\lim_{\ep\to 0}\int_{-\infty}^\infty \int_0^\infty \varphi(x) e^{\i tx-\ep^2 t^2} dt dx = \pi \varphi(0) + \i \int_0^\infty \frac{\varphi(x)-\varphi(-x)}{x} dx.
\end{align*}
Moreover, for any $\ep>0$,
\[
\biggl|\int_{-\infty}^\infty\int_0^{\infty}\varphi(x) e^{\i tx-\ep^2 t^2} dt dx\biggr| \le\int_0^{\infty}|\hat{\varphi}(t)|dt,
\]
where $\hat{\varphi}$ is the Fourier transform of $\varphi$.
\end{lmm}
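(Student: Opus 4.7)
My plan is to reduce the double integral to an integral involving the Fourier transform of $\varphi$ via Fubini, after which the claim becomes a Sokhotski--Plemelj-type identity. For $\ep>0$ the integrand $\varphi(x)e^{\i tx - \ep^2 t^2}$ is absolutely integrable on $\R\times[0,\infty)$, since $\varphi$ is compactly supported and $e^{-\ep^2 t^2}\in L^1([0,\infty))$. Hence Fubini gives
\[
\int_{-\infty}^\infty \int_0^\infty \varphi(x)\, e^{\i tx - \ep^2 t^2}\, dt\, dx = \int_0^\infty e^{-\ep^2 t^2}\hat\varphi(t)\, dt,
\]
where $\hat\varphi(t) := \int_{-\infty}^\infty \varphi(x)e^{\i tx}\, dx$. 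Since $\varphi\in C_c^\infty$, the function $\hat\varphi$ is Schwartz, hence integrable, and the ``moreover'' bound follows immediately from $|e^{-\ep^2 t^2}|\le 1$.

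For the limit identity I would split $e^{\i tx}=\cos(tx)+\i\sin(tx)$. The cosine piece evaluates to $\int_0^\infty e^{-\ep^2 t^2}\cos(tx)\, dt = \frac{\sqrt{\pi}}{2\ep}e^{-x^2/(4\ep^2)}$ (by symmetry in $t$ and the standard Gaussian Fourier transform), which is $\pi$ times a Gaussian approximate identity of mass $1$ centered at $0$. Integrating against $\varphi$ and taking $\ep\to 0$ therefore yields $\pi\varphi(0)$ by the standard mollifier argument.

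The sine piece requires more care. Since $h_\ep(x):=\int_0^\infty e^{-\ep^2 t^2}\sin(tx)\, dt$ is odd in $x$, the integral against $\varphi$ reduces to $2\i\int_0^\infty \varphi_o(x)h_\ep(x)\, dx$, where $\varphi_o(x):=\frac{1}{2}(\varphi(x)-\varphi(-x))$. Writing $\sin(tx)=-\frac{1}{x}\frac{d}{dt}\cos(tx)$ and integrating by parts in $t$ gives, for $x\ne 0$,
\[
h_\ep(x) = \frac{1}{x} - \frac{2\ep^2}{x}\int_0^\infty t\, e^{-\ep^2 t^2}\cos(tx)\, dt.
\]
The leading $1/x$ term contributes $\i\int_0^\infty \frac{\varphi(x)-\varphi(-x)}{x}\, dx$, which is the desired sine contribution; here one uses that $\psi(x):=\varphi_o(x)/x$ extends to a smooth, even, compactly supported function on all of $\R$ (since $\varphi_o$ vanishes to first order at $0$).

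The main obstacle will be showing that the error term from this integration by parts vanishes as $\ep\to 0$. Fubini and the evenness of $\psi$ rewrite this error as a constant multiple of $\ep^2\int_0^\infty t\, e^{-\ep^2 t^2}\hat\psi(t)\, dt$. Using $|\hat\psi(t)|\le C(1+t^2)^{-1}$ (Schwartz decay of $\hat\psi$) and splitting the $t$-integral at $t=1/\ep$ yields an $O(\ep^2|\ln\ep|)$ bound. Combining the cosine and sine contributions then produces the claimed identity.
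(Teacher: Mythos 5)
Your proposal is correct, but it takes a genuinely different route from the paper's proof. The initial reduction is identical: Fubini gives $\int_0^\infty e^{-\ep^2 t^2}\hat\varphi(t)\,dt$, and the moreover bound is $|e^{-\ep^2 t^2}|\le 1$. The divergence is in how you pass to the limit. The paper first applies dominated convergence to the Gaussian-regularized integral to identify the limit as $\int_0^\infty\hat\varphi(t)\,dt$, then \emph{re-regularizes} this with an exponential factor $e^{-\ep t}$; undoing Fubini then produces the Cauchy kernel $(-\i x+\ep)^{-1}$, whose real and imaginary parts are the Poisson and conjugate-Poisson kernels, and a standard Sokhotski--Plemelj passage yields the two terms $\pi\varphi(0)$ and $\i\int_0^\infty\frac{\varphi(x)-\varphi(-x)}{x}\,dx$. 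You instead analyze the Gaussian regularization directly: splitting $e^{\i tx}=\cos(tx)+\i\sin(tx)$, the cosine integral evaluates to $\frac{\sqrt\pi}{2\ep}e^{-x^2/(4\ep^2)}$, a mass-$\pi$ Gaussian approximate identity, which handles the $\pi\varphi(0)$ term; the sine integral $h_\ep(x)$ you peel apart by integration by parts in $t$, producing the $1/x$ leading term (giving the principal-value contribution, after noting that $\varphi_o(x)/x$ extends smoothly) plus a remainder $\ep^2\int_0^\infty te^{-\ep^2 t^2}\hat\psi(t)\,dt$, which you kill using Schwartz decay of $\hat\psi$ (the $O(\ep^2|\ln\ep|)$ bound is slightly lossy — with $|\hat\psi(t)|\le C(1+t^2)^{-2}$ one gets $O(\ep^2)$ directly — but it suffices). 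The trade-off: the paper's exponential re-regularization makes the two-term structure appear immediately as the decomposition of a single complex Cauchy kernel and requires only two applications of dominated convergence, whereas your approach stays with the Gaussian cutoff the problem actually hands you, at the price of a separate integration-by-parts estimate for the odd part.
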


Take any smooth function $\psi:\R^2 \to \R$ with compact support. Then 
\begin{align*}
&\iint J_\ep(P_1,P_2)\psi(P_1,P_2)dP_2 dP_1 \\
&= \frac{1}{\sqrt{2}}\iint J_\ep(P_1,-2^{-1/2}a - P_1)\psi(P_1,-2^{-1/2}a - P_1)da dP_1.
\end{align*}
Fix some $P_1$, and define 
\[
\varphi(a) := \exp\biggl(\frac{1}{4} + \frac{\i a}{2} + \sqrt{2}P_1 a + 2P_1^2\biggr)(1-e^{2\pi a}) \psi(P_1, -2^{-1/2}a -P_1).
\]
Note that $\varphi$ is a smooth function with compact support, and $\varphi(0)=0$. Thus, 
\begin{align*}
&\int J_\ep(P_1,-2^{-1/2}a - P_1)\psi(P_1,-2^{-1/2}a - P_1)da \\
&= -\int_{-\infty}^\infty \frac{\varphi(a)}{\sqrt{2}\i a} da - \int_{-\infty}^\infty\int_{-\infty}^0 \varphi(a) e^{-\sqrt{2}\i a c - \ep^2 c^2} dc da\\
&= -\int_{-\infty}^\infty \frac{\varphi(a)}{\sqrt{2}\i a} da - \frac{1}{\sqrt{2}}\int_{-\infty}^\infty\int_0^\infty \varphi(a) e^{\i t a  -\frac{1}{2} \ep^2 t^2} dt da.
\end{align*}
But by Lemma \ref{heaviside} and the facts that $\varphi$ is smooth and $\varphi(0)=0$, we get 
\begin{align*}
\lim_{\ep\to 0} \int_{-\infty}^\infty \int_0^\infty \varphi(a) e^{\i t a  -\frac{1}{2} \ep^2 t^2} dt da &= \pi \varphi(0) + \i \int_0^\infty \frac{\varphi(a)-\varphi(-a)}{a} da\\
&= \i \int_{-\infty}^\infty \frac{\varphi(a)}{a} da.
\end{align*}
Thus, we get
\begin{align*}
\lim_{\ep\to 0} \int J_\ep(P_1,-2^{-1/2}a - P_1)\psi(P_1,-2^{-1/2}a - P_1)da =0.
\end{align*}
Moreover, by the inequality from Lemma \ref{heaviside},
\begin{align*}
&\biggl|\int J_\ep(P_1,-2^{-1/2}a - P_1)\psi(P_1,-2^{-1/2}a - P_1)da\biggr| \\
&\le \frac{1}{\sqrt{2}}\int_{-\infty}^\infty \biggl|\frac{\varphi(a)}{a}\biggr| da + \frac{1}{\sqrt{2}}\int_{-\infty}^\infty |\hat{\varphi}(a)| da.
\end{align*}
This allows us to apply the dominated convergence theorem and deduce that 
\begin{align*}
\lim_{\ep\to 0}\iint J_\ep(P_1,P_2)\psi(P_1,P_2)dP_2 dP_1 =0,
\end{align*}
which completes the proof.

\subsection{Three-point function}
\subsubsection{Proof of Theorem \ref{threecthm}}\label{threecthmpf}
First, let us state the analogue of Lemma \ref{zeroform1one} and Lemma \ref{zeroform1two} for the three-point function.
\begin{lmm}\label{zeroform1three}
Let  $k=3$, $x_1=-e_3$, $x_2 = e_1$, and $x_3=e_3$. Suppose that $b\in (0,1)$ and $\Re(\alpha_j) >-\frac{1}{2b}$ for $j=1,2,3$. Then for any $\mu>0$ and $c\in \R$,
\begin{align*}
C(\balpha, \bx, b, \mu, c)  &= e^{2\sum_{1\le j<j'\le 3} \alpha_j\alpha_{j'}}2^{-2\alpha_2(\alpha_1+\alpha_3) } \biggl\{1+ \sum_{n=1}^\infty \frac{(-\mu e^{2bc})^n}{n!} a_n \biggr\},
\end{align*}
where
\begin{align}\label{cn3}
a_n &= 4^n e^{b^2 n(n+1-2w)-2n} 2^{-2bn\alpha_2}\int_{\C^n}\prod_{i=1}^n (1+|z_i|^2)^{-2b^2(n-w)} \notag\\
&\hskip 1in \cdot \prod_{i=1}^n (|z_i|^{4b\alpha_1}|z_i -1|^{4b\alpha_2}) \prod_{1\le i<j\le n}|z_i-z_j|^{4b^2} d^2z_1\cdots d^2z_n.
\end{align}
\end{lmm}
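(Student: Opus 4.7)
The plan is to reduce Lemma~\ref{zeroform1three} to the general three-point formula \eqref{calphabasic}--\eqref{ancor} by letting $x_3$ approach the north pole $e_3$. For a perturbed configuration $(x_1,x_2,x_3')=(-e_3,e_1,x_3')$ with $x_3'\ne e_3$, all three stereographic images $u_j=\sigma(x_j)$ are finite; explicitly $u_1=\sigma(-e_3)=0$, $u_2=\sigma(e_1)=1$, and $u_3=\sigma(x_3')$ is the finite complex number that will be sent to infinity. Lemma~\ref{convergence} guarantees that $C(\balpha,\bx,b,\mu,c)$ and $a_n$ are continuous in $\bx$ at configurations with distinct points, so it suffices to compute the limit $|u_3|\to\infty$ of each ingredient in \eqref{calphabasic}--\eqref{ancor}.

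The first step is to inspect the prefactor of \eqref{calphabasic}. Substituting $u_1=0$ and $u_2=1$ collapses most of the factors, and collecting all $u_3$-dependent terms gives
\[
\frac{|u_3|^{4\alpha_1\alpha_3}\,|1-u_3|^{4\alpha_2\alpha_3}}{(1+|u_3|^2)^{2\alpha_3(\alpha_1+\alpha_2)}}\;\longrightarrow\; 1\qquad\text{as }|u_3|\to\infty,
\]
using that $|1-u_3|/|u_3|\to 1$ together with the definition $|u_3|^w:=\exp(w\ln|u_3|)$ for $w\in\C$. The surviving prefactor is exactly $e^{2\sum_{1\le j<j'\le 3}\alpha_j\alpha_{j'}}\cdot 2^{-2\alpha_2(\alpha_1+\alpha_3)}$, as claimed.

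The second step is to identify the pointwise limit of $a_n$. Inserting $u_1=0$ and $u_2=1$ into \eqref{ancor} produces the factors $\prod_i |z_i|^{4b\alpha_1}$, $\prod_i|z_i-1|^{4b\alpha_2}$ and the constant $2^{-2bn\alpha_2}$, all of which match \eqref{cn3}. The remaining $u_3$-dependent factors combine into
\[
(1+|u_3|^2)^{-2bn\alpha_3}\prod_{i=1}^n|z_i-u_3|^{4b\alpha_3},
\]
which converges to $1$ uniformly on any bounded subset of $\C^n$ as $|u_3|\to\infty$.

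The main obstacle is the third step, namely justifying the exchange of limit and integral in the definition of $a_n$. I would follow the two-stage argument used in the proof of Lemma~\ref{zeroform1one}: first truncate to $\{|z_i|\le M\}^n$ and apply the dominated convergence theorem to obtain the truncated version of \eqref{cn3}, then let $M\to\infty$ by monotone convergence. To control the truncation error uniformly as $x_3'\to e_3$, I would convert the truncated tail back to the sphere via \eqref{change} and invoke the integrability estimate underlying Lemma~\ref{convergence}: the hypothesis $\Re(\alpha_j)>-\tfrac{1}{2b}$ guarantees that the singularity at each insertion is locally integrable, so the mass of the integrand outside $(\S^2_M)^n$ is bounded by some $\varepsilon(M)\to 0$ uniformly for $x_3'$ in a neighborhood of $e_3$. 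Once this uniform estimate is in place, combining it with the pointwise limits computed above and using continuity in $\bx$ from Lemma~\ref{convergence} yields \eqref{cn3} and completes the reduction.
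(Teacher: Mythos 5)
Your argument reproduces the paper's proof of this lemma: fix $x_1=-e_3$, $x_2=e_1$, let $u_3=\sigma(x_3')\to\infty$, compute the pointwise limits of the prefactor in \eqref{calphabasic} and of the integrand in \eqref{ancor} after substituting $u_1=0$, $u_2=1$, and justify the exchange of limit and integral by the same truncate-to-$\{|z_i|\le M\}$, dominated-then-monotone-convergence argument combined with the uniform tail bound obtained by pushing the truncated remainder back to the sphere. The only cosmetic difference is that you verify the prefactor limit explicitly where the paper simply invokes equation \eqref{calphabasic} and continuity; nothing substantive differs.
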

Since the proof is similar to the proofs of Lemma \ref{zeroform1one}  and Lemma \ref{zeroform1two}, it is shifted to \textsection\ref{zeroform1threepf}. Specializing the above lemma to the case $b=\alpha_2 = \frac{1}{\sqrt{2}}$, we get  the following analogue of Lemma~\ref{zeroform1}, Lemma \ref{oneform1}, and Lemma \ref{twoform1}. The proof is in \textsection\ref{threeform1pf}. 
\begin{lmm}\label{threeform1}
Let  $k=3$, $x_1=-e_3$, $x_2 = e_1$, and $x_3=e_3$. Suppose that $\Re(\alpha_1),\Re(\alpha_3)>-\frac{1}{\sqrt{2}}$, and $b = \alpha_2 =\frac{1}{\sqrt{2}}$. Let $w := -1-\sqrt{2}(\alpha_1+\alpha_2+\alpha_3)$. Then for any $\mu>0$ and $c\in \R$,  
\[
C(\balpha, \bx, 2^{-1/2}, \mu, c) =e^{\sqrt{2}(\alpha_1+\alpha_3) + 2\alpha_1\alpha_3}2^{-\sqrt{2}(\alpha_1+\alpha_3) } \sum_{n=0}^\infty \frac{(-\mu e^{\sqrt{2}c})^n}{n!}f(n),
\]
where $f$ is the function defined in the statement of Theorem \ref{threecthm}.
\end{lmm}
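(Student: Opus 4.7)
The plan is to follow the blueprint of the proofs of Lemmas \ref{zeroform1}, \ref{oneform1}, \ref{twoform1}: specialise Lemma \ref{zeroform1three} at $b=\alpha_2=\tfrac{1}{\sqrt{2}}$, unfold the Vandermonde structure to reduce the $n$-fold complex integral to a finite sum of radial one-dimensional integrals, evaluate these by Lemma \ref{intlmm}, and identify the closed form with $f(n)$ from Theorem \ref{threecthm}. The only genuinely new ingredient relative to the $k=2$ case is the extra factor $\prod_i|z_i-1|^2$ in the integrand coming from $x_2=e_1$.

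Specialising Lemma \ref{zeroform1three} yields, with $w=-1-\sqrt{2}(\alpha_1+\alpha_2+\alpha_3)$,
\begin{align*}
a_n = 2^n e^{\frac{1}{2}n(n-3-2w)} \int_{\mathbb{C}^n} \prod_{i=1}^n |z_i|^{2\sqrt{2}\alpha_1}|z_i-1|^2 (1+|z_i|^2)^{-(n-w)} \prod_{i<j}|z_i-z_j|^2\, d^2z_1\cdots d^2z_n.
\end{align*}
The key observation is that $\prod_{i=1}^n(z_i-1)\prod_{i<j}(z_j-z_i)$ is itself a Vandermonde determinant, namely $\det(y_i^j)_{0\le i,j\le n}$ with $y_0:=1$ and $y_i:=z_i$ for $i\ge 1$. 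Squaring produces a double sum over $\sigma,\tau \in S_{n+1}$; the factor indexed by $i=0$ is trivial since $y_0=1$, and integration against the radially symmetric weight $|z|^{2\sqrt{2}\alpha_1}(1+|z|^2)^{-(n-w)}d^2z$ forces $\sigma(i)=\tau(i)$ for all $i\ge 1$, hence $\sigma=\tau$ on $\{0,1,\ldots,n\}$. The collapse gives
\begin{align*}
a_n = 2^n e^{\frac{1}{2}n(n-3-2w)}\, n! \sum_{p=0}^n \prod_{k\ne p} J_k, \qquad J_k := \int_{\mathbb{C}} \frac{|z|^{2(k+\sqrt{2}\alpha_1)}}{(1+|z|^2)^{n-w}}\, d^2z,
\end{align*}
where $p$ indexes the common value of $\sigma(0)=\tau(0)$. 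Writing $A:=1+\sqrt{2}\alpha_1$, $B:=1+\sqrt{2}\alpha_3$ and noting $n-w=n+A+B$, Lemma \ref{intlmm} gives $J_k = \pi\Gamma(A+k)\Gamma(B+n-k)/\Gamma(n-w)$. Applying the Barnes identity $\prod_{k=0}^n\Gamma(A+k) = G(A+n+1)/G(A)$ and its $B$-analogue pulls out precisely the Gamma/$G$ prefactor of $f(n)$ in Theorem \ref{threecthm}; the remaining combinatorial factor is $S_n(A,B):=\sum_{p=0}^n \frac{1}{\Gamma(A+p)\Gamma(B+n-p)}$.

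The only genuinely new algebraic step is to match $S_n(A,B)$ with the braced ${}_2F_1$ expression in Theorem \ref{threecthm}. Setting $c:=A+B+n=n-w$ and
\begin{align*}
U(x,c) := \frac{{}_2F_1(1,c-1;x;\tfrac{1}{2})}{2\Gamma(x)\Gamma(c-x)},
\end{align*}
the braced expression equals $U(A,c) - U(A+n+1,c)$, after using $\Gamma(B) = (B-1)\Gamma(B-1)$ to absorb the prefactor $\sqrt{2}\alpha_3 = B-1$ into the Gamma denominator. This rewriting telescopes against the sum defining $S_n(A,B)$: it suffices to verify the one-step relation $U(x,c) - U(x+1,c) = 1/[\Gamma(x)\Gamma(c-x)]$. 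Clearing Gammas, this reduces to ${}_2F_1(1,c-1;x;\tfrac{1}{2}) - \tfrac{c-x-1}{x}{}_2F_1(1,c-1;x+1;\tfrac{1}{2}) = 2$, which follows from the classical Gauss contiguous relation $c(1-z){}_2F_1(1,b;c;z) - c + (c-b)z\,{}_2F_1(1,b;c+1;z) = 0$ evaluated at $z=\tfrac{1}{2}$ (relabelling $c\mapsto x$, $b\mapsto c-1$). Telescoping then yields $S_n(A,B) = \sum_{k=0}^n [U(A+k,c) - U(A+k+1,c)] = U(A,c) - U(A+n+1,c)$, completing the identification.

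I expect the principal obstacle to be expository rather than deep: careful bookkeeping of the Barnes $G$-shifts, the step $\Gamma(B)=(B-1)\Gamma(B-1)$ identifying the two terms of the telescoping form with the two ${}_2F_1$ terms written in Theorem \ref{threecthm}, and the tracking of signs and normalisations, all need to line up verbatim. The analytic core of the argument, however, is just the Vandermonde trick followed by one application of a standard contiguous relation for ${}_2F_1$.
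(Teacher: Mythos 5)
Your proof is correct, and it follows the paper's proof up through the $(n+1)\times(n+1)$ Vandermonde expansion, the radial collapse $\sigma=\tau$, the split over the position of the all-ones row, the evaluation of each radial integral by Lemma~\ref{intlmm}, and the extraction of the Barnes $G$ prefactor via the identity $\prod_{k=0}^n \Gamma(A+k) = G(A+n+1)/G(A)$. Where you diverge is the identification of the residual combinatorial sum $S_n(A,B)=\sum_{k=0}^n 1/[\Gamma(A+k)\Gamma(B+n-k)]$ with the braced hypergeometric expression from Theorem~\ref{threecthm}. The paper delegates this to Lemma~\ref{hyperlmm}, whose proof goes through Euler's integral representation and Pfaff's transformation (Lemmas~\ref{eulerlmm} and~\ref{pfafflmm}) and a limiting argument $z\to 1$. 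You instead define $U(x,c)={}_2F_1(1,c-1;x;\tfrac12)/[2\Gamma(x)\Gamma(c-x)]$, recognize the braced expression as $U(A,c)-U(A+n+1,c)$ (using $\Gamma(B)=(B-1)\Gamma(B-1)$ to absorb the $\sqrt2\alpha_3$), and telescope using the one-step relation $U(x,c)-U(x+1,c)=1/[\Gamma(x)\Gamma(c-x)]$, which you reduce via a contiguous-type identity $\gamma(1-z)\,{}_2F_1(1,\beta;\gamma;z)-\gamma+(\gamma-\beta)z\,{}_2F_1(1,\beta;\gamma+1;z)=0$. I verified that this identity holds term-by-term in the power series, so the argument goes through; it is arguably more elementary and self-contained than the Euler--Pfaff route, and it has the side benefit that the case $n=0$ (the check $a_0=1=f(0)$, to which the paper devotes a separate paragraph manipulating Pfaff transformations) follows automatically from the same one-step relation at $x=A$, $c=A+B$. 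The one small point you leave implicit is that the Vandermonde derivation of the formula for $a_n$ is only valid for $n\ge 1$, so the $n=0$ case does require a one-line remark — but your framework handles it without additional machinery.
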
 
Next, we need some estimates for $|f(z)|$, in analogy with similar estimates in Lemma~\ref{fzfinal}, Lemma~\ref{fzfinalone}, and Lemma~\ref{fzfinal2}. The proof is in \textsection\ref{fzfinal3pf}.
\begin{lmm}\label{fzfinal3}
In addition to the assumptions from Lemma \ref{threeform1}, suppose that $\Re(w)>-\frac{1}{2}$. Take any $z$ such that $z-w\ne 1$, and for some $\delta>0$, we have $\Re(z)\ge \Re(w)+\delta$ and $|\Re(z+1+\sqrt{2}\alpha_3)| \ge \delta$. Then 
\[
|f(z)|\le C_1 (1+|z|)^{C_2} e^{C_3|\Re(z)|} \exp\biggl(\frac{1}{2}\Re(z)\ln |z-w| - \frac{1}{2}\Im(z)\arg(z-w)\biggr),
\]
where $C_1,C_2,C_3$ depend only on $\alpha_1$, $\alpha_3$, and $\delta$.
\end{lmm}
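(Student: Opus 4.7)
The strategy parallels the proof of Lemma \ref{fzfinal2}. Write $f(z) = F(z)\cdot\Phi(z)$, where
$$F(z) := \frac{(2\pi)^z e^{\frac{1}{2}z(z-3-2w)}\,\Gamma(z+1)\,G(z+2+\sqrt{2}\alpha_1)\,G(z+2+\sqrt{2}\alpha_3)}{\Gamma(z-w)^z\,G(1+\sqrt{2}\alpha_1)\,G(1+\sqrt{2}\alpha_3)}$$
collects the Gamma/Barnes-$G$ factors together with the explicit prefactor, and $\Phi(z)$ is the curly-brace hypergeometric expression. Apply the classical Stirling expansions $\log\Gamma(z+1)=(z+\tfrac12)\log z - z + O(\log|z|)$ and $\log G(z+1) = \tfrac{z^2}{2}\log z - \tfrac{3z^2}{4} + \tfrac{z}{2}\log(2\pi) + O(\log|z|)$ (valid in any sector $|\arg z|\le \pi-\eta$) to each of $\Gamma(z+1)$, $G(z+2+\sqrt{2}\alpha_1)$, $G(z+2+\sqrt{2}\alpha_3)$, and $\Gamma(z-w)^z = \exp(z\,\Pi(z-w))$. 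Exactly as in the analogous step of Lemma \ref{fzfinal2}, the $z^2\log z$ and $z^2$ contributions cancel against $e^{\frac12 z(z-3-2w)}$; the $z\log z$ coefficient collapses to $\tfrac12$ because with $u_j := 1+\sqrt{2}\alpha_j$ one has $u_1 + u_3 + w = 0$ at $b=\alpha_2=\tfrac{1}{\sqrt{2}}$. Reading off the real part one obtains
$$|F(z)| \le C_1(1+|z|)^{C_2}\,e^{C_3|\Re(z)|}\exp\bigl(\tfrac{1}{2}\Re(z)\ln|z-w|-\tfrac{1}{2}\Im(z)\arg(z-w)\bigr).$$

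\textbf{Bounding the hypergeometric factor.} It remains to prove $|\Phi(z)| \le C(1+|z|)^{C}e^{C|\Re(z)|}$. Split $\Phi = \Phi_a - \Phi_b$ into its two terms. For $\Phi_b$, observe that in ${}_2F_1(1,z-w-1;z+2+\sqrt{2}\alpha_1;\tfrac12)$ both $c-b = 3+\sqrt{2}\alpha_1+w$ and $c-a-b = 2+\sqrt{2}\alpha_1+w$ are \emph{constants} in $z$ with strictly positive real part (using $\Re(w)>-\tfrac12$ and $\Re(\alpha_1)>-\tfrac{1}{\sqrt{2}}$). For $\Re(z-w-1)>0$ the Euler integral
$${}_2F_1(1,z-w-1;z+2+\sqrt{2}\alpha_1;\tfrac{1}{2}) = \frac{\Gamma(z+2+\sqrt{2}\alpha_1)}{\Gamma(z-w-1)\,\Gamma(3+\sqrt{2}\alpha_1+w)}\int_0^1 t^{z-w-2}(1-t)^{2+\sqrt{2}\alpha_1+w}(1-\tfrac{t}{2})^{-1}\,dt$$
applies; bounding $(1-t/2)^{-1}\le 2$ and invoking the Beta identity yields a polynomial bound, even after dividing by $\Gamma(z+2+\sqrt{2}\alpha_1)$. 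For $\Phi_a$, the natural Euler integral for ${}_2F_1(1,z-w-1;1+\sqrt{2}\alpha_1;\tfrac{1}{2})$ breaks down because $c-b = 3+\sqrt{2}\alpha_1+w-z$ has negative real part for large $\Re(z)$; apply Pfaff's transformation
$${}_2F_1(1,z-w-1;1+\sqrt{2}\alpha_1;\tfrac{1}{2}) = 2\,{}_2F_1(1,2+\sqrt{2}\alpha_1+w-z;1+\sqrt{2}\alpha_1;-1),$$
and bound the right-hand side via Euler's integral at $\zeta=-1$ in the region $1+\Re(w)<\Re(z)<2+\Re(\sqrt{2}\alpha_1)+\Re(w)$ where $\Re(b)$ and $\Re(c-b)$ are simultaneously positive. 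Outside that region, iterate Pfaff/Euler transformations at $-1$ (each contributing at most $2^{\pm\Re z} \subset e^{C|\Re z|}$) to reach a representation where both conditions hold, or fall back to the absolutely convergent power series at $\zeta = \tfrac12 < 1$ in the remaining narrow strip $\Re(w)+\delta \le \Re(z)\le 1+\Re(w)$. Division by $\Gamma(z+1+\sqrt{2}\alpha_3)$ in $\Phi_a$ is controlled by Stirling, provided the hypothesis $|\Re(z+1+\sqrt{2}\alpha_3)|\ge \delta$ keeps that Gamma factor bounded away from its poles.

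\textbf{Main obstacle.} The Stirling bookkeeping for $F$ is a direct generalization of the corresponding step in Lemma \ref{fzfinal2}; the real technical obstacle is the uniform control of the first hypergeometric ${}_2F_1(1,z-w-1;1+\sqrt{2}\alpha_1;\tfrac{1}{2})$ over the entire half-plane $\Re(z)\ge \Re(w)+\delta$, since its natural Euler representation is singular along that domain, and one must stitch together several transformed representations (Pfaff and Euler at $\zeta=\tfrac12$ and $\zeta=-1$) plus the convergent series in a narrow intermediate strip, and then verify that the Gamma-ratio asymptotics arising from these transformations align with Stirling so that only polynomial and $e^{C|\Re z|}$ growth ultimately survive. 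The technical hypotheses $z-w\ne 1$ and $|\Re(z+1+\sqrt{2}\alpha_3)|\ge \delta$ play exactly the role needed to keep the required transformations away from their branch points and the Gamma factors away from their poles.
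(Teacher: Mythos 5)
Your decomposition $f = F\cdot\Phi$ is a reasonable alternative to the paper's splitting, but your Stirling bookkeeping for $F$ is off, and the error propagates. Work out the $z\log z$ coefficient in $\log F$: from $\log\Gamma(z+1)$ you get $1$; from $\log G(z+2+\sqrt{2}\alpha_j)$ (writing $G(z+1+a)$ with $a=1+\sqrt{2}\alpha_j$) you get $1+\sqrt{2}\alpha_j$ for $j=1,3$; from $-z\,\Pi(z-w)$ you get $w+\tfrac12$. Summing and using $w=-2-\sqrt{2}(\alpha_1+\alpha_3)$ gives $1+(1+\sqrt{2}\alpha_1)+(1+\sqrt{2}\alpha_3)+(w+\tfrac12)=\tfrac72+\sqrt{2}(\alpha_1+\alpha_3)+w=\tfrac32$, not $\tfrac12$. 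Your relation $u_1+u_3+w=0$ is correct, but it collapses the coefficient to $\tfrac32$; the $\tfrac12$ you quote is what holds for the two-point $f_0$ of Lemma~\ref{fzfinal2}, where the $G$-arguments are shifted by one less. Consequently your asserted bound
\[
|F(z)|\le C_1(1+|z|)^{C_2}e^{C_3|\Re z|}\exp\Bigl(\tfrac12\Re z\ln|z-w|-\tfrac12\Im z\arg(z-w)\Bigr)
\]
is wrong by a factor comparable to $|\Gamma(z+1+\sqrt{2}\alpha_3)|$; a quick sanity check at large real $z=x$ gives $|F(x)|\asymp\exp(\tfrac32 x\ln x)$, whereas your bound would say $\exp(\tfrac12 x\ln x)$. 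Equally, your claim $|\Phi(z)|\le C(1+|z|)^{C}e^{C|\Re z|}$ misses the point: both terms of $\Phi$ carry an explicit $1/\Gamma(z+\mathrm{const})$ in the denominator, so $\Phi$ actually decays like $\exp(-\Re z\ln|z|+O(|\Re z|))$, and this decay is not a luxury --- it is precisely what cancels the excess growth of $F$. With your stated (weak) bound on $\Phi$ and your stated (too strong) bound on $F$, the product $|F||\Phi|$ does not yield the desired estimate; and taken literally, the bound claimed for $F$ is simply false. This is the concrete gap; the paper avoids it by absorbing the extra $\Gamma(z+1+\sqrt{2}\alpha_3)$ into the hypergeometric block (writing $f=f_0 h$, with $f_0$ exactly the two-point function's $f$ so that Lemma~\ref{fzfinal2} applies verbatim, and $h$ a bounded-degree expression estimated via Lemmas~\ref{hyperasymp},~\ref{hyperasymp2}).

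A secondary concern is your plan for ${}_2F_1(1,z-w-1;1+\sqrt{2}\alpha_1;\tfrac12)$ in the narrow strip $\Re(w)+\delta\le\Re(z)\le 1+\Re(w)$. There $\Re(z-w-1)\in[\delta-1,0]$ and the Euler integral is unavailable, so you propose to ``fall back to the absolutely convergent power series.'' But uniform control over $|\Im(z)|\to\infty$ is not automatic: with $b=z-w-1$ of large imaginary part and $c$ a fixed parameter of small modulus, the partial products $|(b)_n/(c)_n|$ grow rapidly before the geometric factor $(1/2)^n$ dominates, so one does not immediately get a $z$-uniform polynomial-plus-exponential bound from the series. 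The paper handles exactly this via the connection/Pfaff identity in Lemma~\ref{hyperasymp2}, which produces a term of the form $\Gamma(c)\Gamma(1+b-c)/\Gamma(b)$ plus a hypergeometric with a $z$-dependent third parameter, and estimates each piece; your proposal would need a comparable argument rather than the raw series.
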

For non-integer $x>\Re(w)$, define the contour integral
\begin{align}\label{fcdefthree}
F(x) := \frac{1}{2\pi \i} \int_{\Re(z) = x} \Gamma(-z) f(z)(\mu e^{\sqrt{2}c})^z dz,
\end{align}
where the contour goes from $x-\i \infty$ to $x+\i \infty$.
\begin{lmm}\label{fconvlmmthree}
For any $x\in \R \setminus \Z$ which satisfies the conditions $x> \Re(w)$, $x-\Re(w)\ne 1$, and $x\ne -\Re(1+\sqrt{2}\alpha_3)$, the integral defining $F(x)$ in equation~\eqref{fcdefthree} is absolutely convergent.
\end{lmm}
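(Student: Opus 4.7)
The plan is to follow the same template used for Lemmas \ref{fconvlmm}, \ref{fconvlmmone}, and \ref{fconvlmm2}: parametrize the contour by $z = x + \i y$ with $y\in\R$, and show that $|\Gamma(-z)\,f(z)(\mu e^{\sqrt{2}c})^z|$ is locally bounded in $y$ and decays exponentially as $|y|\to\infty$. Since $|(\mu e^{\sqrt{2}c})^z| = (\mu e^{\sqrt{2}c})^x$ is a constant in $y$, the task reduces to controlling $|\Gamma(-z)|$ via Lemma \ref{gammalmm} and $|f(z)|$ via Lemma \ref{fzfinal3}, and then combining the two estimates.

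First I would check that the three conditions on $x$ are precisely what is needed to invoke Lemma \ref{fzfinal3} uniformly on the whole contour. Setting $\delta := \min(x - \Re(w),\,|x + 1 + \sqrt{2}\Re(\alpha_3)|)$, the hypotheses $x > \Re(w)$ and $x \neq -\Re(1+\sqrt{2}\alpha_3)$ force $\delta > 0$, while $x - \Re(w) \neq 1$ prevents $z - w = 1$ for any $y$ (because $\Re(z-w) = x - \Re(w)$). Hence Lemma \ref{fzfinal3} applies at every $z = x + \i y$, and using $\arg(z-w) \to \pm \tfrac{\pi}{2}$ as $y \to \pm\infty$, one extracts the asymptotic bound $|f(x+\i y)| = \exp(-\tfrac{\pi}{4}|y| + O(\ln(1+|y|)))$.

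Next, Lemma \ref{gammalmm} applies because $x > \Re(w) > -\tfrac{1}{2} > -1$ (inheriting the setting of Theorem \ref{threecthm}), and the hypothesis $x \notin \Z$ keeps $-z$ off the poles of $\Gamma$ on the contour, so $\Gamma(-z)$ stays finite everywhere. Since $\arg(\lceil x+1\rceil - x - \i y) \to \mp \tfrac{\pi}{2}$ as $y \to \pm\infty$, Lemma \ref{gammalmm} yields $|\Gamma(-x-\i y)| = \exp(-\tfrac{\pi}{2}|y| + O(\ln(1+|y|)))$. Multiplying, $|\Gamma(-z)f(z)|$ is dominated by a polynomial in $|y|$ times $\exp(-\tfrac{3\pi}{4}|y|)$, which is integrable on $\R$; together with the no-pole condition providing local boundedness near $y = 0$, this gives absolute convergence of the contour integral. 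No serious obstacle is expected: the analytic work is already absorbed into Lemma \ref{fzfinal3}, and the only delicacy is the bookkeeping that translates the three arithmetic conditions on $x$ into the hypotheses of that lemma.
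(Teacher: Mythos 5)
Your argument is correct and coincides with the paper's own proof: both reduce to the one-line observation that Lemma \ref{gammalmm} and Lemma \ref{fzfinal3} give exponential decay in $|y|$ away from $y=0$ and boundedness near $y=0$ once $x\notin\Z$. You spell out the bookkeeping (constructing the $\delta$ for Lemma \ref{fzfinal3} from the three arithmetic hypotheses and extracting the rates $e^{-\pi|y|/2}$ and $e^{-\pi|y|/4}$) that the paper leaves implicit, but the underlying route is identical.
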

\begin{proof}
Since $\arg(x+\i y) \to \pm \frac{\pi}{2}$ as $y\to \pm \infty$, Lemma \ref{gammalmm} shows that $\Gamma(-x-\i y)$ decays exponentially in $|y|$ as $|y|\to \infty$. Similarly by Lemma \ref{fzfinal3}, $|f(x+\i y)|$ also decays exponentially in $|y|$ as $|y|\to \infty$. Since $x$ is not an integer, these functions remain bounded near $y=0$. Lastly, $|(\mu e^{\sqrt{2}c})^z| = (\mu e^{\sqrt{2}c})^x$ remains bounded. This shows that the integral in equation \eqref{fcdefthree} is absolutely convergent.
\end{proof}
The next lemma is the analogue of Lemma \ref{zerofinal0}, Lemma \ref{zerofinal0one}, and Lemma \ref{twofinal0}. 
The proof is in \textsection\ref{zerofinal0threepf}. 
\begin{lmm}\label{zerofinal0three}
For any $x_0\in (\Re(w),0)$ and $N\ge 1$, 
\begin{align*}
F(x_0) = F(N+x_0) + \sum_{n=0}^{N-1}\frac{(-1)^n}{n!}f(n) (\mu e^{\sqrt{2}c})^n.
\end{align*}
\end{lmm}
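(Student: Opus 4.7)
The plan is to follow the same contour-integration strategy used in the proofs of Lemmas \ref{zerofinal0}, \ref{zerofinal0one}, and \ref{twofinal0}: define
\[
g(z) := \Gamma(-z)\, f(z)\,(\mu e^{\sqrt{2}c})^z,
\]
where $f$ is as in Theorem \ref{threecthm}, identify all poles of $g$ in the vertical strip $\{z : x_0 \le \Re(z) \le N+x_0\}$, apply Cauchy's theorem on a tall rectangular contour $C_R$ with vertices $x_0 \pm \i R$ and $N+x_0 \pm \i R$ (traversed counter-clockwise), and then let $R\to\infty$.

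First I would verify that in the relevant strip the only singularities of $g$ come from the poles of $\Gamma(-z)$ at the nonnegative integers $0,1,\dots,N-1$. The prefactor $\Gamma(z+1)$ appearing in $f$ has poles only at $z\in\{-1,-2,\dots\}$, which lie outside the strip since $x_0 > \Re(w) > -\tfrac12$. The hypergeometric factors ${}_2F_1(1,z-w-1;\,1+\sqrt2\alpha_1;\,\tfrac12)$ and ${}_2F_1(1,z-w-1;\,z+2+\sqrt2\alpha_1;\,\tfrac12)$ are entire in $z$ provided their bottom parameters are not nonpositive integers; by the hypotheses of Theorem \ref{threecthm} together with the restriction $x_0>\Re(w)$, one checks that both $1+\sqrt2\alpha_1$ and $z+2+\sqrt2\alpha_1$ stay away from the nonpositive integers throughout the strip (shifting $x_0$ infinitesimally if necessary to accommodate the constraints in Lemma \ref{fconvlmmthree}). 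A similar check rules out any contribution from $\Gamma(z-w)^z$, which is nonvanishing on $\Re(z-w)>0$. Consequently $g$ is meromorphic on the strip with simple poles at $z=n$, $0\le n\le N-1$, and since $\Res(\Gamma,-n) = (-1)^n/n!$, the residue theorem gives
\[
\frac{1}{2\pi \i} \oint_{C_R} g(z)\,dz \;=\; -\sum_{n=0}^{N-1} \frac{(-1)^n}{n!}\,f(n)\,(\mu e^{\sqrt{2}c})^n.
\]

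The vertical sides of $C_R$ contribute $F(x_0) - F(N+x_0)$ (after sending $R\to\infty$), so it only remains to show that the two horizontal segments vanish in the limit $R\to\infty$. For this I would combine the bound on $|f(x+\i R)|$ furnished by Lemma \ref{fzfinal3} with the bound on $|\Gamma(-x-\i R)|$ from Lemma \ref{gammalmm}, uniformly over $x\in[x_0,N+x_0]$. As in the zero-point and two-point cases, $|\Gamma(-x-\i R)|$ decays like $e^{-\pi R/2}$ while $|f(x+\i R)|$ grows at most polynomially in $R$ multiplied by a factor of order $\exp(-\tfrac12 R\arg(x+\i R - w))$, whose argument tends to $\pi/2$; the net decay is exponential in $R$, and $|(\mu e^{\sqrt{2}c})^{x+\i R}| = (\mu e^{\sqrt{2}c})^x$ is bounded. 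Thus
\[
\lim_{R\to\infty} \max_{x_0\le x\le N+x_0} |g(x\pm \i R)| = 0,
\]
which together with the residue computation and the definition of $F$ gives the claimed identity.

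The main obstacle, compared with the zero-, one-, and two-point cases, is the extra bookkeeping needed to rule out spurious singularities coming from the hypergeometric factors and from the $z$-dependent bottom parameter $z+2+\sqrt2\alpha_1$ in the second ${}_2F_1$: one must verify that along the contour $\Re(z)=N+x_0$ this parameter never hits $0,-1,-2,\dots$. Since $x_0$ was assumed not to equal $-\Re(1+\sqrt2\alpha_3)$, and since shifting $x_0$ slightly does not affect either side by continuity, this can be arranged. Once that verification is in place, the argument is essentially identical to the earlier cases.
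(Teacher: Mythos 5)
Your proposal is correct and follows exactly the same contour-integration strategy as the paper: define $g(z)=\Gamma(-z)f(z)(\mu e^{\sqrt2 c})^z$, observe that under the hypotheses $\Re(w)>-\tfrac12$ and $\Re(\alpha_j)>-\tfrac1{\sqrt2}$ the factor $f$ has no poles in the half-plane $\Re(z)>\Re(w)$ (so the only poles of $g$ in the strip are at $0,\dots,N-1$ from $\Gamma(-z)$), apply the residue theorem on the rectangle $C_R$, and kill the horizontal edges using the bounds from Lemmas~\ref{fzfinal3} and~\ref{gammalmm}. The paper's own argument is exactly this, with the same pole bookkeeping and the same exponential decay on the horizontal segments.
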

Next, we have the analogue of Lemma~\ref{zerofinal}, Lemma \ref{zerofinalone} and Lemma \ref{zerofinaltwo}. The proof is in \textsection\ref{zerofinalthreepf}.
\begin{lmm}\label{zerofinalthree}
For any $x_0\in (\Re(w),0)$, $\lim_{N\to \infty} F(N+x_0) = 0$. Thus, by Lemma \ref{zerofinal0three},
\[
F(x_0) =  \sum_{n=0}^\infty \frac{(-\mu e^{\sqrt{2}c})^n}{n!}f(n). 
\]
\end{lmm}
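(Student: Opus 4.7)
The plan is to adapt the strategy used in the analogous Lemmas~\ref{zerofinal},~\ref{zerofinalone}, and~\ref{zerofinaltwo}: combine the bound on $|f|$ from Lemma~\ref{fzfinal3} with the bound on $|\Gamma(-z)|$ from Lemma~\ref{gammalmm} along the vertical contour $\Re(z)=N+x_{0}$, and show that the resulting integrand decays super-exponentially in $N$ while retaining an integrable profile in $y$. Since $\Re(w)\in(-\tfrac{1}{2},0)$ and $x_{0}\in(\Re(w),0)\subset(-\tfrac{1}{2},0)$, for every sufficiently large $N$ the point $N+x_{0}$ lies in the range where Lemma~\ref{fzfinal3} applies (for a fixed $\delta$ determined by $x_{0}$ and $\Re(\alpha_{3})$) and where Lemma~\ref{gammalmm} applies with $n=N+1$ and $a=1-x_{0}\in(1,\tfrac{3}{2})$; moreover $x_{0}\notin\mathbb{Z}$, so $N+x_{0}+\i y$ is never an integer.

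First I would bound the product in the denominator of Lemma~\ref{gammalmm}: for $z=N+x_{0}+\i y$,
\[
\prod_{j=0}^{N}|j-N-x_{0}-\i y|\ \ge\ \prod_{j=0}^{N}|j-N-x_{0}|\ =\ |x_{0}|\prod_{k=1}^{N}(k+x_{0})\ =\ \frac{|x_{0}|\,\Gamma(N+1+x_{0})}{\Gamma(1+x_{0})},
\]
which by Stirling is a constant multiple of $N!\,N^{x_{0}}$. Next I would apply Lemma~\ref{fzfinal3} on the same line, giving
\[
|f(N+x_{0}+\i y)|\le C_{1}(1+N+|y|)^{C_{2}}e^{C_{3}N}\exp\Bigl(\tfrac{1}{2}(N+x_{0})\ln|N+x_{0}-w+\i y|-\tfrac{1}{2}y\arg(N+x_{0}-w+\i y)\Bigr).
\]
For bounded $|y|$ the logarithm is $\ln N+O(1)$, producing $N^{N/2}e^{O(N)}$; combined with $1/N!\sim e^{N}N^{-N}/\sqrt{2\pi N}$ and the harmless factor $(\mu e^{\sqrt{2}c})^{N+x_{0}}=e^{O(N)}$, this leaves the super-exponentially small factor $N^{-N/2}e^{O(N)}$.

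The $y$-integrability relies on $\arg(1-x_{0}-\i y)\to\mp\tfrac{\pi}{2}$ and $\arg(N+x_{0}-w+\i y)\to\pm\tfrac{\pi}{2}$ as $y\to\pm\infty$, so the two $\arg$ contributions in the product $|\Gamma|\cdot|f|$ add up to roughly $\exp(-\tfrac{3\pi}{4}|y|)$ for large $|y|$. The main obstacle is the intermediate regime $|y|\approx N$, where the growing term $\tfrac{1}{2}(N+x_{0})\ln|N+x_{0}-w+\i y|$ behaves like $\tfrac{N}{2}\ln|y|$ and competes with the $\arg$ decay. I would resolve this by splitting the $y$-integral at $|y|=N$: on $|y|\le N$ use $\ln|N+x_{0}-w+\i y|\le\ln(2N+|w|)=\ln N+O(1)$ and invoke the exponential $y$-decay for integrability; on $|y|>N$ use the elementary lower bound $\prod_{j=0}^{N}|j-N-x_{0}-\i y|\ge|y|^{N+1}$ (since each factor has magnitude at least $|y|$), which overwhelms both the $(2|y|)^{N/2}$ growth coming from the logarithm and all polynomial factors, leaving an integrand like $C^{N}|y|^{-N/2-1}$ on that range.

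Combining the two regimes yields $|F(N+x_{0})|\le C_{1}N^{-N/2}e^{C_{2}N}\to 0$, and the stated identity for $F(x_{0})$ follows from Lemma~\ref{zerofinal0three} on letting $N\to\infty$, using that the series on the right-hand side of Lemma~\ref{threeform1} converges absolutely by Lemma~\ref{convergence}.
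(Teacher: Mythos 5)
Your proposal is correct and reaches the right conclusion, but it takes a genuinely different route from the paper. The paper's proof reuses the lower bound \eqref{prodjineq}, $\prod_{j=0}^{N}|j-x-\i y|\ge C_{1}\exp\bigl(x\ln|x+\i y|-C_{2}x-\ln(1+y^{2})\bigr)$, which was established once in the proof of Lemma~\ref{zerofinal} by a comparison-with-integral and integration-by-parts argument; it then makes a single change of variable $u=y/(x-a+1)$ that renders the whole $y$-integral uniformly bounded in $N$, leaving a factor $\exp\bigl(-\tfrac{1}{2}x\ln(x-a+1)+O(x)\bigr)$. You instead produce two elementary lower bounds on the same product depending on whether $|y|\le N$ (drop the imaginary part, get the $y$-independent $|x_{0}|\,\Gamma(N+1+x_{0})/\Gamma(1+x_{0})\sim N!\,N^{x_{0}}$) or $|y|>N$ (each factor has magnitude at least $|y|$, giving $|y|^{N+1}$), and then do a case split on the $y$-integral. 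Both approaches extract the same super-exponential factor $N^{-N/2}e^{O(N)}$; yours trades the slick change-of-variable for two more straightforward estimates, requires you to track the crossover region $|y|\approx N$ explicitly (which you handle correctly via the $1/N!$), and avoids appealing to \eqref{prodjineq}, while the paper's argument is more economical given that \eqref{prodjineq} is already available from the zero-point case. Your verification that Lemma~\ref{fzfinal3} and Lemma~\ref{gammalmm} apply (with $n=N+1$, $a=1-x_{0}\in(1,2)$, the $\delta$ conditions holding for all large $N$, and $N+x_{0}+\i y$ never an integer) is the same as what the paper implicitly uses.
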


We are now ready to complete the proof of Theorem \ref{threecthm}. We will use $C, C_1, C_2,\ldots$ to denote arbitrary positive constants that may depend only on $\alpha_1$ and $\alpha_3$, whose values may change from line to line. Let $a$ and $r$ denote the real and imaginary parts of $w$. Combining Lemma~\ref{threeform1} and Lemma~\ref{zerofinalthree}, we see that for any $x_0\in (a,0)$, 
\begin{align*}
C(\balpha, \bx, 2^{-1/2}, \mu,c) &= \frac{A}{2\pi}\int_{-\infty}^\infty \Gamma(-x_0-\i y)f(x_0+\i y)(\mu e^{\sqrt{2}c})^{x_0+\i y}dy,
\end{align*}
where 
\[
A:= e^{\sqrt{2}(\alpha_1+\alpha_3) + 2\alpha_1\alpha_3}2^{-\sqrt{2}(\alpha_1+\alpha_3) }.
\]
Consequently,
\begin{align}\label{cmucthree}
C(\balpha, \bx, 2^{-1/2}, \mu,c) = \frac{A}{2\pi}\lim_{x_0\downarrow a} \int_{-\infty}^\infty \Gamma(-x_0-\i y)f(x_0+\i y)(\mu e^{\sqrt{2}c})^{x_0+\i y}dy.
\end{align}
On the other, Theorem \ref{threecthm} claims that 
\begin{align}\label{zeroaltthree}
C(\balpha, \bx, 2^{-1/2}, \mu,c) &= \frac{A}{2\pi} \int_{-\infty}^\infty \Gamma(-w-\i y)f(w+\i y)(\mu e^{\sqrt{2}c})^{w+\i y}dy\notag \\
&= \frac{A}{2\pi} \int_{-\infty}^\infty \Gamma(-a-\i y)f(a+\i y)(\mu e^{\sqrt{2}c})^{a+\i y}dy.
\end{align}
Thus, all we have to show is that the limit in equation \eqref{cmucthree} can be moved inside the integral. To show this, take any $x_0\in (a,\frac{1}{2}a)$ and any $y\in \R$. Let $z := x_0+\i y$. We will use certain observations from the proof of Lemma \ref{fzfinal3} in \textsection\ref{fzfinal3pf}. Let $f_0$ and $h$ be as in that proof. Recall that $f_0$ is the same as the $f$ from Lemma \ref{fzfinal2}, but with $\alpha_1$ replaced by $\alpha_1' = \alpha_1+\frac{1}{\sqrt{2}}$, and $\alpha_2$ replaced by $\alpha_3$. Thus, following the same steps as in the proof of equation \eqref{fcibd}, we arrive at the inequality
\begin{align}
|f_0(z)| &\le C_1 (1+|y|)^{C_2} e^{-C_3|y|}(1+|y-r|^{a}).\label{fcibdthree}
\end{align}
Next, consider the function $h$. Let
\[
\delta := \min\biggl\{\frac{1}{4},\, \frac{1}{2}(1+\sqrt{2}\alpha_3)\biggr\}.
\]
Let $b:= \Re(z-w-1)$, $t:= \Im(z-w+1)$, and $c:= 1+\sqrt{2}\alpha_1$. Then note that $b$ and $c-1$ are not nonpositive integers, since $b>-1$ and $b\ne 0$, and $c-1>-1$ and $c- 1=\sqrt{2}\alpha_1\ne 0$, where the last inequality holds because
\[
\Re(\sqrt{2}\alpha_1) = \Re(-w-2-\sqrt{2}\alpha_3) < \frac{1}{2} - 2 - \sqrt{2}\biggl(-\frac{1}{\sqrt{2}}\biggr)=-\frac{1}{2}.
\]
Also, $\Re(b) > -1$, 
\begin{align*}
\Re(1+b-c) &= \Re(z-w-1-\sqrt{2}\alpha_1)\\
&= \Re(z+1+\sqrt{2}\alpha_3) > \Re(z)+\delta > \Re(w)+\delta>-1+\delta,
\end{align*}
and since $\Re(z)<0$ and $\Re(1+\sqrt{2}\alpha_3)\ge 2\delta$, 
\begin{align}\label{f20}
|\Re(1+b-c)| = |\Re(z+1+\sqrt{2}\alpha_3)|\ge \delta.
\end{align}
(Note that the last inequality above is an assumption of Lemma \ref{fzfinal3}, but not an assumption here; it follows from the definition of $\delta$ and the location of $z$.) Thus, we may apply Lemma~\ref{hyperasymp2} to get
\begin{align}\label{f21new}
\biggl|{_2F_1}\biggl(1, z-w-1; 1+\sqrt{2}\alpha_1; \frac{1}{2}\biggr)\biggr| &= \biggl|{_2F_1}\biggl(1, b+\i t; c; \frac{1}{2}\biggr)\notag \\
&\le C_1 (|b|+|t|+1)^{C_2} e^{C_3|b|}\notag \\
&\le C_1(1+|y|)^{C_2}.
\end{align}
Next, let $c:= \Re(z+2+\sqrt{2}\alpha_1)$, $t := \Im(z+2+\sqrt{2}\alpha_1)$, and $b := z-w-1-\i t$. Since $\delta \le \frac{1}{4}$ and $\Re(z)\in (a, \frac{1}{2}a)$, we have that for any nonnegative integer $n$,
\begin{align*}
|c+n| &\ge c+n = \Re(z+2+\sqrt{2}\alpha_1+n)\\
&> \Re(z+1+n)\ge \delta+n\ge \delta, 
\end{align*}
and $c\ge \Re(z+1)\ge \delta$. Lastly, note that 
\begin{align*}
|b| &= |z-w-1-\i\Im(z+2+\sqrt{2}\alpha_1)| \le C.
\end{align*}
Thus, by Lemma \ref{hyperasymp},
\begin{align}\label{f22new}
\biggl|{_2F_1}\biggl(1,z-w-1;z+2+\sqrt{2}\alpha_1;\frac{1}{2}\biggr)\biggr| &= \biggl|{_2F_1}(1,b+\i t; c+\i t;\frac{1}{2}\biggr)\biggr|\notag \\
&\le C_1 e^{C_2(|b|+|c|)}\le C_3.
\end{align}
Finally, note that for $j=1,3$, 
\[
\Re(z+2+\sqrt{2}\alpha_j) > \Re(w+2+\sqrt{2}\alpha_j)> \Re(w+1)\ge \delta.
\]
Thus, by Lemma \ref{gammalmm3} and the inequality \eqref{f20}, 
\begin{align}\label{f23new}
\biggl|\frac{\Gamma(z+1+\sqrt{2}\alpha_3)}{\Gamma(z+2+\sqrt{2}\alpha_1)}\biggr| &= \biggl|\frac{\Gamma(z+2+\sqrt{2}\alpha_3)}{(z+1+\sqrt{2}\alpha_3)\Gamma(z+2+\sqrt{2}\alpha_1)}\biggr|\notag \\
&\le C_1(1+|y|)^{C_2}. 
\end{align}
By the inequalities \eqref{f21new}, \eqref{f22new}, and \eqref{f23new}, we get
\begin{align*}
|h(z)| &\le C_1(1+|y|)^{C_2}.
\end{align*}
Combining this with equation \eqref{fcibdthree}, we get
\begin{align}
|f(z)| = |f_0(z)h(z)| &\le C_1 (1+|y|)^{C_2} e^{-C_3|y|}(1+|y-r|^{a}).\label{fcibdthree1}
\end{align}
Also, by equation \eqref{gammacibd}, we have 
\begin{align}\label{gammacibdthree}
|\Gamma(-z)|\le C_1(1+|y|)^{C_2} e^{-C_3|y|}. 
\end{align}
Since $a\in (-\frac{1}{2},0)$ and the constants do not depend on $x_0$, the above bounds suffice to apply the dominated convergence theorem and complete the proof.

\subsubsection{Proof of Theorem \ref{threethm}}\label{threethmpf}
Let $a$ and $r$ denote the real and imaginary parts of $w$. By Theorem \ref{threecthm}, we have 
\begin{align*}
&C_\ep(\balpha, \bx, 2^{-1/2}, \mu) \\
&= \frac{A}{2\pi}\int_{-\infty}^\infty \int_{-\infty}^\infty \Gamma(-a-\i (y+r))f(a+\i (y+r))\mu^{a+\i (y+r)}e^{\sqrt{2}\i c y- \ep^2 c^2}dy dc\\
&=  \frac{A}{2\pi}\int_{-\infty}^\infty \int_{-\infty}^\infty \Gamma(-a-\i y)f(a+\i y)\mu^{a+\i y}e^{\sqrt{2}\i c (y-r)- \ep^2 c^2}dy dc,
\end{align*}
where
\[
A := e^{\sqrt{2}(\alpha_1+\alpha_3) + 2\alpha_1\alpha_3}2^{-\sqrt{2}(\alpha_1+\alpha_3) }.
\]
The bounds \eqref{fcibdthree1} and \eqref{gammacibdthree} (which remain valid as we take $x_0\to a$) allow us to interchange the order of integration above. Since
\[
 \int_{-\infty}^\infty e^{\sqrt{2}\i c (y-r)- \ep^2 c^2}dc = \frac{\sqrt{\pi}}{\ep}\exp\biggl(-\frac{(y-r)^2}{2\ep^2}\biggr),
\]
this gives
\begin{align*}
&C_\ep(\balpha,\bx, 2^{-1/2}, \mu) = \frac{A}{2\sqrt{\pi}\ep}\int_{-\infty}^\infty \Gamma(-a-\i y)f(a+\i y)\mu^{a+\i y}\exp\biggl(-\frac{(y-r)^2}{2\ep^2}\biggr)dy\\
&= \frac{A}{2\sqrt{\pi}}\int_{-\infty}^\infty \Gamma(-a-\i (r+\ep u))f(a+\i(r+ \ep u))\mu^{a+\i (r+\ep u)}e^{-\frac{1}{2}u^2}du\\
&= \frac{A}{2\sqrt{\pi}}\int_{-\infty}^\infty \Gamma(-w-\i\ep u)f(w+\i\ep u)\mu^{w+\i \ep u}e^{-\frac{1}{2}u^2}du.
\end{align*}
But, we have (with a simple applications of Lemma \ref{gammalmm2}), 
\begin{align*}
f(w + \i \ep u) &=(2\pi)^{w+\i \ep u} e^{\frac{1}{2}(w+\i \ep u)(-3+\i \ep u-w)+(w+\i \ep u)\log(\i \ep u)} \Gamma(w+1+\i \ep u) \\
&\qquad \cdot \frac{G(w+\i \ep u+2+\sqrt{2}\alpha_1) G(w+\i \ep u+2+\sqrt{2}\alpha_3)}{\Gamma(1+\i \ep u)^{w+\i \ep u}G(1+\sqrt{2}\alpha_1)G(1+\sqrt{2}\alpha_3)}\\
&\qquad \cdot \biggl\{\frac{{_2F_1}(1, -1+\i \ep u; 1+\sqrt{2}\alpha_1; \frac{1}{2})}{2\Gamma(1+\sqrt{2}\alpha_1) \Gamma(w+\i \ep u+1+\sqrt{2}\alpha_3)} \\
&\qquad - \frac{\sqrt{2}\alpha_3\, {_2F_1}(1,-1+\i \ep u;w+\i \ep u+2+\sqrt{2}\alpha_1;\frac{1}{2})}{2\Gamma(w+\i\ep u+2+\sqrt{2}\alpha_1)\Gamma(1+\sqrt{2}\alpha_3)}\biggr\}.
\end{align*}
Now note that 
\[
\log(\i \ep u) = \ln \ep + \ln |u| + \frac{\i \pi}{2}\sign(u) = \ln \ep + \log(\i u). 
\]
From these observations and the bounds provided by equations  \eqref{fcibdthree1} and \eqref{gammacibdthree}, it is now easy to apply the dominated convergence theorem and conclude that 
\begin{align*}
&\lim_{\ep\to 0} \ep^{-w} C_\ep(\balpha, \bx, 2^{-1/2}, \mu) \\
&= \frac{A (2\pi\mu )^w\Gamma(-w)\Gamma(w+1)e^{-\frac{1}{2}w(w+3)}G(-\sqrt{2}\alpha_1)G(-\sqrt{2}\alpha_3)}{2\sqrt{\pi} G(1+\sqrt{2}\alpha_1) G(1+\sqrt{2}\alpha_3)} \\
&\qquad \cdot \biggl\{\frac{{_2F_1}(1, -1; 1+\sqrt{2}\alpha_1; \frac{1}{2})}{2\Gamma(1+\sqrt{2}\alpha_1) \Gamma(-1-\sqrt{2}\alpha_1)} - \frac{\sqrt{2}\alpha_3\, {_2F_1}(1,-1;-\sqrt{2}\alpha_3;\frac{1}{2})}{2\Gamma(-\sqrt{2}\alpha_3)\Gamma(1+\sqrt{2}\alpha_3)}\biggr\}\\
&\qquad \cdot \int_{-\infty}^\infty(\i u)^we^{-\frac{1}{2}u^2}du.
\end{align*}
Now, note that 
\[
{_2F_1}(1,-1;c;z) = 1 - \frac{z}{c}.
\]
Thus, by the identity $\Gamma(z)\Gamma(1-z) = \pi/\sin(\pi z)$, 
\begin{align*}
&\frac{{_2F_1}(1, -1; 1+\sqrt{2}\alpha_1; \frac{1}{2})}{2\Gamma(1+\sqrt{2}\alpha_1) \Gamma(-1-\sqrt{2}\alpha_1)} - \frac{\sqrt{2}\alpha_3\, {_2F_1}(1,-1;-\sqrt{2}\alpha_3;\frac{1}{2})}{2\Gamma(-\sqrt{2}\alpha_3)\Gamma(1+\sqrt{2}\alpha_3)}\\
&= \frac{1+2\sqrt{2}\alpha_1}{4(1+\sqrt{2}\alpha_1)\Gamma(1+\sqrt{2}\alpha_1) \Gamma(-1-\sqrt{2}\alpha_1)} - \frac{1+2\sqrt{2}\alpha_3}{4\Gamma(-\sqrt{2}\alpha_3)\Gamma(1+\sqrt{2}\alpha_3)}\\
&= \frac{1+2\sqrt{2}\alpha_1}{4\Gamma(2+\sqrt{2}\alpha_1) \Gamma(-1-\sqrt{2}\alpha_1)}+ \frac{1+2\sqrt{2}\alpha_3}{4\Gamma(-1-\sqrt{2}\alpha_3)\Gamma(2+\sqrt{2}\alpha_3)}\\
&= \frac{(1+2\sqrt{2}\alpha_1)\sin(\sqrt{2}\pi \alpha_1)}{4\pi}+ \frac{(1+2\sqrt{2}\alpha_3)\sin(\sqrt{2}\pi \alpha_3)}{4\pi}.
\end{align*}
The proof of the main assertion of the theorem is now completed by invoking equation~\eqref{cosid} and combining the above calculations. 

To show that the limit is nonzero, first recall that by assumption, $\Re(\sqrt{2}\alpha_j) > -1$ for $j=1,3$. On the other hand, since $\Re(w)>-\frac{1}{2}$ and $w = -2-\sqrt{2}(\alpha_1+\alpha_3)$, we have 
\[
\Re(\sqrt{2}\alpha_1) + \Re(\sqrt{2}\alpha_3) < -\frac{3}{2}.
\]
Thus,
\[
\Re(\sqrt{2}\alpha_1) < -\frac{3}{2}-\Re(\sqrt{2}\alpha_3) < -\frac{3}{2}-(-1) = -\frac{1}{2}.
\]
We conclude that $-1< \Re(\sqrt{2}\alpha_1)<-\frac{1}{2}$. Similarly, $-1< \Re(\sqrt{2}\alpha_3)<-\frac{1}{2}$. In particular, neither $\sqrt{2}\alpha_1$ nor $\sqrt{2}\alpha_3$ can be a nonnegative integer. This implies that $G(-\sqrt{2}\alpha_1)$ and $G(-\sqrt{2}\alpha_3)$ are nonzero, since the zeros of $G$ are at the nonpositive integers. Since $w\in (-\frac{1}{2},0)$, $\cos(\frac{\pi w}{2})$ is nonzero. Thus, we conclude that the only way the limit can be zero is if 
\[
(1+2\sqrt{2}\alpha_1)\sin(\sqrt{2}\pi \alpha_1)+ (1+2\sqrt{2}\alpha_3)\sin(\sqrt{2}\pi \alpha_3)= 0.
\]
But since $-1< \sqrt{2}\alpha_j <-\frac{1}{2}$ for $j=1,3$, the following lemma shows that the above sum has  strictly positive real part, and therefore, is nonzero. This completes the proof.
\begin{lmm}
Take any $z\in \C$ with $-1< \Re(z)< -\frac{1}{2}$. Then $\Re((1+2z)\sin(\pi z)) > 0$.
\end{lmm}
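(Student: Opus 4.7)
The plan is to reduce the claim to a direct real–variable computation by splitting $z = x + \i y$ with $x\in(-1,-\tfrac12)$ and $y\in\R$, and then expanding $\sin(\pi z)$ via the addition formula
\[
\sin(\pi z) = \sin(\pi x)\cosh(\pi y) + \i\cos(\pi x)\sinh(\pi y).
\]
Writing $1+2z = (1+2x) + 2\i y$ and multiplying out, the real part becomes
\[
\Re\bigl((1+2z)\sin(\pi z)\bigr) = (1+2x)\sin(\pi x)\cosh(\pi y) \;-\; 2y\cos(\pi x)\sinh(\pi y).
\]

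Next, I would check the signs of the two summands on the strip $x\in(-1,-\tfrac12)$. Here $1+2x\in(-1,0)$, and since $\pi x\in(-\pi,-\tfrac{\pi}{2})$ we have both $\sin(\pi x)<0$ and $\cos(\pi x)<0$. Consequently $(1+2x)\sin(\pi x)>0$, and since $\cosh(\pi y)>0$ always, the first term is strictly positive for every $y\in\R$. For the second term, use the elementary fact that $y\sinh(\pi y)\ge 0$ for all $y\in\R$ (both factors share sign); combined with $-\cos(\pi x)>0$ this gives $-2y\cos(\pi x)\sinh(\pi y)\ge 0$.

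Adding, the real part is bounded below by the strictly positive quantity $(1+2x)\sin(\pi x)\cosh(\pi y)$, so $\Re((1+2z)\sin(\pi z))>0$. There is no real obstacle here: the argument is just the sign analysis above, and the only thing to be a little careful about is to observe that $\cos(\pi x)$ and $\sin(\pi x)$ are both negative on the whole open interval $(-1,-\tfrac12)$ (the boundary values $x=-\tfrac12$ and $x=-1$ are excluded, which matters because $\cos(\pi x)$ vanishes at the former endpoint and the second term would be zero rather than nonnegative in a degenerate way, but here it is simply not needed since we have strict positivity of the first term).
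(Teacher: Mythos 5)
Your proof is correct and takes essentially the same approach as the paper: split $z=x+\i y$, expand $\sin(\pi z)$ via the addition formula, isolate $\Re((1+2z)\sin(\pi z)) = (1+2x)\sin(\pi x)\cosh(\pi y) - 2y\cos(\pi x)\sinh(\pi y)$, and argue by signs that the first term is strictly positive and the second nonnegative on the strip.
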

\begin{proof}
Let $x$ and $y$ be the real and complex parts of $z$. Then 
\begin{align*}
\sin(\pi z) &= \sin (\pi x + \i \pi y) \\
&= \sin(\pi x) \cos(\i \pi y) + \cos(\pi x) \sin(\i \pi y)\\
&= \sin (\pi x) \cosh(\pi y) + \i \cos(\pi x) \sinh(\pi y).
\end{align*}
Thus,
\begin{align*}
\Re((1+2z)\sin(\pi z)) &= (1+2x) \sin (\pi x) \cosh(\pi y) - 2y \cos(\pi x) \sinh(\pi y).
\end{align*}
Since $-1< x< -\frac{1}{2}$, we have $1+2x< 0$, $\sin (\pi x) < 0$, and $\cos (\pi x) < 0$. Also, irrespective of the value of $y$, $\cosh(\pi y) > 0$ and $y\sinh(\pi y) \ge 0$. Thus, the above expression is strictly positive.
\end{proof}

\subsubsection{Proof of Theorem \ref{distthm3}}\label{distthm3pf}
From Lemma \ref{zeroform1three}, we have
\begin{align}\label{repeatthree}
C(\balpha, \bx, 2^{-1/2}, \mu, c) =A\sum_{n=0}^\infty \frac{(-\mu e^{\sqrt{2}c})^n}{n!}a_n,
\end{align}
where 
\[
A := e^{\sqrt{2}(\alpha_1+\alpha_3) + 2\alpha_1\alpha_3}2^{-\sqrt{2}(\alpha_1+\alpha_3) }.
\]
and $a_n$ are the coefficients defined in equation \eqref{cn3} with $\alpha_2 = \frac{1}{\sqrt{2}}$. Although Lemma \ref{zeroform1three} is stated for $c\in \R$, it is easy to see from the proof that the formula continues to be valid for any complex $c$. By the inequality \eqref{wagner2}, we can integrate $c$ from $0$ to $\sqrt{2}\pi \i$ by moving the integral inside the infinite sum, to get
\begin{align*}
 -\i \int_0^{\sqrt{2}\pi} e^{-\sqrt{2}\i wt} C(\balpha, \bx, 2^{-1/2}, \mu, \i t) dt &= -\i A\sum_{n=0}^\infty \frac{(-\mu )^n}{n!}a_n\int_0^{\sqrt{2}\pi} e^{\sqrt{2}\i (n-w)t} dt\notag\\
&= A\sum_{n=0}^\infty \frac{(-\mu )^n(1-e^{2\pi \i (n-w)})}{\sqrt{2} n! (n-w)}a_n\notag \\
&= A (1-e^{-2\pi \i w}) \sum_{n=0}^\infty \frac{(-\mu )^n}{\sqrt{2} n! (n-w)}a_n.
\end{align*}
By inequality \eqref{wagner2}, it follows that this quantity is finite. Moreover, it has no dependence on $\ep$. Since $\Re(w)<0$, we see that the product of the above quantity and $\ep^{-w}$ tends to zero as $\ep \to 0$. Next, note that
\begin{align*}
&(1-e^{-2\pi \i w})\int_0^\infty e^{-\sqrt{2}wc - \ep^2 c^2} C(\balpha, \bx, 2^{-1/2},\mu ,c) dc\\
&= (1-e^{-2\pi\i w}) \int_{-\infty}^\infty e^{-\sqrt{2}w c - \ep^2 c^2} C(\balpha, \bx, 2^{-1/2},\mu ,c) dc \\
&\qquad \qquad - (1-e^{-2\pi \i w}) \int_{-\infty}^0 e^{-\sqrt{2}w c - \ep^2 c^2} C(\balpha, \bx, 2^{-1/2},\mu ,c) dc.
\end{align*}
Theorem \ref{threethm} tells us the behavior of the first term as $\ep \to 0$. We will now show that the second term multiplied by $\ep^{-w}$ tends to zero as $\ep\to 0$. Since $\Re(w)<0$, it suffices to show that the second term remains bounded as $\ep \to 0$. To this end, first note that by equation~\eqref{repeatthree} and an application of the bound \eqref{wagner2} (to move the integration inside the infinite sum), we get
\begin{align*}
&(1-e^{-2\pi \i w}) \int_{-\infty}^0 e^{-\sqrt{2}w c - \ep^2 c^2} C(\balpha, \bx, 2^{-1/2},\mu ,c) dc\notag\\
&= (1-e^{-2\pi \i w}) \sum_{n=0}^\infty \frac{(-\mu )^n}{n!}a_n\int_{-\infty}^0 e^{\sqrt{2}c(n-w) - \ep^2 c^2 } dc.
\end{align*}
But since $\Re(w)<0$, we have that for any $n\ge 0$,
\begin{align*}
\int_{-\infty}^0 |e^{\sqrt{2}c(n-w) - \ep^2 c^2 }| dc &\le \int_{-\infty}^0 e^{\sqrt{2}c(n-\Re(w)) } dc = \frac{1}{\sqrt{2}(n- \Re(w))}. 
\end{align*}
By another application of the inequality \eqref{wagner2}, this completes the proof.

\section*{Acknowledgements}
The author is indebted to Edward Witten for introducing him to this body of problems, and numerous helpful discussions. The author also thanks Ioannis Tsiares and Beatrix M\"uhlmann for helpful correspondence, in particular for pointing out recent bootstrap results at rational central charge and for discussions of the contour/saddle-point interpretation of the $b=2^{-1/2}$ cancellation. The author's research was supported in part by NSF grant DMS-2450608 and the Simons Collaboration grant on `Probabilistic Paths to QFT'.

\bibliographystyle{abbrvnat}

\addcontentsline{toc}{section}{References}

\bibliography{myrefs}

\appendix

\setcounter{section}{0}

\refstepcounter{section}   
\section*{Appendix}        
\addcontentsline{toc}{section}{Appendix}



\subsection{Hypergeometric function toolbox}
The following well-known result shows that the hypergeometric function ${_2F_1}(a,b;c;z)$ is well-defined through its usual power series expansion and analytic in $(a,b,c,z)$ when $|z|< 1$ and $c$ is not a nonpositive integer. 
\begin{lmm}\label{hyperlemma}
The power series
\[
{_2F_1}(a,b;c;z) := \sum_{n=0}^\infty \frac{(a)_n(b)_n}{(c)_nn!} z^n
\]
is absolutely convergent and defines an analytic function of $(a,b,c,z)$ in the region 
\[
\Omega:= \{(a,b,c,z)\in \C^4: c\notin\{0,-1,-2,\ldots\},\, |z|<1\}.
\] 
\end{lmm}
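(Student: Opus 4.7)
The plan is to establish the result by the standard Weierstrass M-test argument: each individual summand is manifestly analytic on $\Omega$ (it is a rational function of $(a,b,c)$ times a monomial in $z$, with denominator $(c)_n n!$ that does not vanish when $c\notin\{0,-1,-2,\ldots\}$), so it suffices to produce a bound that yields uniform convergence on compact subsets of $\Omega$, and then invoke Weierstrass' theorem on uniform limits of analytic functions.

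First I would fix a compact set $K\subset \Omega$ and extract three finite quantities: an upper bound $M$ for $\max(|a|,|b|,|c|)$ on $K$, an upper bound $\rho<1$ for $|z|$ on $K$, and a lower bound $\delta>0$ for the distance from $c$ to the set of nonpositive integers as $(a,b,c,z)$ varies over $K$ (the latter is positive by compactness and the definition of $\Omega$). Then I would estimate the $n$-th summand termwise: $|(a)_n|\le \prod_{k=0}^{n-1}(M+k)$ and similarly for $|(b)_n|$, while $|(c)_n|=\prod_{k=0}^{n-1}|c+k|\ge \delta\cdot \prod_{k=\lceil M\rceil}^{n-1}(k-M)$ for $n$ larger than $\lceil M\rceil$, since $|c+k|\ge k-|c|$ once $k>|c|$. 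Combining these gives a bound of the form $|(a)_n(b)_n/((c)_n n!)|\le C_K\, n^{A_K}$ for constants $C_K,A_K$ depending only on $K$, so that the $n$-th term is dominated on $K$ by $C_K n^{A_K}\rho^n$, which is summable.

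The ratio test is even cleaner: the ratio of consecutive terms is $|(a+n)(b+n)/((c+n)(n+1))|\,|z|$, and this converges (uniformly on $K$) to $|z|\le \rho<1$, giving a dominating geometric series on $K$. Either way, the Weierstrass M-test gives absolute and uniform convergence on $K$. Finally, since each partial sum is analytic on $\Omega$ and the partial sums converge uniformly on every compact subset of $\Omega$, the limit function is analytic on $\Omega$ by the standard theorem (which one can prove by applying Morera's theorem variable by variable, or by using Cauchy's integral formula to pass the limit through).

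No real obstacle is expected here; the only point that requires a moment's attention is the lower bound on $|(c)_n|$, which relies on compactness of $K$ to separate $c$ from the forbidden set $\{0,-1,-2,\ldots\}$ uniformly. Everything else is routine, and the lemma is standard material; it is included for reference because ${_2F_1}$ appears explicitly in Theorem~\ref{threecthm}.
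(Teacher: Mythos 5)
Your argument is correct and is essentially the same as the paper's: both hinge on the ratio of consecutive terms, which for $|z|<1$ and $c$ uniformly bounded away from the nonpositive integers gives uniform convergence on compacta, and then analyticity follows from Weierstrass' theorem. The paper carries this out directly via the bound $\bigl|\tfrac{(a+n)(b+n)}{(c+n)(n+1)}-1\bigr|\le C/n$ rather than a standalone Pochhammer estimate, but the substance is identical.
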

\begin{proof}
Take any $(a,b,c,z)\in \Omega$. For each $n$, let
\[
q_n := \frac{(a)_n(b)_n}{(c)_nn!}z^n.
\]
Since $c$ is not a nonpositive integer, each $q_n$ is well-defined, and 
\[
\frac{q_{n+1}}{q_n} = \frac{(a+n)(b+n)z}{(c+n)(n+1)}.
\]
Since $c$ is not a nonpositive integer, we also have
\begin{align*}
\biggl|\frac{(a+n)(b+n)}{(c+n)(n+1)}-1\biggr| &= \frac{|ab-c+ (a+b-c-1)n|}{|c+n|(n+1)}\le \frac{C}{n},
\end{align*}
where $C$ depends continuously only on $a,b,c$.  Thus, 
\[
\biggl|\frac{q_{n+1}}{q_n}\biggr| \le \biggl(1+\frac{C}{n}\biggr) |z|. 
\]
Since $|z|<1$, this bound suffices to show that the analytic function
\[
f_N(a,b,c,z) := \sum_{n=0}^N \frac{(a)_n(b)_n}{(c)_nn!} z^n
\]
converges uniformly on any compact subset of $\Omega$ as $N\to \infty$. This proves the claim.
\end{proof}
The next result is a special case of a theorem of Euler.
\begin{lmm}\label{eulerlmm}
Suppose that $\Re(c) > \Re(b)>0$ and $|z|<1$. Then 
\[
{_2F_1}(a,b;c;z) = \frac{\Gamma(c)}{\Gamma(b)\Gamma(c-b)} \int_0^1 t^{b-1} (1-t)^{c-b-1} (1-zt)^{-a} dt,
\]
where $(1-zt)^{-a} := \exp(-a\log(1-zt))$, where $\log$ denotes the analytic branch of logarithm on $\C\setminus(-\infty,0]$ that is real-valued on $(0,\infty)$.
\end{lmm}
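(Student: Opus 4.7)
The plan is to prove Lemma \ref{eulerlmm} by the classical route: expand $(1-zt)^{-a}$ as a binomial series in the integrand, exchange the sum and integral via a Fubini/dominated-convergence argument, and identify the remaining integrals as Beta functions. The only subtleties are (i) the branch-of-logarithm conventions implicit in $(1-zt)^{-a}$, and (ii) the justification of the interchange, which is the main technical step.

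First, I would observe that for $t \in [0,1]$ and $|z|<1$ the complex number $1-zt$ lies in the open disk of radius $|z|<1$ centered at $1$, hence in the open right half-plane $\{\Re(w)>0\}\subset \C\setminus(-\infty,0]$, where the principal branch of $\log$ is defined. Consequently $(1-zt)^{-a}=\exp(-a\log(1-zt))$ is unambiguously defined and equals the convergent binomial series
\begin{equation*}
(1-zt)^{-a} = \sum_{n=0}^\infty \frac{(a)_n}{n!}(zt)^n,
\end{equation*}
which can be proved either by recognizing it as the Taylor series of the principal-branch power function at $w=0$, or by verifying that both sides are analytic in $w=zt$ on $|w|<1$ and agree for $w$ real.

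Next I would substitute this expansion into the integral on the right-hand side of Lemma \ref{eulerlmm} and attempt to interchange summation and integration. For absolute convergence, I would bound
\begin{equation*}
\int_0^1 t^{\Re(b)-1}(1-t)^{\Re(c-b)-1}\sum_{n=0}^\infty \frac{|(a)_n|}{n!}|z|^n t^n\,dt \;=\; \sum_{n=0}^\infty \frac{|(a)_n|}{n!}|z|^n\, B(\Re(b)+n,\Re(c-b)),
\end{equation*}
using the hypotheses $\Re(b)>0$ and $\Re(c-b)>0$ that make both Beta integrals finite. Stirling gives $B(\Re(b)+n,\Re(c-b))=O(n^{-\Re(c-b)})$ and $|(a)_n|/n!=O(n^{\Re(a)-1})$, so the series converges for $|z|<1$, justifying Fubini. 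This is the main (and only) technical step.

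After the interchange, the integrals evaluate to classical Beta functions,
\begin{equation*}
\int_0^1 t^{b+n-1}(1-t)^{c-b-1}\,dt = B(b+n,c-b)=\frac{\Gamma(b+n)\Gamma(c-b)}{\Gamma(c+n)},
\end{equation*}
and using $\Gamma(b+n)=(b)_n\Gamma(b)$ and $\Gamma(c+n)=(c)_n\Gamma(c)$ the prefactor $\Gamma(c)/(\Gamma(b)\Gamma(c-b))$ cancels neatly. What remains is exactly the series $\sum_{n\ge 0}\frac{(a)_n(b)_n}{(c)_n n!}z^n$, which by Lemma \ref{hyperlemma} equals ${}_2F_1(a,b;c;z)$. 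I expect no further obstacles: the branch question is settled by the location of $1-zt$, and the Fubini step is standard given $\Re(c)>\Re(b)>0$ and $|z|<1$.
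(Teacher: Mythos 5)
Your proof is correct and follows essentially the same route as the paper's: expand $(1-zt)^{-a}$ in its binomial series, justify the interchange of sum and integral via dominated convergence under the hypotheses $\Re(c)>\Re(b)>0$ and $|z|<1$, and identify the resulting integrals as Beta functions. The paper glosses over the interchange step with a one-line appeal to dominated convergence where you spell out the explicit estimate, but the argument is the same.
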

\begin{proof}
Since $|zt|<1$, we can expand
\[
(1-zt)^{-a} = \sum_{n=0}^\infty {-a \choose n} (-zt)^n = \sum_{n=0}^\infty \frac{(a)_n(zt)^n}{n!}.
\]
From the given conditions and the above identity, it is now easy to apply the dominated convergence theorem and get
\begin{align*}
&\frac{\Gamma(c)}{\Gamma(b)\Gamma(c-b)} \int_0^1 t^{b-1} (1-t)^{c-b-1} (1-zt)^{-a} dt\\
&=  \frac{\Gamma(c)}{\Gamma(b)\Gamma(c-b)} \sum_{n=0}^\infty \frac{(a)_nz^n}{n!}\int_0^1 t^{n+b-1} (1-t)^{c-b-1} dt\\
&=  \frac{\Gamma(c)}{\Gamma(b)\Gamma(c-b)} \sum_{n=0}^\infty \frac{(a)_nz^n}{n!}\frac{\Gamma(n+b)\Gamma(c-b)}{\Gamma(n+c)}.
\end{align*}
But $\Gamma(n+x)/\Gamma(x) = (x)_n$ for any $x$. This completes the proof.
\end{proof}
A consequence of Euler's lemma is the following special case of an identity of Pfaff.
\begin{lmm}\label{pfafflmm}
Suppose that $c\in \C\setminus\{0,-1,-2,\ldots\}$, $|z|<1$, and $|z|<|z-1|$. Then for any $a,b\in \C$, 
\[
{_2F_1}(a,b;c;z) = (1-z)^{-a} {_2F_1}\biggl(a, c-b;c; \frac{z}{z-1}\biggr).
\]
\end{lmm}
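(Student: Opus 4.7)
The plan is to first establish the identity in the sub-region $\Re(c)>\Re(b)>0$ directly from the integral representation of Lemma~\ref{eulerlmm}, and then extend it to the full range of admissible $(b,c)$ by analytic continuation in those variables.

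For the first step, I will start from Euler's integral representation of $_2F_1(a,b;c;z)$ and apply the change of variables $t \mapsto 1-s$. The Beta-type density $t^{b-1}(1-t)^{c-b-1}$ transforms into $s^{c-b-1}(1-s)^{b-1}$, which already has the shape appropriate for $_2F_1(\,\cdot\,,c-b;c;\,\cdot\,)$. The decisive algebraic identity is
\[
1 - z(1-s) \;=\; (1-z)\left(1 - \frac{zs}{z-1}\right),
\]
which lets one split the integrand. Since $|z|<1$ and $|z/(z-1)|<1$ (the latter being exactly the assumption $|z|<|z-1|$), both factors on the right, as well as their product $1-z(1-s)$, lie in the open right half-plane for every $s\in[0,1]$. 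The principal-branch identity
\[
(1-z(1-s))^{-a} \;=\; (1-z)^{-a}\Bigl(1-\tfrac{zs}{z-1}\Bigr)^{-a}
\]
therefore holds along the entire integration path; one way to see this is to observe that the difference of the logarithms of the two sides is a continuous, $2\pi\mathrm{i}\,\mathbb{Z}$-valued function of $s$ that vanishes at $s=0$, hence vanishes identically. Pulling the $s$-independent factor $(1-z)^{-a}$ out of the integral and recognizing what remains, via a second application of Lemma~\ref{eulerlmm} with parameters $(a, c-b, c)$ and argument $z/(z-1)$ (whose hypothesis $\Re(c)>\Re(c-b)>0$ is again $\Re(c)>\Re(b)>0$), yields exactly $(1-z)^{-a}\, _2F_1\bigl(a, c-b; c; z/(z-1)\bigr)$.

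For the second step, fix $a \in \C$ and $z$ with $|z|<1$ and $|z|<|z-1|$, and view both sides as functions of the pair $(b,c)$. By Lemma~\ref{hyperlemma}, both sides are analytic on the connected open set $\{(b,c)\in\C^2 : c\notin\{0,-1,-2,\ldots\}\}$, since $z$ and $z/(z-1)$ both lie in the open unit disk under the standing assumption on $z$, and neither $c$ nor $c$ (the lower parameter of the RHS) is a nonpositive integer on this set. The two sides agree on the nonempty open subset $\{\Re(c)>\Re(b)>0\}$, so by the identity theorem for analytic functions of several complex variables they agree on the entire connected domain.

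The only real obstacle is the bookkeeping with branches when factoring $(1-z(1-s))^{-a}$; this is precisely where the hypothesis $|z|<|z-1|$ is needed, and it is handled by the right-half-plane observation above. Everything else is a routine change of variable plus an invocation of analyticity.
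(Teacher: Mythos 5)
Your proposal is correct and follows essentially the same route as the paper: Euler's integral (Lemma~\ref{eulerlmm}) plus the substitution $s=1-t$, the factorization $1 - z(1-s) = (1-z)\bigl(1-\tfrac{zs}{z-1}\bigr)$ with the right-half-plane observation to keep the principal branches consistent, followed by analytic continuation in $(b,c)$ via Lemma~\ref{hyperlemma}. The only cosmetic difference is the justification of the branch equality, which you phrase as a continuity-plus-integer-valuedness argument at $s=0$ while the paper invokes the additivity of the principal logarithm on the open right half-plane; the two arguments are equivalent here.
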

\begin{proof}
First, suppose that $\Re(c) > \Re(b)>0$. 
Take any $|z|<1$ and $t\in [0,1]$. Since $|zt|<1$ and $|z|<1$, we have $\Re(1-zt) > 0$ and $\Re(1-z)>0$. Since the reciprocal of an element of the open right half-plane is also in the open right half-plane, we have $\Re(1/(1-z))>0$. Now, the branch of logarithm we chose in the statement of Lemma \ref{eulerlmm} has the property that $\log(uv) = \log u + \log v$ and $\log(1/u) = -\log u$ for any $u,v$ in the open right half-plane. Thus, 
\[
\log\frac{1-zt}{1-z} = \log(1-zt) - \log (1-z). 
\]
Consequently,
\begin{align*}
(1-zt)^{-a} &= \exp(-a\log(1-zt))\\
&= \exp\biggl(-a\log \frac{1-zt}{1-z} - a\log (1-z)\biggr) = \biggl(\frac{1-zt}{1-z}\biggr)^{-a} (1-z)^{-a}.
\end{align*}
Thus, Lemma \ref{eulerlmm} can be rewritten as
\begin{align*}
{_2F_1}(a,b;c;z) &= \frac{(1-z)^{-a}\Gamma(c)}{\Gamma(b)\Gamma(c-b)} \int_0^1 t^{b-1} (1-t)^{c-b-1}  \biggl(\frac{1-zt}{1-z}\biggr)^{-a} dt.
\end{align*}
Making the change of variable $s=1-t$, this becomes
\begin{align*}
{_2F_1}(a,b;c;z) &= \frac{(1-z)^{-a}\Gamma(c)}{\Gamma(b)\Gamma(c-b)} \int_0^1 (1-s)^{b-1} s^{c-b-1}  \biggl(1-\frac{zs}{z-1}\biggr)^{-a} ds.
\end{align*}
Since $|z/(z-1)|<1$, this allows us to once again apply Lemma \ref{eulerlmm} to get the desired result under the assumption that $\Re(c) > \Re(b)>0$. But by Lemma \ref{eulerlmm}, both sides of the claimed identity are analytic functions of $(a,b,c)$ in the region where $c$ is not a nonpositive integer (for a fixed $z$). Since they are equal in the region where $\Re(c)>\Re(b)>0$, we conclude that they are equal everywhere in this domain. 
\end{proof}
We will now use the above results to prove the following lemma, which will be crucial in the analysis of the three-point function.
\begin{lmm}\label{hyperlmm}
For any positive integer $n$ and any $a,b\in \C\setminus\{0,-1,-2,\ldots\}$, 
\begin{align*}
\sum_{j=0}^n \frac{1}{\Gamma(j+a)\Gamma(n-j+b)} &= \frac{{_2F_1}(1, n+a+b-1; a; \frac{1}{2})}{2\Gamma(a) \Gamma(n+b)}\\
&\qquad - \frac{(b-1)\, {_2F_1}(1,n+a+b-1;n+a+1;\frac{1}{2})}{2\Gamma(n+a+1)\Gamma(b)}.
\end{align*}
\end{lmm}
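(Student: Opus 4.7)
The plan is to induct on $n \ge 0$, reducing both the base case and the inductive step to a single Gauss-type contiguous relation for ${_2F_1}$ at the point $z = \frac{1}{2}$. For the inductive step from $n$ to $n+1$, I split off the last term of the sum:
$$\sum_{j=0}^{n+1}\frac{1}{\Gamma(j+a)\Gamma(n+1-j+b)} = \sum_{j=0}^{n}\frac{1}{\Gamma(j+a)\Gamma(n-j+(b+1))} + \frac{1}{\Gamma(n+a+1)\Gamma(b)}.$$
Applying the induction hypothesis with $b \mapsto b+1$ to the first piece, the leading ${_2F_1}$ term matches the target formula for $n+1$ on the nose (both contribute $\frac{{_2F_1}(1, n+a+b; a; \frac{1}{2})}{2\Gamma(a)\Gamma(n+b+1)}$), and after clearing gamma factors the remaining equality reduces to
$$c\,{_2F_1}\bigl(1, A; c; \tfrac{1}{2}\bigr) - (A-c)\,{_2F_1}\bigl(1, A; c+1; \tfrac{1}{2}\bigr) = 2c, \qquad A := n+a+b,\ c := n+a+1.$$
The base case $n = 0$ reduces, in exactly the same way, to this same identity with $c = a$ and $A = a+b-1$. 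Hence proving this contiguous relation closes both.

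To prove it, set $F(\gamma) := {_2F_1}(1, A; \gamma; z)$. A termwise manipulation of the defining series, using the elementary $(c+k)/(c+1)_k = c/(c)_k$, yields $cF(c) = cF(c+1) + zF'(c+1)$. On the other hand, $F(c+1)$ satisfies the hypergeometric ODE $z(1-z)y'' + (c+1 - (A+2)z)y' - Ay = 0$, so a direct computation shows that the combination $\Psi(z) := (c - Az)F(c+1) + z(1-z)F'(c+1)$ has $\Psi'(z) \equiv 0$; together with the obvious value $\Psi(0) = c$, this forces $\Psi \equiv c$. Substituting $zF'(c+1) = c\bigl(F(c) - F(c+1)\bigr)$ into $\Psi(\frac{1}{2}) = c$ and simplifying produces exactly the contiguous relation above. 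The assumption $a, b \notin \{0, -1, -2, \ldots\}$ ensures that $a$, $n+a+1$, and $n+a+2$ are all outside the excluded set, so all ${_2F_1}$'s are well-defined at $z = \frac{1}{2}$ by Lemma~\ref{hyperlemma}.

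The main obstacle is essentially bookkeeping: verifying that the ${_2F_1}$ structure from the induction hypothesis, together with the leftover term $1/[\Gamma(n+a+1)\Gamma(b)]$, collapses precisely to the Gauss-type contiguous relation at $z = \frac{1}{2}$. Once this telescoping is identified as the decisive mechanism, no deeper analytic difficulty remains.
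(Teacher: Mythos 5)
Your proof is correct, and it takes a genuinely different route from the paper's. The paper proves the identity directly: it inserts a parameter $z$ into the sum, writes the finite sum $\sum_{j=0}^n$ as a full hypergeometric series minus its tail $\sum_{j\ge n+1}$, identifies the tail (after a shift) as a second $_2F_1$, applies the Pfaff transformation (Lemma~\ref{pfafflmm}) to both pieces, and finally takes $z\to 1$ through the disk so that $z/(z+1)\to\frac12$. Your argument instead inducts on $n\ge 0$: the telescoping decomposition $\sum_{j=0}^{n+1}\cdots = \sum_{j=0}^n[\text{with }b\mapsto b+1]\cdots + \frac{1}{\Gamma(n+a+1)\Gamma(b)}$ reduces both the base case and the inductive step to the single contiguous relation $c\,F(c)-(A-c)\,F(c+1)=2c$ at $z=\frac12$ (with $F(\gamma):={_2F_1}(1,A;\gamma;z)$), which you then prove by exhibiting $\Psi(z)=(c-Az)F(c+1)+z(1-z)F'(c+1)$ as a first integral of the hypergeometric ODE, combined with the termwise identity $c\,(F(c)-F(c+1))=z\,F'(c+1)$. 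All the individual steps check out: the parameter shift preserves $b\notin\{0,-1,\ldots\}$; the relevant $c$-parameters $n+a+1$ and $n+a+2$ stay off the nonpositive integers because $a$ does; $\Psi'\equiv 0$ is a direct consequence of the ODE; and $\Psi(0)=c$ pins down the constant. What each approach buys: yours is more self-contained (no dependence on the Pfaff identity or the limit $z\to 1$), and the two key sub-identities are elementary; the paper's is shorter in presentation, proving the formula in one pass without induction.
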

\begin{proof}
Fix $n$, and define the analytic function
\[
h(a,b,z) := \sum_{j=0}^n \frac{z^j}{\Gamma(j+a)\Gamma(n-j+b)} 
\]
for $a,b\in \C \setminus\{0,-1,-2,\ldots\}$ and $z\in \C$. 
Note that 
\[
\Gamma(j+a) = (a)_j\Gamma(a),
\]
and 
\begin{align*}
\Gamma(n-j+b) &= \frac{\Gamma(n+b)}{(n+b-1)(n+b-2)\cdots(n+b-j)}\\
&= (-1)^j \frac{\Gamma(n+b)}{(1-n-b)(2-n-b)\cdots(j-n-b)} = (-1)^j\frac{\Gamma(n+b)}{(1-n-b)_j}.
\end{align*}
Combining, we have 
\begin{align*}
h(a,b,z) &=\frac{1}{\Gamma(a)\Gamma(n+b)}\sum_{j=0}^n \frac{(-1)^j(1-n-b)_j}{(a)_j}z^j\\
&= \frac{1}{\Gamma(a)\Gamma(n+b)}\sum_{j=0}^n \frac{(1)_j(1-n-b)_j}{(a)_jj!}(-z)^j.
\end{align*}
Take any $z$ with $|z|<1$. Then by Lemma \ref{hyperlemma} and the assumption that $a$ is not a nonpositive integer, we have 
\[
\sum_{j=0}^\infty \frac{(1)_j(1-n-b)_j}{(a)_jj!}(-z)^j = {_2F_1}(1, 1-n-b; a; -z),
\]
where the left side is absolutely convergent. For the same reason, we also have 
\begin{align*}
\sum_{j=n+1}^\infty \frac{(1)_j(1-n-b)_j}{(a)_jj!}(-z)^j &= \sum_{j=n+1}^\infty \frac{(1-n-b)_j}{(a)_j}(-z)^j\\
&= \frac{(1-n-b)_{n+1}(-z)^{n+1}}{(a)_{n+1}} \sum_{k=0}^\infty \frac{(2-b)_k}{(a+n+1)_k}(-z)^k\\
&=\frac{(1-n-b)_{n+1}(-z)^{n+1}}{(a)_{n+1}} \sum_{k=0}^\infty \frac{(1)_k(2-b)_k}{(a+n+1)_kk!}(-z)^k\\
&= \frac{(1-n-b)_{n+1}(-z)^{n+1}}{(a)_{n+1}} {_2F_1}(1,2-b;a+n+1;-z).
\end{align*}
Finally, note that since $b$ is not a nonpositive integer,
\begin{align*}
\frac{(1-n-b)_{n+1}(-z)^{n+1}}{\Gamma(a) \Gamma(n+b)(a)_{n+1}}  &=\frac{(n+b-1)(n+b-2)\cdots(b-1)}{\Gamma(a+n+1) \Gamma(n+b)} z^{n+1}\\
&= \frac{(b-1)z^{n+1}}{\Gamma(n+a+1)\Gamma(b)}.
\end{align*}
Combining, we get that for any $z$ with $|z|<1$ and any $a,b\in \C \setminus\{0,-1,-2,\ldots\}$, 
\begin{align*}
h(a,b,z) &= \frac{{_2F_1}(1, 1-n-b; a; -z)}{\Gamma(a) \Gamma(n+b)} - \frac{(b-1)z^{n+1}\, {_2F_1}(1,2-b;a+n+1;-z)}{\Gamma(n+a+1)\Gamma(b)}.
\end{align*}
The result now follows from Lemma \ref{pfafflmm}, since neither $a$ nor $a+n+1$ is a nonpositive integer, and then taking $z\to 1$ through the open unit disk (and noting that $|-z/(-z-1)|<1$ when $z$ is close enough to $1$). 
\end{proof}

We will need the following lemma about a certain kind of asymptotic behavior of the hypergeometric function.
\begin{lmm}\label{hyperasymp}
Take any $b\in \C$, $c\in \R \setminus\{0,-1,-2,\ldots\}$, and $t\in \R$. Let $\delta$ and $K$ be positive real numbers such that $|c+n|\ge \delta$ for every nonnegative integer $n$, and $c\ge -K$. Then there is a  universal constant $C_1$ and a positive constant $C_2$ depending only on $\delta$ and $K$ such that 
\[
\biggl|{_2F_1}\biggl(1, b+\i t;c+\i t;\frac{1}{2}\biggr)\biggr| \le C_1 e^{C_2(|b|+|c|)}. 
\]
\end{lmm}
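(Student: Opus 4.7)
The plan is to bound the hypergeometric by its defining power series,
\[
{_2F_1}(1, b+\i t; c+\i t; \tfrac12) = \sum_{n=0}^{\infty} R_n(t)\cdot 2^{-n}, \qquad R_n(t) := \prod_{j=0}^{n-1}\frac{b+\i t+j}{c+\i t+j},
\]
uniformly in $t\in\R$. The starting point is the pointwise inequality
\[
\left|\frac{b+\i t+j}{c+\i t+j}\right| \le 1 + \frac{|b-c|}{|c+\i t+j|},
\]
combined with the obvious lower bound $|c+\i t+j|\ge |c+j|$ and the consequence of the hypotheses that $|c+j|\ge \delta+|j-j^*|$ for some $j^*\in\{0,1,\ldots,\lceil K\rceil+1\}$ (the integer closest to $-c$).

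I would then split the series at $N:=\lceil 4(|b-c|+K+1)\rceil$. For $j\ge N$ we have $|c+\i t+j|\ge j-K\ge 3|b-c|$, so each factor $|b+\i t+j|/|c+\i t+j|$ beyond $j=N$ contributes at most $4/3$. Consequently the tail $\sum_{n\ge N}|R_n(t)|\,2^{-n}$ is dominated by the convergent geometric series $\sum_{k\ge 0}(2/3)^k$ multiplied by $|R_N(t)|\cdot 2^{-N}$, and it suffices to bound $|R_N(t)|$ and the partial sum $\sum_{n<N}|R_n(t)|\,2^{-n}$ by $C_1 e^{C_2(|b|+|c|)}$ uniformly in $t$. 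For this, I would iterate the elementary recursion
\[
{_2F_1}(1, b; c; z) = 1 + \frac{bz}{c}\,{_2F_1}(1, b+1; c+1; z)
\]
enough times to reduce to a hypergeometric whose first two parameters have been shifted by $N$, in which case Euler's integral representation (Lemma~\ref{eulerlmm}) applies and, combined with Stirling's asymptotics together with the classical bound $|\Gamma(x+\i y)|\le\Gamma(x)$ for $x>0$, delivers the claim.

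The main obstacle is the uniformity in $t$: the naive bound on the Gamma ratio $|\Gamma(c+\i t+N)/\Gamma(b+\i t+N)|$ contains a factor $|t|^{c-\Re(b)}$ that is unbounded as $|t|\to\infty$. The key point is that this growth is compensated by oscillatory cancellation in the Euler integral through the factor $s^{\i t}$, which I would extract via repeated integration by parts in the Mellin variable $s$ (the remaining integrand is smooth on $[0,1)$ under our hypotheses). A delicate point, which I would handle by direct Stirling expansion, is to ensure that the final exponent is genuinely linear in $|b|+|c|$ rather than super-linear; this follows because the Stirling contribution $N\log((c+N)/(\Re(b)+N))$ is $O(|b|+|c|)$ for our choice of $N$, after which the dependence on $\delta$ and $K$ is absorbed into $C_2$.
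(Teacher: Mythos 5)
Your opening moves agree with the paper: the pointwise bound $\bigl|\frac{b+\i t+j}{c+\i t+j}\bigr|\le 1+\frac{|b-c|}{|c+\i t+j|}$, the trivial inequality $|c+\i t+j|\ge|c+j|$, and a lower bound $|c+j|\ge C(j+1)$ with $C$ depending only on $\delta$ and $K$ already remove all $t$-dependence from the series. After splitting at an index $N$ of order $|b|+|c|$, you correctly note that the tail factors are uniformly below $2$, so the tail is geometric and it suffices to control $|R_N(t)|$ and the partial sum $\sum_{n<N}|R_n(t)|2^{-n}$ uniformly in $t$.

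The problem is that at this point you abandon the purely elementary series argument and invoke Euler's integral representation (Lemma~\ref{eulerlmm}). That lemma requires $\Re(c')>\Re(b')>0$; after shifting both parameters by $N$ this reads $c>\Re(b)$, and nothing in the hypotheses guarantees it. When $c<\Re(b)$ the factor $(1-s)^{c-b-1}$ in the Euler integral is non-integrable at $s=1$ and the representation is simply unavailable. When $c>\Re(b)$, you have correctly identified the second obstruction --- the $\Gamma$-ratio prefactor grows like $|t|^{\,c-\Re(b)}$ and must be compensated by oscillatory cancellation from $s^{\i t}$ --- but the proposed integration by parts is nontrivial to carry out honestly: each derivative of the integrand brings in further powers of $b$, $c$, $N$, and the boundary terms at $s=1$ involve negative powers of $1-s$ whenever $\Re(c-b)<1$. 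None of this is actually carried out in the proposal, and tracking these accumulations well enough to keep the final exponent linear in $|b|+|c|$ is the whole game.

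And none of it is needed, because you have already removed the $t$-dependence at the very first step. Once $|R_n(t)|\le\prod_{j<n}\frac{|b|+2|c|+j}{C(j+1)}$ holds uniformly in $t$, the problem is reduced to bounding a finite product indexed by $j<N$ with $N$ of order $|b|+|c|$, and the exponential bound falls out of the elementary inequality $k^k/k!\le e^k$. That is precisely what the paper does: it never leaves the power series, never invokes an integral representation, and hence never has to worry about the $|t|\to\infty$ asymptotics of Gamma ratios. Your opening paragraph already contains all the ingredients; the Euler/Stirling detour is a wrong turn, not a shortcut.
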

\begin{proof}
Note that for any nonnegative integer $n$,
\begin{align*}
\biggl|\frac{b+\i t + n}{c+ \i t + n}-1\biggr| &= \frac{|b-c|}{|c+\i t + n|} \le \frac{|b-c|}{|c+n|}.
\end{align*}
Thus, 
\begin{align}\label{bceq}
\biggl|\frac{b+\i t + n}{c+ \i t + n}\biggr| &\le 1 + \frac{|b-c|}{|c+n|}\notag \\
&= \frac{|c+n|+|b-c|}{|c+n|}\le \frac{|b| + 2|c|+n}{|c+n|}. 
\end{align}
Now, if $n+1\ge 2(K+1)$, then 
\[
|c+n|\ge c+n \ge n+1-K-1\ge n+1 - \frac{n+1}{2}=\frac{n+1}{2}.
\]
On the other hand, if $n+1< 2(K+1)$, then 
\[
|c+n| \ge \delta \ge \frac{\delta (n+1)}{2(K+1)}.
\]
Thus, for every nonnegative integer $n$,
\begin{align*}
|c+n|\ge C(n+1),
\end{align*}
where 
\[
C := \min\biggl\{\frac{\delta}{2(K+1)}, \, \frac{1}{2}\biggr\}.
\]
Using this in equation \eqref{bceq}, we get 
\begin{align}\label{bceq2}
\biggl|\frac{b+\i t + n}{c+ \i t + n}\biggr| &\le \frac{|b| + 2|c|+n}{C(n+1)}.
\end{align}
Now, if $n\ge k := \lceil20(|b|+|c|)\rceil$, then by equation \eqref{bceq},
\begin{align*}
\biggl|\frac{b+\i t + n}{c+ \i t + n}\biggr| &\le \frac{2(|b|+|c|) + n}{n - |c|}\\
&\le \frac{\frac{n}{10} + n}{n - \frac{n}{20}} = \frac{22}{19}.
\end{align*}
On the other hand, if $n < k$, then by equation \eqref{bceq2},
\begin{align*}
\biggl|\frac{b+\i t + n}{c+ \i t + n}\biggr| &\le \frac{22(|b|+|c|)}{C(n+1)} = \frac{22}{19}\cdot \frac{19(|b|+|c|)}{C(n+1)}.
\end{align*}
Since $C\le \frac{1}{2}$ and $|c|\ge \delta$, we have that for any $n< k$,
\begin{align*}
\frac{19(|b|+|c|)}{C(n+1)} &\ge \frac{38(|b|+|c|)}{n+1}\ge \frac{38(|b|+|c|)}{k}\\
&\ge \frac{38(|b|+|c|)}{20(|b|+|c|)+1}= \frac{19}{10}\cdot \frac{|b|+|c|}{|b|+|c|+\frac{1}{20}}\ge \frac{19}{10}\cdot \frac{\delta}{\delta+\frac{1}{20}} =:C_1.
\end{align*}
Thus, for any $n$,
\begin{align*}
\biggl|\frac{(b+\i t)_n}{(c+\i t)_n}\biggr| &= \prod_{j=0}^{n-1} \biggl|\frac{b+\i t + j}{c+\i t + j}\biggr|\\
&\le \biggl(\frac{22}{19}\biggr)^n\prod_{j=0}^{\min\{n-1, k-1\}}\frac{19(|b|+|c|)}{C(j+1)}\\
&\le \biggl(\frac{22}{19}\biggr)^n \frac{(19(|b|+|c|))^k}{C^kk!}C_1^{-\max\{k-n, 0\}}\\
&\le \biggl(\frac{22}{19}\biggr)^n \frac{(19(|b|+|c|))^k}{C^kk!}C_2^k,
\end{align*}
where $C_2 := 1+C_1^{-1}$. 
This gives
\begin{align*}
\biggl|{_2F_1}\biggl(1, b+\i t;c+\i t;\frac{1}{2}\biggr)\biggr| &\le \sum_{n=0}^\infty\biggl|\frac{(b+\i t)_n}{(c+\i t)_n}\biggr| 2^{-n}\\
&\le\frac{C_2^k(19(|b|+|c|))^k}{C^kk!}\sum_{n=0}^\infty  \biggl(\frac{22}{19}\biggr)^n  2^{-n}.
\end{align*}
The infinite sum of the right is a finite universal constant, and  since $19(|b|+|c|)\le k$, 
\begin{align*}
\frac{(19(|b|+|c|))^k}{k!} \le \frac{k^k}{k!}\le e^k,
\end{align*}
where the last step follows from the standard lower bound $k! \ge (k/e)^k$. Since $k\le 20(|b|+|c|)+1$, this completes the proof.
\end{proof}
We will also need a version of Lemma \ref{hyperasymp} where instead of $c+\i t$ we only have $c$. For that, we will use the following well-known identity for the hypergeometric function~\cite[Corollary 2.3.3]{andrewsetal99}:
\begin{align}\label{f2linear}
&{_2F_1}(a,b;c;z) \notag \\
&= \frac{\Gamma(c)\Gamma(c-a-b)}{\Gamma(c-a)\Gamma(c-b)}\, {_2F_1}(a, b; a+b+1-c; 1-z)\notag \\
&\qquad + \frac{\Gamma(c)\Gamma(a+b-c)}{\Gamma(a)\Gamma(b)}(1-z)^{c-a-b}\,{_2F_1}(c-a, c-b; 1+c-a-b; 1-z).
\end{align}
This holds, for example, when $|z|<1$, the numbers $a,b,c,c-a,c-b$ are not nonpositive integers, and $a+b-c$ is not an integer. 
\begin{lmm}\label{hyperasymp2}
Take any $b,c\in \C$ such that $\Re(b) > -1$, $b$ and $c-1$ are not nonpositive integers, and $1+b-c$ is not an integer. Further, assume that there is some $\delta>0$ such that $\Re(1+b-c)\ge-1+ \delta$, and $|\Re(1+b-c)|\ge \delta$. Then there are constants $C_1,C_2,C_3$ depending only on $c$ and $\delta$, such that for any $t\in \R$,
\[
\biggl|{_2F_1}\biggl(1,b+\i t;c;\frac{1}{2}\biggr)\biggr|\le C_1(|b|+|t|+1)^{C_2} e^{C_3|b|}. 
\]
\end{lmm}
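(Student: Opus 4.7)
The plan is to apply the linear transformation identity \eqref{f2linear} with $a=1$, second parameter $b+\i t$, third parameter $c$, and argument $z=\tfrac12$ (so $1-z=\tfrac12$), in order to reduce the required bound to one that follows from Lemma \ref{hyperasymp}.  The key observation is that the second hypergeometric produced by \eqref{f2linear} has the form ${}_2F_1\bigl(c-1,\,c-b-\i t;\,c-b-\i t;\,\tfrac12\bigr)$, whose second and third parameters coincide; by the elementary identity ${}_2F_1(\alpha,\beta;\beta;z)=(1-z)^{-\alpha}$ it equals $2^{c-1}$.  Combined with the collapse $\Gamma(c)/\Gamma(c-1)=c-1$ and $\Gamma(c-1-b-\i t)/\Gamma(c-b-\i t)=1/(c-1-b-\i t)$ coming from $\Gamma(z+1)=z\Gamma(z)$, the identity \eqref{f2linear} rewrites as
\[
{}_2F_1\!\Bigl(1,\,b+\i t;\,c;\,\tfrac12\Bigr)
= \frac{c-1}{c-1-b-\i t}\,{}_2F_1\!\Bigl(1,\,b+\i t;\,2+b+\i t-c;\,\tfrac12\Bigr)
+ \frac{\Gamma(c)\,\Gamma(1+b+\i t-c)}{\Gamma(b+\i t)}\,2^{b+\i t}.
\]
The rational prefactor $(c-1)/(c-1-b-\i t)$ has modulus at most $|c-1|/\delta$, since $\Re(c-1-b-\i t)=-\Re(1+b-c)$ and $|\Re(1+b-c)|\ge\delta$ by hypothesis.

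For the residual hypergeometric I would change variables by setting $s:=t+\Im(b)-\Im(c)$, $\tilde b:=\Re(b)+\i\Im(c)$, and $\tilde c:=2+\Re(b)-\Re(c)$, so the second and third parameters become $\tilde b+\i s$ and $\tilde c+\i s$ with $\tilde c$ real.  The hypothesis $\Re(1+b-c)\ge-1+\delta$ forces $\tilde c\ge\delta>0$, hence $|\tilde c+n|\ge\delta$ for every nonnegative integer $n$ and $\tilde c\ge-K$ with $K=0$.  Lemma \ref{hyperasymp} is then directly applicable and yields
\[
\Bigl|{}_2F_1\!\bigl(1,\,\tilde b+\i s;\,\tilde c+\i s;\,\tfrac12\bigr)\Bigr|
\le C_1\,e^{C_2(|\tilde b|+|\tilde c|)}
\le C'_1\,e^{C'_2|b|},
\]
where $C_1,C_2$ depend only on $\delta$ and $C'_1,C'_2$ depend on $c$ and $\delta$, using $|\tilde b|+|\tilde c|\le 2|b|+2|c|+2$.

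For the second piece, $|2^{b+\i t}|=2^{\Re(b)}\le e^{(\ln 2)|b|}$, while the Gamma ratio, with $w:=1+b+\i t-c$, is $\Gamma(w)/\Gamma(w+c-1)$.  By Stirling's formula this ratio is bounded in modulus by $C(1+|w|)^{C'}$ with constants depending only on $c$ and $\delta$: directly when $\Re(w)\ge\delta$, and when $\Re(w)\in[-1+\delta,-\delta]$ after first applying $\Gamma(w)=\Gamma(w+2)/[w(w+1)]$ to push into the right half-plane, noting that $|w|\ge|\Re(w)|\ge\delta$ and $|w+1|\ge|\Re(w)+1|\ge\delta$ by the two-sided bound $|\Re(1+b-c)|\ge\delta$ together with $\Re(1+b-c)\ge-1+\delta$.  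Since $|w|\le 1+|b|+|t|$ up to a $c$-dependent constant, the second piece is bounded by $C_1(|b|+|t|+1)^{C_2}e^{C_3|b|}$; combining with the first piece completes the proof.  The main technical obstacle is the uniform Stirling-type control on $\Gamma(w)/\Gamma(w+c-1)$ in the critical strip $\Re(w)\in[-1+\delta,-\delta]$, where one must use both the recursion and the sharp lower bounds on $|w|$ and $|w+1|$ coming from the hypotheses; everything else is bookkeeping mirroring the structure of Lemma \ref{hyperasymp}.
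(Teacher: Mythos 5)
Your proposal follows essentially the same route as the paper: apply the connection formula \eqref{f2linear} with $a=1$, $z=\tfrac12$, collapse the second hypergeometric via ${}_2F_1(c-1,c-b;c-b;\tfrac12)=2^{c-1}$, reparametrize the residual hypergeometric so that its third parameter becomes $\tilde c+\i s$ with $\tilde c=2+\Re(b)-\Re(c)\ge\delta$ real (which is exactly what lets Lemma \ref{hyperasymp} apply), and control the rational prefactor and the $\Gamma$-ratio correction using the lower bound $|\Re(1+b-c)|\ge\delta$. The one place you are less precise than the paper is the Stirling-type control of $\Gamma(1+b+\i t-c)/\Gamma(b+\i t)$: instead of your case split on $\Re(w)$, the paper uniformly writes this ratio as $(b+\i t)(1+b+\i t)\,\Gamma(2+b+\i t-c)\bigl/\bigl[(1+b+\i t-c)\Gamma(2+b+\i t)\bigr]$ so that both surviving $\Gamma$-arguments have real part $\ge\min(\delta,1)$ and Lemma \ref{gammalmm3} applies cleanly, whereas your phrasing leaves the denominator $\Gamma(b+\i t)$ with $\Re(b)$ possibly arbitrarily close to $-1$ (outside Stirling's clean regime) — this is harmless since the further shift of $\Gamma(b+\i t)$ only contributes polynomial factors in the numerator, but it should be stated.
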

\begin{proof}
Throughout this proof, $C, C_1,C_2,\ldots$ will denotes constants depending only on $c$ and $\delta$, whose values may change from line to line. 
When $a=1$ and $z=\frac{1}{2}$, the identity \eqref{f2linear} reduces to
\begin{align*}
{_2F_1}\biggl(1,b;c;\frac{1}{2}\biggr) &= \frac{\Gamma(c)\Gamma(c-1-b)}{\Gamma(c-1)\Gamma(c-b)} \, {_2F_1}\biggl(1, b; 2+b-c; \frac{1}{2}\biggr) \\
&\qquad + \frac{\Gamma(c)\Gamma(1+b-c)}{\Gamma(b)}2^{1+b-c}\, {_2F_1}\biggl(c-1, c-b; c-b; \frac{1}{2}\biggr)\\
&= \frac{c-1}{c-b-1} \, {_2F_1}\biggl(1, b; 2+b-c; \frac{1}{2}\biggr) \\
&\qquad + \frac{\Gamma(c)\Gamma(1+b-c)}{\Gamma(b)}2^{1+b-c}\, {_2F_1}\biggl(c-1, c-b; c-b; \frac{1}{2}\biggr).
\end{align*}
(This holds because by the assumptions of the lemma and the chosen values of $a$ and $z$, we have $|z|<1$, the numbers $a,b,c,c-a,c-b$ are not nonpositive integers, and the number $a+b-c$ is not an integer.) 
Now, note that 
\begin{align*}
{_2F_1}\biggl(c-1, c-b; c-b; \frac{1}{2}\biggr) &= \sum_{n=0}^\infty \frac{(c-1)_n}{n!}2^{-n}\\
&= \sum_{n=0}^\infty {1-c\choose n} (-1)^n 2^{-n} = \biggl(1-\frac{1}{2}\biggr)^{1-c} = 2^{c-1}.
\end{align*}
Thus,
\begin{align*}
{_2F_1}\biggl(1,b;c;\frac{1}{2}\biggr) &= \frac{c-1}{c-b-1} \, {_2F_1}\biggl(1, b; 2+b-c; \frac{1}{2}\biggr) \\
&\qquad + \frac{\Gamma(c)\Gamma(1+b-c)}{\Gamma(b)}2^{b}.
\end{align*}
Replacing $b$ by $b+\i t$ above, we get
\begin{align*}
{_2F_1}\biggl(1,b+\i t;c;\frac{1}{2}\biggr) &= \frac{c-1}{c-b-1-\i t} \, {_2F_1}\biggl(1, b+\i t; 2+b+\i t-c; \frac{1}{2}\biggr) \\
&\qquad + \frac{\Gamma(c)\Gamma(1+b+\i t-c)}{\Gamma(b+\i t)}2^{b+\i t}.
\end{align*}
Since $\Re(b)> -1$, $\Re(1+b-c)\ge -1+\delta$, and $|\Re(1+b-c)|\ge \delta$, Lemma \ref{gammalmm3} gives 
\begin{align*}
\biggl|\frac{\Gamma(1+b+\i t-c)}{\Gamma(b+\i t)}\biggr|&= \biggl|(b+\i t)(b+1+\i t)\frac{\Gamma(2+b+\i t-c)}{(1+b+\i t - c)\Gamma(2+b+\i t)}\biggr|\\
&\le \frac{C(|b|^2+|t|^2+1)}{|\Re(1+b-c)|} \biggl|\frac{\Gamma(2+b+\i t-c)}{\Gamma(2+b+\i t)}\biggr|\le C_1 e^{C_2\ln(1+|b|+|t|)}.
\end{align*}
Also,
\begin{align*}
\biggl|\frac{c-1}{c-b-1-\i t}\biggr|&\le \frac{|c-1|}{|\Re(c-b-1)|} \le C.
\end{align*}
Thus, we get 
\begin{align}\label{f2ineq0}
\biggl|{_2F_1}\biggl(1,b+\i t;c;\frac{1}{2}\biggr) \biggr| &\le C_1\bigg| \, {_2F_1}\biggl(1, b+\i t; 2+b+\i t-c; \frac{1}{2}\biggr)\biggr| \notag\\
&\qquad + C_2(|b|+|t|+1)^{C_3}e^{C_4 |b|}.
\end{align}
Let $c' := \Re(2+b-c)$, $t' := t + \Im (2+b-c)$, and $b' := b - \i \Im (2+b-c)$. Then 
\begin{align*}
2+b+\i t - c = c' + \i t', \ \ \ b+ \i t = b' + \i t'. 
\end{align*}
Thus,
\begin{align*}
{_2F_1}\biggl(1, b+\i t; 2+b+\i t-c; \frac{1}{2}\biggr) &= {_2F_1}\biggl(1, b'+\i t'; c'+\i t'; \frac{1}{2}\biggr).
\end{align*}
Since $c'$ and $t'$ are real, and $c'$ is not a nonpositive integer, the above identity allows us to invoke Lemma \ref{hyperasymp} and deduce that 
\begin{align*}
\bigg| \, {_2F_1}\biggl(1, b+\i t; 2+b+\i t-c; \frac{1}{2}\biggr)\biggr| &\le C_1 e^{C_2|b|}. 
\end{align*}
Plugging this into equation \eqref{f2ineq0} completes the proof.
\end{proof}

\subsection{Convergence and basic sphere estimates}
\subsubsection{Proof of Lemma \ref{convergence}}\label{convergencepf}
It is known by a result of~\citet[Theorem 2]{wagner90} that for any $n\ge 2$ and any $x_1,\ldots,x_n \in \S^2$, 
\begin{align}\label{wagner}
\sum_{1\le i<j \le n} G_{\S^2}(x_i,x_j) \ge -\frac{1}{4}n\ln n - Cn,
\end{align}
where $C$ is a positive constant that does not depend on $n$. By the inequality \eqref{wagner} and the formula  \eqref{gform} for $G_{\S^2}$, we get
\begin{align*}
&\biggl|\exp\biggl(-4b\sum_{j=1}^k\sum_{l=1}^n \alpha_j G_{\S^2}(x_j, y_l)-4b^2 \sum_{1\le l < l'\le n} G_{\S^2}(y_l,y_{l'}) \biggr)\biggr|\\
&\le C_1^n n^{b^2 n} \exp\biggl(-4b\sum_{j=1}^k\sum_{l=1}^n\Re( \alpha_j)G_{\S^2}(x_j, y_l)\biggr)\le C_2^n n^{b^2n} \prod_{j=1}^k\prod_{l=1}^n \|x_j - y_l\|^{4b \Re(\alpha_j)},
\end{align*}
where $C_1$ is a universal constant and $C_2$ depends only on $b$ and $\alpha_1,\ldots,\alpha_k$. Now,
\begin{align*}
\int_{(\S^2)^n}  \prod_{j=1}^k\prod_{l=1}^n \|x_j - y_l\|^{4b \Re(\alpha_j)} da(y_1)\cdots da(y_n)&= \biggl(\int_{\S^2}  \prod_{j=1}^k \|x_j - y\|^{4b \Re(\alpha_j)} da(y)\biggr)^n.
\end{align*}
The integral in the second line is finite, since $x_1,\ldots,x_k$ are distinct and $4b\Re(\alpha_j) > -2$ for each $j$. This proves the inequality \eqref{wagner2}. It is easy to see equation \eqref{wagner2} implies the summability of the series. The continuity and complex differentiability also follow easily from the above observations and the dominated convergence theorem.

\subsubsection{Proof of Lemma \ref{zeroform1two}}\label{zeroform1twopf}
First, let us fix $x_1=-e_3$ and let $x_2$ be any point in $\S^2 \setminus \{-e_3, e_3\}$. Let $a_n(u_2)$ denote the formula displayed in equation \eqref{ancor} with $k=2$ and $u_j = \sigma(x_j)$ for $j=1,2$. Note that $u_1=0$. We have to show that as $|u_2|\to \infty$, $a_n(u_2)$ approaches the formula displayed in equation~\eqref{cn2}. To do that, take any $M>0$, and define $a_n(u_2,M)$ using the same expression as in equation~\eqref{ancor}, but restricting the integration to the region $(\Omega_M)^n$ where $\Omega_M:=\{z\in \C: |z|\le M\}$. Then it follows easily by the dominated convergence theorem that for any $M$, $\lim_{|u_2|\to \infty} a_n(u_2, M)$ is given by the formula in equation \eqref{cn2}, but with the domain of integration restricted to $(\Omega_M)^n$. Then, by the monotone convergence theorem, we conclude that the double limit
\[
\lim_{M\to \infty} \lim_{|u_2|\to \infty} a_n(u_2,M)
\]
equals the right side of equation \eqref{cn2}. Thus, to complete the proof, we need to show that
\begin{align}\label{cn2show}
\lim_{M\to \infty} \lim_{|u_2|\to \infty} a_n(u_2,M) =\lim_{|u_2|\to \infty}    a_n(u_2).
\end{align}
Now, converting the integration back to the sphere, it is easy to see that 
\begin{align*}
a_n(u_2,M) &= \int_{(\S^2_M)^n}\exp\biggl\{-4b\sum_{j=1}^2\sum_{l=1}^n \alpha_j G(x_j, y_l) \\
&\qquad \qquad -4b^2 \sum_{1\le l < l'\le n} G(y_l,y_{l'}) \biggr\} da(y_1)\cdots da(y_n),
\end{align*}
where $\S^2_M$ is the set of all $y\in \S^2$ such that $|\sigma(y)|\le M$. Thus, by equation \eqref{wagner}, 
\begin{align}\label{ananbound}
&|a_n(u_2)-a_n(u_2,M)| = \int_{(\S^2)^n \setminus (\S^2_M)^n}\exp\biggl\{-4b\sum_{j=1}^2\sum_{l=1}^n \alpha_j G(x_j, y_l) \notag \\
&\hskip2in -4b^2 \sum_{1\le l < l'\le n} G(y_l,y_{l'}) \biggr\} da(y_1)\cdots da(y_n)\notag \\
&\le C_n\int_{(\S^2)^n\setminus (\S^2_M)^n}\prod_{j=1}^2\prod_{l=1}^n\|y_l - x_j\|^{4b\alpha_j} da(y_1)\cdots da(y_n)\notag \\
&\le C_n\sum_{i=1}^n\int_{y_i\notin \S^2_M}\prod_{j=1}^2\prod_{l=1}^n\|y_l - x_j\|^{4b\alpha_j} da(y_1)\cdots da(y_n)\notag \\
&= C_nn\biggl(\int_{\S^2 \setminus \S^2_M}\prod_{j=1}^2\|y - x_j\|^{4b\alpha_j} da(y)\biggr)\biggl(\int_{\S^2}\prod_{j=1}^2\|y - x_j\|^{4b\alpha_j} da(y)\biggr)^{n-1},
\end{align}
where $C_n$ does not depend on $M$ or $x_2$. Since $\Re(\alpha_j) > -\frac{1}{2b}$ for $j=1,2$, it follows that the supremum of the right side over all $x_2$ in the upper hemisphere is bounded by a number $\epsilon(M)$ that tends to zero as $M\to\infty$. Consequently,
\[
\limsup_{|w_2|\to \infty}|a_n(u_2,M)-a_n(u_2)|\le \epsilon(M).
\]
From this, it is easy to deduce the claim \eqref{cn2show}. The proof is now completed by appealing to equation \eqref{calphabasic}. 

\subsubsection{Proof of Lemma \ref{zeroform1three}}\label{zeroform1threepf}
First, let us fix $x_1=-e_3$ and $x_2 = e_1$, and let $x_3$ be any point in $\S^2 \setminus \{-e_3, e_3, e_1\}$. Let $a_n(u_j)$ denote the formula displayed in equation \eqref{ancor} with $k=3$ and $u_j = \sigma(x_j)$ for $j=1,2,3$. Note that $u_1=0$ and $u_2 = 1$. We have to show that as $|u_3|\to \infty$, $a_n(u_3)$ approaches the formula displayed in equation~\eqref{cn3}. To do that, take any $M>0$, and define $a_n(u_3,M)$ using the same expression as in equation~\eqref{ancor}, but restricting the integration to the region $(\Omega_M)^n$ where $\Omega_M:=\{z\in \C: |z|\le M\}$. Then it follows easily by the dominated convergence theorem that for any $M$, $\lim_{|u_3|\to \infty} a_n(u_3, M)$ is given by the formula in equation \eqref{cn3}, but with the domain of integration restricted to $(\Omega_M)^n$. Then, by the monotone convergence theorem, we conclude that the double limit
\[
\lim_{M\to \infty} \lim_{|u_3|\to \infty} a_n(u_3,M)
\]
equals the right side of equation \eqref{cn3}. Thus, to complete the proof, we need to show that
\begin{align}\label{cn3show}
\lim_{M\to \infty} \lim_{|u_2|\to \infty} a_n(u_3,M) =\lim_{|u_3|\to \infty}    a_n(u_3).
\end{align}
Now, converting the integration back to the sphere, it is easy to see that 
\begin{align*}
a_n(u_3,M) &= \int_{(\S^2_M)^n}\exp\biggl(-4b\sum_{j=1}^3\sum_{l=1}^n \alpha_j G(x_j, y_l) \\
&\qquad \qquad -4b^2 \sum_{1\le l < l'\le n} G(y_l,y_{l'}) \biggr) da(y_1)\cdots da(y_n),
\end{align*}
where $\S^2_M$ is the set of all $y\in \S^2$ such that $|\sigma(y)|\le M$. Thus, proceeding as in the proof of the inequality \eqref{ananbound}, we get
\begin{align*}
&|a_n(u_3)-a_n(u_3,M)| \\
&\le C_nn\biggl(\int_{\S^2 \setminus \S^2_M}\prod_{j=1}^3\|y - x_j\|^{4b\alpha_j} da(y)\biggr)\biggl(\int_{\S^2}\prod_{j=1}^3\|y - x_j\|^{4b\alpha_j} da(y)\biggr)^{n-1},
\end{align*}
where $C_n$ does not depend on $M$ or $x_3$. Since $\Re(\alpha_j) > -\frac{1}{2b}$ for $j=1,2,3$, it follows that the supremum of the right side over all $x_3$ above a sufficiently high latitude is bounded by a number $\epsilon(M)$ that tends to zero as $M\to\infty$. Consequently,
\[
\limsup_{|w_3|\to \infty}|a_n(u_3,M)-a_n(u_3)|\le \epsilon(M).
\]
From this, it is easy to deduce the claim \eqref{cn3show}. The proof is now completed by appealing to equation \eqref{calphabasic}. 



\subsection{Determinantal and Barnes--Gamma computations}\label{app:special}
\subsubsection{Proof of Lemma \ref{intlmm}}\label{intlmmpf}
The conditions $\Re(\alpha)> -2$ and $\Re(2\beta -\alpha) >2$ imply that the integrand decays sufficiently fast as $|z|\to \infty$ and blows up sufficiently slowly as $|z|\to 0$, to ensure that the integral converges absolutely. Transforming to polar coordinates, we have
\[
\int_{\C} \frac{|z|^\alpha}{(1+|z|^2)^\beta} d^2z = 2\pi \int_0^\infty \frac{r^{\alpha+1}}{(1+r^2)^\beta} dr.
\]
Making the change of variable
\[
s = \frac{1}{1+r^2}, 
\]
we have
\[
r = \sqrt{\frac{1-s}{s}}, \ \ \ ds = -\frac{2r}{(1+r^2)^2} dr.
\]
This gives
\begin{align*}
\int_0^\infty \frac{r^{\alpha+1}}{(1+r^2)^\beta} dr &=\frac{1}{2}\int_0^1 \biggl(\frac{1-s}{s}\biggr)^{\frac{1}{2}\alpha}s^{\beta-2} ds\\
&= \frac{1}{2}\int_0^1 s^{\beta-\frac{1}{2}\alpha -2 } (1-s)^{\frac{1}{2}\alpha} ds = \frac{\Gamma(\beta - \frac{1}{2}\alpha-1) \Gamma(\frac{1}{2}\alpha+1)}{2\Gamma(\beta)},
\end{align*}
where the last step follows by the Beta integral formula, noting that $\Re(\beta-\frac{1}{2}\alpha -1) >0$ and $\Re(\frac{1}{2}\alpha+1) > 0$.

\subsubsection{Proof of Lemma \ref{oneform1}}\label{oneform1pf}
Note that by equation \eqref{an1form},
\begin{align*}
a_n &= 4^n e^{\frac{1}{2} n(n-3-2w)} \int_{\C^n} \prod_{1\le i<j\le n}|z_i-z_j|^{2} d\nu(z_1)\cdots d\nu(z_n),
\end{align*}
where $\nu$ is the measure on $\C$ that has density $(1+|z|^2)^{-(n-w)}$ with respect to Lebesgue measure. 
Repeating the steps in the proof of Lemma \ref{zeroform1}, we arrive at the identity, for $n\ge 1$, that 
\begin{align}\label{an1one}
a_n = 4^n e^{\frac{1}{2} n(n-3-2w)} n! \prod_{j=0}^{n-1} \int_{\C} |z|^{2j} d\nu(z).
\end{align}
By Lemma \ref{intlmm},
\begin{align*}
\int_{\C} |z|^{2j} d\nu(z) &= \int_{\C} \frac{|z|^{2j}}{(1+|z|^2)^{n-w}} d^2z \\
&= \frac{\pi \Gamma(n-w-j-1)\Gamma(j+1)}{\Gamma(n-w)},
\end{align*}
noting that $\Re(2j) >-2$ and 
\[
\Re(2(n-w)-2j) = 2n - 2j +  2+2\sqrt{2}\Re(\alpha) > 2n-2j\ge 2.
\]
Plugging this into equation \eqref{an1one}, we get that for each $n\ge 1$,
\begin{align}\label{an3one}
a_n&=  \frac{(4\pi)^n e^{\frac{1}{2}n(n-3-2w)}\Gamma(n+1)}{\Gamma(n-w)^{n}} \biggl(\prod_{j=0}^{n-1}\Gamma(j+1)\biggr)\biggl(\prod_{j=0}^{n-1}\Gamma(j-w)\biggr).
\end{align}
By the identity \eqref{gzid}, 
\begin{align*}
\prod_{j=0}^{n-1}\Gamma(j-w) &= \prod_{j=0}^{n-1}\frac{G(j-w+1)}{G(j-w)} = \frac{G(n-w)}{G(-w)}.
\end{align*}
Using this and the identity \eqref{gn1id} in equation \eqref{an3one}, we get that for any $n\ge 2$, 
\begin{align}\label{analt}
a_n&=  \frac{(4\pi)^n e^{\frac{1}{2}n(n-3-2w)}\Gamma(n+1)G(n+1)G(n-w)}{\Gamma(n-w)^{n}G(-w)}.
\end{align}
Taking $a_0=1$ shows that the above formula is valid also for $n=0$. This completes the proof.

\subsubsection{Proof of Lemma \ref{twoform1}}\label{twoform1pf}
By Lemma \ref{zeroform1two}, we get that for each $n\ge 1$, 
\begin{align*}
a_n &= 4^n e^{\frac{1}{2} n(n-3-2w)} \int_{\C^n}\prod_{i=1}^n \frac{|z_i|^{2\sqrt{2}\alpha_1}}{(1+|z_i|^2)^{n-w}}  \prod_{1\le i<j\le n}|z_i-z_j|^{2} d^2z_1\cdots d^2z_n.
\end{align*}
Applying the same sequence of steps as in the  proofs of Lemma \ref{zeroform1} and Lemma \ref{oneform1}, we get
\begin{align*}
a_n &= 4^n e^{\frac{1}{2} n(n-3-2w)} \sum_{\sigma,\tau \in S_n} \sign(\sigma)\sign(\tau) \prod_{i=1}^n \int_{\C} z^{\sigma(i)-1}\overline{z}^{\tau(i)-1}  \frac{|z|^{2\sqrt{2}\alpha_1}}{(1+|z|^2)^{n-w}} d^2 z\\
&= n! 4^n e^{\frac{1}{2} n(n-3-2w)} \prod_{i=1}^n \int_{\C}  \frac{|z|^{2\sqrt{2}\alpha_1 + 2i-2}}{(1+|z|^2)^{n-w}} d^2 z.
\end{align*}
Now, by the assumed conditions, we have that for each $1\le i\le n$, 
\begin{align*}
\Re(2\sqrt{2}\alpha_1 + 2i-2) \ge 2\sqrt{2}\Re(\alpha_1)> -2,
\end{align*}
and 
\begin{align*}
&2\Re(n-w) -  \Re(2\sqrt{2}\alpha_1 + 2i-2) \\
&= 2(n + 1+\sqrt{2}\Re(\alpha_1+\alpha_2)) -  \Re(2\sqrt{2}\alpha_1 + 2i-2)\\
&= 2(n-i + 2) +2\sqrt{2}\Re(\alpha_2) > 2(n-i+2) - 2\ge 2.
\end{align*}
These bounds allow us to apply Lemma \ref{intlmm} and get
\begin{align*}
\int_{\C}  \frac{|z|^{2\sqrt{2}\alpha_1 + 2i-2}}{(1+|z|^2)^{n-w}} d^2 z &= \frac{\pi\Gamma(n-i-w -\sqrt{2}\alpha_1)\Gamma(i + \sqrt{2}\alpha_1)}{\Gamma(n-w)}\\
&= \frac{\pi\Gamma(n-i+1 +\sqrt{2}\alpha_2)\Gamma(i + \sqrt{2}\alpha_1)}{\Gamma(n-w)}.
\end{align*}
Thus, we get
\begin{align}\label{cnexp}
a_n &= \frac{\Gamma(n+1) (4\pi)^n e^{\frac{1}{2} n(n-3-2w)}}{(\Gamma(n-w))^n} \prod_{j=0}^{n-1}\Gamma(j+ 1 + \sqrt{2}\alpha_1)\Gamma(j +1 + \sqrt{2}\alpha_2).
\end{align}
Now, by the identity \eqref{gzid}, we have that for any $z\in \C\setminus\{0,-1,-2,\ldots\}$,
\begin{align}\label{gammaprod}
\prod_{j=0}^{n-1} \Gamma(j +z ) &= \prod_{j=0}^{n-1}\frac{G(j+z+1)}{G(j+z)} = \frac{G(n+z)}{G(z)}.
\end{align}
By the assumed conditions,  $1+\sqrt{2}\alpha_j  > 0$ for $j=1,2$. Thus, we may use the identity~\eqref{gammaprod} in equation~\eqref{cnexp}, to get
\begin{align*}
a_n &=\frac{ (4\pi)^n e^{\frac{1}{2} n(n-3-2w)}\Gamma(n+1) G(n+1+\sqrt{2}\alpha_1)G(n + 1+\sqrt{2}\alpha_2)}{(\Gamma(n-w))^nG(1+\sqrt{2}\alpha_1)G(1+\sqrt{2}\alpha_2)}\\
&= \frac{ (4\pi)^n e^{\frac{1}{2} n(n-3-2w)}\Gamma(n+1) G(n+\beta_1)G(n + \beta_2)}{(\Gamma(n-w))^nG(\beta_1)G(\beta_2)}.
\end{align*}
Now, if we define $f$ as in the statement of the lemma, then the previous display shows that $a_n = f(n)$ for each integer $n\ge 1$.  Since $w\ne 0$ and $\Re(w)<1$ by the given conditions, $f(0)$ is well-defined and equal to $1$, which matches the value $a_0=1$. 
This completes the proof.

\subsubsection{Proof of Lemma \ref{threeform1}}\label{threeform1pf}
By Lemma \ref{zeroform1three}, we get that for each $n\ge 1$, 
\begin{align*}
a_n &= 2^n e^{\frac{1}{2}n(n-3-2w)} \int_{\C^n}\prod_{i=1}^n (1+|z_i|^2)^{-(n-w)} \notag\\
&\hskip 1in \cdot \prod_{i=1}^n (|z_i|^{2\sqrt{2}\alpha_1}|z_i -1|^{2}) \prod_{1\le i<j\le n}|z_i-z_j|^{2} d^2z_1\cdots d^2z_n.
\end{align*}
We can rewrite this as 
\begin{align}\label{anspecial3}
a_n &= 2^n e^{\frac{1}{2}n(n-3-2w)} \int_{\C^n} \prod_{i=1}^n |z_i -1|^{2}\prod_{1\le i<j\le n}|z_i-z_j|^{2} d\nu(z_1)\cdots d\nu(z_n),
\end{align}
where $\nu$ is the measure on $\C$ that has density $(1+|z|^2)^{-(n-w)}|z|^{2\sqrt{2}\alpha_1}$ with respect to Lebesgue measure. Now take any $n\ge 1$, and note that by the Vandermonde determinant formula, we have 
\begin{align*}
\prod_{i=1}^n (1-z_i ) \prod_{1\le i<j\le n} (z_j-z_i) &= \det 
\begin{pmatrix}
1 & z_1 & z_1^2 & \cdots & z_1^{n}\\
1 & z_2 & z_2^2 & \cdots & z_2^{n}\\
\vdots & \vdots & \vdots &\ddots & \vdots\\
1 & z_n & z_n^2 & \cdots & z_n^{n}\\
1 & 1 & 1 & \cdots & 1\\
\end{pmatrix}= \sum_{\sigma \in S_{n+1}} \sign(\sigma)\prod_{i=1}^n z_i^{\sigma(i)-1}.
\end{align*}
This gives 
\begin{align*}
\prod_{i=1}^n|z_i - 1|^2\prod_{1\le i<j\le n}|z_i-z_j|^{2} &= \prod_{i=1}^n (1-z_i)\prod_{1\le i<j\le n}(z_j-z_i) \prod_{i=1}^n (1-\overline{z}_i) \prod_{1\le i<j\le n}(\overline{z}_j-\overline{z}_i)\\
&= \biggl(\sum_{\sigma \in S_{n+1}} \sign(\sigma)\prod_{i=1}^n z_i^{\sigma(i)-1}\biggr) \biggl(\sum_{\sigma \in S_{n+1}} \sign(\sigma)\prod_{i=1}^n \overline{z}_i^{\sigma(i)-1}\biggr)\\
&= \sum_{\sigma,\tau \in S_{n+1}} \sign(\sigma)\sign(\tau) \prod_{i=1}^n (z_i^{\sigma(i)-1}\overline{z}_i^{\tau(i)-1}).
\end{align*}
Plugging this into equation \eqref{anspecial3}, we get that for $n\ge 1$, 
\begin{align*}
a_n &=  2^n e^{\frac{1}{2}n(n-3-2w)} \sum_{\sigma,\tau \in S_{n+1}} \sign(\sigma)\sign(\tau) \int_{\C^n} \prod_{i=1}^n (z_i^{\sigma(i)-1}\overline{z}_i^{\tau(i)-1}) d\nu(z_1)\cdots d\nu(z_n)\\
&=  2^n e^{\frac{1}{2}n(n-3-2w)} \sum_{\sigma,\tau \in S_{n+1}} \sign(\sigma)\sign(\tau) \prod_{i=1}^n \int_{\C} z^{\sigma(i)-1}\overline{z}^{\tau(i)-1} d\nu(z).
\end{align*}
Since $\nu$ is a radially symmetric, we have 
\[
\int_{\C} z^k \overline{z}^l d\nu(z) =
\begin{cases}
\int_{\C} |z|^{2k} d\nu(z) &\text{ if } k=l,\\
0 &\text{ otherwise.}
\end{cases}
\]
Thus, only terms with $\sigma(i)=\tau(i)$ for $1\le i\le n$ survive. But this implies that $\sigma(n+1)=\tau(n+1)$.  Therefore, only terms with $\sigma=\tau$ survive. This gives
\begin{align*}
a_n = 2^n e^{\frac{1}{2}n(n-3-2w)}  \sum_{\sigma \in S_{n+1}} \prod_{i=1}^n \int_{\C} |z|^{2\sigma(i)-2} d\nu(z).
\end{align*}
Splitting the above sum by values of $\sigma(n+1)$, we get
\begin{align*}
a_n &= 2^n e^{\frac{1}{2}n(n-3-2w)} n! \sum_{j=1}^{n+1} \prod_{\substack{1\le i\le n+1,\\
i\ne j}} \int_{\C} |z|^{2i-2} d\nu(z)\\
&= 2^n e^{\frac{1}{2}n(n-3-2w)} n! \sum_{j=1}^{n+1} \prod_{\substack{1\le i\le n+1,\\
i\ne j}} \int_{\C}  \frac{|z|^{2\sqrt{2}\alpha_1 + 2i-2}}{(1+|z|^2)^{n-w}} d^2 z.
\end{align*}
Now, by the assumed conditions, we have that for each $1\le i\le n+1$, 
\begin{align*}
\Re(2\sqrt{2}\alpha_1 + 2i-2) \ge 2\sqrt{2}\Re(\alpha_1)> -2,
\end{align*}
and 
\begin{align*}
&2\Re(n-w) -  \Re(2\sqrt{2}\alpha_1 + 2i-2) \\
&= 2(n + 1+\sqrt{2}\Re(\alpha_1+\alpha_2+\alpha_3)) -  \Re(2\sqrt{2}\alpha_1 + 2i-2)\\
&= 2(n-i + 3) +2\sqrt{2}\Re(\alpha_3) > 2(n-i+3) - 2\ge 2.
\end{align*}
These bounds allow us to apply Lemma \ref{intlmm} and get
\begin{align*}
\int_{\C}  \frac{|z|^{2\sqrt{2}\alpha_1 + 2i-2}}{(1+|z|^2)^{n-w}} d^2 z &= \frac{\pi\Gamma(n-i-w -\sqrt{2}\alpha_1)\Gamma(i + \sqrt{2}\alpha_1)}{\Gamma(n-w)}\\
&= \frac{\pi\Gamma(n-i+2 +\sqrt{2}\alpha_3)\Gamma(i + \sqrt{2}\alpha_1)}{\Gamma(n-w)}.
\end{align*}
Thus, we get
\begin{align*}
a_n &= 2^n e^{\frac{1}{2}n(n-3-2w)}n! \sum_{j=1}^{n+1} \prod_{\substack{1\le i\le n+1,\\
i\ne j}} \frac{\pi\Gamma(n-i+2 +\sqrt{2}\alpha_3)\Gamma(i + \sqrt{2}\alpha_1)}{\Gamma(n-w)}\\
&=  \frac{(2\pi)^n e^{\frac{1}{2}n(n-3-2w)} \Gamma(n+1)}{\Gamma(n-w)^n}\biggl(\prod_{j=0}^n\Gamma(j+1+\sqrt{2}\alpha_3)\Gamma(j+1 + \sqrt{2}\alpha_1)\biggr)\\
&\qquad \cdot \sum_{j=0}^n \frac{1}{\Gamma(j+1+\sqrt{2}\alpha_1)\Gamma(n-j+1 + \sqrt{2}\alpha_3)}.
\end{align*}
Applying Lemma \ref{hyperlmm} and the identity \eqref{gzid} to the above, we get 
\begin{align*}
a_n &=\frac{(2\pi)^n e^{\frac{1}{2}n(n-3-2w)} \Gamma(n+1)G(n+2+\sqrt{2}\alpha_1) G(n+2+\sqrt{2}\alpha_3)}{\Gamma(n-w)^nG(1+\sqrt{2}\alpha_1)G(1+\sqrt{2}\alpha_3)}\\
&\qquad \cdot \biggl\{\frac{{_2F_1}(1, n-w-1; 1+\sqrt{2}\alpha_1; \frac{1}{2})}{2\Gamma(1+\sqrt{2}\alpha_1) \Gamma(n+1+\sqrt{2}\alpha_3)} - \frac{\sqrt{2}\alpha_3\, {_2F_1}(1,n-w-1;n+2+\sqrt{2}\alpha_1;\frac{1}{2})}{2\Gamma(n+2+\sqrt{2}\alpha_1)\Gamma(1+\sqrt{2}\alpha_3)}\biggr\}.
\end{align*}
This shows that $a_n = f(n)$ for each integer $n\ge 1$. It remains to show that $a_0=f(0)$, where $a_0:=1$. Note that by the identity \eqref{gzid},
\begin{align}\label{f0bad}
f(0)&= \frac{G(2+\sqrt{2}\alpha_1) G(2+\sqrt{2}\alpha_3)}{G(1+\sqrt{2}\alpha_1)G(1+\sqrt{2}\alpha_3)}\notag\\
&\qquad \cdot \biggl\{\frac{{_2F_1}(1, -w-1; 1+\sqrt{2}\alpha_1; \frac{1}{2})}{2\Gamma(1+\sqrt{2}\alpha_1) \Gamma(1+\sqrt{2}\alpha_3)} - \frac{\sqrt{2}\alpha_3\, {_2F_1}(1,-w-1;2+\sqrt{2}\alpha_1;\frac{1}{2})}{2\Gamma(2+\sqrt{2}\alpha_1)\Gamma(1+\sqrt{2}\alpha_3)}\biggr\}\notag \\
&= \frac{1}{2}\, {_2F_1}\biggl(1, -w-1; 1+\sqrt{2}\alpha_1; \frac{1}{2}\biggr) - \frac{\sqrt{2}\alpha_3\, {_2F_1}(1,-w-1;2+\sqrt{2}\alpha_1;\frac{1}{2})}{2(1+\sqrt{2}\alpha_1)}.
\end{align}
Since $\Re(\alpha_1)>-\frac{1}{\sqrt{2}}$, neither $1+\sqrt{2}\alpha_1$ nor $2+\sqrt{2}\alpha_1$ is a nonpositive integer. Thus, we can apply Lemma \ref{pfafflmm} to deduce that for any $z$ with $|z|<1$,
\begin{align*}
\frac{\sqrt{2}\alpha_3\, {_2F_1}(1,-w-1;2+\sqrt{2}\alpha_1;z)}{1+\sqrt{2}\alpha_1}&=  \frac{\sqrt{2}\alpha_3\, {_2F_1}(1,1-\sqrt{2}\alpha_3;2+\sqrt{2}\alpha_1;\frac{z}{z-1})}{(1-z)(1+\sqrt{2}\alpha_1)}.
\end{align*}
But, setting $u := z/(z-1)$ and assuming that $|u|<1$, we have 
\begin{align*}
\frac{\sqrt{2}\alpha_3\, {_2F_1}(1,1-\sqrt{2}\alpha_3;2+\sqrt{2}\alpha_1;u)}{1+\sqrt{2}\alpha_1} &= \sum_{j=0}^\infty \frac{(1)_j\sqrt{2}\alpha_3(1-\sqrt{2}\alpha_3)_ju^j}{(1+\sqrt{2}\alpha_1)(2+\sqrt{2}\alpha_1)_jj!}\\
&= -\frac{1}{u}\sum_{j=0}^\infty \frac{(1)_{j+1}(-\sqrt{2}\alpha_3)_{j+1}}{(1+\sqrt{2}\alpha_1)_{j+1}(j+1)!}u^{j+1}\\
&= -\frac{1}{u}({_2F_1}(1, -\sqrt{2}\alpha_3; 1+\sqrt{2}\alpha_1; u) -1).
\end{align*}
If $|u|<1$, then by Lemma \ref{pfafflmm},
\begin{align*}
{_2F_1}(1, -\sqrt{2}\alpha_3; 1+\sqrt{2}\alpha_1; u) &= \frac{1}{1-u}\, {_2F_1}\biggl(1, -w-1; 1+\sqrt{2}\alpha_1; \frac{u}{u-1}\biggr)\\
&= (1-z)\, {_2F_1}(1, -w-1; 1+\sqrt{2}\alpha_1; z).
\end{align*}
Combining the above steps, we get
\begin{align*}
\frac{\sqrt{2}\alpha_3\, {_2F_1}(1,-w-1;2+\sqrt{2}\alpha_1;z)}{1+\sqrt{2}\alpha_1}&= \frac{1}{z}((1-z)\, {_2F_1}(1, -w-1; 1+\sqrt{2}\alpha_1; z) - 1)
\end{align*}
for any $z$ such that $|z|<1$ and $|z|<|z-1|$. Both of these conditions are satisfied if $|z|<1$ and $\Re(z)<\frac{1}{2}$. Thus, taking $z\to \frac{1}{2}$ through this domain and applying Lemma \ref{hyperlemma}, we get
\begin{align*}
\frac{\sqrt{2}\alpha_3\, {_2F_1}(1,-w-1;2+\sqrt{2}\alpha_1;\frac{1}{2})}{1+\sqrt{2}\alpha_1}&= {_2F_1}\biggl(1, -w-1; 1+\sqrt{2}\alpha_1; \frac{1}{2}\biggr) - 2.
\end{align*}
Plugging this into equation \eqref{f0bad}, we get $f(0)=1$. Thus, $a_0 = f(0)$. 
This completes the proof.

\subsection{Contour-integral and asymptotic estimates}\label{app:contours}

\subsubsection{Proof of Lemma \ref{fzfinal}}\label{fzfinalpf}
We need some preparation. First, note that by the formula \eqref{gformula} and the identity~\cite[Example 12.48]{whittakerwatson21}
\begin{align}\label{gzid}
G(z+1)=\Gamma(z)G(z),
\end{align}
we have $G(2)=\Gamma(1)G(1)=1$. Since the only zeros of the function $G$ are at the nonpositive integers, $G$ has an analytic logarithm $\Theta$ in the simply connected domain $\C\setminus(-\infty,0]$. (That is, $G(z) = e^{\Theta(z)}$ for any $z\in \C \setminus(-\infty,0]$.) If we further specify that this logarithm should agree with $\ln G$ on $(0,\infty)$, then it is uniquely determined, and given by
\[
\Theta(z+1) = \int_C \frac{G'(z+1)}{G(z+1)} dz
\]
for each $z\in \C \setminus(-\infty,-1]$, where $C$ is any contour from $1$ to $z$ lying entirely in $\C \setminus(-\infty,-1]$.

Now recall the functions $\Pi$ and $\psi$ introduced in \textsection\ref{zerosec}, and the relation \eqref{pirep}. It is known~\cite[Example 12.49]{whittakerwatson21} that for $z$ in the domain of $\psi$, 
\begin{align}\label{gprime}
\frac{G'(z+1)}{G(z+1)} = \frac{1}{2}\ln (2\pi) + \frac{1}{2} - z + z\psi(z).
\end{align}
Thus, for any $z\in \C \setminus(-\infty,0]$, $\Theta(z+1)$ can be represented as
\begin{align}\label{gmain}
\Theta(z+1) = \int_C \biggl( \frac{1}{2}\ln (2\pi) + \frac{1}{2} - w + w\psi(w)\biggr) dw,
\end{align}
where $C$ is any contour from $1$ to $z$ lying entirely in $\C \setminus(-\infty,0]$. Now, there is an exact formula for $\psi(z)$ when $\Re(z)>0$, given by~\cite[Section 12.31]{whittakerwatson21}
\begin{align}\label{psizform}
\psi(z) = \log z - \frac{1}{2z} -\int_0^\infty \biggl(\frac{1}{2}-\frac{1}{t}+\frac{1}{e^t-1}\biggr) e^{-tz} dt,
\end{align}
where $\log$ denotes the analytic  branch of logarithm on $\C \setminus(-\infty,0]$ that is real-valued on $(0,\infty)$. 
We obtain the following corollaries.
\begin{cor}\label{picor}
For any nonzero $z\in \C$ with $\Re(z)\ge 0$, 
\[
\Pi(z) = z\log z - z - \frac{1}{2}\log z + \frac{1}{2}\ln (2\pi) + R_1(z),
\]
where $|R_1(z)|\le C|z|^{-1}$ for some universal constant $C$.
\end{cor}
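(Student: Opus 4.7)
The plan is to substitute the integral representation \eqref{psizform} for $\psi$ into the contour formula \eqref{pirep} for $\Pi$ and apply Fubini to exchange the two integrals. Because $\Re(z)\ge 0$ and $z\ne 0$, we can choose a contour from $1$ to $z$ lying in the open right half-plane (approaching the imaginary axis as a limit of interior contours when $\Re(z)=0$), on which formula \eqref{psizform} is valid and Fubini is justified by absolute convergence of the integrand in $(t,w)$. Computing the elementary antiderivatives gives
\begin{align*}
\Pi(z) &= \int_1^z \log w\,dw - \frac12\int_1^z \frac{dw}{w} - \int_0^\infty h(t)\,\frac{e^{-t}-e^{-tz}}{t}\,dt\\
       &= z\log z - z + 1 - \tfrac12\log z - \int_0^\infty \frac{h(t)}{t}\bigl(e^{-t}-e^{-tz}\bigr)\,dt,
\end{align*}
where $h(t):=\tfrac12-\tfrac1t+\tfrac1{e^t-1}$.

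Next I would analyze $R_1(z):=\int_0^\infty \frac{h(t)}{t}\,e^{-tz}\,dt$ separately from the $z$-independent constants. A direct Taylor expansion of $\tfrac1{e^t-1}$ near zero gives $h(t)/t = \tfrac1{12}+O(t)$ as $t\to 0^+$, while as $t\to\infty$ one has $h(t)/t = \tfrac1{2t}+O(t^{-2})$. Hence $h(t)/t$ extends continuously to $[0,\infty)$, vanishes at infinity, and has a bounded, integrable derivative on $(0,\infty)$. Integration by parts in $R_1(z)$ with $u=h(t)/t$ and $dv=e^{-tz}\,dt$ yields a boundary contribution $\tfrac{1}{12z}$ at $t=0$ (the boundary at infinity vanishes since $|e^{-tz}|\le 1$ and $h(t)/t\to 0$), plus a remainder $\tfrac{1}{z}\int_0^\infty \bigl(h(t)/t\bigr)'\,e^{-tz}\,dt$, whose modulus is at most $\tfrac{1}{|z|}\int_0^\infty |(h(t)/t)'|\,dt \le C/|z|$ uniformly for $\Re(z)\ge 0$. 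This gives $|R_1(z)|\le C/|z|$ as claimed.

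Finally, the $z$-independent constant $1-\int_0^\infty (h(t)/t)e^{-t}\,dt$ left over by the above calculation must equal $\tfrac12\ln(2\pi)$, which I would pin down by taking $z\to\infty$ along the positive real axis and comparing with classical Stirling $\ln\Gamma(z)=z\log z-\tfrac12\log z-z+\tfrac12\ln(2\pi)+O(1/z)$: the two asymptotic expansions hold on $(0,\infty)$ and have identically vanishing remainders, so the constant terms coincide. The main technical obstacle is the case $\Re(z)=0$: there $e^{-tz}$ does not decay at $t=\infty$, so $R_1(z)$ cannot be controlled by naively bounding the integrand. The integration-by-parts step is essential precisely because it trades the missing decay for a factor of $1/|z|$, yielding the required estimate uniformly on the closed right half-plane.
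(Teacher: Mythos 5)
Your proof takes essentially the same route as the paper's: both start from the Binet integral representation of $\psi$ given in \eqref{psizform}, pass to the corresponding representation of $\Pi$, and then perform a single integration by parts on the residual integral, using the $L^1$ bound on the derivative of the integrand to extract the $O(1/|z|)$ factor. The only cosmetic differences are that you re-derive the representation by integrating $\psi$ along the contour and fix the constant $\tfrac12\ln(2\pi)$ by comparison with Stirling, whereas the paper simply cites the already-normalized formula from Whittaker--Watson, and you handle the boundary case $\Re(z)=0$ directly through the integration by parts rather than via the paper's continuity argument; both handlings are equivalent.
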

\begin{proof}
It suffices to prove the claim for $\Re(z)>0$, because we can then prove the claim for $\Re(z)=0$ by taking a limit and using the continuity of $\Pi$.  It can be shown using the formula \eqref{psizform}, as worked out in \cite[Section 12.31]{whittakerwatson21}, that for $\Re(z)>0$,
\begin{align*}
\Pi(z) = z\log z - z - \frac{1}{2}\log z + \frac{1}{2}\ln (2\pi) + \int_0^\infty \biggl(\frac{1}{2}-\frac{1}{t}+\frac{1}{e^t-1}\biggr) \frac{e^{-tz}}{t} dt.
\end{align*}
Define 
\[
\phi(t) := \biggl(\frac{1}{2}-\frac{1}{t}+\frac{1}{e^t-1}\biggr) \frac{1}{t}.
\]
Now, it is known~\cite[Equation (7.1)]{whittakerwatson21} that the function $t/(e^t-1)$ has an absolutely convergent power series expansion in a neighborhood of the origin, given by
\[
\frac{t}{e^t-1} = 1-\frac{t}{2} + \sum_{m=1}^\infty (-1)^{m-1}\frac{B_{m}}{(2m)!} t^{2m},
\]
where the $B_{m}$'s are the Bernoulli numbers. Thus, the power series expansion of $\phi$ at zero is 
\[
\phi(t) = \sum_{m=1}^\infty (-1)^{m-1}\frac{B_{m}}{(2m)!} t^{2m-2}.
\]
Since $B_1 = \frac{1}{6}$, this shows that  $\phi(t)\to \frac{1}{12}$ as $t\to 0$. Also, it is easy to check that $\phi(t)\to 0$ and $\phi'(t)\to 0$ as $t\to \infty$. 
Since $\Re(z) > 0$, this allows us to apply integration by parts and get
\begin{align}\label{phidiff}
\int_0^\infty\phi(t)e^{-tz} dt = \frac{1}{12z} + \frac{1}{z}\int_0^\infty \phi'(t) e^{-tz} dt.
\end{align}
Now, note that 
\begin{align*}
\phi'(t) &= \biggl(\frac{1}{t^2}-\frac{e^t}{(e^t-1)^2}\biggr) \frac{1}{t} - \biggl(\frac{1}{2}-\frac{1}{t}+\frac{1}{e^t-1}\biggr) \frac{1}{t^2}.
\end{align*}
This is $O(t^{-2})$ as $t\to \infty$. Also, we know from the above series expansion that $\phi'$ is bounded near zero. Thus,
\[
\int_0^\infty|\phi'(t)|dt <\infty.
\]
By equation \eqref{phidiff} and the fact that $\Re(z)>0$, this shows that 
\[
\biggl|\int_0^\infty\phi(t)e^{-tz} dt\biggr| \le \frac{C}{|z|},
\]
where $C$ is a universal constant. This completes the proof.
\end{proof}
\begin{cor}\label{gcor}
For any nonzero $z\in \C$ with $\Re(z)\ge 0$, 
\[
\Theta(z+1) = \frac{z-1}{2}\ln (2\pi)  +\frac{1}{2}z^2 \log z - \frac{3z^2}{4} -\frac{1}{12}\log z+ R_2(z),
\]
where $|R_2(z)|\le C(1+|z|^{-1})$ for some universal constant $C$.
\end{cor}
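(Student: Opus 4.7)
The plan is to start from the integral representation of $\Theta(z+1)$ given by equation \eqref{gmain}, namely
\[
\Theta(z+1) = \int_1^z\biggl(\tfrac{1}{2}\ln(2\pi)+\tfrac{1}{2} - w + w\psi(w)\biggr)dw,
\]
and to evaluate each piece explicitly, extracting the asymptotic terms and bounding the remainder. The first three summands inside the integrand integrate in closed form to $\frac{1}{2}\ln(2\pi)(z-1) + \frac{1}{2}(z-1) - \frac{1}{2}(z^2-1)$, so the real work is in the last term. Since $\psi=\Pi'$ and $\Pi(1)=\ln\Gamma(1)=0$, integration by parts gives $\int_1^z w\psi(w)\,dw = z\Pi(z) - \int_1^z \Pi(w)\,dw$.

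Next, I would substitute Corollary \ref{picor} into both occurrences of $\Pi$. For $z\Pi(z)$ this produces the explicit combination $z^2\log z - z^2 - \frac{z}{2}\log z + \frac{z}{2}\ln(2\pi)$ plus $zR_1(z)$, which is bounded because $|zR_1(z)|\le C$. For $\int_1^z\Pi(w)\,dw$, the explicit terms integrate to elementary antiderivatives evaluated from $1$ to $z$, giving $\frac{z^2\log z}{2}-\frac{z^2-1}{4}-\frac{z^2-1}{2}-\frac{1}{2}(z\log z - z+1)+\frac{z-1}{2}\ln(2\pi)$, plus the residual $\int_1^z R_1(w)\,dw$.

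The main obstacle is controlling this last integral: the crude bound $|R_1(w)|\le C/|w|$ only yields $O(\log|z|)$ along a straight contour, which is too coarse and would also fail to produce the $-\tfrac{1}{12}\log z$ term in the claim. To resolve this, I would revisit the integral representation $R_1(w)=\int_0^\infty\phi(t)e^{-tw}\,dt$ from the proof of Corollary \ref{picor}, and perform one additional integration by parts, using $\phi(0)=\tfrac{1}{12}$, to isolate the leading behavior as $R_1(w) = \tfrac{1}{12w} + \tfrac{1}{w}\int_0^\infty \phi'(t)e^{-tw}\,dt$. A second integration by parts (legitimate because the power-series expansion of $\phi$ contains only even powers, so $\phi'(0)=0$) upgrades this to $R_1(w) = \tfrac{1}{12w} + O(|w|^{-2})$ uniformly on $\Re(w)\ge 0$. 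I would then choose a contour from $1$ to $z$ consisting of an arc on the unit circle from $1$ to $z/|z|$ followed by a radial segment to $z$; along this contour $\int_1^z |dw|/|w|^2 = O(1+|z|^{-1})$ (the worst case being small $|z|$), so the $O(|w|^{-2})$ part of $R_1$ contributes $O(1+|z|^{-1})$, while $\int_1^z dw/(12w)$ gives exactly $\tfrac{1}{12}\log z$.

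Finally, I would assemble all pieces, carefully collecting like terms. The $z\log z$ contributions from $z\Pi(z)$ and from the explicit part of $\int_1^z\Pi(w)\,dw$ cancel; the $\ln(2\pi)$ terms combine to $\tfrac{z}{2}\ln(2\pi) = \tfrac{z-1}{2}\ln(2\pi) + \tfrac{1}{2}\ln(2\pi)$, with the extra constant absorbed into $R_2$; the $z^2$-terms collapse to $-\tfrac{3z^2}{4}$; the $z^2\log z$ contributions combine to $\tfrac{1}{2}z^2\log z$; and the $-\tfrac{1}{12}\log z$ arises with the correct sign because it enters via $-\int_1^z R_1(w)\,dw$. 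All remaining terms, including $zR_1(z)$, constants, and the $O(1+|z|^{-1})$ piece from the upgraded remainder, satisfy the bound $|R_2(z)|\le C(1+|z|^{-1})$, completing the proof.
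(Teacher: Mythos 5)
Your proposal is correct and reaches the stated expansion, but it takes a genuinely different route from the paper. The paper does not integrate by parts at the level of $\int_L w\psi(w)\,dw$; instead it upgrades the expansion of $\psi$ itself to four explicit terms, $\psi(z)=\log z-\tfrac{1}{2z}-\tfrac{1}{12z^2}+\tfrac{1}{720z^4}-\tfrac{1}{z^4}\int_0^\infty\chi''''(t)e^{-tz}\,dt$, by three successive integrations by parts applied to the function $\chi(t)=\tfrac12-\tfrac1t+\tfrac1{e^t-1}$ in the Binet-type integral. It then integrates each term of $w\psi(w)$ elementarily along the straight line $L$ from $1$ to $z$, and the $-\tfrac1{12}\log z$ emerges from $\int_L \tfrac{1}{12w}\,dw$. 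The only nontrivial estimate is then $\int_L w\,T(w)\,dw$ with $|T(w)|\le C|w|^{-4}$, and the paper computes $\int_0^1 |1-t+tz|^{-3}\,dt$ exactly (via a clever substitution) to get the $O(1+|z|^{-1})$ bound. You instead first write $\int_L w\psi(w)\,dw = z\Pi(z)-\int_L\Pi(w)\,dw$, plug in Corollary \ref{picor} with its remainder $R_1$, sharpen $R_1$ to $\tfrac1{12w}+O(|w|^{-2})$ by one additional integration by parts in its integral representation, and then deform $L$ to an arc-plus-radial contour on which $\int|dw|/|w|^2$ is trivially $O(1+|z|^{-1})$ and $\int dw/w=\log z$ is immediate. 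Both routes are correct: yours needs only a one-term refinement of the error (to $|w|^{-2}$ rather than $|w|^{-4}$) and replaces the exact evaluation of the remainder integral by a contour-choice argument; the paper's version keeps the canonical straight contour but pays for it with a more explicit integral computation. One minor note: the vanishing $\phi'(0)=0$ is indeed true (the expansion of $\phi$ has only even powers of $t$) but is not actually essential — even if $\phi'(0)\ne 0$, the resulting $\phi'(0)/w^2$ term would still be absorbed by the same $O(|w|^{-2})$ bound on your contour.
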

\begin{proof}
As in the previous corollary, it suffices to prove the claim for $\Re(z)>0$. So, assume that $\Re(z)>0$ and define 
\[
\chi(t) := \frac{1}{2}-\frac{1}{t}+\frac{1}{e^t-1}.
\]
Proceeding as in the proof of Corollary \ref{picor}, we see that $\chi(t)$ as the power series expansion
\[
\chi(t)  = \sum_{m=1}^\infty (-1)^{m-1}\frac{B_{m}}{(2m)!} t^{2m-1},
\]
converging absolutely in a neighborhood of the origin. Thus, $\chi(t)\to 0$ as $t\to 0$. Justifying integration by parts as before, we get
\begin{align*}
\int_0^\infty \chi(t) e^{-tz} dt &= \frac{1}{z}\int_0^\infty \chi'(t) e^{-tz} dt.
\end{align*}
The above expansion also shows that $\chi'(t) \to \frac{1}{12}$, $\chi''(t) \to 0$, and $\chi'''(t) \to -\frac{1}{720}$ as $t\to 0$. Thus, two further applications of integration by parts gives 
\begin{align*}
\int_0^\infty \chi(t) e^{-tz} dt &= \frac{1}{12z^2} -\frac{1}{720z^4} +  \frac{1}{z^4}\int_0^\infty \chi''''(t) e^{-tz} dt.
\end{align*}
Consequently, by equation \eqref{psizform}, we get
\begin{align*}
\psi(z) &= \log z - \frac{1}{2z}- \frac{1}{12z^2} +\frac{1}{720z^4} -  \frac{1}{z^4}\int_0^\infty \chi''''(t) e^{-tz} dt.
\end{align*}
It is easy to see that $\chi''''(t) = O(t^{-5})$ as $t\to\infty$. Also, the series expansion shows that $\chi''''(t)$ is bounded near the origin. Thus, $\int_0^\infty|\chi''''(t)|dt < \infty$. This shows that if we define
\begin{align*}
T(z) := \psi(z) - \biggl(\log z - \frac{1}{2z}- \frac{1}{12z^2}\biggr),
\end{align*}
then $|T(z)|\le C|z|^{-4}$ for  some universal constant $C$. 

Let $L$ be the straight line from $1$ to $z$. Then by equation~\eqref{gmain},
\begin{align*}
\Theta(z+1) &= \int_L  \biggl( \frac{1}{2}\ln (2\pi) + \frac{1}{2} - w + w\biggl(\log w - \frac{1}{2w}- \frac{1}{12w^2} + T(w)\biggr)\biggr) dw\\
&= \frac{1}{2}(z-1)\ln (2\pi) -\frac{1}{12}\log z -\frac{1}{2}(z^2-1)+ \int_L w\log w dw +\int_L wT(w) dw.
\end{align*}
Since 
\[
\frac{d}{dw}(w^2 \log w) = 2w\log w + w, 
\]
we have 
\[
\int_L w\log w dw  =\frac{1}{2} z^2\log z - \frac{z^2-1}{4}. 
\]
This gives 
\begin{align*}
\Theta(z+1) &= \frac{1}{2}(z-1)\ln (2\pi) -\frac{1}{12}\log z -\frac{3}{4}(z^2-1)+ \frac{1}{2}z^2 \log z +\int_L wT(w) dw.
\end{align*}
Finally, by the bound $|T(w)|\le C|w|^{-4}$, we get
\begin{align*}
\biggl|\int_L wT(w) dw\biggr| &\le \int_0^1 |(1-t+tz)T(1-t+tz)||z| dt\\
&\le C|z| \int_0^1 \frac{1}{|1-t+tz|^3} dt.
\end{align*}
Writing $z = r e^{\i \theta}$ and recalling that $\Re(z)>0$ (and hence, $\cos\theta >0$), we get
\begin{align*}
\int_0^1 \frac{1}{|1-t+tz|^3} dt &= \int_0^1 \frac{1}{((1-t)^2+2rt(1-t)\cos \theta + r^2 t^2)^{3/2}} dt\\
&\le  \int_0^1 \frac{1}{((1-t)^2+ r^2 t^2)^{3/2}} dt = \int_0^1 \frac{1}{(1-2t+ (1+r^2) t^2)^{3/2}} dt.
\end{align*}
To evaluate the last integral, let us make the change of variable
\[
u = \frac{(1+r^2)t -1}{r}. 
\]
This gives
\begin{align*}
\int_0^1 \frac{1}{(1-2t+ (1+r^2) t^2)^{\frac{3}{2}}} dt &= \frac{r}{1+r^2}\int_{-1/r}^r \biggl(1-\frac{2(ru+1)}{1+r^2}+ \frac{(ru+1)^2}{1+r^2}\biggr)^{-3/2} du\\
&= \frac{\sqrt{1+r^2}}{r^2} \int_{-1/r}^r (1+u^2)^{-3/2} du\\
&= \frac{\sqrt{1+r^2}}{r^2}\biggl[\frac{u}{\sqrt{1+u^2}}\biggr]^r_{-1/r} = \frac{1}{r} + \frac{1}{r^2}.
\end{align*}
Thus, we conclude that 
\[
\biggl|\int_L wT(w) dw\biggr|\le C(1+|z|^{-1}).
\]
This completes the proof.
\end{proof}
One final technical lemma that we need is the following. 
\begin{lmm}\label{gammalmm2}
For $z\in \C \setminus(-\infty,0]$, we have
\[
\Pi(z+1) = \Pi(z) + \log z,
\]
where $\log$ denotes the branch of logarithm on $ \C \setminus(-\infty,0]$ that is real-valued on $(0,\infty)$.
\end{lmm}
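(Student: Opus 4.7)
My plan is to reduce the claimed identity to the classical functional equation $\Gamma(z+1)=z\Gamma(z)$ by exponentiating, and then pin down the resulting $2\pi\i\Z$-ambiguity by a single value on the positive real axis.

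First I would check that both sides are analytic on the simply connected domain $\Omega:=\C\setminus(-\infty,0]$. For $z\in\Omega$, the point $z+1$ lies in $\C\setminus(-\infty,1]\subset\Omega$, so $\Pi(z+1)$ is well-defined and analytic by the contour-integral description \eqref{pirep}; and $\Pi(z)$, $\log z$ are analytic on $\Omega$ by definition. In particular, the function
\[
h(z):=\Pi(z+1)-\Pi(z)-\log z
\]
is analytic on $\Omega$.

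Next I would exponentiate. Since $e^{\Pi(w)}=\Gamma(w)$ for all $w$ in the domain of $\Pi$, and $e^{\log z}=z$, we get
\[
e^{h(z)}=\frac{\Gamma(z+1)}{z\,\Gamma(z)}=1
\]
on $\Omega$, by the standard functional equation of $\Gamma$ (which is valid on $\Omega$ since $\Gamma$ has no poles there). Hence $h(\Omega)\subset 2\pi\i\Z$, and continuity of $h$ together with connectedness of $\Omega$ forces $h\equiv 2\pi\i n$ for some fixed $n\in\Z$. Finally, evaluating at $z=1\in(0,\infty)$, where $\Pi$ coincides with the ordinary $\ln\Gamma$ by construction, we find $\Pi(2)=\ln 1=0$, $\Pi(1)=\ln 1=0$, and $\log 1=0$, so $h(1)=0$ and therefore $n=0$.

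The only step with any real content is the exponentiation move; the rest is bookkeeping. I do not anticipate an obstacle here — the subtlety, if any, is merely to verify that $\Pi$ and the functional equation of $\Gamma$ are both valid on the claimed domain, which they are because $\C\setminus(-\infty,0]$ is simply connected and contains neither the poles of $\Gamma$ nor the branch cut of $\log$.
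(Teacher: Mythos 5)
Your proposal is correct and takes essentially the same approach as the paper: both reduce the claim to the classical functional equation $\Gamma(z+1)=z\Gamma(z)$ and then use analyticity on the connected domain $\C\setminus(-\infty,0]$ to propagate the identity. The only cosmetic difference is that the paper applies the identity theorem directly (the difference $\Pi(z+1)-\Pi(z)-\log z$ vanishes on $(0,\infty)$, hence everywhere), whereas you exponentiate first and use discreteness of $2\pi\i\Z$ together with one evaluation at $z=1$.
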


\begin{proof}
Note that the function $\Pi(z+1)-\Pi(z)-\log z$ is analytic on $ \C \setminus(-\infty,0]$. Also, it vanishes on $(0,\infty)$, since for $x\in(0,\infty)$, $\Pi(x)= \ln \Gamma(x)$ (which is a simple consequence of the definition of $\Pi$), $\log x = \ln x$, and $\Gamma(x+1)=x\Gamma(x)$. Thus, $\Pi(z+1)-\Pi(z)-\log z$ must vanish everywhere on $ \C \setminus(-\infty,0]$. 
\end{proof}

We will also need another technical lemma for future purposes, that we record here. The proof uses Corollary \ref{picor}.
\begin{lmm}\label{gammalmm3}
For any $z_1,z_2$ in the open right half-plane, we have
\[
\biggl|\frac{\Gamma(z_1)}{\Gamma(z_2)}\biggr| \le e^{(C_1+C_2) |z_1-z_2| + C_2C_3},
\]
where $C_1$ is the maximum value of $|\log z|$ along the line segment joining $z_1$ and $z_2$,  $C_2$ is the maximum value of $|z|^{-1}$ along the same line segment, and $C_3$ is a universal constant.
\end{lmm}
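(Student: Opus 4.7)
The plan is to rewrite $|\Gamma(z_1)/\Gamma(z_2)|$ as $e^{\Re(\Pi(z_1)-\Pi(z_2))}$ and bound the exponent via the asymptotic expansion furnished by Corollary \ref{picor}. Since the open right half-plane is convex, the straight line segment $L$ from $z_2$ to $z_1$ stays in the domain of $\Pi$ and of the principal branch of $\log$, so everything below is well-defined. My target is the bound $|\Pi(z_1)-\Pi(z_2)| \le (C_1+C_2)|z_1-z_2| + C_2 C_3$, from which the lemma follows since $\Re(w) \le |w|$.

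First I would write out the difference using Corollary \ref{picor}:
\[
\Pi(z_1)-\Pi(z_2) = (z_1\log z_1 - z_2\log z_2) - (z_1-z_2) - \tfrac{1}{2}(\log z_1-\log z_2) + R_1(z_1) - R_1(z_2),
\]
with $|R_1(z)|\le C/|z|$ for a universal $C$. The remainder contribution is immediate: since $C_2 \ge 1/|z_j|$ for the endpoints $z_1,z_2$ on $L$, we have $|R_1(z_1)|+|R_1(z_2)|\le 2CC_2$, which is absorbed into the $C_2 C_3$ term with $C_3=2C$.

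The next step is to convert the finite differences into line integrals along $L$, so as to expose the dependence on $C_1$ and $C_2$. Using $\frac{d}{dz}(z\log z) = \log z + 1$ and integrating along $L$ gives $z_1\log z_1 - z_2\log z_2 = \int_L \log z\,dz + (z_1-z_2)$, and the $(z_1-z_2)$ here cancels exactly the explicit $(z_1-z_2)$ term in the expansion above. Similarly, $\log z_1 - \log z_2 = \int_L z^{-1}\,dz$. Substituting yields the clean form
\[
\Pi(z_1)-\Pi(z_2) = \int_L \Bigl(\log z - \tfrac{1}{2z}\Bigr) dz + R_1(z_1) - R_1(z_2),
\]
after which the estimates $\bigl|\int_L \log z\,dz\bigr| \le C_1|z_1-z_2|$ and $\bigl|\int_L \tfrac{1}{2z}\,dz\bigr|\le \tfrac{1}{2}C_2|z_1-z_2|$ are immediate from the definitions of $C_1$ and $C_2$ as suprema along $L$.

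I do not anticipate a genuine obstacle; the entire argument is mechanical once the cancellation of the $(z_1-z_2)$ terms is noticed. The only thing to watch is that $L$ may pass close to the origin, which would make both $C_1$ and $C_2$ large; but this is precisely what the statement allows, and the bound becomes vacuous (though still correct) in that regime. Collecting the pieces and setting $C_3=2C$ completes the proof.
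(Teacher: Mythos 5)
Your proposal is correct and follows the same route as the paper: express the ratio as $\exp(\Pi(z_1)-\Pi(z_2))$, expand via Corollary~\ref{picor}, and bound the finite differences by line integrals along the segment $L$ (observing that $\frac{d}{dz}(z\log z - z)=\log z$, which is exactly the cancellation you point out). The only cosmetic difference is that the paper parametrizes the segment explicitly rather than writing $\int_L$, and takes real parts before estimating rather than appealing to $\Re(w)\le|w|$; the content is identical.
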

\begin{proof}
By Corollary \ref{picor} and the assumption that $z_1$ and $z_2$ are both in the open right half-plane, we have
\begin{align*}
\biggl|\frac{\Gamma(z_1)}{\Gamma(z_2)}\biggr| &= |\exp(\Pi(z_1)-\Pi(z_2))|\\
&= \exp\biggl\{\Re(z_1\log z_1 -z_1 - (z_2\log z_2-z_2)) - \frac{1}{2}\Re(\log z_1 - \log z_2) \\
&\qquad + \Re(R_1(z_1) - R_1(z_2)) \biggr\}.
\end{align*}
Now note that 
\begin{align*}
&z_1\log z_1 -z_1 - (z_2\log z_2-z_2) \\
&= \int_0^1 \frac{d}{dt}\{(tz_1 + (1-t)z_2)\log(tz_1+(1-t)z_2) - (tz_1+(1-t)z_2)\} dt\\
&=(z_1-z_2)\int_0^1\log(tz_1 + (1-t)z_2) dt.  
\end{align*}
This shows that 
\begin{align*}
|z_1\log z_1 -z_1 - (z_2\log z_2-z_2)| &\le C_1|z_1 - z_2|.
\end{align*}
Similarly, 
\[
|\log z_1 - \log z_2| \le |z_1-z_2|\int_0^1\frac{1}{|tz_1+(1-t)z_2|} dt \le C_2 |z_1-z_2|,
\]
and for $j=1,2$, $|R_1(z_j)|\le C_3 C_2$, where $C_3$ is a universal constant.
\end{proof}

We are now ready to prove Lemma \ref{fzfinal}.
Observe that by the identity \eqref{gzid}, 
\[
G(z+1)^2 = \frac{G(z+3)^2}{\Gamma(z+1)^2 \Gamma(z+2)^2}.
\]
By Lemma \ref{gammalmm2},
\begin{align*}
\Gamma(z+1)^{z+1} &= e^{(z+1)\Pi(z+1)} \\
&= e^{(z+1)(\Pi(z+2)-\log(z+1))} = \Gamma(z+2)^{z+1} e^{-(z+1)\log(z+1)}. 
\end{align*}
Using these identities in the definition \eqref{fdef} of $f$, we get 
\begin{align*}
f(z) &= \frac{(4\pi)^z e^{\frac{1}{2}z(z-1)}G(z+1)^2}{\Gamma(z+1)^{z-1}}= \frac{(4\pi)^z e^{\frac{1}{2}z(z-1)}G(z+3)^2}{\Gamma(z+1)^{z+1}\Gamma(z+2)^2}\\
&= \frac{(4\pi)^z e^{(z+1)\log(z+1)} e^{\frac{1}{2}z(z-1)}G(z+3)^2}{(\Gamma(z+2))^{z+3}}\\
&= \exp\biggl\{z\log(4\pi) + (z+1)\log(z+1) + \frac{1}{2}z(z-1) + 2\Theta(z+3) - (z+3)\Pi(z+2)\biggr\}.
\end{align*}
Substituting the expansions of $\Theta$ and $\Pi$ from Corollary \ref{gcor} and Corollary \ref{picor}, we get that if $\Re(z) \ge -1$ and $z\ne -1$, 
\begin{align*}
f(z) &= \exp\biggl\{z\log(4\pi) + (z+1)\log(z+1) + \frac{1}{2}z(z-1)  + (z+1)\ln (2\pi)\\
&\qquad \qquad  +(z+2)^2 \log (z+2) - \frac{3(z+2)^2}{2}-\frac{1}{6}\log(z+2) + 2R_2(z+2)\\
&\qquad \qquad -(z+3)(z+2)\log(z+2) + (z+3)(z+2)+\frac{1}{2}(z+3)\log(z+2)\\
&\qquad \qquad - \frac{1}{2}(z+3)\ln(2\pi) - (z+3)R_1(z+2)\biggr\}.
\end{align*}
To simplify this, note that the terms that are quadratic in $z$ add up to
\begin{align*}
 \frac{1}{2}z(z-1) - \frac{3(z+2)^2}{2}+ (z+3)(z+2) = -\frac{3}{2}z, 
\end{align*}
and the linear terms have real coefficients. Thus, the linear and quadratic terms add up to $z$ times a universal real coefficient. The coefficients of $\log(z+2)$ add up to 
\[
(z+2)^2-\frac{1}{6} -(z+3)(z+2)+ \frac{1}{2}(z+3) = -\frac{1}{2}z - \frac{2}{3}.
\]
Lastly, note that by the bounds from Corollary \ref{picor} and Corollary \ref{gcor},
\[
|(z+3)R_1(z+2)| + |2R_2(z+2)|\le C,
\]
where $C$ is a universal constant. Putting it all together, we get the simplification
\begin{align}\label{fzsimple}
f(z) &= \exp\biggl\{(z+1)\log(z+1) - \frac{1}{2}(z+1)\log (z+2) +Az + R_3(z)\biggr\},
\end{align}
where $A$ is a real universal constant and $|R_3(z)|\le C \log(2+|z|)$ for some positive universal constant $C$. Now, since $\Re(z)\ge -1$ and $z\ne -1$, we have 
\begin{align*}
|\log (z+2) - \log(z+1)| &= \biggl|\int_0^1 \frac{1}{z+1 + t} dt\biggr|\\
&\le \int_0^1 \frac{1}{|z+1+t|} dt\le \frac{1}{|z+1|}. 
\end{align*} 
Therefore, replacing $\log(z+2)$ by $\log(z+1)$ in equation \eqref{fzsimple} incurs an error whose absolute value is bounded by a universal constant. 
This gives the desired expression for $f(z)$. For the bound on $|f(z)|$, note that if $z=x+\i y$, then
\begin{align*}
&\biggl|\exp\biggl\{\frac{1}{2}(z+1)\log (z+1)+ Az + R(z)\biggr\}\biggr|\\
&= \biggl|\exp\biggl\{\frac{1}{2}(x+1+\i y)(\ln |z+1| + \i \arg(z+1)) + Ax+\i A y + R(z)\biggr\}\biggr|\\
&= \exp\biggl\{\frac{1}{2}(x+1)\ln |z+1| - \frac{1}{2} y \arg(z+1) + Ax + \Re(R(z))\biggr\}.
\end{align*}
This completes the proof.

\subsubsection{Proof of Lemma \ref{gammalmm}}\label{gammalmmpf}
Note that $n\ge 1$. By repeated applications of the relation $\Gamma(w+1)=w\Gamma(w)$, we have
\begin{align*}
\Gamma(-z) = \frac{ \Gamma(-z+n)}{-z(-z+1)\cdots (-z+n-1)}.  
\end{align*}
By the definition of $n$, $\Re(-z+n) \ge 1$. Thus, by Corollary \ref{picor},
\begin{align*}
\Gamma(-z+n) &= \exp\biggl\{ (-z+n)\log (-z+n) + z - n - \frac{1}{2}\log (-z+n) +R_1(-z+n)\biggr\},
\end{align*}
where $|R_1(-z+n)|\le C$ for some universal constant $C$. 
Recall that $a = \Re(-z+n)$, and $1\le a < 2$. Thus,
\begin{align*}
|\Gamma(-z+n)| &= \biggl|\exp\biggl\{(a-\i y) (\ln |a-\i y| + \i \arg(a-\i y)) - a +\i y\\
&\qquad \qquad  - \frac{1}{2}(\ln |a-\i y| + \i \arg(a-\i y)) + R_1(a-\i y)\biggr\}\biggr|\\
&= \exp\biggl\{a \ln |a-\i y| + y \arg(a-\i y) - a -\frac{1}{2}\ln|a-\i y| + \Re(R_1(a-\i y))\biggr\}.
\end{align*}
It is easy to check that this completes the proof.


\subsubsection{Proof of Lemma \ref{zerofinal}}\label{zerofinalpf}
In this proof, $C, C_1, C_2,\ldots$ will denote arbitrary positive constants that may depend only on $x_0$, whose values may change from line to line. Take any $N\ge 3$ and let $x:= N+x_0$. Then note that for any $y\in \R$, 
\begin{align*}
\prod_{j=0}^{\lceil x+1\rceil-1}|j-x- \i y| &= \exp\biggl(\sum_{j=0}^{N }\ln |j-x- \i y| \biggr)\\
&= \exp\biggl(\frac{1}{2}\sum_{j=0}^{N }\ln ((j-x)^2+ y^2) \biggr)= \exp\biggl( \frac{1}{2}\sum_{j=0}^{N}\ln ((j+x_0)^2+ y^2) \biggr).
\end{align*}
Next, note that 
\begin{align*}
\sum_{j=0}^{N}\ln ((j+x_0)^2+ y^2) &= \ln(x_0^2 + y^2) + \ln((1+x_0)^2+y^2) + \sum_{j=2}^N \ln((j+x_0)^2+y^2)\\
&\ge -C_1 +  \int_{1}^N\ln((s+x_0)^2 + y^2) ds= -C_1+ \int_{1+x_0}^{x} \ln(s^2+y^2) ds. 
\end{align*}
Integration by parts gives
\begin{align*}
\int_{1+x_0}^{x} \ln(s^2+y^2) ds &= [ s\ln(s^2+y^2)]^{x}_{1+x_0}- \int_{1+x_0}^{x} \frac{2s^2}{s^2 + y^2} ds\\
&\ge x\ln(x^2+y^2) - \ln(1+y^2) - 2x - C_1. 
\end{align*}
Combining the last three displays, we get
\begin{align}\label{prodjineq}
\prod_{j=0}^{\lceil x+1\rceil-1}|j-x- \i y|  &\ge C_1\exp\biggl(\frac{1}{2}x \ln (x^2 + y^2) - C_2x\biggr)\notag\\
&= C_1\exp(x \ln |x+\i y| - C_2 x - \ln(1+y^2)).
\end{align}
Using this in equation \eqref{gxir}, we get
\begin{align*}
|g(x+\i y)|&\le C_1\exp\biggl\{C_2\ln(2+|x|+|y|)+ y \arg(\lceil x+1 \rceil -x- \i y) \notag\\
&\qquad - x\ln|x+\i y| + C_3 x + \frac{1}{2}(x+1)\ln |x+ \i y+1|  \notag \\
&\qquad \qquad \qquad - \frac{1}{2} y \arg(x+ \i y+1) \biggr\} (\mu e^{\sqrt{2}c})^x.
\end{align*}
Simplifying this, we arrive at the inequality
\begin{align}\label{gxymain}
|g(x+\i y) | &\le C_1 (\mu e^{\sqrt{2}c})^x \exp\biggl\{y\biggl(\arg(1-x_0-\i y) -\frac{1}{2}\arg(x+\i y+1)\biggr)\notag \\
&\hskip1in - x\ln |x+\i y | + \frac{1}{2}x \ln|x+\i y+1|\notag \\
&\hskip1in + C_2 x + C_3\ln(2+|x|+|y|)\biggr\}.
\end{align}
Since $x= N+x_0> 2$, we have
\[
\ln |x+\i y | \ge \ln |x+\i y+1| - C.
\]
Similarly,
\[
\ln(2+|x|+|y|) \le \ln|x+\i y + 1| + C.
\]
Also, since $1-x_0> 0$ and $x = N+x_0 \ge 3+x_0 > 1-x_0$, we have
\begin{align*}
y\biggl(\arg(1-x_0-\i y) -\frac{1}{2} \arg(x+\i y+1)\biggr) &= y\biggl(-\arg(1-x_0+\i y) -\frac{1}{2} \arg(x+\i y+1)\biggr) \\
&\le -\frac{3}{2}y \arg(x+\i y+1).
\end{align*}
Using these two observations in equation \eqref{gxymain}, we get
\begin{align}\label{gxybound}
|g(x+\i y) | &\le C_1 (\mu e^{\sqrt{2}c})^x \exp\biggl( - \frac{3}{2}y\arg(x+\i y+1)- \frac{1}{2}x \ln|x+\i y+1|\notag \\
&\hskip1in + C_2 x + C_3\ln|x+\i y+1|\biggr).
\end{align}
This gives 
\begin{align*}
|F(x)| &\le C_1  (\mu e^{(\sqrt{2}c+C_2)})^x \int_{-\infty}^\infty \exp\biggl(- \frac{3}{2}y\arg(x+\i y+1)- \frac{1}{2}x \ln|x+\i y+1|\notag\\
&\hskip1in + C_3\ln |x+\i y+1|\biggr)dy.
\end{align*}
Making the change of variable $ u = y/(x+1)$, we get 
\begin{align*}
|F(x)|&\le C_1 (x+1) (\mu e^{(\sqrt{2}c+C_2)})^x \int_{-\infty}^\infty \exp\biggl( - \frac{3}{2} (x+1)u\arg((x+1)(1+\i u))\\
&\hskip1.5in - \frac{1}{2}x \ln ((x+1)|1+\i u|)+ C_3\ln((x+1)|1+\i u|)\biggr)du\\
&= C_1 (x+1)^{C_2} \exp\biggl((\sqrt{2}c+C_3) x- \frac{1}{2}x\ln (x+1)\biggr)\\
&\qquad\cdot \int_{-\infty}^\infty \exp\biggl( - \frac{3}{2} (x+1)u\arg(1+\i u) +C_4 \ln |1+\i u|\biggr)du.
\end{align*}
Since
\begin{align*}
&\int_{-\infty}^\infty \exp\biggl( - \frac{3}{2} (x+1)u\arg(1+\i u) +C_4\ln |1+\i u|\biggr)du\\
&\le \int_{-\infty}^\infty \exp\biggl( - \frac{3}{2}u\arg(1+\i u)+C_4\ln |1+\i u|\biggr)du,
\end{align*}
which does not depend on $x$ and is finite, we conclude that
\[
|F(x)|\le C_1 (x+1)^{C_2}\mu^x \exp\biggl((\sqrt{2}c+C_3) x- \frac{1}{2}x\ln (x+1)\biggr).
\]
In particular, $F(N+x_0)\to 0$ as $N\to\infty$. This completes the proof.

\subsubsection{Proof of Lemma \ref{fzfinalone}}\label{fzfinalonepf}
Recall that by assumption, $\Re(w)\in [-1,0)$. 
By the identity \eqref{gzid}, 
\[
G(z+1) = \frac{G(z+3)}{\Gamma(z+1) \Gamma(z+2)}, \ \ \ G(z-w) = \frac{G(z-w+2)}{\Gamma(z-w)\Gamma(z-w+1)}. 
\]
By Lemma \ref{gammalmm2}, we have that for any $z\in w+( \C\setminus(-\infty,0])$,
\begin{align*}
\Gamma(z-w)^{z+1} &= e^{(z+1)\Pi(z-w)} \\
&= e^{(z+1)(\Pi(z-w+1)-\log(z-w))} = \Gamma(z-w+1)^{z+1} e^{-(z+1)\log(z-w)}. 
\end{align*}
Thus, for all $z$ in the domain $\Omega:= (w+(\C\setminus(-\infty,0]))\setminus\{-1,-2,\ldots\}$, we have
\begin{align}\label{fznewalt}
f(z) &= \frac{(4\pi)^z e^{\frac{1}{2}z(z-3-2w)}\Gamma(z+1)G(z+1)G(z-w)}{\Gamma(z-w)^{z}G(-w)}\notag \\
&= \frac{(4\pi)^z e^{\frac{1}{2}z(z-3-2w)+(z+1)\log(z-w)}G(z+3)G(z-w+2)}{\Gamma(z-w+1)^{z+2}\Gamma(z+2)G(-w)}.
\end{align}
We can write this as
\begin{align*}
f(z) &= \exp\biggl\{z\log(4\pi) + (z+1)\log(z-w) + \frac{1}{2}z(z-3-2w) + \Theta(z+3) \\
&\qquad + \Theta(z-w+2) - (z+2)\Pi(z-w+1)- \Pi(z+2)-\Theta(-w)\biggr\}.
\end{align*}
Now take any $z\in \Omega$ which also satisfies $\Re(z+2)\ge 0$ and $\Re(z-w+2)\ge 0$. Since $\Re(-w)\in (0,1]$, this is equivalent to simply saying that $z\in \Omega$ and $\Re(z)\ge -2$. Then, substituting the expansions of $\Theta$ and $\Pi$ from Corollary \ref{gcor} and Corollary \ref{picor}, we get 
\begin{align*}
f(z) &= \exp\biggl\{z\log(4\pi) + (z+1)\log(z-w) + \frac{1}{2}z(z-3-2w)  + \frac{1}{2}(z+1)\ln (2\pi)\\
&\qquad  +\frac{1}{2}(z+2)^2 \log (z+2) - \frac{3(z+2)^2}{4}-\frac{1}{12}\log(z+2) + R_2(z+2)\\
&\qquad + \frac{1}{2}(z-w)\ln (2\pi)  +\frac{1}{2}(z-w+1)^2 \log (z-w+1) - \frac{3(z-w+1)^2}{4}\\
&\qquad  -\frac{1}{12}\log(z-w+1) + R_2(z-w+1) -(z+2)(z-w+1)\log(z-w+1) \\
&\qquad + (z+2)(z-w+1)+\frac{1}{2}(z+2)\log(z-w+1) - \frac{1}{2}(z+2)\ln(2\pi) \\
&\qquad - (z+2)R_1(z-w+1) - (z+2)\log(z+2) +(z+2) + \frac{1}{2}\log(z+2)\\
&\qquad  - \frac{1}{2}\log(2\pi) - R_1(z+2)-\Theta(-w)\biggr\}.
\end{align*}
To simplify this, note that the terms that are quadratic in $z$ add up to
\begin{align*}
& \frac{1}{2}z(z-3-2w)- \frac{3(z+2)^2}{4} - \frac{3(z-w+1)^2}{4} + (z+2)(z-w+1)\\
&= \biggl(-\frac{1}{2}w - 3\biggr)z  -3-\frac{3}{4}(1-w)^2 + 2(1-w).
\end{align*}
and the linear terms have real coefficients. Thus, the linear and quadratic terms add up to a constant (depending on $w$) plus $(A-\frac{1}{2}w)z$, where $A$ is a universal real constant. 

Next, note that the coefficients of $\log(z+2)$ add up to 
\[
\frac{1}{2}(z+2)^2-\frac{1}{12} -(z+2)+\frac{1}{2} = \frac{1}{2}z^2 + z + \frac{5}{12}.
\]
The coefficients of $\log(z-w+1)$ add up to
\begin{align*}
&\frac{1}{2}(z-w+1)^2 -\frac{1}{12}-(z+2)(z-w+1)+\frac{1}{2}(z+2)\\
&= -\frac{1}{2}z^2 -\frac{3}{2}z + \frac{1}{2}(1-w)^2 -\frac{1}{12} -2(1-w)+1.
\end{align*}
Thus, the terms involving $\log(z+2)$ and $\log(z-w+1)$ add up to
\begin{align*}
-\frac{1}{2}z^2 \log \frac{z-w+1} {z+2}+ z\log (z+2) - \frac{3}{2}z\log(z-w+1) +O(1),
\end{align*}
where $O(1)$ denotes a term whose absolute value is bounded by a constant that depends only on $w$. Now recall that $\Re(z)\ge \Re(w)\ge -1$. Thus, we have (with the same convention about the big-$O$ notation)
\begin{align*}
-\frac{1}{2}z^2 \log \frac{z-w+1} {z+2} &= -\frac{1}{2}z^2 \log \biggl(1 - \frac{w+1} {z+2}\biggr)\\
&= \frac{z^2(w+1)}{2(z+2)} + O(1)\\
&= \frac{1}{2}(z+2)(w+1) + O(1) = \frac{1}{2}zw + O(1).
\end{align*}
The $\frac{1}{2}zw$ above cancels the $-\frac{1}{2}zw$ from the linear and quadratic terms, leaving us only with $Az$. Similarly, 
\[
z\log (z+2) - \frac{3}{2}z\log(z-w+1) = -\frac{1}{2}z\log(z-w+1) + O(1).
\]
Thus, the total contribution coming from the linear and quadratic terms, and the terms involving $\log(z+2)$ and $\log(z-w+1)$, is equal to
\[
Az -\frac{1}{2}z\log(z-w+1) + O(1).
\]
We have to also keep in mind the $(z+1)\log (z-w)$ term, which we have not combined with anything else.
Lastly, by the bounds from Corollary \ref{picor} and Corollary \ref{gcor}, the remainder terms add up to $O(1)$, since $\Re(z)\ge \Re(w)\ge -1$.  Putting it all together, we get the desired results.

\subsubsection{Proof of Lemma \ref{zerofinal0one}}\label{zerofinal0onepf}
Define the function 
\[
g(z) := \Gamma(-z) f(z) (\mu e^{\sqrt{2}c})^{z}
\]
on the domain $(w+(\C \setminus(-\infty,0]))\setminus\Z$. Fix some $x_0\in (\Re(w),0)$. 
Take any $N\ge 1$ and $R\ge 1$. Let $C_R$ be the rectangular contour with vertices $x_0\pm \i R$ and $N + x_0\pm \i R$, traversed counter-clockwise (see Figure \ref{fig:CR-with-poles}). Since $f$ is analytic in $w+(\C\setminus(-\infty,0])$, we deduce that the only poles of $g$ are at the nonnegative integers, arising due to the poles of the Gamma function. Since
\[
\Res(\Gamma, -n) = \frac{(-1)^n}{n!}
\]
for each nonnegative integer $n$, we get that 
\[
\Res(g, n) =- \frac{(-1)^n}{n!}f(n) (\mu e^{\sqrt{2}c})^n. 
\]
Since $C_R$ encloses the poles at $0,1,\ldots,N-1$, Cauchy's theorem gives
\begin{align}\label{cr1one}
\frac{1}{2\pi \i} \oint_{C_R} g(z) dz = -\sum_{n=0}^{N-1}\frac{(-1)^n}{n!}f(n) (\mu e^{\sqrt{2}c})^n,
\end{align}
Now, by Lemma \ref{fzfinalone} and Lemma \ref{gammalmm}, for any $x\ge x_0$, 
\begin{align}\label{gxirone}
|g(x + \i R)| &\le \frac{C_1\exp(C_2\log(2+R)+ R \arg(\lceil x+1 \rceil -x- \i R))}{\prod_{j=0}^{\lceil x+1\rceil-1 }|j-x+ \i R|}\notag \\
&\quad \cdot \exp\biggl((x+1)\ln |x+ \i R-w| - R\arg(x+ \i R-w) - \frac{1}{2}x\ln|x+\i R -w +1|\notag\\
&\quad   + \frac{1}{2}R \arg(x+\i R-w+1)+ Ax + C\biggr) e^{\sqrt{2}cx},
\end{align}
where  $A, C, C_1, C_2$ are universal constants. Clearly, the bound is decreasing exponentially in $R$ as $R \to \infty$, uniformly over $x$ in any given bounded range. Thus,
\begin{align}\label{cr2one}
\lim_{R\to \infty} \max_{x_0\le x\le N+x_0} |g(x+ \i R)| = 0.
\end{align}
Similarly, 
\begin{align}\label{cr3one}
\lim_{R\to \infty} \max_{x_0\le x\le N+x_0} |g(x-\i R)| = 0.
\end{align}
By equations \eqref{cr2one} and \eqref{cr3one}, we can take $R\to \infty$ in equation \eqref{cr1one}. This completes the proof.

\subsubsection{Proof of Lemma \ref{zerofinalone}}\label{zerofinalonepf}
In this proof, $C, C_1, C_2,\ldots$ will denote arbitrary positive constants that may depend only on $x_0$ and $w$, whose values may change from line to line. Take any $N\ge 3$ and let $x:= N+x_0$. 
Using the inequality \eqref{prodjineq} in equation \eqref{gxirone}, we get
\begin{align*}
|g(x+\i y)|&\le C_1\exp\biggl(C_2\log(2+|x|+|y|)+ y \arg(\lceil x+1 \rceil -x- \i y) \notag\\
&\qquad - x\ln|x+\i y| + (x+1)\ln |x+ \i y-w| - y\arg(x+ \i y-w) \notag\\
&\quad   - \frac{1}{2}x\ln|x+\i y -w +1| + \frac{1}{2}y \arg(x+\i y-w+1)+ C_3x\biggr) (\mu e^{\sqrt{2}c})^x.
\end{align*}
Rearranging this and noting that $\lceil x+1\rceil - x= 1-x_0$, we arrive at the inequality
\begin{align}\label{gxymainone}
|g(x+\i y) | &\le C_1 (\mu e^{\sqrt{2}c})^x \exp\biggl\{y\biggl(\arg(1-x_0-\i y) -\arg(x+\i y-w)\notag \\
&\qquad +\frac{1}{2}\arg(x+\i y-w+1)\biggr) - x\ln |x+\i y | + (x+1) \ln|x+\i y-w|\notag \\
&\qquad -\frac{1}{2}x\ln|x+\i y - w+1| + C_2 x + C_3\ln(2+|x|+|y|)\biggr\}.
\end{align}
Since $x= N+x_0> 2$ and $\Re(w)\in [-1,0)$, we have
\begin{align*}
&\ln |x+\i y | \ge \ln |x+\i y-w+1| - C,\\
&\ln|x+\i y -w|\le \ln|x+\i y-w+1|+C,\\
&\ln(2+|x|+ |y|) \le \ln|x+\i y -w + 1| + C.
\end{align*}
Also, since $1-x_0> 0$ and $\Re(x-w+1) \ge x\ge 1-x_0$, we have
\begin{align*}
&y\biggl(\arg(1-x_0-\i y) -\arg(x+\i y-w) +\frac{1}{2}\arg(x+\i y-w+1)\biggr) \\
&= y\biggl(-\arg(1-x_0+\i y)  -\arg(x+\i y-w) +\frac{1}{2}\arg(x+\i y-w+1) \biggr) \\
&\le -\frac{3}{2}y \arg(x+\i y-w+1).
\end{align*}
Using these two observations in equation \eqref{gxymainone}, we get
\begin{align}\label{gxyboundone}
|g(x+\i y) | &\le C_1 (\mu e^{\sqrt{2}c})^x \exp\biggl( - \frac{3}{2}y\arg(x+\i y-w+1)- \frac{1}{2}x \ln|x+\i y-w+1|\notag \\
&\hskip1in + C_2 x + C_3\ln|x+\i y-w+1|\biggr).
\end{align}
Let $a := \Re(w)$. The above inequality gives 
\begin{align*}
|F(x)| &\le C_1  (\mu e^{(\sqrt{2}c+C_2)})^x \int_{-\infty}^\infty \exp\biggl(- \frac{3}{2}y\arg(x+\i y-w+1)- \frac{1}{2}x \ln|x+\i y-w+1|\notag\\
&\hskip1in + C_3\ln |x+\i y-w+1|\biggr)dy\\
&= C_1  (\mu e^{(\sqrt{2}c+C_2)})^x \int_{-\infty}^\infty \exp\biggl(- \frac{3}{2}y\arg(x+\i y-a+1)- \frac{1}{2}x \ln|x+\i y-a+1|\notag\\
&\hskip1in + C_3\ln |x+\i y-a+1|\biggr)dy,
\end{align*}
where the second line is obtained by changing variable $y \to y+\Im(w)$. 
Making the change of variable $ u = y/(x-a+1)$, we get 
\begin{align*}
|F(x)|&\le C_1 (x-a+1) (\mu e^{(\sqrt{2}c+C_2)})^x \int_{-\infty}^\infty \exp\biggl( - \frac{3}{2} (x-a+1)u\arg((x-a+1)(1+\i u))\\
&\hskip1in - \frac{1}{2}x \ln ((x-a+1)|1+\i u|)+ C_3\ln((x-a+1)|1+\i u|)\biggr)du\\
&= C_1 x^{C_2} \exp\biggl((\sqrt{2}c+C_3) x- \frac{1}{2}x\ln (x-a+1)\biggr)\\
&\hskip1in \cdot \int_{-\infty}^\infty \exp\biggl( - \frac{3}{2} (x-a+1)u\arg(1+\i u) + C_4 \ln |1+\i u|\biggr)du.
\end{align*}
Since
\begin{align*}
&\int_{-\infty}^\infty \exp\biggl( - \frac{3}{2} (x-a+1)u\arg(1+\i u) +C_4\ln |1+\i u|\biggr)du\\
&\le \int_{-\infty}^\infty \exp\biggl( - \frac{3}{2}u\arg(1+\i u)+C_4\ln |1+\i u|\biggr)du,
\end{align*}
which does not depend on $x$ and is finite, we conclude that
\[
|F(x)|\le C_1 x^{C_2}\mu^x \exp\biggl((\sqrt{2}c+C_3) x- \frac{1}{2}x\ln (x-a+1)\biggr).
\]
In particular, $F(N+x_0)\to 0$ as $N\to\infty$. This completes the proof.

\subsubsection{Proof of Lemma \ref{fzfinal2}}\label{fzfinal2pf}
The identities \eqref{gzid} and $\Gamma(z+1)=z\Gamma(z)$  show that for any $z\in \C$,
\[
G(z) = \frac{zG(z+1)}{\Gamma(z+1)}.
\] 
Iterating this, we get
\begin{align*}
G(z) = \frac{z(z+1)G(z+2)}{\Gamma(z+1)\Gamma(z+2)} = \frac{zG(z+2)}{(\Gamma(z+1))^2}.
\end{align*}
In particular, 
\begin{align}\label{gziden}
G(z+\beta_j) = \frac{(z+\beta_j)G(z+\beta_j+2)}{\Gamma(z+\beta_j+1)^2}
\end{align}
for $j=1,2$. 
Also, by Lemma \ref{gammalmm2}, we have, for $z\in w + (\C \setminus(-\infty,0])$, 
\begin{align}\label{gammaiden}
(\Gamma(z-w))^{-z} &= \exp(-z\Pi(z-w)) \notag\\
&= \exp(-z\Pi(z-w+1)+ z\log (z-w))\notag\\
&= \exp(-z\Pi(z-w+2)+z\log(z-w+1)+z\log(z-w)).
\end{align}
By the identities \eqref{gziden} and \eqref{gammaiden}, and the definition of $f$, we have that for any $z$ in the domain $\Omega:= (w + (\C \setminus(-\infty,0]))\setminus\{-1,-2,\ldots\}$, 
\begin{align*}
f(z) 
&= (z+\beta_1)(z+\beta_2)\exp\biggl\{z\log(4\pi) + \frac{1}{2}z(z-3-2w)  + \Theta(z+\beta_1+2) \\
&\qquad \qquad + \Theta(z + \beta_2 +2)-\Theta(\beta_1)-\Theta(\beta_2) + \Pi(z+1) - z\Pi(z-w+2)\\
&\qquad \qquad  + z\log(z-w)+ z\log(z-w+1)- 2\Pi(z+\beta_1+1) - 2\Pi(z+\beta_2+1)\biggr\}.
\end{align*}
Now take $z\in \Omega$ with $\Re(z)>-\frac{1}{2}$. Then by the given conditions,
\begin{align}\label{rbj}
\Re(z+\beta_j +1) = \Re(z)+ 2+\sqrt{2}\Re(\alpha_j) > \Re(z)+1 > \frac{1}{2}
\end{align}
for $j=1,2$. Also, $\Re(z+1)>\frac{1}{2}$, and 
\begin{align}\label{rwj}
\Re(z-w+2) > \frac{3}{2}-\Re(w)=\frac{5}{2} + \sqrt{2}(\Re(\alpha_1)+\Re(\alpha_2)) > \frac{1}{2}.
\end{align}
Thus, substituting the expansions of $\Theta$ and $\Pi$ from Corollary \ref{gcor} and Corollary \ref{picor}, we get 
\begin{align*}
f(z) &= (z+\beta_1)(z+\beta_2) \exp\biggl\{z\log(4\pi) + \frac{1}{2}z(z-3-2w)  + \frac{1}{2}(z+\beta_1)\ln (2\pi)\\
&\qquad   + \frac{1}{2}(z+\beta_2)\ln (2\pi)+\frac{1}{2}(z+\beta_1 + 1)^2 \log (z+\beta_1 + 1) \\
&\qquad  + \frac{1}{2}(z+\beta_2 + 1)^2 \log (z+\beta_2 + 1)- \frac{3}{4}(z+\beta_1+1)^2 - \frac{3}{4}(z+\beta_2+1)^2  \\
&\qquad   + (z+1)\log(z+1)- z - z(z-w+2) \log(z-w+2)+ z(z-w+2)   \\
&\qquad  + \frac{1}{2} z\log (z-w+2)-\frac{1}{2}z \ln (2\pi) + z\log(z-w)+z\log(z-w+1)  \\
&\qquad - 2(z+\beta_1+1)\log(z+\beta_1+1) + 2(z+\beta_1+1) \\
&\qquad - 2(z+\beta_2+1)\log(z+\beta_2+1) + 2(z+\beta_2+1) + Q_0(z)  \biggr\},
\end{align*}
where $|Q_0(z)|\le C_0\log(2+|z|)$ for some constant $C_0$ that depends only on $\alpha_1,\alpha_2$. Expanding the squares and collecting all terms of the form `constant times $z$' into a single term, we get the simplification 
\begin{align*}
f(z) &= (z+\beta_1)(z+\beta_2) \exp\biggl\{\biggl(A-\frac{1}{2}w\biggr)z + \frac{1}{2}z^2 +\frac{1}{2}(z^2+2(\beta_1 + 1)z) \log (z+\beta_1 + 1)\\
&\qquad  + \frac{1}{2}(z^2+2(\beta_2 + 1)z) \log (z+\beta_2 + 1) - \frac{3}{2}z^2  + z\log(z+1) \\
&\qquad  - (z^2-wz+2z) \log(z-w+2) + z^2 + \frac{1}{2} z\log (z-w+2) + z\log(z-w)  \\
&\qquad +z\log(z-w+1) - 2z\log(z+\beta_1+1) - 2z\log(z+\beta_2+1) + Q_1(z)  \biggr\},
\end{align*}
where $|Q_1(z)|\le C_1\log(2+|z|)$ for a constant $C_1$ that depends only on $\alpha_1,\alpha_2$, and $A$ is a real universal constant. In the above, terms of the form `constant times $z^2$' cancel out, and terms of the form `constant times $z^2\log(\ldots)$' add up to (using the lower bound \eqref{rwj})
\begin{align*}
&\frac{1}{2}z^2 \log \frac{(z+\beta_1+1)(z+\beta_2+1)}{(z-w+2)^2} = \frac{1}{2}z^2 \log \frac{z^2 + (\beta_1+\beta_2+2) z + (\beta_1+1)(\beta_2+1)}{(z-w+2)^2}\\
&= \frac{1}{2}z^2 \log \frac{z^2 + (3-w) z + (\beta_1+1)(\beta_2+1)}{(z-w+2)^2}\\
&= \frac{1}{2}z^2 \log \biggl(1+\frac{(w-1)z + (\beta_1+1)(\beta_2+1) -(2-w)^2}{(z-w+2)^2}\biggr)= \frac{(w-1)z^3}{2(z-w+2)^2} + O(1),
\end{align*}
where $O(1)$ denotes a term whose absolute value is bounded above by a constant that depends only on $\alpha_1,\alpha_2$. (We will follow this convention about $O$ in the remainder of this proof.) But 
\begin{align*}
\frac{(w-1)z^3}{2(z-w+2)^2} &= \frac{1}{2}(w-1)z - \frac{(w-1)z(2(2-w)z + (2-w)^2)}{2(z-w+2)^2}\\
&= \frac{1}{2}(w-1)z + O(1).
\end{align*}
Thus, we get the further simplification
\begin{align*}
f(z) &= (z+\beta_1)(z+\beta_2)\exp\biggl\{Bz  +(\beta_1 + 1)z\log (z+\beta_1 + 1)\\
&\qquad  + (\beta_2 + 1)z \log (z+\beta_2 + 1)  + z\log(z+1)+(w-2)z \log(z-w+2)  \\
&\qquad  + \frac{1}{2} z\log (z-w+2) + z\log(z-w)+z\log(z-w+1) \\
&\qquad - 2z\log(z+\beta_1+1)  - 2z\log(z+\beta_2+1) + O(\log(2+|z|))  \biggr\},
\end{align*}
where $B$ is a real universal constant. Now, by the inequalities \eqref{rbj} and \eqref{rwj}, we have 
\begin{align*}
&\log(z+\beta_j + 1) = \log(z-w+2) + O((1+|z|)^{-1}),\\
&\log(z+1) = \log(z-w+2)+O((1+|z|)^{-1}).
\end{align*}
This gives 
\begin{align*}
f(z) &= (z+\beta_1)(z+\beta_2) \exp\biggl\{Bz  +(\beta_1 + 1)z\log (z-w+2) + (\beta_2 + 1)z \log (z-w+2)  \\
&\qquad   + z\log(z-w+2) +(w-2)z \log(z-w+2) + \frac{1}{2} z\log (z-w+2)  \\
&\qquad  + z\log(z-w) +z\log(z-w+1)- 4z\log(z-w+2) + O(\log(2+|z|))  \biggr\}.
\end{align*}
But the coefficients of $z\log (z-w+2)$ add up to 
\[
(\beta_1 + 1) + (\beta_2 + 1) + 1 + (w-2)+ \frac{1}{2} - 4 = -\frac{3}{2}. 
\]
This proves the desired expression for $f(z)$. For the bound on $|f(z)|$, note that if we have $\Re(z)\ge \Re(w)+\delta$ for some $\delta>0$, then
\[
|z\log (z-w+2) - z\log (z-w)| \le C
\]
for some constant $C$ that depends only on $\alpha_1$, $\alpha_2$, and $\delta$, and the same holds for the difference $|z\log (z-w+2) - z\log (z-w+1)| $. Consequently, 
\begin{align*}
&\biggl|\exp\biggl(z\log(z-w) +z\log(z-w+1)-\frac{3}{2}z\log (z-w+1)\biggr)\biggr|\\
&=\exp\biggl(\Re\biggl\{z\log(z-w)+z\log(z-w+1) -\frac{3}{2}z\log (z-w+1)\biggr\}\biggr)\\
&\le \exp\biggl(\Re\biggl\{\frac{1}{2}z\log (z-w)\biggr\}+C\biggr)\\
&= \exp\biggl(\frac{1}{2}\Re(z)\ln |z-w|-\frac{1}{2}\Im(z)\arg(z-w)+C\biggr).
\end{align*}
Evidently, this completes the proof.

\subsubsection{Proof of Lemma \ref{twofinal0}}\label{twofinal0pf}
Let $g(z) := \Gamma(-z) f(z) (\mu e^{\sqrt{2}c})^z$ on the domain $(w+(\C \setminus(-\infty,0]))\setminus\Z$. Fix some $x_0\in (\Re(w),0)$. Take any $N\ge 1$ and $R\ge 1$. Let $C_R$ be the rectangular contour with vertices $x_0\pm \i R$, $N + x_0\pm \i R$, traversed counter-clockwise (see Figure \ref{fig:CR-with-poles}). Since $f$ is analytic in $\{z:\Re(z) > \Re(w)\}$, the only poles of $g$ are at the nonnegative integers, arising due to the poles of the Gamma function. As before, this gives
\begin{align}\label{cr1two}
\frac{1}{2\pi \i} \oint_{C_R} g(z) dz = -\sum_{n=0}^{N-1}\frac{(-1)^n}{n!}f(n)(\mu e^{\sqrt{2}c})^n.
\end{align}
Now, by Lemma \ref{fzfinal2} and Lemma \ref{gammalmm}, for any $x\ge x_0$ and $R\ge 0$, 
\begin{align}\label{gxirtwo}
|g(x + \i R)| &\le \frac{C_1(|x+\i R|+1)^2\exp(C_2\log(2+R)+ R \arg(\lceil x+1 \rceil -x- \i R))}{\prod_{j=0}^{\lceil x+1\rceil-1 }|j-x+ \i R|}\notag \\
&\quad \cdot \exp\biggl(\frac{1}{2}x\ln |x+ \i R-w| - \frac{1}{2}R\arg(x+ \i R-w) + Ax + C\biggr) e^{\sqrt{2}cx},
\end{align}
where  $A, C_1, C_2$ are universal constants and $C$ may depend only on $\alpha_1,\alpha_2, x_0$. Clearly, the bound is decreasing exponentially in $R$ as $R \to \infty$, uniformly over $x$ in any given bounded range.  The proof can now be completed just like the proofs of Lemma \ref{zerofinal0} and Lemma \ref{zerofinal0one}.

\subsubsection{Proof of Lemma \ref{zerofinaltwo}}\label{zerofinaltwopf}
Take any $N\ge 3$ and let $x:= N+x_0$. 
Using the inequality \eqref{prodjineq} in equation \eqref{gxirtwo}, we get
\begin{align*}
|g(x+\i y)|&\le C_1(\mu e^{\sqrt{2}c})^x\exp\biggl(C_2\log(2+|x|+| y|)+ y \arg(\lceil x+1 \rceil -x- \i y) \notag\\
&\quad - x\ln|x+\i y| + \frac{1}{2}x\ln |x+ \i y-w| - \frac{1}{2}y\arg(x+ \i y-w) + Ax\biggr),
\end{align*}
where $C_1, C_2$ are positive constants that depend only on $\alpha_1,\alpha_2, x_0$. The rest of the proof now goes through exactly as  the proof of Lemma \ref{zerofinalone}.

\subsubsection{Proof of Lemma \ref{heaviside}}\label{heavisidepf}
Note that 
\begin{align*}
\int_{-\infty}^\infty\int_0^{\infty}\varphi(x) e^{\i tx-\ep^2 t^2} dt dx &= \int_0^{\infty}\int_{-\infty}^\infty  \varphi(x) e^{\i tx-\ep^2 t^2} dx dt\\
&= \int_0^{\infty}e^{ - \ep^2 t^2 } \hat{\varphi}(t)dt,
\end{align*}
where $\hat{\varphi}$ is the Fourier transform of $\varphi$. 
Since $\varphi$ is smooth and compactly supported, $\hat{\varphi}$ is a Schwartz function. This allows us to apply the dominated convergence theorem and get that the limit of the above integral as $\ep \to 0$ is 
\[
\lim_{\ep\to 0} \int_0^{\infty}e^{ - \ep^2 t^2 } \hat{\varphi}(t)dt = \int_0^{\infty}  \hat{\varphi}(t)dt.
\]
Moreover, it also gives us the bound
\[
\biggl|\int_{-\infty}^\infty\int_0^{\infty}\varphi(x) e^{\i tx-\ep^2 t^2} dt dx\biggr| \le\int_0^{\infty}|\hat{\varphi}(t)|dt.
\]
But for the same reason, 
\begin{align*}
\int_0^{\infty}  \hat{\varphi}(t)dt &= \lim_{\ep \to 0} \int_0^{\infty}  e^{-\ep t} \hat{\varphi}(t)dt\\
&= \lim_{\ep \to 0}  \int_0^{\infty}\int_{-\infty}^\infty  \varphi(x)e^{\i t x - \ep t}dx dt\\
&= \lim_{\ep \to 0}  \int_{-\infty}^\infty  \int_0^{\infty} \varphi(x)e^{\i t x - \ep t}dt dx\\
&= \lim_{\ep\to 0} \int_{-\infty}^\infty  \frac{\varphi(x)}{-\i x + \ep }dx.
\end{align*}
Now, note that 
\begin{align*}
\int_{-\infty}^\infty  \frac{\varphi(x)}{-\i x + \ep }dx &= \int_{-\infty}^\infty  \frac{\varphi(x)(\ep +\i x)}{\ep^2 +  x^2 }dx\\
&= \int_{-\infty}^\infty  \frac{\varphi(\ep y)}{1+y^2 }dy + \i \int_0^\infty  \frac{x(\varphi(x)-\varphi(-x))}{\ep^2 + x^2 }dx.
\end{align*}
It is now easy to apply the dominated convergence theorem and show that the right side converges to the claimed limit as $\ep \to 0$.

\subsubsection{Proof of Lemma \ref{fzfinal3}}\label{fzfinal3pf}
Throughout this proof, $C, C_1,C_2,\ldots$ will denote arbitrary positive constants depending only on $\alpha_1$, $\alpha_3$, and $\delta$, whose values may change from line to line.  For the reader's convenience, let us recall that 
\begin{align*}
f(z) &=\frac{(2\pi)^z e^{\frac{1}{2}z(z-3-2w)} \Gamma(z+1)G(z+2+\sqrt{2}\alpha_1) G(z+2+\sqrt{2}\alpha_3)}{\Gamma(z-w)^zG(1+\sqrt{2}\alpha_1)G(1+\sqrt{2}\alpha_3)}\\
&\qquad \cdot \biggl\{\frac{{_2F_1}(1, z-w-1; 1+\sqrt{2}\alpha_1; \frac{1}{2})}{2\Gamma(1+\sqrt{2}\alpha_1) \Gamma(z+1+\sqrt{2}\alpha_3)} - \frac{\sqrt{2}\alpha_3\, {_2F_1}(1,z-w-1;z+2+\sqrt{2}\alpha_1;\frac{1}{2})}{2\Gamma(z+2+\sqrt{2}\alpha_1)\Gamma(1+\sqrt{2}\alpha_3)}\biggr\}.
\end{align*}
Let $\alpha_1' := \alpha_1 + \frac{1}{\sqrt{2}}$. Then $G(z+2+\sqrt{2}\alpha_1) = G(z+1+\sqrt{2}\alpha'_1)$. Also, by equation~\eqref{gzid}, 
\[
G(z+2+\sqrt{2}\alpha_3) = G(z+1+\sqrt{2}\alpha_3) \Gamma(z+1+\sqrt{2}\alpha_3),
\]
and 
\[
G(1+\sqrt{2}\alpha_1) = \frac{G(1+\sqrt{2}\alpha_1')}{\Gamma(1+\sqrt{2}\alpha_1)}.
\]
Plugging these into the definition of $f$, we get
\begin{align*}
f(z) &:= f_0(z) h(z),
\end{align*}
where 
\begin{align*}
f_0(z) &:= \frac{(4\pi)^z e^{\frac{1}{2}z(z-3-2w)} \Gamma(z+1)G(z+1+\sqrt{2}\alpha_1') G(z+1+\sqrt{2}\alpha_3)}{\Gamma(z-w)^zG(1+\sqrt{2}\alpha_1')G(1+\sqrt{2}\alpha_3)}
\end{align*}
and 
\begin{align*}
h(z) &:= 2^{-z-1}\,{_2F_1}\biggl(1, z-w-1; 1+\sqrt{2}\alpha_1; \frac{1}{2}\biggr) \\
&\qquad - \frac{2^{-z-1}\sqrt{2}\alpha_3\Gamma(z+1+\sqrt{2}\alpha_3)\Gamma(1+\sqrt{2}\alpha_1)\, {_2F_1}(1,z-w-1;z+2+\sqrt{2}\alpha_1;\frac{1}{2})}{\Gamma(z+2+\sqrt{2}\alpha_1)\Gamma(1+\sqrt{2}\alpha_3)}.
\end{align*}
Note that 
\[
w = -2-\sqrt{2}(\alpha_1+\alpha_3) = -1-\sqrt{2}(\alpha_1'+\alpha_3). 
\]
Take any $z$ as in the statement of the lemma.  Notice that the function $f_0$ defined above is exactly the function $f$ from Lemma~\ref{twoform1} with $\alpha_1'$ in place of $\alpha_1$ and $\alpha_3$ in place of $\alpha_2$ (and $w$ remains the same as above). Thus, by Lemma \ref{fzfinal2} and the assumptions that $\Re(w)>-\frac{1}{2}$ and $\Re(z)>\Re(w)+\delta$, we get the bound
\begin{align}\label{f0upper}
|f_0(z)| &\le  |(z+2+\sqrt{2}\alpha_1)(z+1+\sqrt{2}\alpha_3)|\exp\biggl(\frac{1}{2}\Re(z)\ln |z-w| - \frac{1}{2} \Im(z) \arg(z-w) \notag \\
&\qquad \qquad + C_1\Re(z)  + C_2\log(2+|z|)\biggr).
\end{align}
Let $b:= \Re(z-w-1)$, $t:= \Im(z-w+1)$, and $c:= 1+\sqrt{2}\alpha_1$. Then note that $b$ and $c-1$ are not nonpositive integers, since $b>-1$ and $b\ne 0$, and $c-1>-1$ and $c- 1=\sqrt{2}\alpha_1\ne 0$, where the last inequality holds because
\[
\Re(\sqrt{2}\alpha_1) = \Re(-w-2-\sqrt{2}\alpha_3) < \frac{1}{2} - 2 - \sqrt{2}\biggl(-\frac{1}{\sqrt{2}}\biggr)=-\frac{1}{2}.
\]
Also, $\Re(b) > -1$, 
\begin{align*}
\Re(1+b-c) &= \Re(z-w-1-\sqrt{2}\alpha_1)\\
&= \Re(z+1+\sqrt{2}\alpha_3) > \Re(z)\ge \Re(w)+\delta>-1+\delta,
\end{align*}
and by assumption, 
\[
|\Re(1+b-c)| = |\Re(z+1+\sqrt{2}\alpha_3)|\ge \delta.
\]
Thus, we may apply Lemma \ref{hyperasymp2} to get
\begin{align}\label{f21}
\biggl|{_2F_1}\biggl(1, z-w-1; 1+\sqrt{2}\alpha_1; \frac{1}{2}\biggr)\biggr| &= \biggl|{_2F_1}\biggl(1, b+\i t; c; \frac{1}{2}\biggr)\notag \\
&\le C_1 (|b|+|t|+1)^{C_2} e^{C_3|b|}\notag \\
&\le C_1 (1+|z|)^{C_2} e^{C_3|\Re(z)|}.
\end{align}
Next, let $c:= \Re(z+2+\sqrt{2}\alpha_1)$, $t := \Im(z+2+\sqrt{2}\alpha_1)$, and $b := z-w-1-\i t$. Then for any nonnegative integer $n$,
\begin{align*}
|c+n| &\ge c+n = \Re(z+2+\sqrt{2}\alpha_1+n)\\
&> \Re(z+1+n)\ge \delta+n\ge \delta, 
\end{align*}
and $c\ge \Re(z+1)\ge \delta$. Lastly, note that 
\begin{align*}
|b| &= |z-w-1-\i\Im(z+2+\sqrt{2}\alpha_1)| \le |\Re(z)| + C.
\end{align*}
Thus, by Lemma \ref{hyperasymp},
\begin{align}\label{f22}
\biggl|{_2F_1}\biggl(1,z-w-1;z+2+\sqrt{2}\alpha_1;\frac{1}{2}\biggr)\biggr| &= \biggl|{_2F_1}(1,b+\i t; c+\i t;\frac{1}{2}\biggr)\biggr|\notag \\
&\le C_1 e^{C_2(|b|+|c|)}\notag \\
&\le C_1 e^{C_2|\Re(z)|}. 
\end{align}
Finally, note that for $j=1,3$, 
\[
\Re(z+2+\sqrt{2}\alpha_j) > \Re(w+2+\sqrt{2}\alpha_j)> \Re(w+1)\ge \delta.
\]
Thus, by Lemma \ref{gammalmm3} and the assumption that $|\Re(z+1+\sqrt{2}\alpha_3)|\ge \delta$ (and noting that $\Re(z+2+\sqrt{2}\alpha_j) > \Re(z+1)\ge \delta$ for $j=1,3$), we have
\begin{align}\label{f23}
\biggl|\frac{\Gamma(z+1+\sqrt{2}\alpha_3)}{\Gamma(z+2+\sqrt{2}\alpha_1)}\biggr| &= \biggl|\frac{\Gamma(z+2+\sqrt{2}\alpha_3)}{(z+1+\sqrt{2}\alpha_3)\Gamma(z+2+\sqrt{2}\alpha_1)}\biggr|\notag \\
&\le C_1(1+|z|)^{C_2}. 
\end{align}
By the inequalities \eqref{f21}, \eqref{f22}, and \eqref{f23}, we get
\begin{align*}
|h(z)| &\le C_1(1+|z|)^{C_2} e^{C_3|\Re(z)|}.
\end{align*}
Combining this with equation \eqref{f0upper} completes the proof.

\subsubsection{Proof of Lemma \ref{zerofinal0three}}\label{zerofinal0threepf}
Let $g(z) := \Gamma(-z) f(z) (\mu e^{\sqrt{2}c})^z$ on the domain $\{z:\Re(z)>\Re(w)\}$. Fix some $x_0\in (\Re(w),0)$. Take any $N\ge 1$ and $R\ge 1$. Let $C_R$ be the rectangular contour with vertices $x_0\pm \i R$, $N + x_0\pm \i R$, traversed counter-clockwise (see Figure \ref{fig:CR-with-poles}). Since $f$ is analytic in $\{z:\Re(z) > \Re(w)\}$, the only poles of $g$ are at the nonnegative integers, arising due to the poles of the Gamma function. As before, this gives
\begin{align}\label{cr1three}
\frac{1}{2\pi \i} \oint_{C_R} g(z) dz = -\sum_{n=0}^{N-1}\frac{(-1)^n}{n!}f(n)(\mu e^{\sqrt{2}c})^n.
\end{align}
Now, by Lemma \ref{fzfinal3} and Lemma \ref{gammalmm}, for any $x\ge x_0$ and $R\ge 0$, 
\begin{align}\label{gxirthree}
&|g(x + \i R)| \notag \\
&\le \frac{C_1(|x+\i R|+1)^2\exp(C_2\log(2+R)+ R \arg(\lceil x+1 \rceil -x- \i R))}{\prod_{j=0}^{\lceil x+1\rceil-1 }|j-x+ \i R|}\notag \\
&\quad \cdot (1+|x+\i R|)^{C_3}\exp\biggl(C_4|x|+ \frac{1}{2}x\ln |x+ \i R-w| - \frac{1}{2}R\arg(x+ \i R-w)\biggr),
\end{align}
where $C_1,C_2,C_3,C_4$ depend only on $\alpha_1,\alpha_3, x_0$. Clearly, the bound is decreasing exponentially in $R$ as $R \to \infty$, uniformly over $x$ in any given bounded range.  The proof can now be completed just like the proofs of Lemma \ref{zerofinal0}, Lemma \ref{zerofinal0one}, and Lemma \ref{twofinal0}.

\subsubsection{Proof of Lemma \ref{zerofinalthree}}\label{zerofinalthreepf}
Take any $N\ge 3$ and let $x:= N+x_0$. 
Using the inequality \eqref{prodjineq} in equation \eqref{gxirthree}, we get
\begin{align*}
|g(x+\i y)|&\le C_1\exp\biggl(C_2x+C_3\log(2+|x+\i y|)+ y \arg(\lceil x+1 \rceil -x- \i y) \notag\\
&\quad - x\ln|x+\i y| + \frac{1}{2}x\ln |x+ \i y-w| - \frac{1}{2}y\arg(x+ \i y-w)\biggr),
\end{align*}
where $C_1,C_2,C_3$ depend only on $\alpha_1,\alpha_3, x_0, \mu$. 
The rest of the proof now goes through exactly as  the proof of Lemma \ref{zerofinalone}.

\end{document}